\documentclass[letterpaper, 11pt]{article}
\usepackage[margin=1in]{geometry}
\usepackage{color, xcolor, graphicx, appendix}
\usepackage[bookmarks, colorlinks=true, plainpages = false, citecolor= darkblue, linkcolor = chicago-maroon, anchorcolor = red, urlcolor = chicago-maroon]{hyperref}
\definecolor{chicago-maroon}{RGB}{128,0,0}
\usepackage{url}
\usepackage{amsmath, amsfonts, amsthm, amssymb, bm, bbm, verbatim, dsfont, mathtools, mathrsfs}
\definecolor{darkblue}{rgb}{0.0, 0.0, 0.55}
\usepackage{etoolbox}
\usepackage{almendra}
\usepackage{authblk}
\usepackage{array}
\usepackage{multirow}
\usepackage{multicol}
\providecommand{\keywords}[1]{\textbf{Keywords: } #1}
\usepackage{float}
\usepackage{pgf,tikz}
\usetikzlibrary{arrows,shapes.arrows,shapes.geometric,shapes.multipart,fit,automata,decorations.pathmorphing,positioning,shapes.swigs}
\tikzset{
	>=stealth',
	true/.style={
		rectangle,
		draw=black, very thick,
		text width=6.5em,
		minimum height=2em,
		text centered,
		fill=gray, opacity = 0.5},
	punkt/.style={
		rectangle,
		rounded corners,
		draw=black, very thick,
		text width=6.5em,
		minimum height=2em,
		text centered},
	est/.style={
		circle,
		draw=black, very thick,
		text centered},
	shade/.style={
		circle,
		draw=black, very thick, fill=gray!50,
		text centered},
	weight/.style={
		circle,
		draw=black, very thick,
		text width=6.5em,
		minimum height=2em,
		text centered},
	pil/.style={
		->,
		thick,
		shorten <=2pt,
		shorten >=2pt,},
	double/.style={
		<->,
		thick,
		shorten <=2pt,
		shorten >=2pt,},
	dash/.style={
		dashed,
		thick,
		shorten <=2pt,
		shorten >=2pt,},
	dashdouble/.style={
		<->,
		dashed,
		thick,
		shorten <=2pt,
		shorten >=2pt,}
}

\usepackage{tikz-cd}
\makeatletter 
\newcolumntype{C}[1]{>{\centering\arraybackslash}p{#1}}

\usepackage{epstopdf}
\usepackage{fullpage}
\usepackage{algorithm}
\usepackage{algorithmicx}
\usepackage{subcaption}

\usepackage[nameinlink]{cleveref}
\usepackage[round, compress]{natbib}

\usepackage[utf8]{inputenc} 
\usepackage[T1]{fontenc}    
\usepackage{url}            
\usepackage{booktabs}     
\usepackage{nicefrac}
\usepackage{xcolor}
\usepackage{pdfpages}
\usepackage{scalerel}
\usepackage{dashrule}
\usepackage{lmodern}
\usepackage{fancyhdr}
\usepackage{tabu}
\usepackage{enumitem}

\def\pa{\mathrm{pa}}
\def\de{\mathrm{de}}
\def\ch{\mathrm{ch}}

\def\Pr{\mathsf{Pr}}

\def\Var{\mathsf{Var}}

\renewcommand{\(}{\left(}

\renewcommand{\[}{\left[}

\renewcommand{\tilde}{\widetilde}

\usepackage{xr,xspace}
\usepackage{todonotes}

\usepackage{caption,subcaption,soul}
\usepackage{algpseudocode}
\makeatletter
\usepackage{romanbar}

\theoremstyle{plain}
\newtheorem{theorem}{Theorem}
\newtheorem{lemma}{Lemma}
\newtheorem{proposition}{Proposition}
\newtheorem{corollary}{Corollary}

\theoremstyle{definition}

\newtheorem{example}{Example}

\newtheorem{problem}{Problem}

\newtheorem{remark}{Remark}

\newtheorem*{remark*}{Remark}

\newcommand{\argmin}{\arg\min}
\usepackage{algorithm}
\usepackage{algpseudocode}

\usepackage{xspace, prettyref}

\newcommand{\diff}{{\mathrm d}}

\newcommand\indep{\protect\mathpalette{\protect\independenT}{\perp}}
\def\independenT#1#2{\mathrel{\rlap{$#1#2$}\mkern2mu{#1#2}}}
 
\def\Var{\mathrm{Var}}

\newrefformat{eq}{(\ref{#1})}
\newrefformat{chap}{Chapter~\ref{#1}}
\newrefformat{sec}{Section~\ref{#1}}
\newrefformat{alg}{Algorithm~\ref{#1}}
\newrefformat{fig}{Fig.~\ref{#1}}
\newrefformat{tab}{Table~\ref{#1}}
\newrefformat{rmk}{Remark~\ref{#1}}
\newrefformat{clm}{Claim~\ref{#1}}
\newrefformat{def}{Definition~\ref{#1}}
\newrefformat{cor}{Corollary~\ref{#1}}
\newrefformat{lmm}{Lemma~\ref{#1}}
\newrefformat{prop}{Proposition~\ref{#1}}
\newrefformat{prob}{Problem~\ref{#1}}
\newrefformat{app}{Appendix~\ref{#1}}
\newrefformat{hyp}{Hypothesis~\ref{#1}}
\newrefformat{thm}{Theorem~\ref{#1}}

\newcommand{\sfM}{{\mathsf{M}}}

\newcommand{\calG}{{\mathcal{G}}}
\newcommand{\calH}{{\mathcal{H}}}
\newcommand{\calI}{{\mathcal{I}}}

\newcommand{\calP}{{\mathcal{P}}}

\newcommand{\calX}{{\mathcal{X}}}
\newcommand{\calY}{{\mathcal{Y}}}
\newcommand{\calZ}{{\mathcal{Z}}}

\newcommand{\Tr}{\mathsf{Tr}}

\renewcommand{\tilde}{\widetilde}

\newcommand{\bbE}{{\mathbb{E}}}

\newcommand{\bbP}{{\mathbb{P}}}

\newcommand{\bbR}{{\mathbb{R}}}

\newcommand{\dis}{\mathrm{dis}}
\newcommand{\pre}{\mathrm{pre}}

\newcommand{\mb}{\mathrm{mb}}

\usepackage{tcolorbox}
\newcommand{\opt}{\mathrm{opt}}

\makeatletter
\def\ubar#1{\underline{\sbox\tw@{$#1$}\dp\tw@\z@\box\tw@}}
\makeatother

\def\leftarrowCirc{\hbox{$\leftarrow$}\kern-1.5pt\hbox{$\circ$}}
\def\Circrightarrow{\hbox{$\circ$}\kern-1.5pt\hbox{$\rightarrow$}}
\def\Circleftarrow{\hbox{$\circ$}\kern-1.5pt\hbox{$\leftarrow$}}
\def\rightarrowCirc{\hbox{$\rightarrow$}\kern-1.5pt\hbox{$\circ$}}

\def\I{\mathrm{I}}

\def\M{\mathrm{M}}

\def\p{\mathrm{p}}

\def\eps{\epsilon}

\usepackage{namedtensor}

\usepackage{titletoc}

\usepackage{bibunits}
\makeatletter
\newcounter{mybibunit}
\AtBeginEnvironment{bibunit}{\stepcounter{mybibunit}}
\renewcommand{\hyper@natlinkstart}[1]{%
  \Hy@backout{#1}%
  \hyper@linkstart{cite}{cite.\themybibunit @#1}%
  \def\hyper@nat@current{#1}%
}
\renewcommand{\hyper@natlinkbreak}[2]{%
  \hyper@linkend#1\hyper@linkstart{cite}{cite.\themybibunit @#2}%
}
\renewcommand{\hyper@natanchorstart}[1]{%
  \Hy@raisedlink{\hyper@anchorstart{cite.\themybibunit @#1}}%
}
\makeatother

\usepackage{bbold}
\usepackage[mathcal]{euscript}
\usepackage{setspace}
\usepackage{cancel}

\def\nuis{\mathrm{nuis}}
\def\sp{\operatorname{span}}
\def\Tr{\operatorname{Tr}}
\def\np{\mathrm{np}}
\def\eff{\mathrm{eff}}
\def\opt{\mathrm{opt}}
\def\I{\mathbb{I}}
\def\M{\mathtt{M}}
\def\interior{\operatorname{int}}
\def\sub{\operatorname{sub}}

\def\pre{\operatorname{pre}}

\def\calI{\mathcal{I}}
\def\eps{\epsilon}

\usepackage{makecell}

\def\mytitle{Toward a Semiparametric Efficiency Theory under Equality Constraints in Nested Markov Models}

\begin{document}

\allowdisplaybreaks

\title{\mytitle}

\author[1]{Razieh Nabi\thanks{Email: \href{razieh.nabi@emory.edu}{razieh.nabi@emory.edu}}}
\author[1]{Anna Guo\thanks{Email: \href{anna.guo@emory.edu}{anna.guo@emory.edu}}}
\author[2]{Lin Liu\thanks{Email: \href{linliu@sjtu.edu.cn}{linliu@sjtu.edu.cn}}}

\affil[1]{Department of Biostatistics and Bioinformatics, Emory University, Atlanta, GA, USA}
\affil[2]{Institute of Natural Sciences, MOE--LSC, School of Mathematical Sciences, CMA--Shanghai, SJTU--Yale Joint Center for Biostatistics and Data Science, Shanghai Jiao Tong University, Shanghai, China}

\date{\today}
    
\maketitle


\begin{abstract}
Probabilistic models of Directed Acyclic Graphs (DAGs) with latent variables, or latent-variable DAGs, impose equality constraints on the observed data distribution beyond ordinary conditional independencies. These so-called Verma constraints arise naturally in nested Markov models associated with the latent projection of latent-variable DAGs, known as Acyclic Directed Mixed Graphs. While nested Markov models have been extensively studied from the perspectives of graphical representation and causal identification, their implications for semiparametric efficiency theory remain comparatively less well understood. In this paper, we develop results toward establishing a semiparametric framework for statistical models defined by Verma constraints. Our starting point is the observation that nested Markov constraints admit representations as weighted conditional moment restrictions under post-fixing distributions induced by graphical fixing operations. We show that these fixing operations induce weighted orthogonality relations in $L^2 (\bbP)$, thereby converting Verma constraints into explicit tangent-space restrictions.
Building on this representation, we characterize tangent-space orthocomplement for statistical models defined only by a single nested Markov constraint through residualized weighted moment functions. This geometric formulation yields Hilbert-space characterizations of semiparametric efficient influence functions and efficiency bounds via orthogonal projection and equivalent minimum-variance formulations. We further discuss how the resulting geometric perspective naturally extends to models involving multiple nested Markov constraints, for which we characterize a subspace of the orthocomplement as sums of the corresponding weighted orthogonality relations, but still leaving the actual tangent space an open problem.  More broadly, our results suggest that conditional moment constraints provide a natural bridge between nested graphical structure and semiparametric Hilbert-space geometry and pave the way toward establishing a complete semiparametric efficiency theory for general nested Markov models. We illustrate the framework through several canonical examples of latent-variable DAGs. 
\end{abstract}

\keywords{nested Markov models; generalized conditional independence; semiparametric efficiency; tangent spaces; acyclic directed mixed graphs.}

\newpage

\onehalfspacing
\section{Introduction}
\label{sec:intro}

Probabilistic directed acyclic graph (DAG) models with latent variables induce equality constraints on the observed data distribution. Some of these constraints correspond to ordinary conditional independencies implied by graphical factorization, while others arise from more subtle restrictions that are not expressible as ordinary conditional independence relations. These latter constraints, known as Verma constraints, arise naturally in acyclic directed mixed graphs (ADMGs), which summarize latent-variable DAGs through latent projection, and constitute a defining feature of nested Markov models \citep{verma1990equivalence, robins1999testing, richardson2003markov, shpitser2022multivariate, richardson2023nested}. Over the last two decades, nested Markov models have emerged as a unifying framework for studying latent-variable DAG models. They underpin modern causal identification theory in the presence of hidden confounding \citep{tian2002general, shpitser2008complete} and provide a systematic characterization of equality constraints, including Verma constraints, implied by latent-variable DAGs \citep{evans2018margins, richardson2023nested, zhao2025statistical}. 

While nested Markov models have been extensively studied from the perspectives of graphical representation, equality constraints, and nonparametric causal identification, their implications for semiparametric efficiency theory remain comparatively less developed. In semiparametric theory, structural assumptions restrict the space of allowable data-generating distributions and thereby determine the tangent space, efficient influence function, and semiparametric efficiency bound for a target parameter \citep{van2003unified, tsiatis2006semiparametric}. For DAGs without latent variables, these restrictions typically arise through conditional independence relations and admit relatively transparent tangent-space characterizations \citep{rotnitzky2020efficient, guo2023variable}. More recently, \citet{phung2026characterization} studied tangent-space orthocomplements for semiparametric models defined by collections of ordinary marginal and conditional independence restrictions. Their framework, however, explicitly excludes the generalized independence constraints that distinguish nested Markov models from ordinary Markov models. 

Several recent works have demonstrated that graphical structures can yield meaningful efficiency gains. In causal DAG models, graphical criteria have been developed to identify optimal adjustment strategies \citep{witte2020efficient, rotnitzky2020efficient, henckel2022graphical}. In latent-variable settings, semiparametric efficiency theory has been developed for particular classes of ADMGs and identification functionals whose relevant observed-data restrictions reduce to ordinary conditional independence constraints \citep{bhattacharya2022semiparametric, guo2023flexible, guo2024average, guo2025causal}. Additional efficiency gains have been established under specialized structural restrictions with no-direct-effect assumptions that induce nested Markov restrictions \citep{robins1999testing, caniglia2019emulating, liu2021efficient}. More recently, \citet{guo2023flexible} and \citet{guo2025causal} studied two prominent ADMG examples involving mean-scale Verma constraints, deriving tangent-space characterizations and efficiency bounds for the corresponding average causal effects. Despite these advances, existing results are largely tailored to specific graphical structures and estimands. A general semiparametric framework for nested Markov models that generalize Verma constraints is lacking.

There are several obstacles toward establishing a complete semiparametric efficiency theory for nested Markov models.
 The first is that Verma constraints encoded in ADMG models are not naturally expressed as ordinary conditional mean restrictions under the observed data law, hindering straightforward parameterizations as in DAG models. Instead, the relevant orthogonality relations emerge only after applying graphical fixing operations that induce reweighted post-fixing distributions. From a geometric perspective, these fixing operations generate weighted orthogonality relations in $L^2 (\bbP)$ through inverse-probability weighting operators (or fixing operators), so that the tangent-space restrictions associated with Verma constraints take the form of weighted residual orthogonality conditions rather than ordinary conditional independence restrictions.

The importance of such fixing-based representations has already been recognized in other inferential settings. For example, \citet{bhattacharya2022testability} exploited a Verma constraint arising in a latent-variable graphical model to develop parametric specification tests. 
Building on the same graphical structure, \citet{guo2023flexible} developed flexible semiparametric procedures, including doubly robust tests, for assessing assumptions encoded by generalized independence constraints. 
\citet{thams2023statistical} showed how dormant independence constraints can be tested by reweighting the observed distribution so that the generalized independence becomes an ordinary independence restriction in a shifted distribution. \citet{guo2025rank} and \citet{dhawan2026debiased} subsequently developed subsampling-based tests to testing Verma constraints under such reweighting schemes. 
These developments highlight the growing role of Verma constraints in statistical inference beyond causal identification. 

At the same time, the fixing-based representation of a Verma constraint suggests a natural connection with the literature on semiparametric models defined by conditional moment restrictions \citep{ai2003efficient, ai2012semiparametric, chen2026local}. Bridging this literature with the fixing operator in ADMGs provides a route from nested Markov restrictions to the explicit characterization of tangent-space geometry.

A second obstacle is that nested Markov models, even with only a single Verma constraint, do not have natural parameterizations except in a few special subclasses \citep{shpitser2012parameter, shpitser2018acyclic, evans2018margins}. The problem becomes even more challenging when multiple constraints are present. This difficulty often precludes standard textbook strategies for constructing parametric submodels \citep{van2003unified} to prove that a candidate for a tangent space of a nested Markov model is indeed \emph{the tangent space}, rather than its superset. The standard strategies generally conjecture linear perturbations of the observed data law or each conditional component of its Bayes factorization as a candidate submodel. In more complicated scenarios, it is sometimes indispensable to postulate nonlinear perturbations, such as polynomials of higher degrees \citep{dong2026marginal}. 

It is worth clarifying that characterizing a nontrivial superset of the tangent space is nonetheless useful for statistical practice: Although projecting an influence function onto this superset may not lead to the unique efficient influence function, the resulting projection still yields an influence function with no greater variance, by the Pythagorean theorem. To our knowledge, even a characterization of such a superset is lacking in the literature, except for a few special cases \citep{liu2021efficient, phung2026characterization}.

\subsection*{Contributions and organization}

In this paper, we develop a semiparametric framework for statistical models defined by Verma constraints. Our starting point is the observation that nested Markov constraints admit representations as weighted conditional moment restrictions under post-fixing distributions induced by graphical fixing operations. This representation allows tools from semiparametric efficiency theory for moment-restricted models to be brought into the study of nested Markov models. We show that fixing operations induce weighted orthogonality relations in $L^2(\bbP)$ and thereby convert Verma constraints into explicit restrictions on the tangent space. 

Building on this formulation, we characterize tangent-space orthocomplements for statistical models defined by a single Verma constraint. Our analysis proceeds primarily through orthocomplement characterizations, since the weighted score restrictions induced by graphical fixing operations admit particularly tractable representations in the dual Hilbert-space geometry. In particular, pathwise differentiation of the weighted moment restrictions generates explicit orthocomplement elements through residualized weighted moment functions associated with the fixing structure of the underlying ADMG. 

These orthocomplement characterizations lead directly to derivations of efficient influence functions, and in turn, semiparametric efficiency bounds. Given an arbitrary influence function under the nonparametric model, the efficient influence function under the Verma-constrained model is obtained through orthogonal projection onto the restricted tangent space, or equivalently through a minimum-variance characterization over admissible orthocomplement directions. This yields a unified Hilbert-space perspective that connects graphical fixing operations, Verma constraints, conditional moment restrictions, tangent-space geometry, and semiparametric efficiency. 

Tangent spaces and the resulting semiparametric efficiency bound are conditional on the correctness of the underlying semiparametric model and, in our case, the correctness of the ADMG. The recent literature on statistics and economics discusses whether less efficient estimators that are robust against potential model misspecification should be used instead \citep{andrews2025purpose}. However, \citet{adusumilli2026you} demonstrated that semiparametrically efficient estimators may remain preferable for follow-up decision making under \emph{local} misspecification of the underlying model, providing further motivation for studying efficiency within structured graphical models.

We illustrate the framework through several canonical examples, including the Napkin graph \citep{pearl2009causality, pearl2018book} and an extended front-door model \citep{pearl2009causality, bhattacharya2022testability}, the latter of which is closely related to no-direct-effect restrictions and the g-null hypothesis studied by \citet{robins1997estimation, robins1999testing}. In each case, we derive explicit orthocomplement representations and corresponding efficient influence functions, highlighting how nested Markov structure induces additional efficiency gains relative to models that ignore the associated Verma constraints. We further illustrate the proposed estimators in real-data applications based on the Framingham Heart Study and the Finnish Life Course study; these analyses are reported in the supplementary materials. 
Although our primary focus is on models defined by a single nested Markov constraint, we also discuss extensions to models involving multiple nested Markov constraints, where tangent spaces arise through intersections of constraint-specific score spaces and orthocomplements combine through corresponding weighted orthogonality relations.

The main contributions of this paper are as follows. 
(i) We show that Verma constraints admit representations as weighted conditional moment restrictions under post-fixing distributions induced by graphical fixing operations. We use this to characterize the tangent-space orthocomplements induced by a single Verma constraint through its residualized weighted moment constraints. Importantly, we demonstrate the existence of parametric submodels in this semiparametric model, whose scores are only restricted by being orthogonal to the conjectured tangent-space orthocomplements via differentiating the weighted moment constraint as in the calculus of influence functions \citep{hines2022demystifying, kennedy2024semiparametric}. Constructing such parametric submodels is not necessarily a trivial task, and in the case of Verma constraint, often involves constructing nonlinear paths locally around the true data generating law \citep{dong2026marginal}.  
(ii) We develop projection and minimum-variance characterizations of semiparametric efficient influence functions under Verma-constrained models. 
(iii) We introduce indexed functionals whose invariance in the relevant variable index captures mean-scale implications of Verma constraints, and discuss how these lower-dimensional restrictions can be used to construct valid variance-improved influence functions without requiring projection onto the full Verma orthocomplement. 
(iv) We discuss how the single-Verma-constraint framework developed here serves as a building block toward semiparametric inference in more general nested Markov models.
(v) We illustrate the framework through canonical examples, including the Napkin graph and extended front-door/no-direct-effect models, and evaluate the resulting estimators through simulation studies and real-data applications, highlighting the efficiency gains that can be obtained by exploiting Verma constraints.

The remainder of the paper is organized as follows. Section~\ref{sec:review} reviews background material on nested Markov models, fixing operations, and semiparametric theory. Section~\ref{subsec:constraint} formulates Verma constraints as weighted conditional moment restrictions under post-fixing laws. Section~\ref{subsec:tangent_space} develops the corresponding tangent-space and orthocomplement geometry under a single nested Markov constraint. 
Section~\ref{subsec:semipar_efficiency} develops projection and minimum-variance characterizations of semiparametric efficient influence functions under Verma-constrained models.
Section~\ref{subsec:mean_scale_verma} studies indexed functional invariance induced by Verma restrictions and its implications for semiparametric efficiency. 
Section~\ref{sec:beyond_single_constraint} discusses extensions beyond the single-constraint setting. 
Section~\ref{sec:applications} illustrates the framework through several canonical examples.  
Section~\ref{sec:sims} presents simulation studies, and 
Section~\ref{sec:conclusion} concludes. 
The real-data applications and all proofs are deferred to the supplementary materials.  

\section{Review of Essential Concepts}
\label{sec:review}

\subsection{A review of nested Markov models and acyclic directed mixed graphs}
\label{sec:nnm}

We let $\calG \equiv \calG(V;W)$ denote a conditional ADMG, where $V$ and $W$ denote the sets of random and fixed vertices, respectively. We write $O$ for the collection of observed variables $V$ and $W$. Following standard graphical-model notation, we do not distinguish between a vertex and its associated random variable. Thus, capital letters may refer either to vertices (or collections of vertices) or to the corresponding random variables, depending on context. For example, $X \subseteq V$ denotes a subset of vertices, whereas $X \indep Y$ denotes independence between the corresponding random variables. Lowercase letters denote realizations of random variables. 

The notation $\phi_R(\cdot)$, for $R \subseteq V$, denotes the fixing operator, whose argument may be either a kernel $q_{V \mid W}$ or an ADMG $\calG(V;W)$. We also adopt the standard familial terminology of graphical models, including parents $\pa_{\calG}(\cdot)$, children $\ch_{\calG}(\cdot)$, and descendants $\de_{\calG}(\cdot)$. For any vertex $X \in V$, we let $\dis_{\calG}(X)$ denote the district of $X$, namely the set of vertices connected to $X$ by bidirected paths.

We now elaborate on the definition of fixing operation and what type of vertices can be classified as being \emph{fixable}, the two important concepts related to the theory of ADMG and nested Markov models  \citep{richardson2023nested}. A vertex $R \in V$ is said to be fixable if $\dis_{\calG} (R) \cap \left\{\de_{\calG}(R) \setminus \{R\}\right\} = \emptyset$. Given a kernel $q_{V \mid W}$ associated with ADMG $\calG (V; W)$, the corresponding
probabilistic operation of fixing $R \in V$, denoted by $\phi_{R} (q_{V \mid W}; \calG)$, is defined as follows: 
\begin{align}
\phi_{R} (q_{V \mid W}; \calG) \coloneqq \frac{q_{V \mid W} (V \mid W)}{q_{V \mid W} (R \mid \mb_{\calG} (R), W)} \equiv q_{V \setminus R \mid W \cup R}(V \setminus R \mid W \cup R), \label{eq:fixing_operation}
\end{align}
where $\mb_{\calG} (R)$ denotes the Markov blanket of $R$ and is defined as the union of $\dis_{\calG} (R)$ and the parents of $\dis_{\calG} (R)$, excluding $R$ itself. Fixing a vertex $R$ in an ADMG $\calG$, denoted by $\phi_{R} (\calG)$, means removing all edges pointing to $R$, including both directed and bidirected ones, leading to a new ADMG $\calG (V \setminus \{R\}; W \cup \{R\})$. According to \citet{richardson2023nested}, the kernel $q_{V \setminus \{R\} \mid W \cup \{R\}}$ is also nested-Markov factorized with respect to the post-fixing ADMG $\calG (V \setminus \{R\}; W \cup \{R\})$. When $W = \emptyset$, we use $p_{V} \equiv p$ to denote the joint probability density function of all random variables $V$. Intuitively speaking, fixing operation $\phi_{R} (\cdot)$ corresponds to a change of distribution from the original to the one that intervenes vertex $R$, via inverse probability weighting; and in the post-fixing ADMG, the vertex $R$ becomes fully exogenous. The subscript in the fixing operation can be a set of vertices, say $R \subseteq V$. For more details on nested Markov models constraints, see Appendix~\ref{app:NMM}. 

\subsection{A review of semiparametric theory}
\label{sec:semi}

Throughout, $\bbP$ denotes the observed-data distribution, and $\bbE[\cdot]$ denotes expectation with respect to $\bbP$ unless otherwise indicated. Expectations taken under other probability laws, such as post-fixing laws or parametric submodels, are indicated by appropriate subscripts.

Given a smooth target functional $\psi: \calP \rightarrow \bbR$ that is differentiable in the sense of \citet{van1991differentiable}, it is well known that \emph{regular and asymptotically linear} (RAL) estimators are a class of estimators for which standard nominal $(1 - \alpha$)-Wald-type confidence intervals can be easily constructed and have uniformly correct asymptotic coverage probability. Any RAL estimator $\tilde{\psi}_{n}$ of $\psi \equiv \psi (\bbP)$, based on $n$ i.i.d. observed data $\{O_{i}\}_{i = 1}^{n} \overset{\rm i.i.d.}{\sim} \bbP$ drawn from the data generating distribution $\bbP \in \calP$, has the following asymptotic linear representation: 
\begin{align*}
\sqrt{n} (\tilde{\psi}_{n} - \psi) = \frac{1}{\sqrt{n}} \sum_{i = 1}^{n} \tilde{\varphi} (O_{i}) + o_{\bbP} (1),
\end{align*}
where $\tilde{\varphi} (\cdot) \equiv \tilde{\varphi}_{\bbP} (\cdot)$ is the influence function of the RAL estimator $\tilde{\psi}_{n}$. Therefore, we have a bijection between the space of all RAL estimators and the space of all influence functions of $\psi$. In the sequel, to reduce clutter, we will drop the qualification ``smooth'' when we refer to the target functional.

Modern semiparametric theory addresses two main challenges in the estimation and inference of $\psi (\cdot)$ \citep{bickel1998efficient, van2003unified, tsiatis2006semiparametric}. Given a target functional $\psi (\cdot)$, the theory tells us that the space of all influence functions $\varphi (\cdot) \equiv \varphi_{\bbP} (\cdot)$ of $\psi (\bbP)$, locally at any $\bbP$ in the interior of $\calP$, denoted by $\interior (\calP)$, corresponds to the orthocomplement to the nuisance tangent space at $\bbP$, denoted by $\Lambda_{\bbP, \nuis}^{\perp}$. By definition, the nuisance tangent space $\Lambda_{\bbP, \nuis}$ consists of the span of all scores $s (\cdot)$ associated with (one-dimensional) parametric submodels $\bbP_{t} \in \calP$ such that $t \mapsto \psi (\bbP_{t})$ is locally stationary at $t = 0$: that is,
\begin{align*}
\left. \frac{\diff}{\diff t} \right\vert_{t = 0} \psi (\bbP_{t}) = 0.
\end{align*}
Hence, if a statistician has a target functional in mind, she can find all RAL estimators of $\psi$ at $\bbP$ by characterizing $\Lambda_{\bbP, \nuis}$; see \citet{van2003unified} for various examples. Different RAL estimators have different asymptotic variances. The smallest possible asymptotic variance of RAL estimators is called the \emph{semiparametric variance bound} (SVB). The RAL estimator that achieves the SVB is the unique semiparametric efficient estimator whose influence function is the efficient influence function, denoted by $\varphi^{\eff} (\cdot) \equiv \varphi^{\eff}_{\bbP} (\cdot)$ and can be found by minimizing the variance of all influence functions $\varphi (\cdot) \in \Lambda_{\bbP, \nuis}^{\perp}$.

However, in certain applications, one may not have a particular functional of interest to begin with. It will be difficult to apply the approach of characterizing the nuisance tangent space and its orthocomplement. An alternative and arguably more general approach is to directly characterize the tangent space $\Lambda_{\bbP}$ of the statistical model $\calP$ locally at $\bbP$. Once the statistician has settled on a target functional of interest $\psi$, she can first find an influence function $\varphi (\cdot)$ of $\psi$, and the efficient influence function $\varphi^{\eff} (\cdot) \equiv \Pi (\varphi \mid \Lambda_{\bbP}) (\cdot)$ is simply the projection of $\varphi$ onto the tangent space $\Lambda_{\bbP}$. Equivalently, the set of all influence functions of $\psi(\bbP)$, denoted by $\calI \{\psi(\bbP)\}$, equals the following linear variety
\begin{equation}
\label{space of IFs}
\calI \{\psi(\bbP)\} = \varphi + \Lambda_{\bbP}^{\perp}.
\end{equation}
Since our goal is to develop a semiparametric theory of nested Markov models associated with ADMGs without restricting to particular target functionals, we adopt this latter approach by characterizing the tangent space $\Lambda_{\bbP}$ at $\bbP$ in the interior of $\calP_{\calG}$ and its orthocomplement. When a model $\calP$ is characterized by certain (conditional) moment constraints, a common calculus to deriving the tangent space orthocomplement is by differentiating the constraints along one-dimensional parametric submodels indexed by the scalar perturbation parameter $t$ locally at $t = 0$. The output of going through this tangent space calculus is a subspace $\calH$ of the actual tangent space orthocomplement. To show the reverse direction, i.e. this subspace $\calH$ is the full space, one needs to establish the existence of a parametric submodel in $\calP$ with the corresponding score with the sole restriction of being orthogonal to $\calH$. This latter step can be challenging when the constraints are complicated.

\section{Information Geometry of Nested Markov Models}
\label{sec:geometry}

\subsection{A general formulation of a nested Markov (equality) constraint} 
\label{subsec:constraint}

As discussed in Section~\ref{sec:semi}, establishing a general semiparametric efficiency theory for causal graphical models with latent variables is to characterize the tangent space $\Lambda_{\bbP}$ of a given law $\bbP$ lying in the interior of the nested Markov model $\calP_{\calG}$ associated with an ADMG $\calG$. A crucial observation is that the equality constraints encoded by $\calP_{\calG}$ admit representations as weighted conditional moment restrictions in the sense of \citet{ai2003efficient, ai2012semiparametric}. We therefore begin by expressing a single nested Markov (a.k.a. Verma) constraint as a weighted conditional moment restriction under the observed law. This representation will serve as the basis for the tangent-space characterizations and efficiency calculations developed in later sections.  

Let $\calP_{\calG}$ denote a statistical model nested Markov factorized according to an ADMG $\calG(V,E)$. Let $X,Y,Z \subseteq V$ denote pairwise disjoint subsets of observed variables, and let $R \subseteq V$ denote a (possibly empty) valid fixing set of variables. Consider the nested Markov constraint
\begin{equation}\label{eq:general_verma_constraint}
X \indep Y \mid Z \;\;\; [\phi_R (p)].
\end{equation}%

A distinguishing feature of a Verma constraint is that the conditional independence relation does not generally hold under the observed law itself. Rather, it holds only after applying one or more graphical fixing operations. Thus, a Verma constraint may be viewed as an ordinary conditional independence restriction in a post-fixing kernel, transported back to the observed-data distribution through the fixing operator. Ordinary conditional independence restrictions correspond to the special case in which no fixing operation is required.

Although $\phi_R(p)$ is generally a kernel rather than a probability law, throughout we write $\bbP_{\phi_R}$ for the probability law obtained by combining the post-fixing kernel $\phi_R(p)$ with a specified reference distribution $\tilde p(R)$ over the fixed variables. Expectations under this law are denoted by $\bbE_{\phi_R}[\cdot]$. The following proposition shows that \eqref{eq:general_verma_constraint} admits a representation as a weighted conditional moment restriction under the observed law. 

\begin{proposition}[Weighted moment representation of a Verma constraint]
\label{prop:weighted_moment_representation}
Suppose that $X \indep Y \mid Z \quad [\phi_R (p)]$. Then, for every pair of square-integrable functions
$f \in L^{2} (\bbP_{X, Z})$ and $g \in L^{2} (\bbP_{Y, Z})$,
\begin{equation}\label{eq:general_equality_constraint}
\bbE_{\phi_R} \left[ \left\{ f (X, Z) - \bbE_{\phi_R} [f (X, Z) \mid Z] \right\} g(Y,Z) \right] = 0,
\end{equation}
or equivalently
\begin{equation}\label{eq:general_equality_constraint2}
\bbE_{\phi_R} \left[ f (X, Z) \left\{ g (Y,Z) - \bbE_{\phi_R}[g (Y, Z) \mid Z] \right\} \right] = 0. 
\end{equation}
Furthermore, expectations taken under the post-fixing law $\bbP_{\phi_R}$ admit the following re-weighted representation under the observed law $\bbP$. Writing $\M := (R, \mb_\calG(R))$, we have 
\begin{equation}\label{eq:kernel_expectations}
\bbE_{\phi_R} [A] = \frac{\bbE [\omega_R(\M) \, A]}{\bbE [\omega_R(\M)]}, \qquad
\bbE_{\phi_R} [A \mid B]
= \frac{\bbE [\omega_R(\M) \, A \mid B]}{\bbE [\omega_R(\M) \mid B]}, \qquad 
\omega_R(\M) \coloneqq \frac{\tilde{p} (R)}{p (R \mid \mb_{\calG} (R))}.
\end{equation}
Consequently, \eqref{eq:general_verma_constraint} induces a collection of weighted conditional moment restrictions under $\bbP \in \calP_{\calG}$. 
\end{proposition}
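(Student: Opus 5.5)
The plan has three parts: derive the two orthogonality identities from conditional independence under $\bbP_{\phi_R}$; obtain the reweighting formulas by a change of measure from $\bbP$ to $\bbP_{\phi_R}$; then combine the two.

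\textbf{Step 1: the orthogonality identities.} I will treat \eqref{eq:general_equality_constraint} as a consequence of iterated expectations under $\bbP_{\phi_R}$. Condition on $(Y,Z)$ and write
\[
\bbE_{\phi_R}\!\left[\{f-\bbE_{\phi_R}[f\mid Z]\}\,g\right]=\bbE_{\phi_R}\!\left[g(Y,Z)\,\{\bbE_{\phi_R}[f(X,Z)\mid Y,Z]-\bbE_{\phi_R}[f(X,Z)\mid Z]\}\right].
\]
The conditional independence $X\indep Y\mid Z$ under $\phi_R(p)$ gives $\bbE_{\phi_R}[f(X,Z)\mid Y,Z]=\bbE_{\phi_R}[f(X,Z)\mid Z]$, so the bracket vanishes. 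The symmetric argument, conditioning on $(X,Z)$, yields \eqref{eq:general_equality_constraint2}. Both expressions equal $\bbE_{\phi_R}[fg]-\bbE_{\phi_R}\{\bbE_{\phi_R}[f\mid Z]\,\bbE_{\phi_R}[g\mid Z]\}$, which confirms they are equivalent.

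\textbf{Step 2: change of measure.} The density of $\bbP_{\phi_R}$ is $\tilde p(R)\,\phi_R(p)(V\setminus R\mid R)$. By \eqref{eq:fixing_operation}, iterated along a valid fixing sequence for a set $R$, this equals $p(V)\,\tilde p(R)/p(R\mid \mb_\calG(R))$, i.e.\ $p(V)\,\omega_R(\M)$. Hence $\diff\bbP_{\phi_R}/\diff\bbP\propto\omega_R(\M)$. Normalizing gives the unconditional formula. The normalizing constant is $\bbE[\omega_R]$, which equals $1$ when the kernel integrates properly, but keeping the ratio form is harmless. The conditional version is the abstract Bayes formula for conditional expectations under a change of measure.

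\textbf{Step 3: combining.} Substituting \eqref{eq:kernel_expectations} into \eqref{eq:general_equality_constraint} gives, for all $f,g$,
\[
\bbE\!\left[\omega_R\left\{f-\frac{\bbE[\omega_R f\mid Z]}{\bbE[\omega_R\mid Z]}\right\}g\right]=0,
\]
which is the claimed collection of weighted conditional moment restrictions under $\bbP$.

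The main obstacle is Step 2 when $R$ is a set. One must check that $p(R\mid \mb_\calG(R))$ correctly denotes the product of the sequential fixing denominators. This should follow from the result of \citet{richardson2023nested} that the post-fixing kernel does not depend on the valid fixing order. Integrability also needs care: $\omega_R>0$ requires positivity, and $\omega_R f g\in L^1(\bbP)$ follows from Cauchy--Schwarz if $\omega_R$ is bounded. I will add these as standing positivity and boundedness assumptions rather than prove them.
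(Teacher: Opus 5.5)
Your proposal is correct and follows essentially the same route as the paper. Both arguments use iterated expectations under $\bbP_{\phi_R}$ via $\bbE_{\phi_R}[f(X,Z)\mid Y,Z]=\bbE_{\phi_R}[f(X,Z)\mid Z]$, then the density $\diff\bbP_{\phi_R}/\diff\bbP=\omega_R(\M)/\bbE[\omega_R(\M)]$ together with the Bayes formula for conditional expectations, and then substitution.

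One refinement to the obstacle you flagged. Order-independence of fixing only makes the product of sequential fixing denominators well defined. It does not in general reduce that product to a single conditional density of the observed law. For example, fixing $\{M,A\}$ in the front-door graph $A\to M\to Y$, $A\leftrightarrow Y$ yields $p(A,M,Y)/q(Y\mid M)$ with $q(Y\mid M)=\sum_a p(a)\,p(Y\mid a,M)$. So for set-valued $R$, read $p(R\mid\mb_\calG(R))$ as shorthand for $p(V)/\phi_R(p)(V\setminus R\mid R)$; that is all the change-of-measure step uses.
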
 
See a proof in Appendix~\ref{app:prop:weighted_moment_representation}. 

\begin{remark}[\textit{Role of the reference density $\tilde p$}]
\label{rem:reference_density}
The probability law $\bbP_{\phi_R}$ depends on the choice of the reference distribution $\tilde p(R)$ over the fixed variables. Throughout, we assume that $\tilde p(R)$ is a known density (or mass function) satisfying $\tilde p(R)\ge 0$ and $\int \tilde p(r)\,\diff r = 1$, with support contained in the support of $p(R\mid \mb_{\calG}(R))$. We further assume positivity, $p(R\mid \mb_{\calG}(R)) > 0$ almost surely on the support of $\tilde p$, and sufficient integrability conditions to ensure that the weighted expectations and conditional expectations appearing in Proposition~\ref{prop:weighted_moment_representation} are well defined.
\end{remark}

\begin{remark}[\textit{Ordinary conditional independence as a special case}] \label{rem:ordinary independence as a special case} When $R = \emptyset$, no fixing operation is required, $\omega_R \equiv 1$, and the post-fixing conditional expectation operator $\bbE_{\phi_R} [\cdot\mid\cdot]$ reduces to the ordinary conditional mean operator. In this case, Proposition~\ref{prop:weighted_moment_representation} reduces to the familiar characterization of conditional independence through residual orthogonality. Thus, ordinary conditional independence restrictions arise as a special case of the proposed framework, whereas genuine Verma constraints induce weighted residual orthogonality through graphical fixing operations.
\end{remark}

Representation \eqref{eq:general_equality_constraint} places Verma-constrained models within the broader class of conditional moment restriction models. The distinguishing feature is that the moment restrictions are induced by graphical fixing operations rather than imposed directly under the observed law, yielding weights and residualization operators determined by the underlying ADMG. This representation forms the basis for the tangent-space characterization developed in the next subsection.

\subsection{Tangent space characterization under a single Verma constraint} \label{subsec:tangent_space}

Proposition~\ref{prop:weighted_moment_representation} represents a Verma constraint as a collection of weighted conditional moment restrictions indexed by square-integrable functions. We now characterize the local geometry of the resulting statistical model. Throughout this section, tangent spaces are viewed as closed subspaces of the Hilbert space
\begin{align*}
L_{0}^{2} (\bbP) = \Big\{
h \in L^{2} (\bbP): \bbE [h (O)] = 0 \Big\},
\end{align*}
equipped with inner product $\langle h_1, h_2 \rangle \coloneqq \bbE [h_1 (O) h_2 (O)]$. 

Let $\calP_{\calG}$ denote the statistical model defined by the single nested Markov constraint as in \eqref{eq:general_verma_constraint}. The following proposition characterizes the pathwise score restrictions induced by this constraint.

\begin{proposition}[Pathwise score restrictions]
\label{prop:pathwise_score_restrictions}
Let $\bbP$ lie in the interior of $\calP_{\calG}$, and let $\{\bbP_{t}: t \in (-\eps, \eps)\} \subseteq \calP_{\calG}$ be a regular parametric submodel through $\bbP$ at $t = 0$, with score $s (\cdot) = \left. \frac{\diff}{\diff t} \log p_{t} (\cdot) \right|_{t=0}$. For every pair of square-integrable functions $(f, g)$, the restriction $X \indep Y \mid Z \ [\phi_R(p_t)]$ implies
\begin{equation}\label{eq:pathwise_restriction_postfixing}
\frac{\diff}{\diff t}
\bbE_{\phi_R(p_t)}
\left[
\left\{
f(X,Z)
-
\bbE_{\phi_R(p_t)}[f(X,Z)\mid Z]
\right\}
g(Y,Z)
\right]
\Bigg|_{t=0}
=0.
\end{equation}
Equivalently, using the weighted representation in Proposition~\ref{prop:weighted_moment_representation},
\begin{equation}\label{eq:pathwise_restriction_observed}
\left.
\frac{\diff}{\diff t}
\bbE_{t}
[\psi_{f,g,t}(O)]
\right|_{t=0}
=0,
\end{equation}
where
\begin{equation}\label{eq:psi_fgt}
\psi_{f,g,t}(O)
\coloneqq
\omega_{R,t}(\M)
\bigg\{
f(X,Z)
-
\frac{
\bbE_{t}[\, \omega_{R,t}(\M)f(X,Z)\mid Z]
}{
\bbE_{t}[\, \omega_{R,t}(\M)\mid Z]
}
\bigg\} \,
g(Y,Z),
\end{equation}

\vspace{0.2cm} \noindent 
and $\omega_{R,t}(\M)={\tilde p(R)}/{p_t(R\mid \mb_{\calG}(R))}$, with $\M = (R, \mb_{\calG}(R))$. Consequently, pathwise differentiation of the constraint yields an element $\chi_{f,g}\in L_0^2(\bbP)$ satisfying
\begin{equation}\label{eq:generic_orthogonality}
\bbE[\chi_{f,g}(O)s(O)]= 0,
\end{equation}
for every score $s\in\Lambda_{\bbP}$. Hence $\chi_{f,g}\in\Lambda_{\bbP}^{\perp}$, where $\Lambda_{\bbP}$ denotes the tangent space of $\calP_{\calG}$ at $\bbP$.
\end{proposition}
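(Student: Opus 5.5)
The argument has three steps: (i) the constraint holds along the whole path, so its derivative vanishes; (ii) the same quantity is rewritten in observed-law form via Proposition~\ref{prop:weighted_moment_representation}; (iii) the derivative is computed explicitly and recognized as $\bbE[\chi_{f,g}s]$.

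\textbf{Step (i): the constraint holds identically in $t$.} Since $\{\bbP_t\}\subseteq\calP_{\calG}$, each $\bbP_t$ satisfies $X\indep Y\mid Z\ [\phi_R(p_t)]$. Proposition~\ref{prop:weighted_moment_representation}, applied to $\bbP_t$ in place of $\bbP$, gives
\[
G(t) := \bbE_{\phi_R(p_t)}\Big[\{f-\bbE_{\phi_R(p_t)}[f\mid Z]\}\,g\Big] = 0
\]
for every $t\in(-\eps,\eps)$. I will restrict to bounded $f,g$, which is dense in $L^2$, so that $f,g$ remain square-integrable under each $\bbP_t$. Because $G\equiv 0$ on an interval, its derivative at $0$ vanishes; this is \eqref{eq:pathwise_restriction_postfixing}.

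\textbf{Step (ii): observed-law form.} By \eqref{eq:kernel_expectations}, $G(t)=\bbE_t[\psi_{f,g,t}(O)]/\bbE_t[\omega_{R,t}(\M)]$. The denominator equals $1$, because $\int \tilde p(r)\,\diff r=1$ and integrating $\omega_{R,t}$ against $p_t(R\mid\mb)$ yields $1$. This requires the positivity assumptions of Remark~\ref{rem:reference_density}. Hence $\bbE_t[\psi_{f,g,t}]\equiv 0$, which gives \eqref{eq:pathwise_restriction_observed}.

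\textbf{Step (iii): computing the derivative.} Write $\mu_t(Z)=\bbE_t[\omega_{R,t}f\mid Z]/\bbE_t[\omega_{R,t}\mid Z]$. The product rule gives three contributions.

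\emph{(a)} The derivative of the outer expectation with $\psi_{f,g,0}$ held fixed is $\bbE[\psi_{f,g,0}\,s]$.

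\emph{(b)} The weight derivative is $\partial_t\omega_{R,t}=-\omega_R\, s_{R\mid\mb}$, where
\[
s_{R\mid\mb}=\bbE[s\mid \M]-\bbE[s\mid\mb_{\calG}(R)]
\]
is the score of the conditional of $R$. This contributes $-\bbE[\omega_R\{f-\mu_0\}g\, s_{R\mid\mb}]$. Since $s_{R\mid\mb}$ is a function of $\M$, I will write it as $\bbE[\tilde h(O)s]$, with $\tilde h$ obtained by replacing $s_{R\mid\mb}$ with the residualized conditional expectation: set $a(\M)=\bbE[\omega_R(f-\mu_0)g\mid\M]$, and then the term equals $-\bbE[\{a(\M)-\bbE[a\mid\mb_{\calG}(R)]\}s]$.

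\emph{(c)} Differentiating $\mu_t$ contributes $-\bbE[\omega_R\, g\,\dot\mu(Z)]=-\bbE[b(Z)\dot\mu(Z)]$, where $b(Z)=\bbE[\omega_R g\mid Z]$. Since $\mu_t$ is a ratio of conditional expectations, $\dot\mu$ is again linear in $s$ via the conditional-expectation derivative rule
\[
\partial_t\bbE_t[h\mid Z]=\bbE[\{h-\bbE[h\mid Z]\}s\mid Z],
\]
together with the weight-derivative term from (b). Collecting these terms gives a representer: $\dot\mu(Z)=\bbE[\kappa(O)s\mid Z]$ for an explicit $\kappa$, and the contribution becomes $-\bbE[b(Z)\kappa(O)s]$.

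Summing (a)–(c) gives $0=\bbE[\chi_{f,g}s]$, where $\chi_{f,g}$ is the sum of the three representers, centered by subtracting its mean; centering is harmless because $\bbE[s]=0$. This yields \eqref{eq:generic_orthogonality} for every score of a regular submodel. By linearity and $L^2$-continuity the orthogonality extends to the closed linear span $\Lambda_{\bbP}$, so $\chi_{f,g}\in\Lambda_{\bbP}^\perp$.

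\textbf{Main obstacle.} The hard part is justifying the interchange of $\diff/\diff t$ with the expectations and conditional expectations, especially through the ratio $\mu_t$ and the weight $1/p_t(R\mid\mb)$. I would first use quadratic-mean differentiability and boundedness of $f,g$. I would also assume $\omega_{R,t}$ is locally uniformly bounded, which follows from positivity (a Remark~\ref{rem:reference_density}-type condition), and then apply dominated convergence. A second, more bookkeeping-type concern is verifying $\chi_{f,g}\in L^2_0(\bbP)$. This follows once $\omega_R$ has a finite second moment, and I will state that as part of the integrability assumption.
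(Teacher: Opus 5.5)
Your proposal follows the paper's proof: the constraint holds identically along the path, and the normalizer $\bbE_t[\omega_{R,t}(\M)]\equiv 1$ reduces it to $\bbE_t[\psi_{f,g,t}(O)]\equiv 0$; differentiating under the integral then gives $\bbE[\psi_{f,g,0}s]$ plus weight- and centering-derivative terms that collect into a mean-zero representer, and your steps (a)--(c) simply make the paper's ``collecting these terms'' explicit. One small correction in (c): when $Z$ is not a function of $\M$ (e.g.\ the Napkin graph, where $Z_{\mathrm{gen}}=A$), the piece $\bbE[\omega_R(f-\mu_0)\,\bbE[s\mid\M]\mid Z]$ of $\dot\mu$ need not have the form $\bbE[\kappa(O)s\mid Z]$, but you do not need a conditional representer; instead multiply by $b(Z)$, remove the inner conditioning on $Z$ by the tower property, and then condition on $\M$ exactly as in (b), which gives an unconditional representer and recovers the $\chi_{f,g}$ of Theorem~\ref{thm:orthocomplement_tangent_space}.
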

A proof is given in Appendix~\ref{app:prop:pathwise_score_restrictions}.

Proposition~\ref{prop:pathwise_score_restrictions} shows that each Verma constraint induces an infinite collection of directions indexed by $(f,g)$ that must lie in the orthocomplement to the tangent space. However, the proposition does not yet provide an explicit representation of these directions or establish that they generate the entire orthocomplement. The key observation is that pathwise differentiation of the weighting operator and the post-fixing conditional expectations induces a residualization structure analogous to that arising in ordinary conditional independence models, albeit modified by the fixing weights. The following theorem makes this structure explicit and shows that these directions span the orthocomplement to the tangent space of the single-constraint model. 

\begin{theorem}[Orthocomplement tangent space under a nested Markov constraint]
\label{thm:orthocomplement_tangent_space}
Consider the statistical model $\calP_{\calG}$ defined by the single nested Markov constraint $X \indep Y \mid Z\ [\phi_R(p)]$, where $X$, $Y$, and $Z$ are pairwise disjoint sets of random variables, and $R$ is a valid fixing set. No disjointness between $R$ and $X\cup Y\cup Z$ is assumed. Under the regularity conditions stated in Proposition~\ref{prop:weighted_moment_representation}, the orthocomplement to the tangent space at $\bbP \in \interior (\calP_{\calG})$ is
\begin{equation}\label{eq:orthocomp}
\Lambda_{\bbP}^{\perp}
=
\overline{\sp}
\left\{
\chi_{f,g}(O):
f\in L^2(\bbP_{X,Z}),
\;
g\in L^2(\bbP_{Y,Z})
\right\},
\end{equation}
where the closure is taken in $L_0^2(\bbP)$ and
\begin{align}
\chi_{f,g}(O)
&\coloneqq
\widetilde\chi_{f,g}(O)
-
\bbE[\, \widetilde\chi_{f,g}(O)\mid \mb_{\calG} (R), R]
+
\bbE[\, \widetilde\chi_{f,g}(O)\mid \mb_{\calG}(R)],
\label{eq:chi_residualized}
\\[0.2cm]
\widetilde\chi_{f,g}(O)
&\coloneqq
\omega_R(\M)
\bigg\{
f(X,Z)
-
\frac{
\bbE[\, \omega_R(\M)f(X,Z)\mid Z]
}{
\bbE[\, \omega_R(\M)\mid Z]
}
\bigg\}
\bigg\{
g(Y,Z)
-
\frac{
\bbE[\, \omega_R(\M)g(Y,Z)\mid Z]
}{
\bbE[\, \omega_R(\M)\mid Z]
}
\bigg\},
\label{eq:chi_raw}
\end{align}
with $\omega_R(\M) := {\tilde p(R)}/{p(R\mid \mb_{\calG}(R))}$ and $\M := (R, \mb_{\calG}(R))$. 
\end{theorem}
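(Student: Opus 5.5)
The plan is to prove the two inclusions separately: first that every $\chi_{f,g}$ lies in $\Lambda_{\bbP}^{\perp}$, and then that any direction orthogonal to all of them is the score of a submodel staying inside $\calP_{\calG}$.

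\emph{Inclusion ``$\supseteq$''.} Fix a regular submodel $\{\bbP_t\}\subseteq\calP_{\calG}$ with score $s$. By Proposition~\ref{prop:pathwise_score_restrictions}, $t\mapsto\bbE_t[\psi_{f,g,t}(O)]$ is identically zero for every admissible pair $(f,g)$.

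First, replace $g$ by its post-fixing residual
\[
\bar g_t := g - \frac{\bbE_t[\omega_{R,t}g\mid Z]}{\bbE_t[\omega_{R,t}\mid Z]}.
\]
This is legitimate because \eqref{eq:general_equality_constraint} holds for every function of $(Y,Z)$. With this choice, the term coming from differentiating the inner conditional mean of $f$ is multiplied by $\bbE[\omega_R \bar g\mid Z]=0$, so it drops out. The same argument disposes of the derivative of $\bar g_t$ itself, by symmetry using \eqref{eq:general_equality_constraint2}.

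Two derivative terms then remain. One is $\bbE[\widetilde\chi_{f,g}\,s]$, from differentiating the outer law. The other is $-\bbE[\widetilde\chi_{f,g}\,s_{R}]$, from $\partial_t\omega_{R,t} = -\omega_R\, s_R$, where
\[
s_R := \bbE[s\mid R,\mb_{\calG}(R)]-\bbE[s\mid \mb_{\calG}(R)]
\]
is the score of $p_t(R\mid\mb_{\calG}(R))$. Moving the conditional expectations onto $\widetilde\chi_{f,g}$ by the tower property gives $\bbE[\chi_{f,g}s]=0$ with $\chi_{f,g}$ exactly as in \eqref{eq:chi_residualized}. Finally, $\bbE[\chi_{f,g}]=\bbE[\widetilde\chi_{f,g}]=0$ by the constraint itself, so $\chi_{f,g}\in L_0^2(\bbP)$. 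Closure of $\Lambda_{\bbP}^{\perp}$ then gives the inclusion of the closed span.

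\emph{Inclusion ``$\subseteq$''.} Take a bounded $s\in L_0^2(\bbP)$ orthogonal to every $\chi_{f,g}$; by density, bounded directions suffice. I will exhibit a submodel in $\calP_{\calG}$ with score $s$. Decompose $s=s_R+s_{\rm rest}$, with $s_R$ as above.

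The perturbation along $s_R$ is realized by tilting only $p(R\mid\mb_{\calG}(R))$, namely
\[
p_t(R\mid\mb)\propto p(R\mid\mb)\{1+t s_R\},
\]
while keeping the kernel $\phi_R(p)$ fixed. Since the constraint \eqref{eq:general_verma_constraint} is a statement about the kernel, it is preserved along this path.

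For $s_{\rm rest}$, transport it to the post-fixing law, $u := s_{\rm rest}$ viewed under $\bbP_{\phi_R}$; this is legitimate because $s_{\rm rest}$ is, by construction, a score of the kernel. Then reread the orthogonality $\bbE[\chi_{f,g}s]=0$ as $\bbE_{\phi_R}[\{f-\bbE_{\phi_R}(f\mid Z)\}\{g-\bbE_{\phi_R}(g\mid Z)\}\,u]=0$. By the classical characterization of the conditional-independence model, this says $u$ lies in the tangent space of the model $X\indep Y\mid Z$ under $\bbP_{\phi_R}$. Hence $u$ decomposes into scores of $p_{\phi_R}(Z)$, $p_{\phi_R}(X\mid Z)$, $p_{\phi_R}(Y\mid Z)$ and $p_{\phi_R}(\text{rest}\mid X,Y,Z)$. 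Each component is perturbed by a bounded linear tilt, which preserves the conditional independence exactly. The perturbed kernel is then recombined with $p_t(R\mid\mb_{\calG}(R))$ to produce $p_t$.

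\emph{Main obstacle.} I expect the hard step to be this recombination. The factor $p(R\mid\mb_{\calG}(R))$ and the kernel need not be variation independent, because $\mb_{\calG}(R)$ may contain descendants of $R$ within its district. A naive product therefore need not integrate to one, and need not return the prescribed kernel after refixing.

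My first attempt is to define $p_t$ through the district factorization, i.e.\ perturb the relevant district kernels of $\calG$ directly, and check that $\phi_R(p_t)$ reproduces the perturbed kernel. Where that fails, I would use a nonlinear (normalized exponential, or second-order corrected) path in the spirit of \citet{dong2026marginal}. Its first-order term is $s$, and its higher-order terms are chosen to restore the constraint exactly. Such corrections do not alter the score at $t=0$.
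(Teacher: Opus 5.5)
Your argument for $\overline{\mathrm{span}}\{\chi_{f,g}\}\subseteq\Lambda_{\bbP}^{\perp}$ is correct and is essentially the paper's computation. Residualizing $g$ before differentiating is a tidy replacement for the paper's quotient-rule step.

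The reverse inclusion, however, is not established. Your reduction is fine as far as it goes. The tilt $p_t=p\{1+ts_R\}$ leaves $\omega_{R,t}p_t=\omega_Rp$ unchanged. The remainder $u=s-s_R$ satisfies $\bbE_{\phi_R}[\bar f\bar g\,u]=0$, with $\bar f,\bar g$ the post-fixing residuals, and also $\bbE_{\phi_R}[u\mid R,\mb_{\calG}(R)]=\bbE[s\mid\mb_{\calG}(R)]$.

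What you then need is a path $Q_t$ with score $u$ that satisfies $X\indep Y\mid Z$ \emph{and} remains a post-fixing law, i.e.\ $Q_t(R,\mb_{\calG}(R))=\tilde p(R)\,Q_t(\mb_{\calG}(R))$. Every $\omega_Rp$ has this form, and only such $Q_t$ is returned by refixing $p_t=Q_t\,p(R\mid\mb_{\calG}(R))/\tilde p(R)$.

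Linear tilts of $p_{\phi_R}(Z)$, $p_{\phi_R}(X\mid Z)$, $p_{\phi_R}(Y\mid Z)$ and $p_{\phi_R}(\mathrm{rest}\mid X,Y,Z)$ keep the conditional independence. They generally violate the second restriction at order $t^2$. In the Napkin graph ($R=Z=X_{\mathrm{gen}}$, $Z_{\mathrm{gen}}=A$) the tilted law has $Z$-marginal $\tilde p(z)\{1+t^2\,\bbE_{\phi_Z}[u_Au_{Z\mid A}\mid Z=z]\}$. Reweighting in $(Z,W)$ to repair this generally destroys $Y\indep Z\mid A$.

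So $u$ lying in the intersection of the two tangent spaces does not yet give a path in the intersection of the two models. That is the whole content of this direction, and it is the same kind of issue the paper leaves open for multiple constraints. Relatedly, ``by density, bounded directions suffice'' requires bounded elements to be dense in $\{\chi_{f,g}\}^{\perp}$, not merely in $L_0^2(\bbP)$.

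Your closing paragraph anticipates a recombination failure but misattributes its cause. For a fixable vertex $R$, $\mb_{\calG}(R)$ contains no descendants of $R$. Fixability excludes strict descendants in $\dis_{\calG}(R)$, and a parent of the district descending from $R$ would create either such a descendant or a directed cycle. In any case the $s_R$-tilt above needs no variation independence. The real obstacle is the second-order one, and ``choose higher-order terms to restore the constraint exactly'' is precisely what must be proved.

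The paper does this explicitly. It perturbs $p(R,\mb_{\calG}(R))$ linearly and $p(X,Y,Z\mid R,\mb_{\calG}(R))$ by $1+ts+t^2r_t$, and expands the $Z$-conditional form of the constraint as $\sum_{k=0}^{8}\alpha_kt^k$. It obtains $\alpha_0=0$ from the constraint and $\alpha_1=0$ from the orthogonality of $s$ to the $\chi_{f,g}$. It then chooses $r_t=\kappa-\bbE[\kappa\mid R,\mb_{\calG}(R)]$ with $\kappa=\beta_0(X,Y,Z)+\beta_1(X,Y,Z)\,s(\mb_{\calG}(R))$, subject to two post-fixing moment conditions, so that $\alpha_2=\dots=\alpha_6=0$.

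To finish your route you need a comparable explicit correction or an existence argument. Options include a smooth parametrization of $\{Q:\ X\indep Y\mid Z,\ Q(R,\mb_{\calG}(R))=\tilde p(R)Q(\mb_{\calG}(R))\}$, or an implicit-function argument.
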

A proof is given in Appendix~\ref{app:thm:orthocomplement_tangent_space}.

\begin{remark}[\textit{Coupled score directions under Verma constraints}]
\label{rem:coupled_score_directions}
The residualization operator in Theorem~\ref{thm:orthocomplement_tangent_space} expresses each orthocomplement direction as a sum of components belonging to the ordinary likelihood score spaces associated with the observed-data factorization,  according to a valid topological ordering of the variables in the ADMG. Consequently, the resulting directions may lie in a direct sum of several orthogonal score spaces. This does not imply, however, that the orthocomplement equals that direct sum. The raw weighted moment functions to which the residualization operator is applied have the structured multiplicative form induced by the Verma constraint, and therefore couple the resulting score components. Thus, the individual components cannot generally be varied independently. This distinguishes the tangent-space geometry of Verma constraints from that of ordinary conditional independence restrictions. For the latter, likelihood factorization often yields componentwise score restrictions and corresponding direct-sum characterizations. Verma constraints instead impose restrictions jointly across likelihood components, so their orthocomplements are generally structured subspaces of the corresponding direct sums.
\end{remark}

When $R=\emptyset$, no fixing operation is required, $\omega_R\equiv1$, the post-fixing expectation coincides with ordinary expectation under $\bbP$, and the residualization terms involving $R$ and $\mb_{\cal G}(R)$ disappear. Theorem~\ref{thm:orthocomplement_tangent_space} therefore reduces to the classical residual-product characterization below

\begin{corollary}[Orthocomplement tangent space under an ordinary conditional independence] \label{cor:ordinary_ci_orthocomplement}
For a statistical model defined by a conditional independence restriction $X \indep Y \mid Z$, the orthocomplement to the tangent space at $\bbP$ is $\Lambda_{\bbP}^{\perp} = \overline{\sp} \left\{ \chi_{f,g}(O): f\in L^2(\bbP_{X,Z}), \; g\in L^2(\bbP_{Y,Z})\right\}$, where 
\begin{align*}
\chi_{f,g}(O)
=
\left\{
f(X,Z)-\bbE[f(X,Z)\mid Z]
\right\}
\left\{
g(Y,Z)-\bbE[g(Y,Z)\mid Z]
\right\}.
\end{align*}
\end{corollary}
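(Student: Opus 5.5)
We obtain the corollary as the $R=\emptyset$ specialization of Theorem~\ref{thm:orthocomplement_tangent_space}, and then check directly the two inclusions that the theorem supplies in general, since in this case both become elementary.

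\emph{Step 1: specialization.} With $R=\emptyset$ we have $\omega_R\equiv 1$, as in Remark~\ref{rem:ordinary independence as a special case}. Then $\bbE[\omega_R f\mid Z]/\bbE[\omega_R\mid Z]=\bbE[f\mid Z]$, so $\widetilde\chi_{f,g}$ in \eqref{eq:chi_raw} becomes the residual product $\{f-\bbE[f\mid Z]\}\{g-\bbE[g\mid Z]\}$. The residualization in \eqref{eq:chi_residualized} conditions on $(\mb_{\calG}(R),R)=\emptyset$ and on $\mb_{\calG}(R)=\emptyset$. The two added terms are therefore $-\bbE[\widetilde\chi_{f,g}]+\bbE[\widetilde\chi_{f,g}]$, which cancel, so $\chi_{f,g}=\widetilde\chi_{f,g}$. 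Moreover $\bbE[\widetilde\chi_{f,g}]=\bbE\big[\{f-\bbE[f\mid Z]\}\,\bbE[g-\bbE[g\mid Z]\mid X,Z]\big]$, and this vanishes because $X\indep Y\mid Z$ gives $\bbE[g(Y,Z)\mid X,Z]=\bbE[g\mid Z]$. Hence $\chi_{f,g}\in L^2_0(\bbP)$. This yields the stated formula.

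\emph{Step 2: direct verification.} As a self-contained check, we factor $p=p(Z)\,p(X\mid Z)\,p(Y\mid X,Z)$. For $\supseteq$, any score of a submodel in the CI model decomposes as $s=s_Z+s_{X\mid Z}+s_{Y\mid Z}$, where $s_{X\mid Z}$ has conditional mean zero given $Z$ and is a function of $(X,Z)$, and $s_{Y\mid Z}$ is a function of $(Y,Z)$ with conditional mean zero given $Z$. Each summand is orthogonal to $\chi_{f,g}$ by iterated conditioning and the CI. For example,
\[
\bbE[\chi_{f,g}\,s_{X\mid Z}]=\bbE\big[(f-\bbE[f\mid Z])\,s_{X\mid Z}\,\bbE[g-\bbE[g\mid Z]\mid X,Z]\big]=0 .
\]

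For $\subseteq$, we take the tangent space to be $\Lambda=L^2_0(Z)\oplus\{a(X,Z):\bbE[a\mid Z]=0\}\oplus\{b(Y,Z):\bbE[b\mid Z]=0\}$. Each element is realized by the bounded linear path $p_t=p(Z)(1+t a_Z)\,p(X\mid Z)(1+ta)\,p(Y\mid Z)(1+tb)$, which stays inside the CI model. The full space $L^2_0(\bbP)$ decomposes as $\Lambda$ plus $\mathcal{K}=\{h(X,Y,Z):\bbE[h\mid X,Z]=\bbE[h\mid Y,Z]=0\}$. Hence $\Lambda^\perp=\mathcal{K}$, and it remains to show that $\mathcal{K}$ equals the closed span of the residual products.

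\emph{Main obstacle.} The delicate step is the density claim: the closed span of the products $\{f-\bbE[f\mid Z]\}\{g-\bbE[g\mid Z]\}$ is all of $\mathcal{K}$. We would argue by contradiction. Suppose $h\in\mathcal{K}$ is orthogonal to every such product. Expanding the product shows that, for $h\in\mathcal{K}$, the cross terms vanish, so $\bbE[h\,f(X,Z)\,g(Y,Z)]=0$ for all $f,g$. Linear combinations of products $f(X,Z)g(Y,Z)$ are dense in $L^2(\bbP)$, because $\bbP$ is equivalent to the product-type conditional law under CI; concretely, one uses indicators of rectangles within each $Z$-fiber and a monotone class argument. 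Hence $h=0$. Boundedness of the paths (truncating $a,b$ and closing) handles the regularity of the submodels.
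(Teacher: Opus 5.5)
Your Step~1 is exactly the paper's argument. The paper gives no separate proof of this corollary. It simply observes that with $R=\emptyset$ one has $\omega_R\equiv 1$, the post-fixing conditional means become ordinary ones, and the two residualization terms (both now unconditional expectations) cancel. Theorem~\ref{thm:orthocomplement_tangent_space} therefore collapses to the residual-product form.

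Your Step~2 is a genuinely different, self-contained route, and it is correct. The conditional-independence model factorizes as $p(Z)\,p(X\mid Z)\,p(Y\mid Z)$ with variation-independent factors. Bounded \emph{linear} paths therefore already generate a dense set of directions in $L^2_0(Z)\oplus\{a(X,Z):\bbE[a\mid Z]=0\}\oplus\{b(Y,Z):\bbE[b\mid Z]=0\}$; the last two summands are orthogonal precisely because of the CI. So you never need the nonlinear $1+t\,s+t^2 r_t$ construction on which the completeness half of the paper's proof of Theorem~\ref{thm:orthocomplement_tangent_space} rests. You then identify $\Lambda_{\bbP}^{\perp}$ with $\mathcal K=\{h:\bbE[h\mid X,Z]=\bbE[h\mid Y,Z]=0\}$ and reduce ``the residual products span $\mathcal K$'' to density of finite sums of $f(X,Z)g(Y,Z)$. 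This is the classical argument, and it goes through as you describe.

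The trade-off is this. The paper's route is one line but inherits the full weight of the general construction. Yours is independent of it and also gives the intrinsic description $\Lambda_{\bbP}^{\perp}=\mathcal K$. That yields the closed-form projection $h\mapsto h-\bbE[h\mid X,Z]-\bbE[h\mid Y,Z]+\bbE[h\mid Z]$ on $L_0^2(\bbP)$, which the span description alone does not provide. In this special case it makes the basis approximation of Theorem~\ref{theorem:locally_eff} unnecessary.

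Two minor points. First, the density step needs no appeal to CI or to a product-type law. Indicators of measurable rectangles $\{X\in A,\,Y\in B,\,Z\in C\}$ form a $\pi$-system generating the product $\sigma$-field, and a Dynkin-class argument shows their span is dense in $L^2$ of any probability on it. Working ``within each $Z$-fiber'' is unnecessary and awkward when $Z$ is continuous.

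Second, in that step you should take $f,g$ bounded, as you implicitly do. For general $f\in L^2(\bbP_{X,Z})$ and $g\in L^2(\bbP_{Y,Z})$, CI gives $\bbE[\chi_{f,g}^2]=\bbE\{\Var(f\mid Z)\,\Var(g\mid Z)\}$, which can be infinite. The statement is best read with $f,g$ restricted to those for which $\chi_{f,g}\in L^2(\bbP)$. Bounded $f,g$ already yield the same closed span $\mathcal K$.
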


The preceding results serve as the building blocks for nested Markov models involving multiple equality constraints. In that setting, the tangent space is obtained as the intersection of the constraint-specific tangent spaces, while the orthocomplement is generated by the closed span of the corresponding orthocomplement directions. We return to this extension in Section~\ref{sec:beyond_single_constraint}.

\subsection{Semiparametric efficiency under a Verma constraint}
\label{subsec:semipar_efficiency}

The orthocomplement characterization in Theorem~\ref{thm:orthocomplement_tangent_space} immediately yields the semiparametric efficient influence function. Let $\psi(\bbP)$ be a pathwise differentiable parameter with nonparametric influence function $\varphi^{\mathrm{np}}$. Then the efficient influence function under the Verma-constrained model $\calP_{\calG}$, denoted by $\varphi^{\eff}$, is obtained by orthogonally projecting $\varphi^{\mathrm{np}}$ onto the tangent space. That is, 
\begin{equation}
\varphi^{\eff}(O)
=
\Pi\big(
\varphi^{\np}(O)
\mid
\Lambda_{\bbP}
\big)
=
\varphi^{\np}(O)
-
\Pi\big(
\varphi^{\np}(O)
\mid
\Lambda_{\bbP}^{\perp}
\big),
\label{eq:eif_projection}
\end{equation}
where $\Pi(\cdot\mid\Lambda_{\bbP})$ denotes orthogonal projection in $L_0^2(\bbP)$.

Equivalently, since every influence function under $\calP_{\calG}$ differs from $\varphi^{\np}(O)$ by an element of $\Lambda_{\bbP}^{\perp}$, the efficient influence function may be characterized variationally as the minimum-variance element of the affine space $\varphi^{\np}(O)+\Lambda_{\bbP}^{\perp}$. That is,
\begin{equation}
\varphi^{\eff}(O)
=
\arg\min_{\varphi \, \in \, \varphi^{\np}+\Lambda_{\bbP}^{\perp}}
\bbE[\varphi(O)^2].
\label{eq:eif_variational}
\end{equation}

The projection characterization in \eqref{eq:eif_projection} and the variational characterization in \eqref{eq:eif_variational} are equivalent Hilbert-space formulations of the semiparametric efficiency problem. Once the orthocomplement $\Lambda_{\bbP}^{\perp}$ has been characterized, efficient influence functions may be obtained either by explicit orthogonal projection or by solving the corresponding minimum-variance problem over $\varphi^{\mathrm{np}}+\Lambda_{\bbP}^{\perp}$.

Exact projection onto the orthocomplement $\Lambda_{\bbP}^{\perp}$ may be analytically intractable because this space is generally infinite dimensional. A natural approach is therefore to approximate the projection by restricting it to a finite-dimensional subspace generated by prespecified basis functions. One convenient construction is based on tensor-product basis expansions of the weighting functions $f$ and $g$; see Appendix~\ref{app:basis_approximation}. Given any positive integer $J$, we let $[J] := \{1, 2, \cdots, J\}$. The following theorem gives the resulting basis approximation to the efficient influence function. 

\begin{theorem}[Finite-dimensional approximation to the efficient influence function]
\label{theorem:locally_eff}

Let $f_1,\ldots,f_J\in L^2(\bbP_{X,Z})$ and $g_1,\ldots,g_K\in L^2(\bbP_{Y,Z})$ be prespecified basis functions, and define $\chi_{jk}(O) \equiv \chi_{f_j,g_k}(O)$, for $j \in [J]$ and $k \in [K]$, where $\chi_{f_j, g_k} (O)$ is the orthocomplement element defined in Theorem~\ref{thm:orthocomplement_tangent_space}. Collect these directions into the vector $\chi (O) = (\chi_{11} (O), \cdots, \chi_{1K} (O), \chi_{21} (O), \cdots, \chi_{JK} (O))^\top \in \bbR^{JK}$, and define the finite-dimensional subspace 
\begin{align*}
\Lambda_{\bbP}^{\perp, \sub} = \Big\{ \alpha^\top \chi (O): \alpha \in \bbR^{JK} \Big\}. 
\end{align*}
Then the orthogonal projection of the nonparametric IF $\varphi^{\np} (O)$ onto $\Lambda^{\perp, \sub}_{\bbP}$ is 
\begin{align*}
\Pi \big( \varphi^{\np} (O) \mid \Lambda^{\perp, \sub}_{\bbP} \big)
=
{\alpha^{\mathrm{opt}}}^\top \chi(O), 
\quad\text{where} \quad 
\alpha^\opt &=\arg\min_{\alpha \, \in \, \bbR^{JK}}\bbE\left[\left\{\varphi^\np(O)-\alpha^\top\chi(O)\right\}^2\right].
\end{align*}
If $\Sigma=\bbE\big[\chi(O)\chi(O)^\top\big]$ and $b=\bbE\big[\chi(O)\varphi^{\mathrm{np}}(O)\big]$, and $\Sigma$ is nonsingular, then $\alpha^{\mathrm{opt}}=\Sigma^{-1}b$.  Consequently, the finite-dimensional approximation to the efficient influence function is
\begin{align*}
\varphi^{\mathrm{eff,basis}}(O)=\varphi^{\mathrm{np}}(O)-\left({\alpha^{\mathrm{opt}}}\right)^\top \chi(O), 
\end{align*}
with corresponding variance reduction 
\begin{align*}
\Var\big[\varphi^{\mathrm{np}}(O)\big]-\Var\big[\varphi^{\mathrm{eff,basis}}(O)\big]
=
b^\top \Sigma^{-1} b.
\end{align*}
\end{theorem}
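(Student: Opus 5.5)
This is a finite-dimensional least-squares projection in $L_0^2(\bbP)$, so the plan is to use the normal equations and then Pythagoras.

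\emph{Step 1: the subspace.} First check that $\Lambda_{\bbP}^{\perp,\sub}$ is a closed subspace of $L_0^2(\bbP)$. By Theorem~\ref{thm:orthocomplement_tangent_space}, each $\chi_{jk}$ lies in $\Lambda_{\bbP}^{\perp}\subseteq L_0^2(\bbP)$. Hence the span of the $JK$ elements $\chi_{jk}$ is a finite-dimensional linear subspace. It is automatically closed, and it is contained in $\Lambda_{\bbP}^{\perp}$.

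\emph{Step 2: existence, uniqueness and the form of the projection.} The Hilbert projection theorem gives a unique element of $\Lambda_{\bbP}^{\perp,\sub}$ minimizing $\bbE[\{\varphi^{\np}-h\}^2]$ over $h\in\Lambda_{\bbP}^{\perp,\sub}$. Writing $h=\alpha^\top\chi$ turns this into the minimization over $\alpha$ stated in the theorem. If the $\chi_{jk}$ are linearly dependent, $\alpha^{\opt}$ may fail to be unique, but the fitted element $\alpha^\top\chi$ is still unique.

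\emph{Step 3: solving for $\alpha^{\opt}$.} The objective equals
\[
\bbE[(\varphi^{\np})^2]-2\alpha^\top b+\alpha^\top\Sigma\alpha,
\]
which is a convex quadratic in $\alpha$ because $\Sigma$ is positive semidefinite. Setting the gradient to zero gives the normal equations $\Sigma\alpha=b$. These are equivalent to the orthogonality conditions $\bbE[\{\varphi^{\np}-\alpha^\top\chi\}\chi_{jk}]=0$ for all $j,k$. When $\Sigma$ is nonsingular, the solution is $\alpha^{\opt}=\Sigma^{-1}b$.

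\emph{Step 4: variance reduction.} The function $\varphi^{\np}$ has mean zero, and each $\chi_{jk}$ has mean zero. So $\varphi^{\mathrm{eff,basis}}=\varphi^{\np}-{\alpha^{\opt}}^\top\chi$ also has mean zero, and its variance equals its second moment. By the orthogonality conditions from Step 3, $\varphi^{\mathrm{eff,basis}}$ is orthogonal to ${\alpha^{\opt}}^\top\chi$. Pythagoras then gives
\[
\Var[\varphi^{\np}]=\Var[\varphi^{\mathrm{eff,basis}}]+\bbE[({\alpha^{\opt}}^\top\chi)^2].
\]
The last term is ${\alpha^{\opt}}^\top\Sigma\alpha^{\opt}=b^\top\Sigma^{-1}b$.

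\emph{Remaining point and main obstacle.} I would also remark that $\varphi^{\mathrm{eff,basis}}$ is a valid influence function. It lies in $\varphi^{\np}+\Lambda_{\bbP}^{\perp}$ because $\Lambda_{\bbP}^{\perp,\sub}\subseteq\Lambda_{\bbP}^{\perp}$. The main potential obstacle is not the algebra but integrability: we need $\chi_{jk}\in L^2(\bbP)$ so that $\Sigma$ and $b$ are finite. This follows from the positivity and integrability conditions in Remark~\ref{rem:reference_density} together with the membership $\chi_{f,g}\in L_0^2(\bbP)$ established in Theorem~\ref{thm:orthocomplement_tangent_space}.
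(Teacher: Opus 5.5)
Your proposal is correct and follows essentially the same route as the paper: both expand the quadratic risk, solve the normal equations $\Sigma\alpha=b$ to get $\alpha^{\mathrm{opt}}=\Sigma^{-1}b$, and compute the variance reduction. The only cosmetic differences are that you obtain the final identity via the orthogonality conditions and Pythagoras, whereas the paper substitutes $\alpha^{\mathrm{opt}}$ directly into the expanded second moment, and that you add the (correct) remarks on uniqueness of the fitted element and on membership of $\varphi^{\mathrm{eff,basis}}$ in $\varphi^{\mathrm{np}}+\Lambda_{\bbP}^{\perp}$.
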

See a proof in Appendix~\ref{app:theorem:locally_eff}. 

\begin{remark}[\textit{Matrix representation of the basis approximation}]\label{rem:matrix_representation}
The $JK$ orthocomplement directions in Theorem~\ref{theorem:locally_eff} may equivalently be collected into the matrix $\mathsf X(O) = (\chi_{jk}(O))_{j,k} \in\bbR^{J\times K}$, so that $\chi(O) = \mathrm{vec} \{\mathsf{X}(O)\}$ under the corresponding vectorization convention. Accordingly, for any $\alpha\in\bbR^{JK}$, there exists a matrix $\Theta\in\bbR^{J\times K}$ with $\alpha = \mathrm{vec} (\Theta)$ such that $\alpha^\top\chi(O) = \Tr\left\{\Theta^\top\mathsf{X}(O) \right\}$. Thus, the projection problem in Theorem~\ref{theorem:locally_eff} may equivalently be written as
\begin{align}
\Theta^{\mathrm{opt}}
\in
\arg\min_{\Theta\in\bbR^{J\times K}}
\bbE
\left[
\left\{
\varphi^{\mathrm{np}}(O)
-
\Tr\left(
\Theta^\top\mathsf{X}(O)
\right)
\right\}^{2}
\right].
\label{eq:matrix_projection}
\end{align}
The matrix formulation is mathematically equivalent to the vector formulation, but preserves the natural $J\times K$ indexing induced by the basis-function pairs $(f_j,g_k)$.
\end{remark}

\subsection{Indexed invariance under a mean-scale Verma restriction} \label{subsec:mean_scale_verma}

The preceding results characterize the tangent-space geometry induced by a full distributional Verma constraint. We now consider a weaker class of restrictions that constrain only a prespecified feature of the post-fixing conditional distribution. Our primary interest is a useful consequence that arises in several canonical examples: a target parameter may admit multiple identification functionals indexed by the value of an observed variable, all of which identify the same parameter.

This indexed invariance was exploited by \citet{guo2023flexible} and \citet{guo2025causal} by forming weighted combinations of the corresponding nonparametric influence functions and selecting the optimal weights to minimize asymptotic variance. The geometric framework developed above clarifies when such combinations attain the semiparametric efficiency bound and when they exploit only part of the available model restriction. 

Consider the nested conditional independence in \eqref{eq:general_verma_constraint}. For a prespecified square-integrable function $h(Y)$, define a mean-scale Verma restriction as
\begin{align}
\bbE_{\phi_R}
\left[
h(Y)\mid X,Z
\right]
=
\bbE_{\phi_R}
\left[
h(Y)\mid Z
\right], 
\label{eq:mean_scale_verma}
\end{align}
and denote the resulting model by $\calP_{\mathrm{mean},h}$. 

The mean-scale restriction is obtained from the general moment representation of the distributional Verma constraint by fixing the outcome-side test function to be $g(Y,Z)=h(Y)$. More generally, one could use $g(Y,Z)=h(Y)q(Z)$ for an arbitrary square-integrable function $q$. This does not enlarge the resulting orthocomplement space, however, because $q(Z)$ is measurable with respect to the conditioning variable and can be absorbed into the unrestricted function $f$. In particular, $\chi_{f,hq} = \chi_{fq,h}$. Thus, taking $g(Y,Z)=h(Y)$ is a convenient normalization that leaves all
weighting by functions of the conditioning variable represented through $f$. Accordingly, Proposition~\ref{prop:weighted_moment_representation}
and Theorem~\ref{thm:orthocomplement_tangent_space} yield the resulting orthocomplement space
\begin{align}
\Lambda_{\bbP,\mathrm{mean},h}^{\perp}
=
\overline{\sp}
\left\{
\chi_{f,h}(O):
f\in L^2(\bbP_{X,Z})
\right\},
\label{eq:mean_scale_orthocomplement}
\end{align}
where $\chi_{f,h}$ is defined by \eqref{eq:chi_residualized}--\eqref{eq:chi_raw} with $g(Y,Z)=h(Y)$. See Appendix~\ref{app:eq:mean_scale_verma} for details. 

Since $\calP_{\calG}\subseteq\calP_{\mathrm{mean},h}$, the mean-scale model has a larger tangent space and therefore $\Lambda_{\bbP,\mathrm{mean},h}^{\perp} \subseteq \Lambda_{\bbP}^{\perp}$. Thus, the efficiency gain available under the mean-scale restriction cannot exceed that available under the full distributional Verma constraint. Furthermore, for a pathwise differentiable parameter $\psi:\calP_{\mathrm{mean},h}\rightarrow\bbR$ with nonparametric influence function $\varphi^{\mathrm{np}}$, its efficient influence function is $\varphi_{\mathrm{mean},h}^{\mathrm{eff}} = \varphi^{\mathrm{np}} - \Pi\big(\varphi^{\mathrm{np}}\mid \Lambda_{\bbP,\mathrm{mean},h}^{\perp}\big)$. 

We now identify a broad class of target parameters for which the mean-scale Verma restriction induces an indexed family of equivalent identification functionals. Let
\begin{align*}
\eta_h(\bbP;x,z)
=
\bbE_{\phi_R}
\!\left[
h(Y)\mid X=x,Z=z
\right],
\qquad
\overline\eta_h(\bbP;z)
=
\bbE_{\phi_R}
\!\left[
h(Y)\mid Z=z
\right].
\end{align*}
Suppose the target parameter is obtained by applying a common $\bbP$-dependent identification operator $\mathcal L_{\bbP}$ to the indexed post-fixing conditional mean,
\begin{align}
\psi(\bbP;x)
=
\mathcal L_{\bbP}
\!\left\{
\eta_h(\bbP;x,\cdot)
\right\},
\qquad
x\in\cal X.
\label{eq:general_identification_operator}
\end{align}

Under the mean-scale Verma restriction \eqref{eq:mean_scale_verma}, $\eta_h(\bbP;x,z) = \overline\eta_h(\bbP;z)$, so $\psi(\bbP;x) = \mathcal L_{\bbP} \!\left\{\overline\eta_h(\bbP;\cdot)\right\}$,  for all $x\in\cal X$. Hence, the parameter admits the indexed invariance representation
\begin{align}
\psi(\bbP;x)
=
\psi(\bbP),
\qquad
x\in\cal X.
\label{eq:indexed_functional_invariance}
\end{align}
Thus, indexed functional invariance is not an independent assumption but an integrated consequence of the underlying conditional mean-scale Verma restriction. 

The indexed invariance in \eqref{eq:indexed_functional_invariance} may itself be viewed as a restriction defining a target-specific statistical model. Let
\begin{align}
\calP_{\psi}
=
\left\{
\bbP:
\psi(\bbP;x)
=
\psi(\bbP;x'),
\quad
\text{for all }
x,x'\in\cal X
\right\},
\label{eq:target_specific_invariance_model}
\end{align}
and let $\Lambda_{\bbP,\psi}$ denote its tangent space at an interior distribution $\bbP$. Let $\varphi^{\mathrm{np}}(\bbP;x)$ denote the nonparametric influence function of $\psi(\bbP;x)$ for fixed $x$. Intuitively, because differences $\psi(\bbP;x)-\psi(\bbP;x')$ vanish throughout $\calP_{\psi}$, their pathwise derivatives must also vanish. The corresponding differences of nonparametric IFs therefore generate directions orthogonal to the tangent space.

\begin{theorem}[Orthocomplement under indexed functional invariance]\label{thm:indexed_invariance_orthocomplement}
Fix $x_0\in\cal X$, and suppose that, for each $x\in\cal X$, $\psi(\bbP;x)$ is generated according to \eqref{eq:general_identification_operator} and is pathwise differentiable in the nonparametric model with influence function $\varphi^{\mathrm{np}}(\bbP;x)$. The orthocomplement to the tangent space of $\calP_{\psi}$ in \eqref{eq:target_specific_invariance_model} at $\bbP$ is $\Lambda_{\bbP,\psi}^{\perp} = \overline{\sp}\left\{\varphi^{\mathrm{np}}(\bbP;x) - \varphi^{\mathrm{np}}(\bbP;x_0): x\in\cal X \right\}$. Equivalently,
\begin{align}
\Lambda_{\bbP,\psi}^{\perp}
=
\overline{\sp}
\left\{
\int
\varphi^{\mathrm{np}}(\bbP;x) \, c(x) \, \diff x
:
\int c(x) \, \diff x=0
\right\}.
\label{eq:indexed_invariance_integral_orthocomplement}
\end{align}
Consequently, the collection of influence functions for the common value $\psi(\bbP)$ under $\calP_{\psi}$ is
\begin{align}
\calI 
\{\psi(\bbP)\}
=
\left\{
\int
\varphi^{\mathrm{np}}(\bbP;x) \, 
\widetilde p(x) \, 
\diff x
:
\int
\widetilde p(x) \, 
\diff x
=
1
\right\}.
\label{eq:indexed_invariance_IF_class}
\end{align}
The efficient influence function is therefore the minimum-variance member of this class:
\begin{align}
\varphi^{\mathrm{eff}}
=
\argmin_{
\varphi\in\calI \{\psi(\bbP)\}
}
\bbE[\varphi(O)^2].
\label{eq:indexed_invariance_variational}
\end{align}
\end{theorem}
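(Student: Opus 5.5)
The proof has two inclusions for the orthocomplement, followed by a short affine-space argument for the influence-function class and its minimum-variance member.

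\emph{Easy inclusion.} Take any regular one-dimensional submodel $\{\bbP_t\}\subseteq\calP_{\psi}$ with score $s$. For each $x\in\cal X$ the map $t\mapsto \psi(\bbP_t;x)-\psi(\bbP_t;x_0)$ is identically zero. By pathwise differentiability of each $\psi(\cdot;x)$ in the nonparametric model, its derivative at $t=0$ equals $\bbE[\{\varphi^{\np}(\bbP;x)-\varphi^{\np}(\bbP;x_0)\}s(O)]$, which therefore vanishes. So every difference $\varphi^{\np}(\bbP;x)-\varphi^{\np}(\bbP;x_0)$ is orthogonal to every score, and by linearity and continuity it is orthogonal to the closed span $\Lambda_{\bbP,\psi}$. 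Hence
\[
\overline{\sp}\{\varphi^{\np}(\bbP;x)-\varphi^{\np}(\bbP;x_0)\}\subseteq\Lambda_{\bbP,\psi}^{\perp}.
\]
The integral form \eqref{eq:indexed_invariance_integral_orthocomplement} gives the same closed span. If $\int c=0$, then $\int\varphi^{\np}(\bbP;x)c(x)\,\diff x=\int\{\varphi^{\np}(\bbP;x)-\varphi^{\np}(\bbP;x_0)\}c(x)\,\diff x$, which lies in the closed span, assuming Bochner integrability in $L_0^2(\bbP)$. Conversely, each difference is a limit of such integrals: take $c$ to be an approximate point mass at $x$ minus one at $x_0$, and use $L^2$-continuity of $x\mapsto\varphi^{\np}(\bbP;x)$. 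I will state that continuity as a regularity assumption, or treat $\cal X$ as discrete, where the integral is a sum and the step is trivial.

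\emph{Reverse inclusion (main obstacle).} Let $\mathcal D$ denote the closed span of the differences. I must show that every $s\in L_0^2(\bbP)$ with $s\perp\mathcal D$, suitably bounded, is the score of a submodel staying in $\calP_{\psi}$. The plan is to start from the linear path $p_t=p(1+ts)$, along which each $\psi(\bbP_t;x)-\psi(\bbP;x)=t\,\bbE[\varphi^{\np}(\bbP;x)s]+O(t^2)$. The first-order terms are independent of $x$ because $s\perp\mathcal D$, but second-order remainders may break exact invariance. I would correct the path nonlinearly: set
\[
p_t=p\{1+ts+\textstyle\sum_k\beta_k(t)u_k\},
\]
where the $u_k$ are directions whose images under the map $u\mapsto\bbE[\{\varphi^{\np}(\bbP;\cdot)-\varphi^{\np}(\bbP;x_0)\}u]$ span the range of the constraint map. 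The $\beta_k(t)=O(t^2)$ are then solved by an implicit-function argument to force $\psi(\bbP_t;x)=\psi(\bbP_t;x_0)$ for all $x$. The score is still $s$, since the correction is $o(t)$.

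The difficulty is that when $\cal X$ is infinite the constraint map $\bbP\mapsto\{\psi(\bbP;x)-\psi(\bbP;x_0)\}_x$ is infinite-dimensional, and surjectivity or closed range is needed. I would first prove the result for finite $\cal X$, where a finite-dimensional implicit function theorem suffices under linear independence of the differences. I would then either impose a closed-range condition or argue via the structure \eqref{eq:general_identification_operator}: perturb $p(Y\mid X,Z)$ under the post-fixing law so that $\eta_h(\bbP_t;x,\cdot)$ is adjusted to a common function, which makes $\mathcal L_{\bbP_t}$ produce equal values. If that route works it avoids abstract surjectivity altogether.

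\emph{Influence functions and efficiency.} By \eqref{space of IFs}, $\calI\{\psi(\bbP)\}=\varphi^{\np}(\bbP;x_0)+\Lambda_{\bbP,\psi}^{\perp}$, where $\varphi^{\np}(\bbP;x_0)$ is a valid influence function for the common value. Writing $\widetilde p=\delta_{x_0}+c$ with $\int c=0$ gives \eqref{eq:indexed_invariance_IF_class}, up to closure, with the normalization $\int\widetilde p=1$ ensuring the correct pathwise derivative. Finally, \eqref{eq:indexed_invariance_variational} follows exactly as in \eqref{eq:eif_variational}. The affine space is closed, so it has a unique minimum-norm element, namely $\varphi^{\np}-\Pi(\varphi^{\np}\mid\Lambda_{\bbP,\psi}^{\perp})$, and this element lies in $\Lambda_{\bbP,\psi}$ and is the efficient influence function.
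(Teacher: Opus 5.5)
Your forward inclusion, the passage to the integral form, the description $\calI\{\psi(\bbP)\}=\varphi^{\np}(\bbP;x_0)+\Lambda_{\bbP,\psi}^{\perp}$ with $\widetilde p=\delta_{x_0}+c$, and the minimum-norm argument all match the paper's proof. You are also more careful than the paper about point masses, Bochner integrability and continuity in $x$, which it glosses over.

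The genuine difference is the reverse inclusion $\Lambda_{\bbP,\psi}^{\perp}\subseteq\mathcal D$. The paper constructs no submodel there. It simply says that, ``by assumption,'' the indexed invariance restrictions exhaust the first-order restrictions defining $\calP_{\psi}$. That assumption does not appear in the theorem statement, so the inclusion is effectively postulated rather than proved.

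Your implicit-function-theorem correction of the linear path actually proves it when $\mathcal X$ is finite, under the following conditions:
\begin{itemize}
\item the differences $\varphi^{\np}(\bbP;x_k)-\varphi^{\np}(\bbP;x_0)$ are linearly independent;
\item each $\psi(\cdot;x_k)$ is $C^1$ in the finite-dimensional parameter $(t,\beta)$ along bounded perturbations.
\end{itemize}
Then the Jacobian in $\beta$ is invertible for suitable bounded $u_j$. The $t$-derivative of the constraint vanishes because $s\perp\mathcal D$, so $\beta(t)=o(t)$ and the score is $s$. Bounded elements are dense in $\mathcal D^{\perp}$ because $\mathcal D$ is finite-dimensional, which closes the argument.

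This is the implicit-function-theorem strategy the paper's concluding section mentions as a possible route to existence of submodels. It makes explicit the regularity hidden in the paper's ``assumption.'' The paper's route, by contrast, covers arbitrary $\mathcal X$ only because it assumes the conclusion.

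One caution: the fallback you sketch for infinite $\mathcal X$ would not work as stated. Forcing $\eta_h(\bbP_t;x,\cdot)$ to be a common function places the path inside $\calP_{\mathrm{mean},h}$. At first order this requires the derivative of $\eta_h(\bbP_t;x,\cdot)$ itself to be free of $x$, not merely its image under $\mathcal L_{\bbP}$. The derivative of $\mathcal L_{\bbP_t}$ acts on the common $\overline\eta_h$ and so cannot help. For $s\in\mathcal D^{\perp}$ outside $\Lambda_{\bbP,\mathrm{mean},h}$, the required correction is $O(t)$ and changes the score. The route therefore only reaches scores in $\Lambda_{\bbP,\mathrm{mean},h}$, which is generally strictly smaller than $\mathcal D^{\perp}$; the paper itself notes $\Lambda^{\perp}_{\bbP,\psi}\subseteq\Lambda^{\perp}_{\bbP,\mathrm{mean},h}$ with possibly strict inclusion. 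For infinite $\mathcal X$ you need a genuine surjectivity (Lyusternik-type) regularity condition on the derivative of $\bbP\mapsto\{\psi(\bbP;x)-\psi(\bbP;x_0)\}_{x}$, or you must adopt the paper's exhaustiveness assumption outright.
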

See a proof in Appendix~\ref{app:thm:indexed_invariance_orthocomplement}.

By construction, the Verma constraint in \eqref{eq:general_verma_constraint} implies the mean-scale restriction in \eqref{eq:mean_scale_verma}, which in turn implies \eqref{eq:indexed_functional_invariance}. Hence $\calP_{\calG} \subseteq \calP_{\mathrm{mean},h} \subseteq \calP_{\psi}$, and therefore $\Lambda_{\bbP,\psi}^{\perp} \subseteq \Lambda_{\bbP,\mathrm{mean},h}^{\perp} \subseteq \Lambda_{\bbP}^{\perp}$. Thus, weighted combinations of indexed nonparametric influence functions always exploit valid orthocomplement directions under the mean-scale Verma model. If the indexed invariance restrictions exhaust the first-order implications of the mean-scale restriction, then $\Lambda_{\bbP,\psi}^{\perp} = \Lambda_{\bbP,\mathrm{mean},h}^{\perp}$, and the efficient influence function in \eqref{eq:indexed_invariance_variational} is also the semiparametric efficient influence function under $\calP_{\mathrm{mean},h}$. Otherwise, the weighting construction exploits only the restrictions encoded by the target-specific indexed invariance and need not attain the efficiency bound under the full mean-scale Verma model.

The inclusion above also clarifies the relationship between the two orthocomplement characterizations in Theorem~\ref{thm:orthocomplement_tangent_space} and Theorem~\ref{thm:indexed_invariance_orthocomplement}. Since $\Lambda_{\bbP,\psi}^{\perp} \subseteq \Lambda_{\bbP}^{\perp}$, every target-specific direction of the form $\int \varphi^{\mathrm{np}}(\bbP;x) c(x) \diff x, \, \int c(x)\diff x=0$, necessarily belongs to the closed linear span of the weighted residual directions $\chi_{f,g}$ characterized in Theorem~\ref{thm:orthocomplement_tangent_space}. The precise correspondence between these two representations, however, depends on the identification structure of the target parameter. In particular, determining whether an indexed influence-function direction can be represented by a single weighted residual direction, $\int \varphi^{\mathrm{np}}(\bbP;x)c(x)\diff x = \chi_{f_c,g_c}$, and identifying the corresponding functions $f_c$ and $g_c$, generally requires analysis of the specific identification functional. Section~\ref{sec:applications} derives this correspondence explicitly for the extended front-door and Napkin graphs.

\begin{lemma}[Optimal affine combination of influence functions]
\label{lem:optimal_affine_IF_combination}
Let $\boldsymbol G(O) = (G_1(O),\ldots,G_K(O))^{\top}$ be a vector of $K$ square-integrable influence functions for $\psi(\bbP)$ under $\calP_{\mathrm{mean},h}$. The minimum-variance influence function in the affine class $\left\{\alpha^{\top}\boldsymbol G : \alpha\in\bbR^K,\  \boldsymbol 1^{\top}\alpha=1\right\}$ is $G_{\alpha^{\mathrm{opt}}} = \left(\alpha^{\mathrm{opt}} \right)^{\top}\boldsymbol G(O)$, where
\begin{align}
\alpha^{\mathrm{opt}}
=
\argmin_{
\alpha\in\bbR^K:
\boldsymbol 1^{\top}\alpha=1
}
\bbE
\left[
\{
\alpha^{\top}\boldsymbol G(O)
\}^{2}
\right].
\label{eq:optimal_affine_IF_combination}
\end{align}
Writing $\Sigma = \bbE [\boldsymbol G(O)\boldsymbol G(O)^{\top}]$, if $\Sigma$ is nonsingular, then $\alpha^{\mathrm{opt}} = \displaystyle \frac{\Sigma^{-1}\boldsymbol 1}{\boldsymbol 1^{\top}\Sigma^{-1}\boldsymbol 1}$. 
\end{lemma}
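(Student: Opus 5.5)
The argument has two parts: first check that the affine class consists only of valid influence functions, then solve the constrained quadratic minimization with a Lagrange multiplier.

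\emph{Validity of the class.} By \eqref{space of IFs}, each $G_k$ can be written as $G_k = \varphi^{\np} + \lambda_k$ with $\lambda_k \in \Lambda_{\bbP,\mathrm{mean},h}^{\perp}$. If $\boldsymbol 1^{\top}\alpha=1$, then
\[
\alpha^{\top}\boldsymbol G = \varphi^{\np} + \sum_k \alpha_k\lambda_k .
\]
Since $\Lambda_{\bbP,\mathrm{mean},h}^{\perp}$ is a linear space, this again lies in $\varphi^{\np}+\Lambda_{\bbP,\mathrm{mean},h}^{\perp}$, so it is an influence function for $\psi(\bbP)$. Because influence functions have mean zero, $\bbE[\{\alpha^{\top}\boldsymbol G\}^2]=\Var[\alpha^{\top}\boldsymbol G]$. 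Hence minimizing the second moment in \eqref{eq:optimal_affine_IF_combination} is the same as minimizing asymptotic variance within the class, which justifies calling the minimizer the minimum-variance member.

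\emph{The optimization.} The objective is $Q(\alpha)=\alpha^{\top}\Sigma\alpha$, a convex quadratic because $\Sigma$ is a Gram matrix and hence positive semidefinite. When $\Sigma$ is nonsingular it is positive definite, so $Q$ is strictly convex. The feasible set $\{\boldsymbol 1^{\top}\alpha=1\}$ is an affine hyperplane. A strictly convex coercive function has a unique minimizer on a closed affine set, and the Lagrange conditions are both necessary and sufficient there. I would form
\[
\mathcal L(\alpha,\mu)=\alpha^{\top}\Sigma\alpha-2\mu(\boldsymbol 1^{\top}\alpha-1).
\]
Setting the gradient to zero gives $\Sigma\alpha=\mu\boldsymbol 1$, so $\alpha=\mu\Sigma^{-1}\boldsymbol 1$. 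Imposing the constraint gives $\mu=1/(\boldsymbol 1^{\top}\Sigma^{-1}\boldsymbol 1)$, and therefore
\[
\alpha^{\opt}=\frac{\Sigma^{-1}\boldsymbol 1}{\boldsymbol 1^{\top}\Sigma^{-1}\boldsymbol 1}.
\]
The denominator is positive because $\Sigma^{-1}$ is positive definite.

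As a direct check that avoids multipliers, write any feasible $\alpha$ as $\alpha^{\opt}+\delta$ with $\boldsymbol 1^{\top}\delta=0$. Then
\[
Q(\alpha)=Q(\alpha^{\opt})+2\delta^{\top}\Sigma\alpha^{\opt}+\delta^{\top}\Sigma\delta .
\]
The cross term is $2\mu\,\delta^{\top}\boldsymbol 1=0$, so $Q(\alpha)\ge Q(\alpha^{\opt})$, with equality only when $\delta=0$. Along the way this gives the minimal variance $1/(\boldsymbol 1^{\top}\Sigma^{-1}\boldsymbol 1)$. None of these steps is a real obstacle. The only point needing care is the validity step: it relies on the constraint $\boldsymbol 1^{\top}\alpha=1$, since without it the combination would be an influence function for $(\boldsymbol 1^{\top}\alpha)\psi$ rather than for $\psi$.
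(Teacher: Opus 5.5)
Your proposal is correct and follows essentially the same route as the paper: show that combinations with $\boldsymbol 1^{\top}\alpha=1$ remain influence functions, identify the variance with $\alpha^{\top}\Sigma\alpha$, and solve the constrained problem with a Lagrange multiplier. The differences are minor. The paper checks validity through $\bbE[G_\alpha s]=(\boldsymbol 1^{\top}\alpha)\,\dot\psi_{\bbP}(s)$ for every score rather than through the affine-variety representation, and it takes uniqueness directly from positive definiteness. Your decomposition $\alpha=\alpha^{\mathrm{opt}}+\delta$ instead gives an explicit global-optimality check and the minimal variance $1/(\boldsymbol 1^{\top}\Sigma^{-1}\boldsymbol 1)$.
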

See a proof in Appendix~\ref{app:lem:optimal_affine_IF_combination}.

If the affine class contains $\varphi_{\mathrm{mean},h}^{\mathrm{eff}}$, then $G_{\alpha^{\mathrm{opt}}} = \varphi_{\mathrm{mean},h}^{\mathrm{eff}}$. Otherwise, $G_{\alpha^{\mathrm{opt}}}$ is the minimum-variance influence function within the selected affine class.

When $\calX=\{x_1,\ldots,x_K\}$ is finite, set $G_k(O) = \varphi^{\mathrm{np}}(\bbP;x_k)$, for $k=1,\ldots,K$. Under the condition of Theorem~\ref{thm:indexed_invariance_orthocomplement}, their affine hull is the full influence-function class in \eqref{eq:indexed_invariance_IF_class}. Therefore, Lemma~\ref{lem:optimal_affine_IF_combination} yields the exact efficient influence function.

When $\calX$ is continuous, choose candidate weighting functions $\widetilde p_1,\ldots,\widetilde p_K$ satisfying $\int \widetilde p_k(x) \diff x = 1$, for $k=1,\ldots,K$, and define $G_k(O) = \int \varphi^{\mathrm{np}}(\bbP;x) \widetilde p_k(x) \diff x$. For any $\alpha$ satisfying $\boldsymbol 1^\top\alpha=1$, $\alpha^\top\boldsymbol G(O) = \int \varphi^{\mathrm{np}}(\bbP;x) \widetilde p_\alpha(x) \diff x$,  where $\widetilde p_\alpha(x) = \sum_{k=1}^K \alpha_k\widetilde p_k(x)$. Lemma~\ref{lem:optimal_affine_IF_combination} therefore gives the minimum-variance IF within the selected finite-dimensional class. Increasingly rich collections of weighting functions may be used to approximate the efficient influence function in \eqref{eq:indexed_invariance_variational}.

The preceding results establish a semiparametric framework for statistical models defined by a single nested Markov constraint. We characterized the corresponding tangent-space geometry, derived projection and minimum-variance characterizations of the semiparametric efficient influence function, and developed finite-dimensional basis approximations when exact projection is analytically intractable. We further showed how indexed invariance under mean-scale Verma restrictions recovers existing efficiency results as special cases. These results serve as the basic building blocks for the more general nested Markov models considered next.

We next consider statistical models in which an ADMG encodes multiple ordinary and nested Markov constraints simultaneously. As we will see, the single-constraint characterization extends naturally by combining the tangent-space restrictions induced by each individual constraint.

\section{Extensions to Multiple Constraints}
\label{sec:beyond_single_constraint}

An ADMG may encode multiple ordinary and nested conditional independence constraints arising from distinct fixing operations. Each such constraint induces its own collection of weighted conditional moment restrictions and therefore its own restriction on the local tangent space. In particular, both Tian's constraint finding algorithm \citet{tian2002testable} (also see Algorithm~1 of \citet{richardson2023nested}) and the  \emph{ordered local nested Markov property} (Definition~37 of \citet{richardson2023nested}) enumerate the nested Markov constraints implied by an ADMG. 

Suppose the ADMG implies $J$ generalized conditional independence constraints. For $j \in [J]$, let $R_{j} \subseteq V$ denote the subset of vertices being fixed, and $X_{j}, Y_{j}, Z_{j} \subseteq V \setminus R_{j}$ be three disjoint subsets of the remaining vertices. Then the $j$-th generalized conditional independence constraint takes the form
\begin{equation}
\label{multiple Verma}
X_{j} \indep Y_{j} \mid Z_{j} \qquad [\phi_{R_{j}} (p)].
\end{equation}

Suppose that the nested Markov model $\calP_{\calG}$ is characterized through a collection of weighted conditional moment constraints of the form \eqref{eq:general_equality_constraint} or \eqref{eq:general_equality_constraint2}, indexed by $j = 1, \dots, J$: 
\begin{equation}
\label{eq:general_equality_constraint_1}
\begin{split}
\calP_{\calG} = \left\{ \bbP \in \calP \left\vert \begin{array}{c} \bbE_{\phi_{R_{j}}} [\{f_{j} (X_{j}, Z_{j}) - \bbE_{\phi_{R_{j}}} [f_{j} (X_{j}, Z_{j}) \mid Z_{j}]\} g_{j} (Y_{j}, Z_{j})] \\
= \bbE_{\phi_{R_{j}}} [f_{j} (X_{j}, Z_{j}) \{g_{j} (Y_{j}, Z_{j}) - \bbE_{\phi_{R_{j}}} [g_{j} (Y_{j}, Z_{j}) \mid Z_{j}]\}] = 0, \\
\forall \, f_{j} \in L^{2} (\bbP_{\calX \times \calZ}), g_{j} \in L^{2} (\bbP_{\calY \times \calZ}), \text{for } j \in [J]
\end{array} \right. \right\}.
\end{split}
\end{equation}

Let $\calP_j$ denote the statistical model associated with the $j$-th constraint, and then $\calP_{\calG} = \bigcap_{j = 1}^J \calP_{j}$. The following result in turn characterizes a nontrivial superset of the tangent space at some $\bbP$ in the interior of $\calP_{\calG}$. A proof is deferred to Appendix~\ref{app:thm:multiple_constraints}. 

\begin{proposition}[Tangent space under multiple nested Markov constraints]
\label{thm:multiple_constraints}
Let $\chi^{(j)}_{f_{j}, g_{j}}$ take the following form:
\begin{align*}
& \chi^{(j)}_{f_{j}, g_{j}} (O) := \tilde{\chi}^{(j)}_{f_{j}, g_{j}} (O) - \bbE [\tilde{\chi}^{(j)}_{f_{j}, g_{j}} (O) \mid \mb_{\calG} (R_{j}), R_{j}] + \bbE [\tilde{\chi}^{(j)}_{f_{j}, g_{j}} (O) \mid \mb_{\calG} (R_{j})], \text{ where} \\
& \tilde{\chi}^{(j)}_{f_{j}, g_{j}} (O) := \omega_{R_{j}} (\M_{j}) \{f_{j} (X_{j}, Z_{j}) - \bbE_{\phi_{R_{j}}} [f_{j} (X_{j}, Z_{j}) \mid Z_{j}]\} \{g_{j} (Y_{j}, Z_{j}) - \bbE_{\phi_{R_{j}}} [g_{j} (Y_{j}, Z_{j}) \mid Z_{j}]\},
\end{align*}
with $\omega_{R_{j}} (\M_{j}) = \dfrac{\tilde{p}_{j} (R_{j})}{p (R_{j} \mid \mb_{\calG} (R_{j}))}$ and $\M_{j} := \{\mb_{\calG} (R_{j}), R_{j}\}$. Then
\begin{equation}
\Lambda_{\bbP}^{\perp} \supseteq \tilde{\Lambda}_{\bbP}^{\perp} := \overline{\sp} \Big\{ \sum_{j = 1}^{J} \chi^{(j)}_{f_{j}, g_{j}} (O): f_{j} \in L^{2} (\bbP_{X_{j}, Z_{j}}), g_{j} \in L^{2} (\bbP_{Y_{j}, Z_{j}}) \Big\}.
\end{equation}
\end{proposition}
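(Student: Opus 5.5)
The plan is to prove the inclusion $\tilde{\Lambda}_{\bbP}^{\perp} \subseteq \Lambda_{\bbP}^{\perp}$ directly, using only the ``easy'' half of the single-constraint theory: differentiating each constraint along submodels. We do not need the existence of rich submodels, which is why only a superset of the tangent space (equivalently, a subset of its orthocomplement) is claimed.

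The first step is to fix an arbitrary regular parametric submodel $\{\bbP_t\} \subseteq \calP_{\calG}$ through $\bbP$ with score $s$. Since $\calP_{\calG} = \bigcap_{j} \calP_j$, the same path lies in every $\calP_j$. Moreover, $\bbP$ lies in the interior of $\calP_{\calG}$, so the positivity and integrability conditions of Remark~\ref{rem:reference_density} hold for each $R_j$ and reference density $\tilde p_j$. Applying Proposition~\ref{prop:pathwise_score_restrictions} to the $j$-th constraint along this path gives $\bbE[\chi^{(j)}_{f_j,g_j}(O)\, s(O)] = 0$ for every $f_j, g_j$.

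Here I must identify the element produced by Proposition~\ref{prop:pathwise_score_restrictions} with the residualized form $\chi^{(j)}_{f_j,g_j}$. This is exactly the computation underlying the ``$\subseteq$'' direction of Theorem~\ref{thm:orthocomplement_tangent_space}, applied to model $\calP_j$. That computation differentiates $\bbE_t[\psi_{f,g,t}]$ in three places:
\begin{itemize}
\item the outer expectation, which gives $\bbE[\tilde\chi s]$ after using the moment identity at $t=0$ to replace $g$ by its weighted residual;
\item the weight $\omega_{R,t}$, whose derivative $-\omega_R\, \partial_t \log p_t(R\mid \mb_{\calG}(R))$ has score equal to the projection of $s$ onto the $R \mid \mb_{\calG}(R)$ score space; this gives the correction $-\bbE[\tilde\chi\mid \M] + \bbE[\tilde\chi \mid \mb_{\calG}(R)]$;
\item the inner weighted conditional mean, which contributes zero because it is a function of $Z$ multiplying a weighted-residual term with zero post-fixing conditional mean.
\end{itemize}
Since that computation uses only that the path stays in the single-constraint model, it applies verbatim here. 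The step I expect to need the most care is confirming that no property of the full tangent space of $\calP_j$ is used. This is the main obstacle only in bookkeeping: one must check that the submodel's score need not range over all of $\Lambda_{\bbP}(\calP_j)$ for the identity to hold.

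Next I sum over $j$. By linearity, $\bbE[\{\sum_j \chi^{(j)}_{f_j,g_j}\}\, s] = 0$ for every score $s$ of submodels in $\calP_{\calG}$. Each $\chi^{(j)}$ has mean zero, and it lies in $L_0^2(\bbP)$ by the integrability assumptions, so each sum lies in $L_0^2(\bbP)$. Orthogonality to all scores extends to their closed linear span $\Lambda_{\bbP}$ by linearity and continuity of the inner product. Hence every generator of $\tilde{\Lambda}_{\bbP}^{\perp}$ lies in $\Lambda_{\bbP}^{\perp}$.

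Finally, $\Lambda_{\bbP}^{\perp}$ is a closed linear subspace, so it contains the closed span $\tilde{\Lambda}_{\bbP}^{\perp}$, which completes the proof. I would add a remark that the reverse inclusion fails to follow from this argument. The per-constraint submodel constructions used in Theorem~\ref{thm:orthocomplement_tangent_space} need not stay inside the other $\calP_k$, and the closure of a sum of closed subspaces need not equal $(\bigcap_j \Lambda_j)^{\perp}$ without further argument.
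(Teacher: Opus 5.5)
Your proposal is correct and is essentially the paper's proof, which is a one-line appeal to Theorem~\ref{thm:orthocomplement_tangent_space} together with Lemma~1.7 of \citet{van2003unified}, i.e.\ to the fact that every submodel of $\calP_{\calG}=\bigcap_j\calP_j$ is a submodel of each $\calP_j$, so that, writing $\Lambda_{\bbP,j}$ for the tangent space of $\calP_j$, $\Lambda_{\bbP}\subseteq\bigcap_j\Lambda_{\bbP,j}$. Your version makes the inclusion independent of the submodel construction in that theorem's reverse direction, because you use only its differentiation half; one correction to your closing aside is that for closed subspaces one always has $(\bigcap_j\Lambda_{\bbP,j})^{\perp}=\overline{\sum_j\Lambda_{\bbP,j}^{\perp}}$, so given Theorem~\ref{thm:orthocomplement_tangent_space} the set $\tilde{\Lambda}_{\bbP}^{\perp}$ equals $(\bigcap_j\Lambda_{\bbP,j})^{\perp}$ automatically, and the only obstruction to equality with $\Lambda_{\bbP}^{\perp}$ is the possibility $\Lambda_{\bbP}\subsetneq\bigcap_j\Lambda_{\bbP,j}$, which is your first reason.
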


\begin{remark}[\textit{Conjectured completeness under multiple constraints}]
\label{rem:conjecture}
In Proposition~\ref{thm:multiple_constraints}, we only obtain a partial result that characterizes a nontrivial superset $\tilde{\Lambda}_{\bbP}$ of $\Lambda_{\bbP}$, or equivalently, a nontrivial subset $\tilde{\Lambda}_{\bbP}^{\perp}$ of $\Lambda_{\bbP}^{\perp}$. We conjecture that if \eqref{multiple Verma} constitutes the complete list of equality constraints imposed by an ADMG, we should have $\Lambda_{\bbP}^{\perp} = \tilde{\Lambda}_{\bbP}^{\perp}$. Results in a similar spirit can be found in \citet{evans2016graphs}, but they only consider vertices with finite state spaces. In the concrete examples with multiple nested Markov constraints in Section~\ref{sec:ex:muiltiple_constraints}, we construct submodels so as to verify our conjecture in a case-by-case fashion, but a general theorem remains open.
\end{remark}

\begin{remark}[\textit{Recovery of the DAG tangent space}]
\label{rem:DAG}
In the special case where $\calG$ is itself a DAG or $\calP_{\calG} = \calP_{\calG^{\ast}}$ where $\calG^{\ast}$ is a DAG, the constraints are simply
\begin{align*}
X_{v} \indep X_{\pre (v) \setminus \pa (v)} \mid X_{\pa (v)}, v \in V.
\end{align*}
It is not difficult to see that $\tilde{\Lambda}_{\bbP}^{\perp}$ characterized in Proposition~\ref{thm:multiple_constraints} equals $\Lambda_{\bbP}^{\perp}$ and further implies that
\begin{equation}
\label{DAG tangent spaces}
\begin{split}
\Lambda_{\bbP} & = \Big\{ \sum_{v \in V} s_{v} (X_{v} \mid X_{\pa (v)}): \bbE \{s_{v} (X_{v} \mid X_{\pa (v)})^{2}\} < \infty, \\
& \qquad \bbE \{s_{v} (X_{v} \mid X_{\pa (v)}) \mid X_{\pa (v)}\} = 0 \text{ almost surely} \Big\},
\end{split}
\end{equation}
which is a known result \citep{guo2023variable}. See a proof in Appendix~\ref{app:DAG}. 
\end{remark}

Exact projection onto $\tilde{\Lambda}_{\bbP}^{\perp}$ is generally more challenging in the multiple-constraint setting because $\tilde{\Lambda}_{\bbP}^{\perp}$ is generated by the sum of several infinite-dimensional orthocomplement spaces. Nevertheless, the finite-dimensional basis approximation developed in Section~\ref{sec:geometry} extends directly by combining basis functions from the individual constraints. More generally, efficient influence functions may be viewed as projections onto intersections of closed subspaces of $L_0^2 (\bbP)$, suggesting the use of sequential or alternating projection algorithms. A systematic study of such computational approaches is left for future work.

In the next section, we illustrate the preceding theory through several canonical examples involving both single and multiple nested Markov constraints. These examples demonstrate how the general tangent-space and orthocomplement characterizations specialize to concrete graphical models and how the resulting semiparametric efficient influence functions can be constructed.


\section{Illustrative Examples}
\label{sec:applications}

\subsection{The extended front-door model}
\label{sec:front-door}

We first illustrate the general theory using the extended Pearl's front-door ADMG shown in Figure~\ref{fig:front-door}(a) \citep{pearl1995causal, bhattacharya2022testability, guo2025rank, dhawan2026debiased}. This graph admits a longitudinal interpretation studied by \citet{robins1999testing} in which $Z$ and $M$ represent first- and second-stage treatments. Variants of this model have also been studied in
\citet{liu2021efficient,von2025evaluation}. 

The extended front-door graph encodes a \emph{no-direct-effect} assumption $Y(z,m)=Y(m), \, \forall z,m$, which states that $Z$ affects $Y$ only through $M$. This assumption implies the nested conditional independence
\begin{align}
Z \indep Y \mid X,M
\qquad
[\phi_M(p)],
\label{eq:frontdoor_nested_constraint}
\end{align}
as illustrated via the CADMG in Figure~\ref{fig:front-door}(b). This is a direct specialization of the general framework developed in Section~\ref{sec:geometry}, obtained by identifying $X_{\mathrm{gen}}=Z,\,Y_{\mathrm{gen}}=Y,\,Z_{\mathrm{gen}}=(X,M),\,R=M$. 

\begin{figure}[t]
\centering
\begin{subfigure}[t]{0.45\textwidth}
\centering
\begin{tikzpicture}[>=stealth, node distance = 1.5cm]
    \tikzstyle{format} = [thick, circle, minimum size = 1.0mm, inner sep = 2pt]
    \tikzstyle{square} = [draw, thick, minimum size = 4.5mm, inner sep = 2pt]

    \begin{scope}[xshift = 0cm, yshift = 0cm]
	\path[->, very thick]
		
	node[shape = ellipse, draw = black] (z) {$Z$}
        node[right of = z, shape = ellipse, draw = black] (a) {$A$}
        node[above right of = a, shape = ellipse, draw = black] (x) {$X$}
        node[below right of = x, shape = ellipse, draw = black] (m) {$M$}
	node[right of = m, shape = ellipse, draw = black] (y) {$Y$}
		 
        (x) edge[blue] (z)
	   (x) edge[blue] (a) 
	   (x) edge[blue] (m) 
        (x) edge[blue] (y)
        (z) edge[blue] (a)
        (a) edge[blue] (m)
        (m) edge[blue] (y)
	   (z) edge[blue, bend right] (m)
        (a) edge[red, <->, bend right] (y)
        ;
    \end{scope}
\end{tikzpicture}
\caption{}
\label{fig:front-door graph original}
\end{subfigure}
\begin{subfigure}[t]{0.45\textwidth}
\centering
\begin{tikzpicture}[>=stealth, node distance = 1.5cm]
    \tikzstyle{format} = [thick, circle, minimum size = 1.0mm, inner sep = 2pt]
    \tikzstyle{square} = [draw, thick, minimum size = 4.5mm, inner sep = 2pt]
    
    \begin{scope}[xshift = 0cm, yshift = 0cm]
		\path[->, very thick]
        
	node[shape = ellipse, draw = black] (z) {$Z$}
    node[right of = z, shape = ellipse, draw = black] (a) {$A$}
    node[above right of = a, shape = ellipse, draw = black] (x) {$X$}
    node[below right of = x, shape = rectangle, draw = black] (m) {$M$}
	node[right of = m, shape = ellipse, draw = black] (y) {$Y$}
		 
        (x) edge[blue] (z)
	    (x) edge[blue] (a)
        (x) edge[blue] (y)
        (z) edge[blue] (a)
        (m) edge[blue] (y)
        (a) edge[red, <->, bend right] (y)
        ;
    \end{scope}
\end{tikzpicture}
\caption{}
\label{fig:front-door graph fixed}
\end{subfigure}
\caption{The extended front-door model. (a) Original ADMG. (b) CADMG obtained after fixing $M$, under which the nested conditional independence $Z \indep Y \mid X,M \ [\phi_M(p)]$ becomes an ordinary conditional independence.} 
\label{fig:front-door}
\end{figure}

\begin{corollary}\label{cor:tangent_space_frontdoor}
The nested Markov model of the ADMG in Figure~\ref{fig:front-door}(a) is characterized as 
\begin{align}
\calP(\calG)
=
\Big\{
\bbP:
\bbE
\left[
\omega_M(\M)
\left\{
f(X,Z)
-
\bbE_{\phi_M}[f(X,Z)\mid X]
\right\}
g(X,M,Y)
\right]
=
0
\Big\},
\label{eq:frontdoor_model}
\end{align}
for every pair of $f\in L^2(\bbP_{X,Z})$ and $g\in L^2(\bbP_{X,M,Y})$, where $\M = (M, X, Z, A)$ and
\begin{align*}
\omega_M(\M)
=
\frac{1}{p(M\mid X,Z,A)},
\qquad
\bbE_{\phi_M}[h(O)\mid X]
=
\frac{
\bbE\!\left[\omega_M(\M) \, h(O)\mid X\right]
}{
\bbE\!\left[\omega_M(\M)\mid X\right]
}.
\end{align*}
Furthermore, the orthocomplement to the tangent space of $\calP(\calG)$ in \eqref{eq:frontdoor_model} at $\bbP$ is
\begin{align*}
\Lambda_{\bbP}^{\perp}
=
\overline{\sp}
\left\{
\chi_{f,g}(O):
f\in L^2(\bbP_{X,Z}),
\;
g\in L^2(\bbP_{X,M,Y})
\right\},
\end{align*}
where
\begin{align*}
\chi_{f,g}(O)
&=
\widetilde\chi_{f,g}(O)
-
\bbE
\!\left[
\widetilde\chi_{f,g}(O)
\mid
X,Z,A,M
\right]
+
\bbE
\!\left[
\widetilde\chi_{f,g}(O)
\mid
X,Z,A
\right],
\\
\widetilde\chi_{f,g}(O)
&=
\omega_M(\M)
\Big\{
f(X,Z)
-
\bbE_{\phi_M}[f(X,Z)\mid X]
\Big\}
\Big\{
g(X,M,Y)
-
\bbE_{\phi_M}[g(X,M,Y)\mid X, M]
\Big\}.
\end{align*}
\end{corollary}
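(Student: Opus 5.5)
The plan is to treat Corollary~\ref{cor:tangent_space_frontdoor} as a direct specialization of Proposition~\ref{prop:weighted_moment_representation} and Theorem~\ref{thm:orthocomplement_tangent_space}. The work then splits into three parts: justifying that the single constraint \eqref{eq:frontdoor_nested_constraint} characterizes the nested Markov model of Figure~\ref{fig:front-door}(a), computing the graphical ingredients $\mb_\calG(M)$ and $\omega_M$, and substituting them into \eqref{eq:chi_residualized}--\eqref{eq:chi_raw}.

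\textbf{Graphical ingredients.} First verify that $M$ is fixable. In Figure~\ref{fig:front-door}(a), $M$ has no incident bidirected edge, so $\dis_\calG(M)=\{M\}$, and $\dis_\calG(M)\cap(\de_\calG(M)\setminus\{M\})=\emptyset$. Hence $\mb_\calG(M)=\pa_\calG(M)=\{X,Z,A\}$. This gives $\M=(M,X,Z,A)$, and taking the reference density $\tilde p(M)\equiv 1$ (counting or Lebesgue measure, as the paper writes it) gives $\omega_M(\M)=1/p(M\mid X,Z,A)$.

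\textbf{Model characterization.} Next argue that $\calP(\calG)$ is exactly the set of laws satisfying $Z\indep Y\mid X,M\ [\phi_M(p)]$. I would use the ordered local nested Markov property under the topological order $X,Z,A,M,Y$. The constraints for $X,Z,A$ are vacuous, because each vertex's Markov blanket in the relevant reachable subgraph is its whole past. The constraint for $M$ is also vacuous, since $M$ depends on all of $\{X,Z,A\}$. For $Y$, the district of $Y$ in the full graph is $\{A,Y\}$ with Markov blanket containing all preceding vertices, so no ordinary independence arises. After fixing $M$ (and the graph $\phi_M(\calG)$ in Figure~\ref{fig:front-door}(b)), $Y$'s Markov blanket is $\{X,M,A\}$, which excludes $Z$. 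This yields exactly the Verma constraint \eqref{eq:frontdoor_nested_constraint}, in the marginalized form over $A$ that the constraint-finding algorithm of \citet{tian2002testable} produces.

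Given that, Proposition~\ref{prop:weighted_moment_representation} with $X_{\mathrm{gen}}=Z$, $Y_{\mathrm{gen}}=Y$, $Z_{\mathrm{gen}}=(X,M)$ gives the weighted moment form \eqref{eq:frontdoor_model}. One simplification is needed along the way. Under $\phi_M(p)$, $M$ is exogenous and independent of $(X,Z)$ given $X$, so $\bbE_{\phi_M}[f(X,Z)\mid X,M]=\bbE_{\phi_M}[f(X,Z)\mid X]$. Letting $g$ range over $L^2(\bbP_{X,M,Y})$ absorbs the $(X,M)$-measurable factor.

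\textbf{Orthocomplement.} Substitute these ingredients into Theorem~\ref{thm:orthocomplement_tangent_space}. The residualization terms become conditioning on $(\mb_\calG(M),M)=(X,Z,A,M)$ and on $\mb_\calG(M)=(X,Z,A)$, and the $f$-centering again reduces to conditioning on $X$.

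\textbf{Main obstacle.} The delicate step is completeness: confirming that this is the only equality constraint of the ADMG, so that the single-constraint theorem applies rather than only the partial Proposition~\ref{thm:multiple_constraints}. I would settle it by exhibiting the nested factorization $p=q(X)q(Z\mid X)q(M\mid X,Z,A)q(A,Y\mid X,Z,M)$. In that factorization the only nontrivial restriction is that $\sum_a q(a,y\mid x,z,m)$ does not depend on $z$, which is precisely \eqref{eq:frontdoor_nested_constraint}.
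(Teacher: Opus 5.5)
Your proposal follows the paper's proof. You specialize Proposition~\ref{prop:weighted_moment_representation} and Theorem~\ref{thm:orthocomplement_tangent_space} with $R=M$ and $\mb_{\calG}(M)=\{X,Z,A\}$, use that $M$ is independent of $(X,Z)$ under the post-fixing law so that $\bbE_{\phi_M}[f(X,Z)\mid X,M]=\bbE_{\phi_M}[f(X,Z)\mid X]$, and absorb the $M$-dependence of $f$ into $g$. Two details differ from the paper. The paper makes the absorption precise with tensor products $a(M)h(X,Z)$ plus linearity and density. It also handles $\tilde p(M)$ by absorbing it into $g$ and noting that it cancels in the ratio defining $\bbE_{\phi_M}$; setting $\tilde p\equiv 1$, as you do, is not an admissible reference density under Remark~\ref{rem:reference_density}.

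Your completeness paragraph goes beyond the paper, which simply takes the single constraint as the defining one. One claim there is wrong: $Y$'s Markov blanket in $\phi_M(\calG)$ is not $\{X,M,A\}$. The district of $Y$ is still $\{A,Y\}$, and the edge $Z\to A$ puts $Z$ in $\mb(Y)=\{A,X,Z,M\}$. So $Z$ drops out only after also fixing (equivalently, summing out) $A$, which is exactly what your nested-factorization argument correctly captures.
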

See a proof in Appendix~\ref{app:cor:tangent_space_frontdoor}.

We now consider estimation of the marginal causal mean $\psi_{a_0}(\bbP) = \bbE[Y(a_0)]$. Under the extended front-door graph, this parameter is identified by a family of equivalent identification functionals indexed by a fixed value $z^*\in\cal Z$, denoted by $\psi_{a_0}(\bbP;z^*)$  \citep{tian2002general, guo2023flexible}. The explicit form is given in Appendix~\ref{app:id_npEIF_frontdoor}. The indexed identification functionals arise by applying the identification operator introduced in Section~\ref{subsec:mean_scale_verma} to the post-fixing conditional mean associated with the nested Markov constraint. Consequently,
\begin{align}
\psi_{a_0}(\bbP;z^*)
=
\bbE[Y(a_0)],
\qquad
\forall z^*\in\cal Z,
\label{eq:invariance_frontdoor}
\end{align}
which is precisely the indexed invariance condition \eqref{eq:indexed_functional_invariance}. Let $\varphi_{a_0}^{\mathrm{np}}(\bbP; z^*)$ denote the nonparametric influence function of $\psi_{a_0}(\bbP;z^*)$, derived by \citet{guo2023flexible} and provided in Appendix~\ref{app:id_npEIF_frontdoor}. 

The semiparametric efficient influence function for $\psi_{a_0}(\bbP)$ under the nested Markov model in \eqref{eq:frontdoor_model} is obtained by projecting any indexed nonparametric influence function onto the tangent space, or equivalently, $\varphi_{a_0}^{\mathrm{eff}}(O) = \varphi_{a_0}^{\mathrm{np}}(\bbP; z^*) (O) - \Pi \big(\varphi_{a_0}^{\mathrm{np}}(\bbP; z^*) \mid \Lambda_{\bbP}^{\perp} \big)(O)$, where $\Lambda_{\bbP}^{\perp}$ is characterized by Corollary~\ref{cor:tangent_space_frontdoor}. Since this projection is not available in closed form, Theorem~\ref{theorem:locally_eff} provides a finite-dimensional approximation. Choosing basis functions $f_1,\ldots,f_J\in L^2(\bbP_{X,Z})$ and $g_1,\ldots,g_K\in L^2(\bbP_{X,M,Y})$, let $\chi_{jk}(O) = \chi_{f_j,g_k}(O)$ for $j=1,\ldots,J$ and $k=1,\ldots,K$, where $\chi_{f_j,g_k}(O)$ is given explicitly in Corollary~\ref{cor:tangent_space_frontdoor}, and collect them into the vector $\chi(O)$. The resulting approximation is $\varphi_{a_0}^{\mathrm{eff, basis}}(O) = \varphi_{a_0}^{\mathrm{np}}(\bbP; z^*) (O) - \left(\alpha^{\mathrm{opt}}\right)^\top \chi(O)$, where $\alpha^{\mathrm{opt}}$ is given by Theorem~\ref{theorem:locally_eff}. Thus, the only model-specific ingredient is the collection of orthocomplement directions supplied by Corollary~\ref{cor:tangent_space_frontdoor}; the optimal projection follows immediately from the general projection theory. 

By Theorem~\ref{thm:indexed_invariance_orthocomplement}, the target-specific orthocomplement is
\begin{align}
\Lambda_{\bbP,\psi}^{\perp}
=
\overline{\sp}
\left\{
\int
\varphi_{a_0}^{\mathrm{np}}(\bbP; z^*)(O)
c(z^*)
\,\diff z^*
:
\int
c(z^*)
\,\diff z^*
=
0
\right\},
\label{eq:frontdoor_indexed_orthocomplement}
\end{align}
and the corresponding class of influence functions is
\begin{align}
\calI \{\psi_{a_0}(\bbP)\}
=
\left\{
\int
\varphi_{a_0}^{\mathrm{np}}(\bbP; z^*)(O)
\widetilde p(z^*)
\,\diff z^*
:
\int
\widetilde p(z^*)
\,\diff z^*
=
1
\right\}.
\label{eq:frontdoor_mean_IF_class}
\end{align}
The influence-function class in \eqref{eq:frontdoor_mean_IF_class} is precisely the class considered in Lemma~\ref{lem:optimal_affine_IF_combination}. Consequently, the lemma applies directly, yielding the minimum-variance influence function within the chosen affine class. This recovers the optimal weighting strategy of \citet{guo2023flexible} as a direct consequence of the indexed invariance characterization. 

The orthocomplement directions in \eqref{eq:frontdoor_indexed_orthocomplement}, derived from the indexed invariance restriction in \eqref{eq:invariance_frontdoor}, may also be recovered directly from the general orthocomplement characterization. A small distinction between the canonical mean-scale representation and the reduced representation in Corollary~\ref{cor:tangent_space_frontdoor} is important here. Under the generic specialization $X_{\mathrm{gen}}=Z, Y_{\mathrm{gen}}=Y, Z_{\mathrm{gen}}=(X,M)$, the canonical mean-scale representation fixes the outcome-side feature to be $h(Y)=Y$ while leaving the first indexing function unrestricted over the variable of interest and the full conditioning set. Thus, it would be written using $f=f(X,Z,M)$ and $g(Y)=Y$. In contrast, Corollary~\ref{cor:tangent_space_frontdoor} uses the reduced representation in which the $M$-dependence of $f(X,Z,M)$ has been transferred to the unrestricted outcome-side indexing function, yielding $f=f(X,Z)$ and $g=g(X,M,Y)$. After imposing the mean-scale restriction, this transferred $M$-dependence must be retained as a multiplicative localization of the feature $Y$. Accordingly, the reduced representation of a mean-scale direction takes the form $f=f(X,Z)$ and $g(X,M,Y)=q(X,M)Y$, for an appropriate function $q$. Equivalently, the same direction could be represented canonically using an unrestricted $f(X,Z,M)$ and $g(Y)=Y$. The following lemma uses the reduced representation because it aligns directly with Corollary~\ref{cor:tangent_space_frontdoor}.

\begin{lemma}[Embedding of the indexed influence-function class in the extended front-door model] \label{lem:frontdoor_representation}
Let $\rho(m,x) = \int p(m\mid A=a_0,z,x) p(z\mid x)\diff z$. For every square-integrable function $\alpha$ satisfying $\int\alpha(z)\diff z=0$, choose
\begin{align*}
f_\alpha(X,Z)
=
\frac{\alpha(Z)}{p(Z\mid X)},
\qquad
g_{\rho}(X,M,Y)
=
\rho(M,X)Y.
\end{align*}
Then $\chi_{f_\alpha,g_\rho}(O) \in \Lambda_{\bbP}^{\perp}$, where $\Lambda_{\bbP}^{\perp}$ is given by Corollary~\ref{cor:tangent_space_frontdoor}, and $\chi_{f_\alpha,g_\rho}(O) = \int \varphi_{a_0}^{\mathrm{np}}(\bbP; z^*)(O) \, \alpha(z^*) \,\diff z^*$.  
\end{lemma}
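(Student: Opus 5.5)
The membership claim $\chi_{f_\alpha,g_\rho}\in\Lambda_{\bbP}^{\perp}$ is immediate from Corollary~\ref{cor:tangent_space_frontdoor}, once I check that $f_\alpha\in L^2(\bbP_{X,Z})$ and $g_\rho\in L^2(\bbP_{X,M,Y})$. This holds under the positivity and integrability conditions of Remark~\ref{rem:reference_density}, since $\rho\le \sup p(m\mid a_0,z,x)$ and $Y$ is square integrable. The substance of the lemma is therefore the identity $\chi_{f_\alpha,g_\rho}=\int\varphi^{\np}_{a_0}(\bbP;z^*)\alpha(z^*)\diff z^*$. I will prove it by computing both sides explicitly and matching them term by term.

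\textbf{Step 1: simplify $\widetilde\chi_{f_\alpha,g_\rho}$.} Fixing $M$ does not alter the $(X,Z)$ margin, so the post-fixing law of $Z$ given $X$ is $p(z\mid x)$. Hence
\[
\bbE_{\phi_M}[f_\alpha\mid X]=\int \alpha(z)\diff z=0 .
\]
Because $\rho(M,X)$ is $(X,M)$-measurable,
\[
g_\rho-\bbE_{\phi_M}[g_\rho\mid X,M]=\rho(M,X)\{Y-\eta(X,M)\},
\]
where $\eta(x,m)=\bbE_{\phi_M}[Y\mid x,m]$. By the Verma constraint \eqref{eq:frontdoor_nested_constraint}, $\eta(x,m)=\eta(x,z,m)$, where
\[
\eta(x,z,m)=\bbE_{\phi_M}[Y\mid x,z,m]=\sum_a \bbE[Y\mid x,z,a,m]\,p(a\mid x,z)
\]
for every $z$. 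Therefore
\[
\widetilde\chi=\frac{\alpha(Z)\rho(M,X)}{p(Z\mid X)\,p(M\mid X,Z,A)}\{Y-\eta(X,M)\}.
\]
I then write out the two residualization terms. The first is
\[
\bbE[\widetilde\chi\mid X,Z,A,M]=\frac{\alpha(Z)\rho}{p(Z\mid X)p(M\mid X,Z,A)}\{\mu(X,Z,A,M)-\eta(X,M)\},
\]
with $\mu=\bbE[Y\mid X,Z,A,M]$. The second is the integral of this expression over $M\sim p(\cdot\mid X,Z,A)$, which cancels the $M$-density:
\[
\bbE[\widetilde\chi\mid X,Z,A]=\frac{\alpha(Z)}{p(Z\mid X)}\int\rho(m,X)\{\mu(X,Z,A,m)-\eta(X,m)\}\diff m .
\]

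\textbf{Step 2: compute the integrated IF.} Using the appendix form of the indexed functional, I will write
\[
\psi_{a_0}(\bbP;z^*)=\iint \eta(x,z^*,m)\,\rho(m,x)\,p(x)\diff m\diff x .
\]
I take the pathwise derivative along a submodel and integrate against $\alpha(z^*)$. Every term in which the derivative falls on $\rho$ or on $p(x)$ is multiplied by $\int\eta(x,z^*,m)\alpha(z^*)\diff z^*$. This equals $\eta(x,m)\int\alpha=0$ at the truth, because $\eta$ does not depend on $z^*$. Hence only the derivative of $\eta(x,z^*,m)=\sum_a\mu(x,z^*,a,m)p(a\mid x,z^*)$ survives.

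That derivative has two parts.
\begin{itemize}
\item The part from $\mu$ contributes, after the change of measure from the point mass at $z^*$ and at $(x,m)$ to the observed law, a weight $\alpha(Z)\rho(M,X)/\{p(Z\mid X)p(M\mid X,Z,A)\}$ times $\{Y-\mu\}$, since $p(x)p(z\mid x)p(a\mid x,z)p(m\mid x,z,a)$ is the relevant density.
\item The part from $p(a\mid x,z^*)$ contributes
\[
\frac{\alpha(Z)}{p(Z\mid X)}\int\rho(m,X)\{\mu(X,Z,A,m)-\eta(X,Z,m)\}\diff m .
\]
\end{itemize}

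\textbf{Step 3: match and identify the obstacle.} I add the pieces of Step 1 as $\widetilde\chi-\bbE[\widetilde\chi\mid X,Z,A,M]+\bbE[\widetilde\chi\mid X,Z,A]$. The first two give the weight times $\{Y-\eta\}-\{\mu-\eta\}$, which equals the weight times $\{Y-\mu\}$, and so matches the $\mu$-part. The third matches the $p(a\mid x,z)$-part, after replacing $\eta(X,Z,m)$ by $\eta(X,m)$ via the constraint.

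I expect the main obstacle to be bookkeeping in Step 2. The key risk is to verify that the nonparametric IF of \citet{guo2023flexible}, as written in Appendix~\ref{app:id_npEIF_frontdoor}, coincides with the Gateaux-derivative form I use, including its $\rho$- and $p(x)$-terms, so that their cancellation under $\int\alpha=0$ is legitimate. I would first verify this for discrete $Z$, where the integrals are finite sums, and then pass to general $Z$ by the same computation with densities.
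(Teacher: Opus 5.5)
Your proposal is correct. Steps 1 and 3 coincide with the paper's computation of $\widetilde\chi_{f_\alpha,g_\rho}$ and its two residualization terms; the difference lies on the influence-function side.

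The paper takes the explicit nonparametric influence function of \citet{guo2023flexible} from Appendix~\ref{app:id_npEIF_frontdoor} and splits it into outcome-regression, treatment-mechanism, mediator and baseline components. It matches the first two against $\widetilde\chi-\bbE[\widetilde\chi\mid X,Z,A,M]$ and $\bbE[\widetilde\chi\mid X,Z,A]$. The second of these needs a rewriting through $T_{a,z}(x)$, Fubini, and the binarity of $A$. It then shows that the mediator and baseline components integrate to zero against $\alpha$, because $\xi_{z^*}$, $\gamma_{z^*}$ and $\psi_{a_0}(\bbP;z^*)$ do not depend on $z^*$.

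You instead recompute the gradient of $\int\psi_{a_0}(\bbP;z^*)\alpha(z^*)\diff z^*$ directly from the product form $\iint\eta(x,z^*,m)\rho(m,x)p(x)\diff m\diff x$. Your observation that every derivative landing on $\rho$ or $p(x)$ is multiplied by $\int\eta(x,z^*,m)\alpha(z^*)\diff z^*=0$ is exactly the paper's vanishing of the mediator and baseline components, obtained without writing them out.

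Your route has three advantages:
\begin{itemize}
\item It is self-contained.
\item It does not need $A$ to be binary, since your treatment term $\frac{\alpha(Z)}{p(Z\mid X)}\int\rho(m,X)\{\mu-\eta\}\diff m$ is already in final form.
\item For continuous $Z$, it gives the natural rigorous meaning of $\int\varphi^{\np}_{a_0}(\bbP;z^*)\alpha(z^*)\diff z^*$ as the gradient of the integrated functional. This matters because $\varphi^{\np}_{a_0}(\bbP;z^*)$ contains $\I(Z=z^*)$ and is then not an $L^2$ function.
\end{itemize}
The paper's route buys a direct check against the published formula on which the optimal weighting of \citet{guo2023flexible} is built.

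The obstacle you flag is not really one. In the nonparametric model the tangent space is all of $L_0^2(\bbP)$, so the gradient at $\bbP$ is unique. Any correct derivation along arbitrary submodels through $\bbP$ must therefore agree with the appendix formula. Using the Verma constraint only at $t=0$, where the undifferentiated factors are evaluated, is legitimate. What you do need to justify is interchanging $\diff/\diff t$ with the $z^*$-integral.

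One small correction: square-integrability of $f_\alpha=\alpha(Z)/p(Z\mid X)$ requires positivity of $p(Z\mid X)$, for example bounded away from zero on the support of $\alpha$. This is not among the conditions of Remark~\ref{rem:reference_density}, which only control $p(M\mid X,Z,A)$.
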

See Appendix~\ref{app:lem:frontdoor_representation} for a proof.

Lemma~\ref{lem:frontdoor_representation} gives an explicit realization of the abstract inclusion $\Lambda_{\bbP,\psi}^{\perp}\subseteq\Lambda_{\bbP}^{\perp}$ established in Section~\ref{subsec:mean_scale_verma}. The factor $\rho(M,X)$ in $g_{\rho}(X,M,Y)=\rho(M,X)Y$ is not part of the outcome feature defining the mean-scale restriction, which remains $h(Y)=Y$. Rather, it is the multiplicative localization required by the reduced representation in Corollary~\ref{cor:tangent_space_frontdoor}, where the first indexing function has been restricted from $f(X,Z,M)$ to $f(X,Z)$. Thus, the target-specific directions generated by the indexed identification functionals are recovered as particular weighted residual directions in the full nested Markov orthocomplement.

\subsection{The Napkin graph model}
\label{sec:napkin_graph}

We next consider the Napkin graph, shown in Figure~\ref{fig:napkin graph}(a), another canonical example of a causal model with a Verma constraint \citep{pearl2018book, guo2025causal}. The Napkin graph admits a \emph{no-direct-effect} assumption $Y(z,a) = Y(a),\, \forall z,a$, which states that $Z$ affects $Y$ only through $A$.
This assumption induces the nested conditional independence
\begin{align}
Y \indep Z \mid A
\qquad
[\phi_Z(p)],
\label{eq:napkin_nested_constraint}
\end{align}
as illustrated via the CADMG in Figure~\ref{fig:napkin graph}(b). This is a direct specialization of the general framework in Section~\ref{sec:geometry}, obtained by identifying
$X_{\mathrm{gen}}=Z$,
$Y_{\mathrm{gen}}=Y$,
$Z_{\mathrm{gen}}=A$, and
$R=Z$.

\begin{figure}[t]
\centering
\begin{subfigure}[t]{0.45\textwidth}
\centering
\begin{tikzpicture}[>=stealth, node distance = 1.5cm]
    \tikzstyle{format} = [thick, circle, minimum size = 1.0mm, inner sep = 2pt]
    \tikzstyle{square} = [draw, thick, minimum size = 4.5mm, inner sep = 2pt]

    \begin{scope}[xshift = 0cm, yshift = 0cm]
    \path[->, very thick]
    node[shape = ellipse, draw = black] (z) {$Z$}
    node[right of = z, shape = ellipse, draw = black, xshift=0.4cm] (a) {$A$}
    node[above right of = a, shape = ellipse, draw = black] (w) {$W$}
    node[below right of = w, shape = ellipse, draw = black] (y) {$Y$}

    (z) edge[blue] (a)
    (a) edge[blue] (y)
    (w) edge[blue] (z)
    (w) edge[red, <->] (a)
    (w) edge[red, <->] (y)
    ;
    \end{scope}
\end{tikzpicture}
\caption{}
\label{fig:napkin graph original}
\end{subfigure}
\begin{subfigure}[t]{0.45\textwidth}
\centering
\begin{tikzpicture}[>=stealth, node distance = 1.5cm]
    \tikzstyle{format} = [thick, circle, minimum size = 1.0mm, inner sep = 2pt]
    \tikzstyle{square} = [draw, thick, minimum size = 4.5mm, inner sep = 2pt]

    \begin{scope}[xshift = 0cm, yshift = 0cm]
    \path[->, very thick]
    node[shape = rectangle, draw = black] (z) {$Z$}
    node[right of = z, shape = ellipse, draw = black, xshift=0.4cm] (a) {$A$}
    node[above right of = a, shape = ellipse, draw = black] (w) {$W$}
    node[below right of = w, shape = ellipse, draw = black] (y) {$Y$}

    (z) edge[blue] (a)
    (a) edge[blue] (y)
    (w) edge[red, <->] (a)
    (w) edge[red, <->] (y)
    ;
    \end{scope}
\end{tikzpicture}
\caption{}
\label{fig:napkin graph fixed}
\end{subfigure}
\caption{The Napkin graph. (a) Original ADMG $\calG$. (b) CADMG obtained after fixing $Z$, under which the nested conditional independence
$Y \indep Z \mid A \ [\phi_Z(p)]$ becomes an ordinary conditional independence.}
\label{fig:napkin graph}
\end{figure}

\begin{corollary}\label{cor:napkin_orthocomplement}
The nested Markov model associated with the Napkin graph in Figure~\ref{fig:napkin graph}(a) is
\begin{align}
\calP(\calG)
=
\Big\{
\bbP:
\bbE
\left[
\omega_Z(\M)
\left\{
f(Z,A)
-
\bbE_{\phi_Z}[f(Z,A)\mid A]
\right\}
g(Y, A)
\right]
=
0
\Big\},
\label{eq:napkin_model}
\end{align}
for every
$f\in L^2(\bbP_{Z,A})$
and
$g\in L^2(\bbP_{Y, A})$, where $\M = (Z, W)$ and 
\begin{align*}
\omega_Z(\M)
=
\frac{\widetilde p(Z)}{p(Z\mid W)},
\qquad
\bbE_{\phi_Z}[h(O)\mid A]
=
\frac{
\bbE
\left[
\omega_Z(\M) \, h(O)
\mid A
\right]
}{
\bbE
\left[
\omega_Z(\M)
\mid A
\right]
}.
\end{align*}
Furthermore, the orthocomplement to the tangent space of $\calP(\calG)$ in \eqref{eq:napkin_model} at $\bbP$ is
\begin{align}
\Lambda_{\bbP}^{\perp}
=
\overline{\sp}
\left\{
\chi_{f,g}(O):
f\in L^2(\bbP_{Z,A}),
\;
g\in L^2(\bbP_{Y,A})
\right\},
\label{eq:napkin_orthocomplement}
\end{align}
where
\begin{align*}
\chi_{f,g}(O)
&=
\widetilde\chi_{f,g}(O)
-
\bbE
\left[
\widetilde\chi_{f,g}(O)
\mid W,Z
\right]
+
\bbE
\left[
\widetilde\chi_{f,g}(O)
\mid W
\right],
\\
\widetilde\chi_{f,g}(O)
&=
\omega_Z(\M)
\Big\{
f(Z,A)-\bbE_{\phi_Z}[f(Z,A)\mid A]
\Big\}
\Big\{
g(Y,A)-\bbE_{\phi_Z}[g(Y,A)\mid A]
\Big\}.
\end{align*}
\end{corollary}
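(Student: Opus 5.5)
The corollary follows by specializing Proposition~\ref{prop:weighted_moment_representation} and Theorem~\ref{thm:orthocomplement_tangent_space} to the Napkin graph. Most of the work is verifying the graphical inputs: that the Napkin nested Markov model is defined exactly by the single constraint \eqref{eq:napkin_nested_constraint}, and that $\mb_{\calG}(Z)=W$.

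\emph{Step 1 (graphical facts).} In Figure~\ref{fig:napkin graph}(a), the district of $Z$ is $\{Z\}$, since $Z$ has no bidirected edges. Moreover $\de_{\calG}(Z)\setminus\{Z\}=\{A,Y\}$ does not meet $\{Z\}$, so $Z$ is fixable. Its Markov blanket is $\dis_{\calG}(Z)\cup\pa_{\calG}(\dis_{\calG}(Z))\setminus\{Z\}=\{W\}$. Hence $\M=(Z,W)$ and $\omega_Z(\M)=\tilde p(Z)/p(Z\mid W)$.

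After fixing $Z$ (Figure~\ref{fig:napkin graph}(b)), the CADMG has the edges $Z\to A\to Y$ and $A\leftrightarrow W\leftrightarrow Y$. There is no collider path from $Z$ to $Y$ given $A$ that is open: the path $Z\to A\leftrightarrow W\leftrightarrow Y$ is blocked at the non-conditioned collider $W$. So $Y\indep Z\mid A\ [\phi_Z(p)]$ holds by m-separation in the fixed graph, using the fact from Section~\ref{sec:nnm} that the post-fixing kernel is nested Markov with respect to $\phi_Z(\calG)$.

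For the converse, that this is the only equality constraint, I would cite the known result (Tian's constraint-finding algorithm, or \citet{richardson2023nested}) that the Napkin ADMG has no ordinary conditional independences on its four variables and that its only nested constraint is the Verma constraint above. Thus $\calP(\calG)$ is the single-constraint model of Section~\ref{subsec:tangent_space}.

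\emph{Step 2 (moment representation).} Set $X_{\mathrm{gen}}=Z$, $Y_{\mathrm{gen}}=Y$, $Z_{\mathrm{gen}}=A$ and $R=Z$ in \eqref{eq:general_equality_constraint}, and use the reweighting identities \eqref{eq:kernel_expectations}. The normalizing constant $\bbE[\omega_Z]=1$ drops out, giving \eqref{eq:napkin_model} for all $f\in L^2(\bbP_{Z,A})$ and $g\in L^2(\bbP_{Y,A})$. The reverse implication, that these moment conditions imply the conditional independence in $\phi_Z(p)$, holds because the test functions range over all of $L^2$: taking indicator functions recovers the conditional distributions. Note that $R=Z$ overlaps $X_{\mathrm{gen}}$. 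Theorem~\ref{thm:orthocomplement_tangent_space} explicitly permits this, so no disjointness issue arises. It does mean $f(Z,A)$ depends on the fixed variable, which is harmless because $\tilde p(Z)$ supplies a marginal for $Z$ under $\bbP_{\phi_Z}$.

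\emph{Step 3 (orthocomplement).} Substitute $\M=(Z,W)$ into \eqref{eq:chi_raw} and \eqref{eq:chi_residualized}. The residualization terms become $\bbE[\cdot\mid \mb_{\calG}(R),R]=\bbE[\cdot\mid W,Z]$ and $\bbE[\cdot\mid\mb_{\calG}(R)]=\bbE[\cdot\mid W]$. The weighted conditional means given $Z_{\mathrm{gen}}=A$ match the stated $\bbE_{\phi_Z}[\cdot\mid A]$. Theorem~\ref{thm:orthocomplement_tangent_space} then yields \eqref{eq:napkin_orthocomplement} directly.

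\emph{Main obstacle.} The main obstacle is the completeness claim in Step 1, namely that the Napkin model imposes no constraint beyond \eqref{eq:napkin_nested_constraint}. Without it, the theorem gives only the inclusion $\supseteq$ of Proposition~\ref{thm:multiple_constraints}-type. I would handle it by enumerating the ordered local nested Markov property under the topological order $W,Z,A,Y$:
\begin{itemize}
\item $W$ and $Z$ impose nothing, because $Z$'s only predecessor $W$ is its parent.
\item $A$ has Markov blanket $\{W,Z\}$, which is its full past, so it imposes nothing.
\item $Y$ has district $\{W,A,Y\}$ with Markov blanket $\{W,Z,A\}$, which is its full past, so it imposes nothing in the original graph. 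The only nontrivial constraint comes from the reachable set after fixing $Z$, and that is exactly \eqref{eq:napkin_nested_constraint}.
\end{itemize}
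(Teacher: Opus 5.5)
Your proposal is correct and takes essentially the same route as the paper, which likewise just specializes Proposition~\ref{prop:weighted_moment_representation} and Theorem~\ref{thm:orthocomplement_tangent_space} with $X_{\mathrm{gen}}=Z$, $Y_{\mathrm{gen}}=Y$, $Z_{\mathrm{gen}}=A$, $R=Z$ and $\mb_{\calG}(Z)=W$; your extra checks (fixability, m-separation in $\phi_Z(\calG)$, the converse of the moment representation, and that this Verma constraint is the Napkin graph's only equality constraint) are facts the paper simply asserts. One small imprecision in your enumeration: in $\phi_Z(\calG)$ the Markov blanket of $Y$ is still $\{W,Z,A\}$, so the local property at the reachable set $\{W,A,Y\}$ is vacuous, and the constraint actually appears at the reachable set $\{A,Y\}$ (fix $Z$, then $W$), where $Y \indep \{Z,W\}\mid A\ [\phi_{\{Z,W\}}(p)]$ reduces to $Y\indep Z\mid A\ [\phi_Z(p)]$.
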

See a proof in Appendix~\ref{app:cor:napkin_orthocomplement}.

The efficiency analysis of the causal mean $\bbE[Y(a_0)]$ proceeds as in the previous example. Let $\varphi_{a_0}^{\mathrm{np}}(\bbP;z^*)$ denote the indexed nonparametric influence function corresponding to the identification functional $\psi_{a_0}(\bbP;z^*)$, given in Appendix~\ref{app:id_npEIF_napkin}. The semiparametric efficient influence function is obtained by projecting any indexed nonparametric influence function onto the tangent space, $\varphi_{a_0}^{\mathrm{eff}} = \varphi_{a_0}^{\mathrm{np}}(\bbP;z^*) - \Pi\big( \varphi_{a_0}^{\mathrm{np}}(\bbP;z^*) \mid \Lambda_{\bbP}^{\perp} \big)$, where $\Lambda_{\bbP}^{\perp}$ is characterized in Corollary~\ref{cor:napkin_orthocomplement}. A finite-dimensional approximation is then obtained directly from Theorem~\ref{theorem:locally_eff} using the orthocomplement basis in Corollary~\ref{cor:napkin_orthocomplement}. Thus, as in the previous example, the only model-specific ingredient is the collection of orthocomplement directions; the projection argument itself is entirely generic. 

Likewise, Theorem~\ref{thm:indexed_invariance_orthocomplement} implies that the target-specific orthocomplement and corresponding affine class of influence functions are generated by affine combinations of the indexed nonparametric influence functions. Consequently, Lemma~\ref{lem:optimal_affine_IF_combination} recovers the optimal weighting strategy of \citet{guo2025causal} directly from the indexed invariance characterization. 

\begin{lemma}[Embedding of the indexed influence-function class in the Napkin graph] \label{lem:napkin_indexed}

For every square-integrable function $\alpha$ satisfying $\int \alpha(z) \,\diff z =0$, choose 
\begin{align*}
f_\alpha(Z,A) 
= \frac{\I(A=a_0)}{\int p(a_0\mid Z,w) \, p(w) \, \diff w} \frac{\alpha(Z)}{\tilde p(Z)}, 
\qquad
g(Y,A) = Y. 
\end{align*}
Then $\chi_{f_\alpha,g}(O) \in \Lambda_{\bbP}^{\perp}$, where $\Lambda_{\bbP}^{\perp}$ is given by Corollary~\ref{cor:napkin_orthocomplement} and $\chi_{f_\alpha,g}(O) = \int \varphi_{a_0}^{\mathrm{np}}(\bbP; z^*)(O) \, \alpha(z^*) \,\diff z^*$.  
\end{lemma}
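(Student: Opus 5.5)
The plan is to verify the identity $\chi_{f_\alpha,g}=\int \varphi_{a_0}^{\np}(\bbP;z^*)\,\alpha(z^*)\,\diff z^*$ by direct computation. Membership in $\Lambda_{\bbP}^{\perp}$ then comes for free from Corollary~\ref{cor:napkin_orthocomplement}. That requires only checking $f_\alpha\in L^2(\bbP_{Z,A})$ and $g\in L^2(\bbP_{Y,A})$, which follows from positivity of $\pi(z):=\int p(a_0\mid z,w)p(w)\,\diff w$ and of $\tilde p$ on its support, under the integrability conditions of Remark~\ref{rem:reference_density}.

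\textbf{Step 1: the post-fixing law.} With $\omega_Z=\tilde p(Z)/p(Z\mid W)$, I will use that $\bbP_{\phi_Z}$ has density $\tilde p(z)\,p(w)\,p(a,y\mid z,w)$. Consequently $\bbP_{\phi_Z}(A=a_0\mid Z=z)=\pi(z)$, and
\begin{align*}
\bbE_{\phi_Z}[Y\mid A=a_0,Z=z^*]=\frac{\int \bbE[Y\I(A=a_0)\mid z^*,w]\,p(w)\,\diff w}{\pi(z^*)}=\psi_{a_0}(\bbP;z^*),
\end{align*}
which is the Napkin identification functional of Appendix~\ref{app:id_npEIF_napkin}. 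By the Verma constraint \eqref{eq:napkin_nested_constraint}, this quantity equals the common value $\psi$ for all $z^*$.

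\textbf{Step 2: compute $\widetilde\chi_{f_\alpha,g}$.} Using the density from Step 1,
\begin{align*}
\bbE_{\phi_Z}[f_\alpha\mid A=a_0]\propto\int \frac{\alpha(z)}{\tilde p(z)\pi(z)}\,\tilde p(z)\pi(z)\,\diff z=\int\alpha(z)\,\diff z=0,
\end{align*}
and $f_\alpha=0$ when $A\neq a_0$. Hence the $f$-residual is $f_\alpha$ itself. On $\{A=a_0\}$ we have $\bbE_{\phi_Z}[Y\mid A=a_0]=\psi$ by Step 1. The factor $\tilde p(Z)$ in $\omega_Z$ cancels against the $1/\tilde p(Z)$ in $f_\alpha$, which gives
\begin{align*}
\widetilde\chi_{f_\alpha,g}(O)=\frac{\I(A=a_0)\,\alpha(Z)}{p(Z\mid W)\,\pi(Z)}\,(Y-\psi).
\end{align*}
Applying the residualization of Corollary~\ref{cor:napkin_orthocomplement}, write $\mu(z,w)=\bbE[\I(A=a_0)Y\mid z,w]$ and $\pi(z,w)=p(a_0\mid z,w)$. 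Then
\begin{align*}
\bbE[\widetilde\chi\mid W,Z]&=\frac{\alpha(Z)}{p(Z\mid W)\,\pi(Z)}\bigl\{\mu(Z,W)-\psi\,\pi(Z,W)\bigr\},\\
\bbE[\widetilde\chi\mid W]&=\int \frac{\alpha(z)}{\pi(z)}\bigl\{\mu(z,W)-\psi\,\pi(z,W)\bigr\}\,\diff z .
\end{align*}

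\textbf{Step 3: match with the integrated nonparametric IF.} I will write $\varphi^{\np}_{a_0}(\bbP;z^*)$ as a ratio-functional IF (delta method applied to $\int\mu(z^*,w)p(w)\,\diff w$ and $\pi(z^*)$). It has two parts:
\begin{align*}
&\frac{\I(Z=z^*)}{p(z^*\mid W)\,\pi(z^*)}\Bigl[\I(A=a_0)Y-\mu(z^*,W)-\psi(z^*)\bigl\{\I(A=a_0)-\pi(z^*,W)\bigr\}\Bigr],\\
&\frac{\mu(z^*,W)-\psi(z^*)\,\pi(z^*,W)}{\pi(z^*)}\quad\text{(the $W$-part, mean zero)}.
\end{align*}
Here the indicator is read as a Dirac mass for continuous $Z$. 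Integrating the first part against $\alpha(z^*)$ and using $\psi(Z)=\psi$ yields exactly $\widetilde\chi-\bbE[\widetilde\chi\mid W,Z]$. Integrating the second part yields $\bbE[\widetilde\chi\mid W]$. The centering constants vanish because each $W$-part has mean zero.

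\textbf{Main obstacle.} The delicate step is Step 3. It requires writing the nonparametric IF in exactly the decomposition above (including the Dirac/indicator handling for continuous $Z$), and it requires the invariance $\psi(z^*)\equiv\psi$ to replace the indexed $\psi(Z)$ by the constant appearing in $\widetilde\chi$. I would first verify the identity for discrete $Z$, where everything reduces to finite sums, and then pass to densities.
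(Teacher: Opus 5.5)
Your proposal is correct and follows essentially the same route as the paper. You show that the $f$-residual vanishes because $\int\alpha=0$ and that the $Y$-residual is $Y-\psi$ by the Verma invariance, which gives $\widetilde\chi_{f_\alpha,g}=\I(A=a_0)\alpha(Z)(Y-\psi)/\{p(Z\mid W)\pi(Z)\}$; you then residualize via $\bbE[\cdot\mid W,Z]$ and $\bbE[\cdot\mid W]$ and match term by term with $\int\varphi_{a_0}^{\np}(\bbP;z^*)\alpha(z^*)\,\diff z^*$ using $\psi_{a_0}(\bbP;Z)=\psi_{a_0}(\bbP)$. The only difference is cosmetic: you write the indexed influence function in delta-method ratio form with $\mu(z,w)=\bbE[\I(A=a_0)Y\mid z,w]$, which is algebraically identical to the paper's three-term expression \eqref{eq:IF_tilde_z}.
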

See a proof in Appendix~\ref{app:lem:napkin_indexed}. 

Lemma~\ref{lem:napkin_indexed} likewise provides an explicit realization of the abstract inclusion $\Lambda_{\bbP,\psi}^{\perp}\subseteq\Lambda_{\bbP}^{\perp}$. In particular, the indexed influence-function directions considered by \citet{guo2025causal} are recovered as a specialization of the general weighted residual directions through an explicit choice of the functions $f$ and $g$. These directions span only a proper subspace of the full orthocomplement characterized in Corollary~\ref{cor:napkin_orthocomplement}. Consequently, the optimal weighted estimator of \citet{guo2025causal} exploits only part of the efficiency gain available under the full distributional Verma constraint.

\subsection{Examples with multiple nested Markov constraints} 
\label{sec:ex:muiltiple_constraints}

Two additional examples illustrating the multiple-constraint construction of Proposition~\ref{thm:multiple_constraints} are provided in the supplementary materials. Appendix~\ref{app:ex:verma_ordinary} considers an extended front-door graph that simultaneously imposes the ordinary conditional independence $Z\indep M\mid X,A$ and the Verma constraint $Z\indep Y\mid X,M\ [\phi_M(p)]$, whereas Appendix~\ref{app:ex:sequential} considers a sequential graph with two Verma constraints. These examples demonstrate how constraint-specific orthocomplement directions are combined and why efficiency calculations generally require a joint projection that retains cross-covariances between the direction families.

\section{Simulation Studies}
\label{sec:sims}

The preceding sections characterize the semiparametric geometry induced by the indexed identifying restrictions and the Verma constraints, derive the corresponding efficient influence functions, and motivate the proposed one-step estimators. The simulation studies examine the practical implications of these theoretical results from three complementary perspectives. Experiment~1 evaluates the finite-sample performance of the proposed estimators under a fixed data-generating mechanism (DGP). Experiment~2 investigates how the relative asymptotic efficiencies of the competing estimators vary across a broad collection of DGPs. Experiment~3 examines the finite-dimensional approximation to the Verma orthocomplement by progressively enriching the projection basis.

The main text reports results for estimation of the treatment-specific mean $\psi_0(\bbP)=\bbE[Y(0)]$ under the extended front-door model of Section~\ref{sec:front-door}. Parallel experiments were conducted for $\psi_1(\bbP) = \bbE[Y(1)]$, the average treatment effect (ATE) $\tau(\bbP)=\psi_1(\bbP)-\psi_0(\bbP)$, and under the Napkin model of Section~\ref{sec:napkin_graph}; complete results are provided in Appendix~\ref{app:sims}. Code for reproducing all simulation experiments
is available on GitHub at \href{https://github.com/raziehna/Verma-efficiency}{raziehna/Verma-efficiency}.

For the extended front-door simulations, variables are generated from a latent-variable model compatible with Figure~\ref{fig:front-door}(a), with latent variable $U\sim\operatorname{Uniform}(0,1)$. The observed data are $O=(X,Z,A,M,Y)$, where we assume $X,Z,A,$ and $M$ are binary and $Y$ is continuous. The treatment mechanism includes a tunable $Z$-by-$X$ interaction $\eta_{\mathrm{int}}$, which changes the relative variances and covariance of the two fixed-$Z$ indexed influence functions and hence the potential benefit of optimal indexed weighting. The remaining treatment, mediator, and covariate models and all fixed coefficients are given in Appendix~\ref{app:sims:frontdoor}. To vary interpretable features of the outcome distribution, we define the treatment-specific mediator profile $q_a(m,x) = \sum_{z=0}^1 p(m\mid A=a,Z=z,X=x)p(z\mid x)$, for $a\in\{0,1\}$. For target treatment level $a_T$, let $a_C=1-a_T$ and write $q_T=q_{a_T}$ and $q_C=q_{a_C}$. Conditional on $(U,M,X)$, the outcome is generated as
\begin{align*}
Y
=
\mu(U,M,X)
+
\sigma(U,M,X) \, \varepsilon(U,M,X),
\end{align*}
where $\bbE[Y\mid U,M,X] = \mu(U,M,X)$, $\Var[Y\mid U,M,X] = \sigma^2(U,M,X)$, and the error $\varepsilon(U,M,X)$ is standardized to have conditional mean zero and conditional variance one, while its conditional third moment is allowed to vary through the loading $\lambda(U,M,X)$. We specify
\begin{align*}
\mu(U,M,X)
&=
\mu_0(U,M,X)
+
\left\{
\eta_{\mu,T}q_T(M,X)
+
\eta_{\mu,C}q_C(M,X)
\right\}g_\mu(U),
\\
\log \sigma(U,M,X)
&=
s_0(U)
+
\left\{
\eta_{\sigma,T}q_T(M,X)
+
\eta_{\sigma,C}q_C(M,X)
\right\}g_\sigma(U),
\\
\lambda(U,M,X)
&=
\left\{
\eta_{\kappa,T}q_T(M,X)
+
\eta_{\kappa,C}q_C(M,X)
\right\}g_\lambda(U).
\end{align*}
Thus, $(\eta_{\mu,T},\eta_{\mu,C})$ controls perturbations of the conditional mean, $(\eta_{\sigma,T},\eta_{\sigma,C})$ controls perturbations of the conditional scale, and $(\eta_{\kappa,T},\eta_{\kappa,C})$ controls the conditional third moment. The perturbations are weighted by the target and contrast mediator profiles, allowing these features of the outcome distribution to differ according to the representation of each $(M,X)$ stratum under the two treatment interventions. The functions appearing above, the standardized error distribution, and all parameter values used in the three experiments are specified in Appendix~\ref{app:sims:frontdoor}.

\subsection{Experiment 1: Finite-sample performance}
\label{sec:sim-exp1}

We compare five one-step estimators corresponding to progressively richer uses of the identifying restrictions: the two fixed-$Z$ indexed estimators, their equally weighted and optimally weighted affine combinations, and the Verma projection estimator. The first four estimators exploit only the indexed invariance of the identifying functional, whereas the Verma projection estimator additionally incorporates directions from a finite-dimensional approximation to the Verma orthocomplement developed in Section~\ref{sec:geometry}. To assess their finite-sample performance, we fix the treatment and outcome tuning parameters at a representative DGP and compare the empirical $n$-scaled variances of the resulting estimators with their corresponding theoretical asymptotic variances over an increasing sequence of sample sizes. The theoretical variances are computed from the corresponding population influence functions under the same DGP. Complete simulation details are provided in Appendix~\ref{app:sims:frontdoor}.

Figure~\ref{fig:frontdoor_var_mean_0_n_scaled_variance} compares the empirical $n$-scaled variances with their theoretical counterparts (dashed horizontal lines) for estimation of $\psi_0(\bbP)$. Across the five estimators, the empirical variances approach the corresponding theoretical asymptotic variances as the sample size increases, and the ordering predicted by the population calculations is reflected in finite samples. Table~\ref{tab:frontdoor_var_mean_0_efficiency} reports the corresponding theoretical asymptotic variances and pairwise relative efficiency comparisons. We define the relative efficiency gain of method $2$ over method $1$ as $G_{\mathrm{RE}}(1,2) = 100 (V_1/V_2 - 1)$, where $V_j$ denotes the theoretical asymptotic variance of method $j$. Thus, a positive value indicates greater efficiency of method $2$. We use this definition consistently throughout the simulation study. 

\begin{figure}[t]
\centering
\includegraphics[scale=0.7]{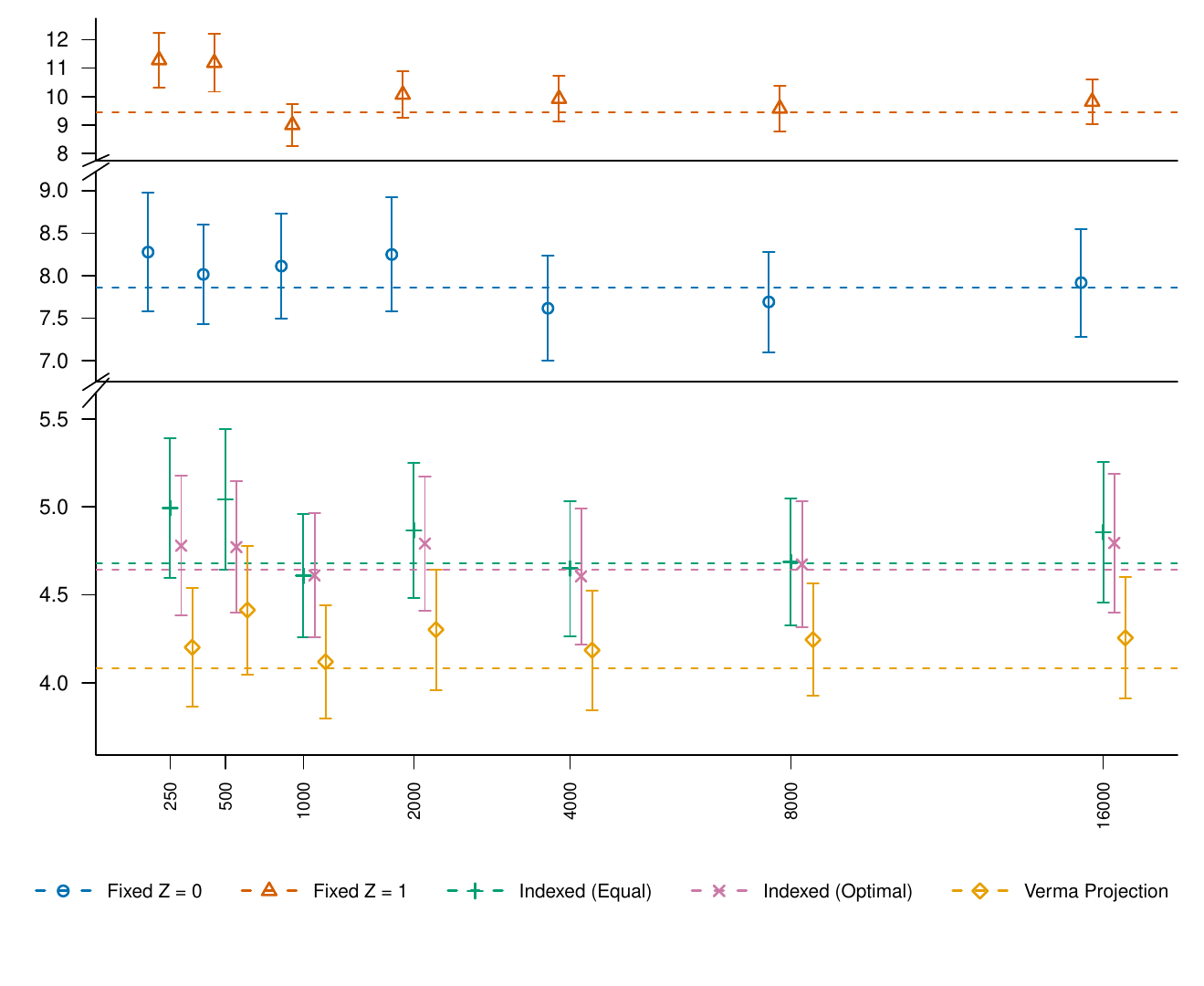}
\vspace{-1cm}
\caption{Finite-sample performance for $\psi_0(\bbP)=\bbE[Y(0)]$ under the extended front-door model. Points and vertical intervals show the empirical $n$-scaled variance and its Monte Carlo confidence interval, respectively; horizontal dashed lines show the corresponding theoretical asymptotic variances. Results are shown for the two fixed-$Z$ indexed estimators, the equally weighted indexed estimator, the optimally weighted indexed estimator, and the Verma projection estimator.}
\label{fig:frontdoor_var_mean_0_n_scaled_variance}
\end{figure}

\begin{table}[htbp]
\centering
\small
\caption{Theoretical variances and relative efficiency gains for $\psi_0(\bbP) = \mathbb{E}[Y(0)]$ under the extended front-door model. Relative efficiency gains are computed with respect to the reference method indicated by each column. }
\label{tab:frontdoor_var_mean_0_efficiency}
\begin{tabular}{lccccc}
\toprule
Method & \shortstack{Theoretical\\variance} & \shortstack{Gain vs.\\fixed $Z=0$} & \shortstack{Gain vs.\\fixed $Z=1$} & \shortstack{Gain vs.\\equal indexed} & \shortstack{Gain vs.\\optimal indexed} \\
\midrule
Fixed $Z=0$ & 7.861 & 0.000\% & 20.177\% & -40.453\% & -40.957\% \\
Fixed $Z=1$ & 9.447 & -16.789\% & 0.000\% & -50.451\% & -50.870\% \\
Indexed (equal) & 4.681 & 67.936\% & 101.820\% & 0.000\% & -0.845\% \\
Indexed (optimal) & 4.641 & 69.368\% & 103.541\% & 0.853\% & 0.000\% \\
Verma projection & 4.082 & 92.566\% & 131.420\% & 14.666\% & 13.697\% \\
\bottomrule
\end{tabular}
\end{table}

The results illustrate that combining the two fixed-$Z$ indexed estimators substantially reduces variance relative to either indexed estimator alone, and optimizing their affine combination yields a further modest improvement. The Verma projection estimator has the smallest theoretical asymptotic variance. In particular, its relative efficiency gain is roughly $14.7\%$ over the equally weighted indexed estimator and $13.7\%$ over the optimally weighted indexed estimator. The latter comparison is especially informative: even after extracting the maximal efficiency gain available within the affine class of indexed estimators, incorporating additional directions from the Verma orthocomplement yields a further substantial improvement in efficiency.

The finite-sample results for $\psi_1(\bbP)$ and the ATE under the extended front-door model are reported in Appendix~\ref{app:sims:frontdoor} (Figures~\ref{fig:frontdoor_var_ate_n_scaled_variance} and \ref{fig:frontdoor_var_mean_1_n_scaled_variance}; Table~\ref{tab:frontdoor_var_all_estimands}). The corresponding results under the Napkin model are reported in Appendix~\ref{app:sims:napkin} (Figures~\ref{fig:napkin_var_ate_n_scaled_variance}--\ref{fig:napkin_var_mean_0_n_scaled_variance}; Table~\ref{tab:napkin_var_all_estimands})).

\subsection{Experiment 2: Efficiency across data-generating mechanisms}
\label{sec:sim-exp2}

We study how the relative asymptotic variances of the competing estimators change across the DGP family described above. We vary $\eta_{\mathrm{int}}$ together with the six outcome tuning parameters $(\eta_{\mu,T},\eta_{\mu,C})$, $(\eta_{\sigma,T},\eta_{\sigma,C})$, and $(\eta_{\kappa,T},\eta_{\kappa,C})$, while holding the remaining coefficients fixed. The resulting grid therefore varies the treatment mechanism together with treatment-specific features of the conditional mean, variance, and skewness of the outcome distribution. For each of the resulting $2,187$ candidate DGPs, we compute the theoretical asymptotic variance of all five estimators and rank DGPs by the relative efficiency gain of the Verma projection estimator over the optimally weighted indexed estimator, $G_{\mathrm{RE}}(\mathrm{indexed,opt},\mathrm{Verma})$, using the gain over the equally weighted indexed estimator to break near ties. Details are provided in Appendix~\ref{app:sims:frontdoor}.

Figure~\ref{fig:frontdoor_var_grid_mean_0_top_20_method_variances_split_x} reports the theoretical asymptotic variances for the $20$ highest-ranked DGPs for estimation of $\psi_0(\bbP)$. Each row represents one DGP, and the horizontal axis is partitioned into three noncontiguous ranges to facilitate visual comparison across methods. Across all selected configurations, combining the two indexed estimators produces a substantial reduction in variance relative to either fixed-$Z$ estimator. The optimally weighted indexed estimator further improves on the equally weighted estimator, although the two variances remain close throughout the selected DGPs. In contrast, the Verma projection estimator consistently achieves the smallest theoretical asymptotic variance, yielding relative efficiency gains of roughly $14\%$ over the optimally weighted indexed estimator. The corresponding parameter configurations are reported in Appendix Table~\ref{tab:frontdoor_var_grid_all_estimands}.

\begin{figure}[t]
\centering
\includegraphics[scale=0.54]{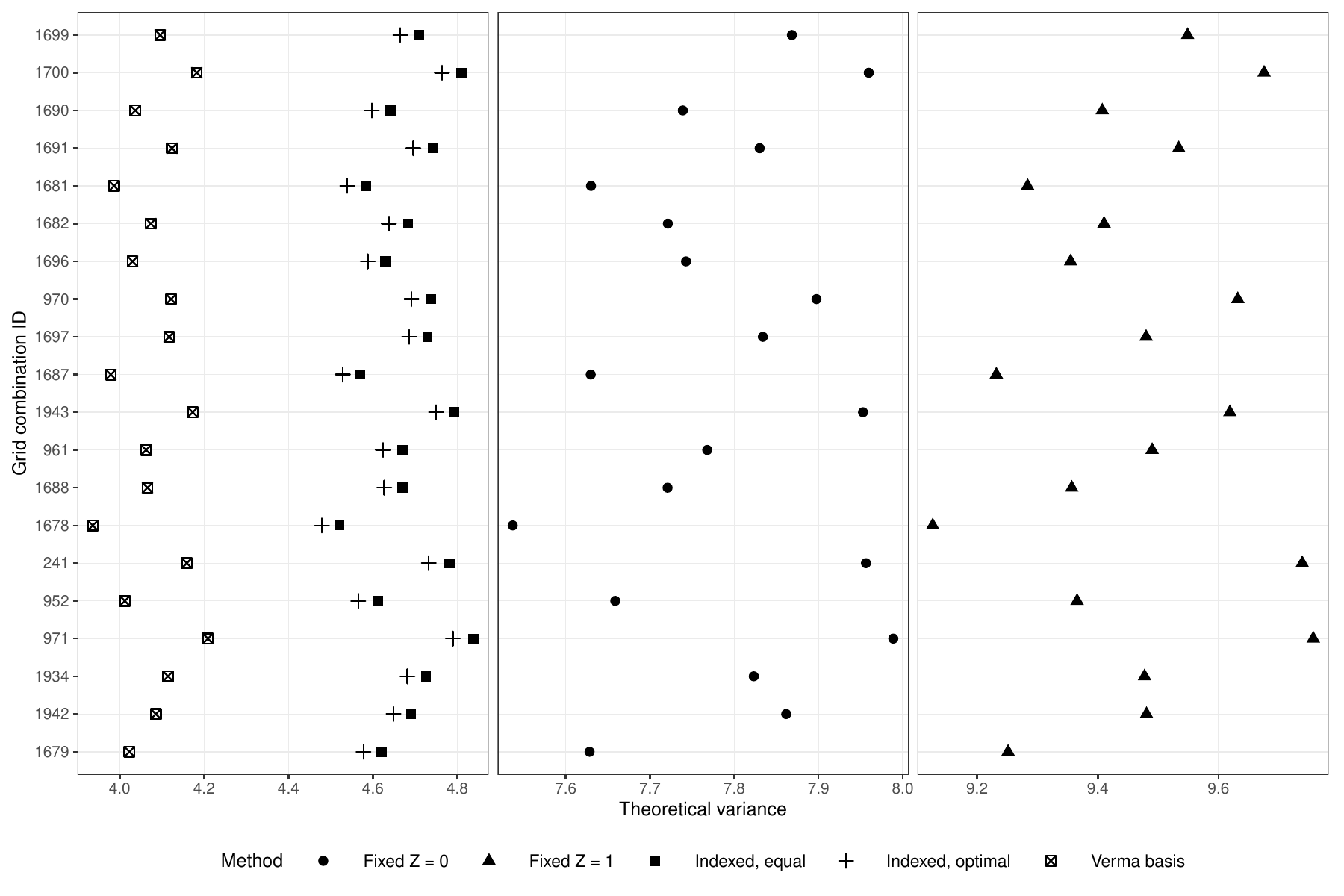}
\caption{
Theoretical asymptotic variances for  $\psi_0(\bbP)=\bbE[Y(0)]$ under the extended front-door model across the 20 highest-ranked DGPs in Experiment~2. Each row corresponds to one DGP, and symbols denote the five competing estimators. The horizontal axis is partitioned into three noncontiguous ranges to facilitate visual comparison across methods. DGPs are ranked by the relative efficiency gain of the Verma projection estimator over the optimally weighted indexed estimator. 
}
\label{fig:frontdoor_var_grid_mean_0_top_20_method_variances_split_x}
\end{figure}

The grid-search results for $\psi_1(\bbP)$ and the ATE under the extended front-door model are reported in Appendix~\ref{app:sims:frontdoor} (Figures~\ref{fig:frontdoor_var_grid_ate_top_20_method_variances_split_x} and \ref{fig:frontdoor_var_grid_mean_1_top_20_method_variances_split_x}; Table~\ref{tab:frontdoor_var_grid_all_estimands}). The corresponding results under the Napkin model are reported in Appendix~\ref{app:sims:napkin} (Figures~\ref{fig:napkin_var_grid_ate_top_20_method_variances_split_x}--\ref{fig:napkin_var_grid_mean_0_top_20_method_variances_split_x}; Table~\ref{tab:napkin_var_grid_all_estimands}).

\subsection{Experiment 3: Effect of basis richness}
\label{sec:sim-exp3}

The proposed Verma projection estimator approximates projection onto the infinite-dimensional Verma orthocomplement by projecting onto a finite-dimensional subspace generated by prespecified basis functions. Here, we investigate how the theoretical asymptotic variance of the resulting estimator changes as this approximation space is enlarged. We consider the nested outcome bases $\mathcal B_1 = \{Y\}$, $\mathcal B_2 = \{Y,Y^2\}$, $\mathcal B_3 = \{Y,Y^2,Y^3\}$. The remaining components of the orthocomplement basis are held fixed, so the corresponding projection spaces are nested. Consequently, enlarging the basis cannot increase the theoretical asymptotic variance of the resulting projection estimator. The substantive question is therefore how much additional variance reduction is obtained by enriching the basis with quadratic and cubic outcome functions, and how this incremental improvement depends on the underlying DGP. We consider a DGP grid that holds the treatment mechanism fixed while varying the six outcome tuning parameters $(\eta_{\mu,T},\eta_{\mu,C})$, $(\eta_{\sigma,T},\eta_{\sigma,C})$, $(\eta_{\kappa,T},\eta_{\kappa,C})$, over ranges allowing stronger conditional variance and skewness perturbations than in Experiment~2. For each of the resulting $1,296$ candidate DGPs, we compute the theoretical asymptotic variance of the optimally weighted indexed estimator and of the three Verma projection estimators corresponding to $\mathcal B_1$, $\mathcal B_2$, and $\mathcal B_3$. The exact grid and ranking criterion are given in Appendix~\ref{app:sims:frontdoor}. 

Figure~\ref{fig:frontdoor_basis_grid_mean_0_top_20_method_variances_split_x} displays the theoretical asymptotic variances for the $20$ highest-ranked DGPs. As implied by the nested projection spaces, the theoretical variance is nonincreasing as the basis is enriched. The largest gains occur for DGPs exhibiting strong treatment-specific perturbations of higher-order outcome features. For $\psi_0(\bbP)$, Appendix Table~\ref{tab:frontdoor_basis_grid_all_estimands} shows that the highest-ranked configurations all correspond to the largest value of the target-arm variance perturbation $\eta_{\sigma,T}$, with only negligible improvement from adding $Y^2$ alone and most of the additional efficiency arising after including the cubic basis function $Y^3$. 

\begin{figure}[!t]
\centering
\includegraphics[scale=0.55]
{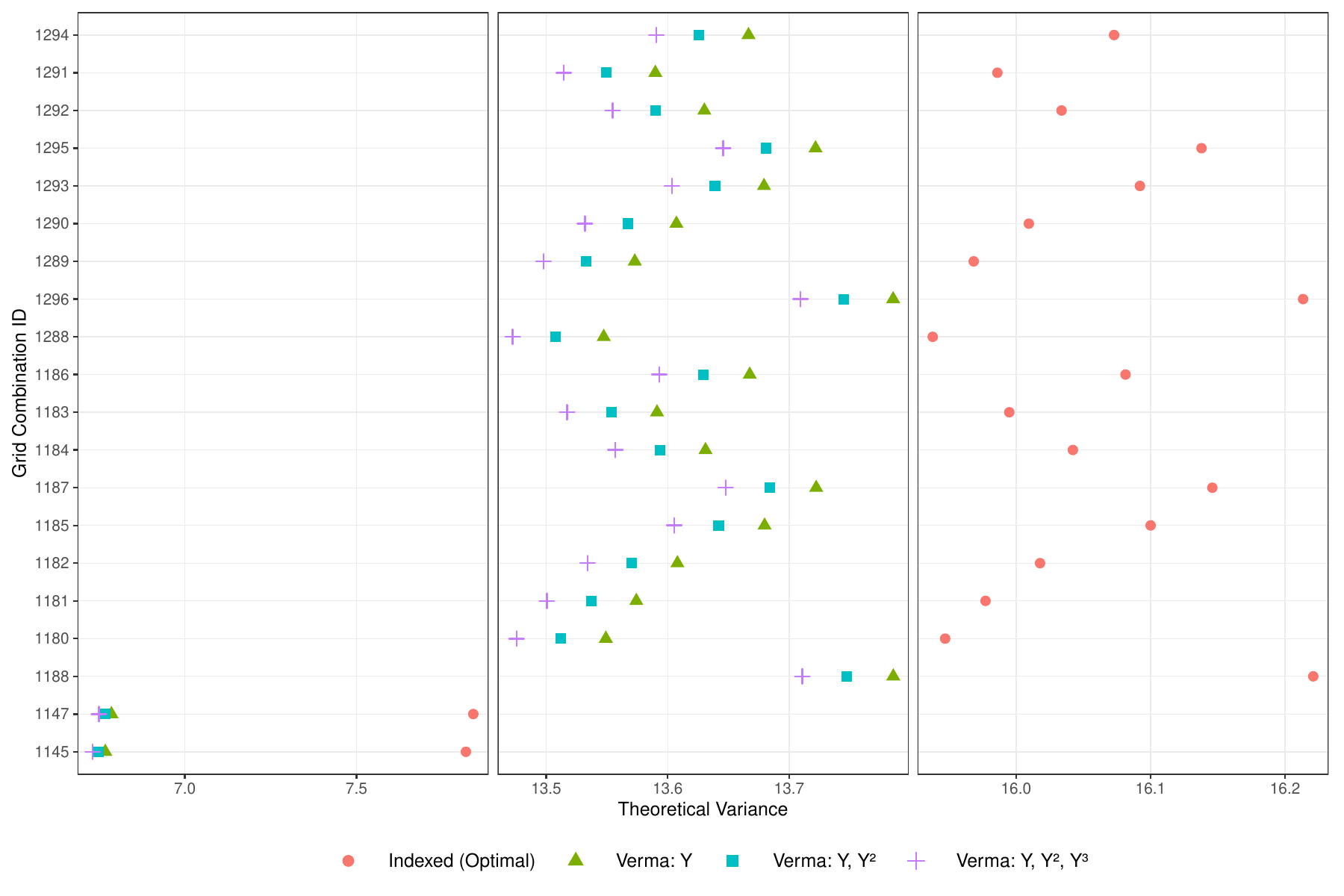}
\caption{
Effect of Verma-basis richness on the theoretical asymptotic variance for $\psi_0(\bbP)=\bbE[Y(0)]$ under the extended front-door model. Results are shown for the 20 highest-ranked DGPs in Experiment~3. The Verma projection estimators use the nested outcome bases $\{Y\}$, $\{Y,Y^2\}$, and $\{Y,Y^2,Y^3\}$; the optimally weighted indexed estimator is included as a reference. DGPs are ranked by the relative efficiency gain of the full Verma basis $\{Y,Y^2,Y^3\}$ over the $Y$-only basis. 
}
\label{fig:frontdoor_basis_grid_mean_0_top_20_method_variances_split_x}
\end{figure}

The basis-richness results for $\psi_1(\bbP)$ and the ATE under the extended front-door model are reported in Appendix~\ref{app:sims:frontdoor} (Figures~\ref{fig:frontdoor_basis_grid_ate_top_20_method_variances_split_x} and~\ref{fig:frontdoor_basis_grid_mean_1_top_20_method_variances_split_x}; Table~\ref{tab:frontdoor_basis_grid_all_estimands}). The corresponding results under the Napkin model are reported in Appendix~\ref{app:sims:napkin} (Figures~\ref{fig:napkin_basis_grid_ate_top_20_method_variances_split_x}--\ref{fig:napkin_basis_grid_mean_0_top_20_method_variances_split_x}; Table~\ref{tab:napkin_basis_grid_all_estimands}).

To complement the simulation studies, we also evaluated the proposed estimators in \textbf{two real-data applications}, with full details reported in Appendix~\ref{app:realdata} of the supplementary materials. We revisit the Framingham Heart Study application considered by \citet{bhattacharya2022semiparametric} and the Finnish Life Course application considered by \citet{guo2025causal}, examining the additional precision that can be obtained by exploiting the corresponding Verma constraints. We apply the extended front-door estimators to the Framingham data and the Napkin-model estimators to the Finnish Life Course data. Across these applications, the Verma projection estimators generally yield shorter confidence intervals than the equally weighted indexed estimator, illustrating the potential precision gains from exploiting Verma constraints in empirical settings.

\section{Concluding Remarks}
\label{sec:conclusion}


This paper progresses toward a general semiparametric efficiency theory for statistical models defined by Verma (nested Markov) constraints. These constraints admit an equivalent representation through weighted conditional moment restrictions \citep{ai2003efficient, ai2012semiparametric, chen2026local} under the post-fixing law, which can be leveraged to characterize the tangent space and its orthocomplement. With and only with a single Verma constraint, this characterization yields a unified geometric description of efficient influence functions: it is obtained by projecting the nonparametric influence function onto the tangent space, or equivalently by subtracting its projection onto the Verma-induced orthocomplement. The orthocomplement is precisely characterized by the random variables orthogonal to the scores, when differentiating the Verma constraints just as the calculus of influence functions \citep{hines2022demystifying, kennedy2024semiparametric}.

Beyond its theoretical implications, the proposed framework provides a practical route to construct more efficient estimators. By approximating the infinite-dimensional orthocomplement with finite-dimensional basis expansions, we obtain projection estimators that are straightforward to compute, preserve local efficiency within the chosen approximation space, and converge monotonically toward more improved efficiency as the basis is enriched. Simulation studies demonstrate that these estimators can achieve substantial efficiency gains over estimators that ignore the additional information encoded by Verma constraints. An important conjecture discussed in Remark~\ref{rem:conjecture} remains to be solved, before closing the broad question of characterizing the actual tangent space of nested Markov models with multiple equality constraints. In specific examples, our current strategy is to exhibit concrete parametric submodels. A possible direction is to adopt the strategy of \citet{guo2025causal}, in which they employed the implicit function theorem to prove the existence of parametric submodels, rather than explicitly constructing one. Results in a similar spirit can be found in \citet{evans2016graphs}, but they only consider vertices with finite state spaces. This more abstract approach can be more useful when multiple constraints exist. We leave this problem to future work.



\section*{Acknowledgments}

This work was supported in part by the National Science Foundation under Grant No.~2502299. R.N. and L.L. also thank the Isaac Newton Institute for Mathematical Sciences, Cambridge, for support and hospitality during the programme \emph{Causal inference: From theory to practice and back again}, which facilitated closer collaboration and substantial progress on this work. 
The authors used large language models solely to assist with computational verification, code review, and language editing. All conceptual and theoretical contributions are the authors' own.


\nocite{kannel1968framingham}

\bibliographystyle{apalike}   

\bibliography{Master}


\pagebreak
\appendix
\phantomsection
\setcounter{page}{1}

\setcounter{section}{0}
\renewcommand{\thesection}{S\arabic{section}}
\renewcommand{\thesubsection}{S\arabic{section}.\arabic{subsection}}
\renewcommand{\thesubsubsection}{S\arabic{section}.\arabic{subsection}.\arabic{subsubsection}}

\setcounter{figure}{0}
\renewcommand{\thefigure}{S\arabic{figure}}

\setcounter{table}{0}
\renewcommand{\thetable}{S\arabic{table}}

\setcounter{equation}{0}
\renewcommand{\theequation}{S\arabic{equation}}
\renewcommand{\theHequation}{S\arabic{equation}}

\begin{center}
\noindent {\bf \LARGE Supplementary Materials for ``\mytitle{}''}
\end{center}

\vspace{0.25cm}
\startcontents[supp]
{\small 
\renewcommand{\baselinestretch}{0.8}\selectfont
\printcontents[supp]{}{1}{}
}

\onehalfspacing

\clearpage
\section{Overview of Nested Markov Models} 
\label{app:NMM}

\subsection*{CADMGs, kernels, and fixing operations}

The nested Markov factorization of $\p(V)$ associated with an ADMG $\calG(V)$ is defined using conditional ADMGs (CADMGs) derived from $\calG(V)$ and kernel objects derived from $p (V)$ via the \textit{fixing} operation \citep{richardson2023nested}. We define the relevant concepts below.

\emph{Conditional ADMG (CADMG).} A CADMG $\calG(V;W)$ is an ADMG whose vertices can be partitioned into random variables $V$ and fixed/intervened variables $W.$ Only outgoing directed edges may be adjacent to variables in $W$. For any random variable $V_i \in V$ in a CADMG $\calG(V;W)$, the usual definitions of genealogical relations, such as parents and descendants, extend naturally by allowing for the inclusion of fixed variables into these sets. However, bidirected connected components a.k.a districts of a CADMG $\calG(V;W)$ are only defined for elements of $V$.

\emph{Kernel.} A kernel  $q_{V \mid W} (V \mid W)$ is a mapping from values in $W$ to normalized densities over $V$. Given a kernel $q_{V \mid W} (V \mid W)$ and a subset $X \subseteq V$,  conditioning and marginalization are defined in the usual way as $q_{V \mid W} (X \mid W) \equiv \sum_{V \setminus X} q_{V \mid W} (V \mid W)$ and $q_{V \mid W}(V \setminus X \mid X, W) \equiv q_{V \mid W} (V \mid W) / q_{V \mid W} (X \mid W)$. 

\emph{Fixing operation for a single variable.} Fixability of a single variable and the graphical and probabilistic operations of fixing it are defined as follows: 
\begin{itemize}
	\setlength{\itemsep}{0.25cm} 
	
	\item \emph{Fixability.}  A vertex $V_i \in V$ is said to be \emph{fixable} in $\calG(V;W)$ if  $V_i \rightarrow \ldots \rightarrow V_j$ and $V_i \leftrightarrow \ldots \leftrightarrow V_j$ do not both exist in $\calG$, for any $V_j \in V \setminus V_i$, or $\dis_{\calG} (V_i) \cap \de_{\calG} (V_i) = \emptyset$, where $\dis_{\calG} (\cdot)$ and $\de_{\calG} (\cdot)$ denote the district and descendant with respect to $\calG$. 
	
	\item \emph{Graphical operation of fixing.} 
	The graphical operation of fixing $V_i,$ denoted by $\phi_{V_i}(\calG),$ yields a new CADMG $\calG(V \setminus V_i; W \cup V_i)$ where all edges with arrowheads into $V_i$ are removed, and $V_i$ is fixed to a particular value $v_i.$  All other edges in $\calG$ are kept the same. 
	
	\item \emph{Probabilistic operation of fixing.} 
	Given a kernel $q_{V \mid W}(V\mid W)$, the associated CADMG $\calG(V;W),$ and $V_i \in V$, the corresponding probabilistic operation of fixing $V_i$, denoted by $\phi_{V_i}(q_{V \mid W};\calG),$ yields a new kernel defined as:
	\begin{align}
		\phi_{V_i}(q_{V \mid W}; \calG) \equiv q_{V\setminus V_i}(V \setminus V_i \mid W \cup V_i) \equiv \frac{q_{V \mid W}(V \mid W)}{q_{V \mid W}(V_i \mid \mb_\calG(V_i), W)},
		\label{eq:ordinary_fixing} 
	\end{align}
	where $\mb_\calG(V_i)$ denotes the Markov blanket of $V_i$, which consists of all vertices in the district of $V_i$ and the parents of the district of $V_i$ in $\calG(V, W)$, excluding $V_i$ itself.
\end{itemize}  

\emph{Fixing operation for a set of variables.} The above definitions can be extended to a set of vertices. 
\begin{itemize}
	\setlength{\itemsep}{0.25cm} 
	
	\item \emph{Group fixability.} A set $S \subseteq V$ is said to be fixable if there exists an ordering $(S_1, \dots, S_p)$ such that $S_1$ is fixable in $\calG,$ $S_2$ is fixable in $\phi_{S_1}(\calG),$ and so on. Such an ordering is said to form a valid fixing sequence for $S$ and we denote it by $\sigma_S$.  
	\item \emph{Graphical operation of group fixing.} It is known that applying any two valid fixing sequences on $S$ yield the same CADMG, so we denote this by $\phi_S(\calG(V;W)).$  
	\item \emph{Probabilistic operation of group fixing.} $\phi_{\sigma_{ S}}(q_{V \mid W}; \calG)$ is defined via the usual function composition to yield operators that fix all elements in ${S}$ in the order given by $\sigma_{S}$.
\end{itemize}

\emph{Intrinsic set.} A set $D$ is said to be \emph{intrinsic} in $\calG(V)$ if $V\setminus D$ is fixable in $\calG$ and $\phi_{V\setminus D}(\calG)$ contains a single district.

Finally, a distribution $p (V)$ is said to satisfy the nested Markov factorization wrt an ADMG $\calG(V)$ if there exists a set of kernels $q_{D|V\setminus D}(D \mid \pa_\calG(D))$, one for every $D$ intrinsic in $\calG(V)$, such that for every fixable set $S$ and every valid fixing sequence $\sigma_S,$
\begin{align}
	\phi_{\sigma_S}(p (V);\calG) = \prod_{D \in {\cal D}(\phi_S(\calG))} q_{D|V\setminus D}(D \mid \pa_{\calG}(D)), 
	\label{eq:nested_Markov_model}
\end{align}%
where ${\cal D}(\phi_S(\calG))$ denotes the set of all districts in the CADMG $\phi_S(\calG).$ From a causal perspective, $q_{D \mid V \setminus D} \{D \mid \pa_{\calG} (D)\}$ is identified from $p (V)$.

\subsection*{Nested Markov factorization}

The nested Markov factorization asserts that every kernel that can be derived via a valid sequence of fixing satisfies the district factorization with respect to the CADMG obtained by this sequence, and each of the kernels appearing in the factorization corresponds to intrinsic sets. 

Let ${\cal D}(\calG(V; W))$ denote the set of all bidirected connected components of random variables, commonly referred to as \emph{districts}, in the CADMG $\calG(V; W).$ The nested Markov factorization states that the observed distribution $p (V)$ satisfies the following \emph{district factorization} w.r.t. to the ADMG $\calG(V)$: 
\begin{align}
	p (V) = \prod_{D \in {\cal D(\calG)}} q_{D|V\setminus D}(D\mid \pa_\calG(D)), 
	\label{eq:fact_dist}
\end{align}%
where each kernel appearing in this factorization corresponds to intrinsic sets in $\calG(V).$  

The nested factorization further asserts that any kernel $q_{V\setminus S \mid S}(V \setminus S \mid S)$ identified from  $p (V)$ satisfies the district factorization w.r.t. to the corresponding CADMG $\calG(V\setminus S; S),$ where again each kernel in the factorization corresponds to intrinsic sets \citep{richardson2023nested}.  

\begin{example}
As an example, consider the ADMG in Fig.~\ref{fig:ex_nested_markov}(a). The kernel $q_{A, Y| Z, M}(A, Y \mid Z, M)$ is identified as $p (Z, A, M, Y) /\{p (Z)\times p(M|A, Z)\} \equiv p (A|Z) \times p (Y|Z, A, M)$. The set $\{A, Y\}$ also forms a bidirected connected component in the corresponding CADMG shown in Fig.~\ref{fig:ex_nested_markov}(c); thus, it is intrinsic. The list of all Markov kernels corresponding to intrinsic sets in Fig.~\ref{fig:ex_nested_markov}(a) is: 
\begin{equation}\label{eq:intrinsic_kernels} 
\begin{aligned}
	&q_{Z|AMY}(Z) \equiv p (Z),  \\
	&q_{A|ZMY}(A \mid Z) \equiv p (A \mid Z), \\
	&q_{M|ZAY}(M \mid A, Z) \equiv p (M \mid Z, A),  \\
	&q_{AY|ZM}(A,Y \mid Z, M) \equiv p (A \mid Z) \times p (Y \mid Z, A, M),  \\
	&q_{Y|ZAM}(Y \mid M)  \equiv \sum_A p (A \mid Z)\times p (Y \mid Z, A, M). 
\end{aligned}%
\end{equation}
\eqref{eq:fact_dist} implies: $p (V) = q_{Z|AMY}(Z) \times q_{M|ZAY}(M|A,Z) \times q_{AY|ZM}(A, Y | Z, M).$ 
\end{example}

\begin{figure}[t] 
\begin{center}
\scalebox{0.6}{
\begin{tikzpicture}[>=stealth, node distance = 1.5cm]
    \tikzstyle{format} = [thick, circle, minimum size = 1.0mm, inner sep = 2pt]
    \tikzstyle{square} = [draw, thick, minimum size = 4.5mm, inner sep = 2pt]

    \begin{scope}[xshift = 0cm, yshift = 0cm]
	\path[->, very thick]
	     
        node[shape = ellipse, draw = black] (z) {$Z$}
        node[right of = z, shape = ellipse, draw = black] (a) {$A$}
        node[right of = a, shape = ellipse, draw = black] (m) {$M$}
        node[right of = m, shape = ellipse, draw = black] (y) {$Y$}
		 
        (z) edge[blue] (a)
        (a) edge[blue] (m)
        (m) edge[blue] (y)
        (z) edge[blue, bend right] (m)
        (a) edge[red, <->, bend left] (y)

        node[below right of = a, yshift = -0.25cm] () {(a)}
        ;
    \end{scope}
    \begin{scope}[xshift = 7.cm, yshift = 0cm]
	\path[->, very thick]
		
	    node[shape = ellipse, draw = black] (z) {$Z$}
        node[right of = z, shape = ellipse, draw = black] (a) {$A$}
        node[right of = a, shape = rectangle, draw = black] (m) {$M$}
        node[right of = m, shape = ellipse, draw = black] (y) {$Y$}
		 
        (z) edge[blue] (a)
        (m) edge[blue] (y)
        (a) edge[red, <->, bend left] (y)

        node[below right of = a, yshift = -0.25cm] () {(b)}
        ;
    \end{scope}

    \begin{scope}[xshift = 14cm, yshift = 0cm]
	\path[->, very thick]
		
	    node[shape = rectangle, draw = black] (z) {$Z$}
        node[right of = z, shape = ellipse, draw = black] (a) {$A$}
        node[right of = a, shape = rectangle, draw = black] (m) {$M$}
        node[right of = m, shape = ellipse, draw = black] (y) {$Y$}
		 
        (z) edge[blue] (a)
        (m) edge[blue] (y)
        (a) edge[red, <->, bend left] (y)

        node[below right of = a, yshift = -0.25cm] () {(c)}
        ;
    \end{scope}

    \begin{scope}[xshift = 21.cm, yshift = 0cm]
	\path[->, very thick]
		
	    node[shape = ellipse, draw = black] (z) {$Z$}
        node[right of = z, shape = rectangle, draw = black] (a) {$A$}
        node[right of = a, shape = rectangle, draw = black] (m) {$M$}
        node[right of = m, shape = ellipse, draw = black] (y) {$Y$}
		 
        (m) edge[blue] (y)

        node[below right of = a, yshift = -0.25cm] () {(d)}
        ;
    \end{scope}
\end{tikzpicture}}
\caption{(a) An ADMG with a Verma constraint; (b) Fixing $M$; (c) Fixing $M, Z$; (d) Fixing $M, A$.}
\label{fig:ex_nested_markov}
\end{center}
\end{figure}

\clearpage
\section{Proofs for the Main Results}
\label{app:proofs_main}

\subsection{Proof of Proposition~\ref{prop:weighted_moment_representation}}
\label{app:prop:weighted_moment_representation}

\begin{proof}
Under the post-fixing law $\bbP_{\phi_R}$, the constraint $X \indep Y \mid Z$ implies
\begin{align*}
\bbE_{\phi_R}
\!\left[
f(X,Z)\mid Y,Z
\right]
=
\bbE_{\phi_R}
\!\left[
f(X,Z)\mid Z
\right].
\end{align*}
Hence,
\begin{align*}
&
\bbE_{\phi_R}
\!\left[
\left\{
f(X,Z)-\bbE_{\phi_R}[f(X,Z)\mid Z]
\right\}
g(Y,Z)
\right]
\\
&\hspace{1.5cm}=
\bbE_{\phi_R}
\!\left[
\bbE_{\phi_R}
\!\left[
f(X,Z)-\bbE_{\phi_R}[f(X,Z)\mid Z]
\,\middle|\,
Y,Z
\right]
g(Y,Z)
\right]
\\
&\hspace{1.5cm}=0.
\end{align*}
This proves \eqref{eq:general_equality_constraint}. The equivalent representation in \eqref{eq:general_equality_constraint2} follows by the same argument with $f(X,Z)$ and $g(Y,Z)$ interchanged, or by an application of iterated expectation.

It remains to verify the weighted representation. By the probabilistic fixing operation, the post-fixing kernel is proportional to
\begin{align*}
\frac{p(O)}{p(R\mid \mb_{\calG}(R))}.
\end{align*}
After adjoining the known reference density or mass function $\tilde p(R)$ for the fixed
variable, the induced post-fixing probability law is absolutely continuous with respect
to $\bbP$, with density proportional to
\begin{align*}
\omega_R(\M) = \frac{\tilde p(R)}{p(R\mid\mb_{\calG}(R))}, \quad \text{with} \quad \M = (R, \mb_{\calG}(R)).
\end{align*}
Normalizing yields: 
\begin{align*}
\frac{\diff \bbP_{\phi_R}}{\diff \bbP} (O) = \frac{\omega_R(\M)}{\bbE[\omega_R(\M)]}.
\end{align*}
The corresponding conditional expectation satisfies 
\begin{align*}
\bbE_{\phi_R}[A]
= 
\frac{
\bbE[\omega_R(\M) \, A]
}{
\bbE[\omega_R(\M)]
}
\quad 
\text{and} 
\quad 
\bbE_{\phi_R}[A\mid B]
=
\frac{
\bbE[\omega_R(\M) \, A\mid B]
}{
\bbE[\omega_R(\M) \mid B]
}.
\end{align*}
Substituting these identities into \eqref{eq:general_equality_constraint} or \eqref{eq:general_equality_constraint2}  yields the corresponding weighted conditional moment restriction under the observed law $\bbP$.
\end{proof}

\subsection{Proof of Proposition~\ref{prop:pathwise_score_restrictions}}
\label{app:prop:pathwise_score_restrictions}

\begin{proof}
Since $\{\bbP_t: t \in (-\eps, \eps)\} \subseteq \calP_{\calG}$, Proposition~\ref{prop:weighted_moment_representation} implies that, for every $(f, g)$,
\begin{align*}
\bbE_{\phi_R(p_t)}
\left[
\left\{
f(X,Z)
-
\bbE_{\phi_R(p_t)}[f(X,Z)\mid Z]
\right\}
g(Y,Z)
\right]
=0,
\end{align*}
for all $t$ in a neighborhood of zero. Differentiating with respect to $t$ at $t=0$ gives \eqref{eq:pathwise_restriction_postfixing}.

Using the weighted representation in Proposition~\ref{prop:weighted_moment_representation},
\begin{align*}
&\bbE_{\phi_R(p_t)}
\left[
\left\{
f(X,Z)
-
\bbE_{\phi_R(p_t)}[f(X,Z)\mid Z]
\right\}
g(Y,Z)
\right] \\
&\qquad = \frac{1}{\bbE_{t}[\omega_{R,t}(\M)]} \ 
\bbE_{t}
\left[
\omega_{R,t}(\M)
\left\{
f(X,Z)
-
\frac{
\bbE_{t}[\omega_{R,t}(\M)f(X,Z)\mid Z]
}{
\bbE_{t}[\omega_{R,t}(\M)\mid Z]
}
\right\}
g(Y,Z)
\right] \\
&\qquad = \frac{1}{\bbE_{t}[\omega_{R,t}(\M)]} \ 
\bbE_{t}[\psi_{f,g,t}(O)].
\end{align*}
Since $\bbE_t[\omega_{R,t}] =1$, the normalizing constant is identically one, and therefore
\begin{align*}
\left.
\frac{\diff}{\diff t}
\bbE_t[\psi_{f,g,t}(O)]
\right|_{t=0}
=0,
\end{align*}
which proves \eqref{eq:pathwise_restriction_observed}. 

Finally, rewrite the pathwise derivative using the score. Since the submodel is regular, differentiation under the integral gives
\begin{align*}
\left.
\frac{\diff}{\diff t}
\bbE_{t}[\psi_{f,g,t}(O)]
\right|_{t=0}
&=
\bbE[\psi_{f,g,0}(O)s(O)]
+
\bbE
\left[
\left.
\frac{\diff}{\diff t}\psi_{f,g,t}(O)
\right|_{t=0}
\right].
\end{align*}
The second term captures the pathwise derivatives of the weight $\omega_{R,t}$ and of the conditional centering operator. Collecting these terms yields a mean-zero function $\chi_{f,g}\in L_0^2(\bbP)$ satisfying
\begin{align*}
\left.
\frac{\diff}{\diff t}
\bbE_{t}[\psi_{f,g,t}(O)]
\right|_{t=0}
=
\bbE[\chi_{f,g}(O)s(O)].
\end{align*}
Since the left-hand side vanishes for every regular parametric submodel in $\calP_{\calG}$,
\begin{align*}
\bbE[\chi_{f,g}(O)s(O)] = 0
\end{align*}
for every $s\in\Lambda_{\bbP}$, implying $\chi_{f,g}\in\Lambda_{\bbP}^{\perp}$.
\end{proof}

\subsection{Proof of Theorem~\ref{thm:orthocomplement_tangent_space}}
\label{app:thm:orthocomplement_tangent_space}

\begin{proof}
By Proposition~\ref{prop:weighted_moment_representation}, the constraint $X\indep Y\mid Z\ [\phi_R(p)]$ is equivalent to the collection of moment restrictions, with $\M$ denoting $(R, \mb_\calG(R))$, 
\begin{align}
\label{eq:app_weighted_constraint}
\bbE [\omega_R (\M) f (X, Z) \{g (Y, Z) - \bbE_{\phi_R} [g(Y, Z) \mid Z]\}] = 0,
\end{align}
for all square-integrable $f$ and $g$, where 
\begin{align}
\omega_R(\M)
=
\frac{\tilde p(R)}{p(R\mid \mb_{\calG}(R))}, 
\quad \text{and} \quad 
\bbE_{\phi_R}[g(Y,Z)\mid Z]
=
\frac{\bbE[\omega_R(\M)g(Y,Z)\mid Z]}
{\bbE[\omega_R(\M)\mid Z]}.
\label{eq:app_innerE_g}
\end{align}

Let $\{\bbP_t: t \in (-\eps, \eps)\} \subseteq \calP_{\calG}$ be a regular parametric submodel through $\bbP$ with score $s (O)$. In the proof, write $\diff / \diff t$ for $\left. \diff / \diff t \right|_{t = 0}$, and define
\begin{align*}
\omega_{R,t}(\M)
=
\frac{\tilde p(R)}{p_t(R\mid \mb_{\calG}(R))}.
\end{align*}
We first differentiate the conditional centering operator in
\eqref{eq:app_innerE_g}. By the quotient rule,
\begin{align*}
\frac{\diff}{\diff t}
\frac{\bbE_{t}[\omega_{R,t}(\M)g(Y,Z)\mid Z]}
{\bbE_{t}[\omega_{R,t}(\M)\mid Z]}
&=
\frac{
\bbE'_{\bbP_t}[\omega_{R,t}(\M)g(Y,Z)\mid Z]
}{
\bbE[\omega_R(\M)\mid Z]
} 
-
\bbE_{\phi_R}[g(Y,Z)\mid Z]
\frac{
\bbE'_{\bbP_t}[\omega_{R,t}(\M)\mid Z]
}{
\bbE[\omega_R(\M)\mid Z]
},
\end{align*}
where
\begin{align*}
\bbE'_{\bbP_t}[\omega_{R,t}(\M)g(Y,Z)\mid Z]
&=
\bbE
\left[
\omega_R(\M)g(Y,Z)
\{s(O)-s(Z)-s(R\mid \mb_{\calG}(R))\}
\mid Z
\right],
\\
\bbE'_{\bbP_t}[\omega_{R,t}(\M)\mid Z]
&=
\bbE
\left[
\omega_R(\M)
\{s(O)-s(Z)-s(R\mid \mb_{\calG}(R))\}
\mid Z
\right].
\end{align*}

Therefore,
\begin{align}
&\frac{\diff}{\diff t}
\frac{\bbE_{t}[\omega_{R,t}(\M)g(Y,Z)\mid Z]}
{\bbE_{t}[\omega_{R,t}(\M)\mid Z]}
\notag \\
&\quad =
\frac{
\bbE
\left[
\omega_R(\M)
\{g(Y,Z)-\bbE_{\phi_R}[g(Y,Z)\mid Z]\}
\{s(O)-s(Z)-s(R\mid \mb_{\calG}(R))\}
\mid Z
\right]
}{
\bbE[\omega_R(\M)\mid Z]
}
\notag \\
&\quad =
\frac{
\bbE
\left[
\omega_R(\M)
\{g(Y,Z)-\bbE_{\phi_R}[g(Y,Z)\mid Z]\}
\{s(O)-s(R\mid \mb_{\calG}(R))\}
\mid Z
\right]
}{
\bbE[\omega_R(\M)\mid Z]
}.
\label{eq:app_innerE_diff}
\end{align}
The final equality follows because
\begin{align*}
&\bbE
\left[
\omega_R(\M)
\{g(Y,Z)-\bbE_{\phi_R}[g(Y,Z)\mid Z]\}
s(Z)
\mid Z
\right]
\\
&\quad =
s(Z)
\left\{
\bbE[\omega_R(\M)g(Y,Z)\mid Z]
-
\bbE_{\phi_R}[g(Y,Z)\mid Z]
\bbE[\omega_R(\M)\mid Z]
\right\}
\\
&\quad =
0.
\end{align*}

We now differentiate the weighted moment restriction
\eqref{eq:app_weighted_constraint}. Since the restriction holds along the
submodel, its derivative is zero:
\begin{align}
0
&=
\frac{\diff}{\diff t}
\bbE_{t}
\left[
\omega_{R,t}(\M)
\left\{
g(Y,Z)
-
\bbE_{\phi_R(p_t)}[g(Y,Z)\mid Z]
\right\}
f(X,Z)
\right]
\notag \\
&=
\frac{\diff}{\diff t}
\bbE_{t}
\left[
\omega_{R,t}(\M)g(Y,Z)f(X,Z)
\right]
-
\frac{\diff}{\diff t}
\bbE_{t}
\left[
\omega_{R,t}(\M)f(X,Z)
\bbE_{\phi_R(p_t)}[g(Y,Z)\mid Z]
\right].
\label{eq:app_moment_diff}
\end{align}

For the first term in \eqref{eq:app_moment_diff},
\begin{align}
&\frac{\diff}{\diff t}
\bbE_{t}
\left[
\omega_{R,t}(\M)g(Y,Z)f(X,Z)
\right]
\notag \\
&\quad =
\bbE
\left[
\omega_R(\M)g(Y,Z)f(X,Z)s(O)
\right]
-
\bbE
\left[
\omega_R(\M)g(Y,Z)f(X,Z)s(R\mid \mb_{\calG}(R))
\right]
\notag \\
&\quad =
\bbE
\left[
\omega_R(\M)g(Y,Z)f(X,Z)
\{s(O)-s(R\mid \mb_{\calG}(R))\}
\right].
\label{eq:app_first_term}
\end{align}

For the second term in \eqref{eq:app_moment_diff}, using
\eqref{eq:app_innerE_diff},
\begin{align}
&\frac{\diff}{\diff t}
\bbE_{t}
\left[
\omega_{R,t}(\M)f(X,Z)
\bbE_{\phi_R(p_t)}[g(Y,Z)\mid Z]
\right]
\notag \\
&\quad =
\bbE
\left[
\omega_R(\M)f(X,Z)
\bbE_{\phi_R}[g(Y,Z)\mid Z]
\{s(O)-s(R\mid \mb_{\calG}(R))\}
\right]
\notag \\
&\qquad +
\bbE
\left[
\omega_R(\M)f(X,Z)
\frac{
\bbE
\left[
\omega_R(\M)
\{g(Y,Z)-\bbE_{\phi_R}[g(Y,Z)\mid Z]\}
\{s(O)-s(R\mid \mb_{\calG}(R))\}
\mid Z
\right]
}{
\bbE[\omega_R(\M)\mid Z]
}
\right]
\notag \\
&\quad =
\bbE
\left[
\omega_R(\M)f(X,Z)
\bbE_{\phi_R}[g(Y,Z)\mid Z]
\{s(O)-s(R\mid \mb_{\calG}(R))\}
\right]
\notag \\
&\qquad +
\bbE
\left[
\bbE_{\phi_R}[f(X,Z)\mid Z]
\omega_R(\M)
\{g(Y,Z)-\bbE_{\phi_R}[g(Y,Z)\mid Z]\}
\{s(O)-s(R\mid \mb_{\calG}(R))\}
\right]
\notag \\
&\quad =
\bbE
\Big[
\omega_R(\M)
\big\{
g(Y,Z)\bbE_{\phi_R}[f(X,Z)\mid Z]
+
f(X,Z)\bbE_{\phi_R}[g(Y,Z)\mid Z]
\notag \\
&\hspace{4.7cm}
-
\bbE_{\phi_R}[f(X,Z)\mid Z]
\bbE_{\phi_R}[g(Y,Z)\mid Z]
\big\}
\{s(O)-s(R\mid \mb_{\calG}(R))\}
\Big].
\label{eq:app_second_term}
\end{align}

Combining \eqref{eq:app_first_term} and \eqref{eq:app_second_term} gives
\begin{align*}
0
&=
\bbE
\Big[
\omega_R(\M)
\big\{
f(X,Z)-\bbE_{\phi_R}[f(X,Z)\mid Z]
\big\}
\big\{
g(Y,Z)-\bbE_{\phi_R}[g(Y,Z)\mid Z]
\big\}
\,
\big\{s(O)-s(R\mid \mb_{\calG}(R))\big\}
\Big].
\end{align*}
Define
\begin{align*}
\widetilde\chi_{f,g}(O)
&=
\omega_R(\M)
\big\{
f(X,Z)-\bbE_{\phi_R}[f(X,Z)\mid Z]
\big\}
\big\{
g(Y,Z)-\bbE_{\phi_R}[g(Y,Z)\mid Z]
\big\}.
\end{align*}
Equivalently, by Proposition~\ref{prop:weighted_moment_representation},
\begin{align*}
\widetilde\chi_{f,g}(O)
&=
\omega_R(\M)
\bigg\{
f(X,Z)
-
\frac{
\bbE[\omega_R(\M)f(X,Z)\mid Z]
}{
\bbE[\omega_R(\M)\mid Z]
}
\bigg\}
\
\bigg\{
g(Y,Z)
-
\frac{
\bbE[\omega_R(\M)g(Y,Z)\mid Z]
}{
\bbE[\omega_R(\M)\mid Z]
}
\bigg\}.
\end{align*}
The preceding calculation shows that, for every score $s\in\Lambda_{\bbP}$,
\begin{align}
\bbE
\left[
\widetilde\chi_{f,g}(O)
\{s(O)-s(R\mid \mb_{\calG}(R))\}
\right]
=
0.
\label{eq:app_raw_orthogonality}
\end{align}

It remains to express $s(R\mid \mb_{\calG}(R))$ as a conditional expectation of the full score. Since
\begin{align*}
s(R\mid \mb_{\calG}(R))
=
s(R,\mb_{\calG}(R))-s(\mb_{\calG}(R)),
\end{align*}
and since
\begin{align*}
s(R,\mb_{\calG}(R))
=
\bbE[s(O)\mid R,\mb_{\calG}(R)],
\qquad
s(\mb_{\calG}(R))
=
\bbE[s(O)\mid \mb_{\calG}(R)],
\end{align*}
we have
\begin{align*}
\bbE
\left[
\widetilde\chi_{f,g}(O)s(R\mid \mb_{\calG}(R))
\right]
&=
\bbE
\left[
\widetilde\chi_{f,g}(O)
\{
\bbE[s(O)\mid R,\mb_{\calG}(R)]
-
\bbE[s(O)\mid \mb_{\calG}(R)]
\}
\right]
\\
&=
\bbE
\left[
s(O)
\{
\bbE[\widetilde\chi_{f,g}(O)\mid R,\mb_{\calG}(R)]
-
\bbE[\widetilde\chi_{f,g}(O)\mid \mb_{\calG}(R)]
\}
\right].
\end{align*}
Substituting this identity into \eqref{eq:app_raw_orthogonality} yields
\begin{align*}
0
&=
\bbE
\left[
s(O)
\left\{
\widetilde\chi_{f,g}(O)
-
\bbE[\widetilde\chi_{f,g}(O)\mid R,\mb_{\calG}(R)]
+
\bbE[\widetilde\chi_{f,g}(O)\mid \mb_{\calG}(R)]
\right\}
\right].
\end{align*}
Thus, 
\begin{align*}
\chi_{f,g}(O)
&\coloneqq
\widetilde\chi_{f,g}(O)
-
\bbE[\widetilde\chi_{f,g}(O)\mid R,\mb_{\calG}(R)]
+
\bbE[\widetilde\chi_{f,g}(O)\mid \mb_{\calG}(R)]
\end{align*}
satisfies
\begin{align*}
\bbE[\chi_{f,g}(O)s(O)]
=
0
\end{align*}
for every $s \in \Lambda_{\bbP}$. Hence, each $\chi_{f, g}$ belongs to $\Lambda_{\bbP}^{\perp}$, and therefore
\begin{align*}
\overline{\sp} \left\{ \chi_{f, g} (O): f \in L^2 (\bbP_{X, Z}), \; g \in L^2 (\bbP_{Y, Z}) \right\} \subseteq \Lambda_{\bbP}^{\perp}.
\end{align*}

To complete the proof, we are left to show the reverse direction:
\begin{align*}
\Lambda_{\bbP}^{\perp} \subseteq \overline{\sp} \left\{ \chi_{f, g} (O): f \in L^2 (\bbP_{X, Z}), \; g \in L^2 (\bbP_{Y,Z}) \right\}.
\end{align*}
To this end, we need to explicitly exhibit a one-dimensional path $\{\bbP_{t}: t \in (-\eps, \eps), \bbP_{t = 0} = \bbP\} \subseteq \calP_{\calG}$ for some $\eps > 0$. To simplify notation, we define the following operator:
\begin{align*}
\Psi_{t} (f, g) (\cdot) := 1 + t \cdot f (\cdot) + t^{2} \cdot g (\cdot).
\end{align*}
We then consider the following path
\begin{align}
\label{single Verma path}
p_{t} (\sfM, X, Y, Z) = p (\sfM, X, Y, Z) \Psi_{t} (s, u_{t}) (\sfM, X, Y, Z),
\end{align}
where $s \in \Lambda_{\bbP}$ and recall that the conjectured orthocomplement is
\begin{align*}
\overline{\sp} \left\{ \chi_{f, g} (O): f \in L^{2} (\bbP_{X, Z}), g \in L^{2} (\bbP_{Y, Z}) \right\},
\end{align*}
and
\begin{align*}
& \chi_{f, g} (O) := \tilde{\chi}_{f, g} (O) - \bbE [\tilde{\chi}_{f, g} (O) \mid \sfM] + \bbE [\tilde{\chi}_{f, g} (O) \mid \mb (R)], \\
& \tilde{\chi}_{f, g} (O) := \frac{1}{p (R \mid \mb (R))} \left\{ f (X, Z) - \bbE_{\phi_{R}, t} [f (X, Z) \mid Z] \right\} \left\{ g (Y, Z) - \bbE_{\phi_{R}, t} [g (Y, Z) \mid Z] \right\}.
\end{align*}
Recall that
\begin{align*}
\bbE_{\phi_{R}} [f (X, Z) \mid Z] = \frac{\bbE \left[ \frac{f (X, Z)}{p (R \mid \mb (R))} \mid Z \right]}{\bbE \left[ \frac{1}{p (R \mid \mb (R))} \mid Z \right]}, \bbE_{\phi_{R}} [g (Y, Z) \mid Z] = \frac{\bbE \left[ \frac{g (Y, Z)}{p (R \mid \mb (R))} \mid Z \right]}{\bbE \left[ \frac{1}{p (R \mid \mb (R))} \mid Z \right]}.
\end{align*}

Then the path \eqref{single Verma path} is a parametric submodel if
\begin{align*}
0
&= \bbE_{\phi_{R}, t} \left[ \left\{ f (X, Z) - \bbE_{\phi_{R}, t} [f (X, Z) \mid Z] \right\}  g (Y, Z) \right]
\\
&= \bbE_{t} \left[ \frac{f (X, Z) g (Y, Z)}{p_{t} (R \mid \mb (R))} \right] - \bbE_{t} \left[ \frac{\bbE_{\phi_{R}, t} [f (X, Z) \mid Z] g (Y, Z)}{p_{t} (R \mid \mb (R))} \right].
\end{align*}
for any $f, g$.

Furthermore, we have
\begin{equation}
\label{single Verma path 1}
\begin{split}
\bbE_{t} \left[ \frac{f (X, Z) g (Y, Z)}{p_{t} (R \mid \mb (R))} \right] = \bbE \left[ \frac{f (X, Z) g (Y, Z)}{p (R \mid \mb (R))} \frac{\Psi_{t} (s, u_{t}) (\sfM, X, Y, Z) \Psi_{t} (s, u_{t}) (\mb (R))}{\Psi_{t} (s, u_{t}) (\sfM)} \right],
\end{split}
\end{equation}
and
\begin{align}
& \ \bbE_{t} \left[ \frac{\bbE_{\phi_{R}, t} [f (X, Z) \mid Z] g (Y, Z)}{p_{t} (R \mid \mb (R))} \right] \label{single Verma path 2} \\
= & \ \bbE \Big[ \frac{g (Y, Z)}{p (R \mid \mb (R))} \frac{\Psi_{t} (s, u_{t}) (\sfM, Y, Z) \Psi_{t} (s, u_{t}) (\mb (R))}{\Psi_{t} (s, u_{t}) (\sfM)} \frac{\bbE \left[ \frac{f (X, Z)}{p (R \mid \mb (R))} \frac{\Psi_{t} (s, u_{t}) (\sfM, X, Z) \Psi_{t} (s, u_{t}) (\mb (R))}{\Psi_{t} (s, u_{t}) (\sfM)} \mid Z \right]}{\bbE \left[ \frac{1}{p (R \mid \mb (R))} \frac{\Psi_{t} (s, u_{t}) (\sfM, Z) \Psi_{t} (s, u_{t}) (\mb (R))}{\Psi_{t} (s, u_{t}) (\sfM)} \mid Z \right]} \Big]. \nonumber
\end{align}
We then need to show the existence of $u_{t}$ such that $\eqref{single Verma path 1} - \eqref{single Verma path 2} = 0$ for any $f, g$. 

To simplify the above expressions, we first take $\mb (R)$ to not intersect with $\{X, Y\}$ and consider the following path
\begin{align*}
p_{t} (\sfM, X, Y, Z) = p_{t} (\sfM) p_{t} (X, Y, Z \mid \sfM),
\end{align*}
which can be achieved by specifying parametric submodels separately for $p_{t} (\sfM)$ and $p_{t} (X, Y, Z \mid \sfM)$:
\begin{align}
& \ p_{t} (\sfM) = p (\sfM) \{1 + t \cdot s (\sfM) + t^{2} \cdot h_{t} (\sfM)\}, \label{factor submodels} \\
& \ p_{t} (X, Y, Z \mid \sfM) = p (X, Y, Z \mid \sfM) \{1 + t \cdot s (X, Y, Z \mid \sfM) + t^{2} \cdot r_{t} (X, Y, Z \mid \sfM)\}, \nonumber
\end{align}
where $\bbE [h_{t} (\sfM)] = 0$ and $\bbE [r_{t} (X, Y, Z \mid \sfM) \mid \sfM] = 0$ almost surely.

Specifically, under \eqref{factor submodels}, \eqref{single Verma path 1} and \eqref{single Verma path 2}, respectively, reduce to
\begin{align*}
\eqref{single Verma path 1} & = \bbE \left[ \frac{f (X, Z) g (Y, Z)}{p (R \mid \mb (R))} \Psi_{t} (s, r_{t}) (X, Y, Z \mid \sfM) \Psi_{t} (s, h_{t}) (\mb (R)) \right], \\
& = \bbE \Big[ \frac{f (X, Z) g (Y, Z)}{p (R \mid \mb (R))} \{1 + t \cdot s (X, Y, Z \mid \sfM) + t^{2} \cdot r_{t} (X, Y, Z \mid \sfM)\} \{1 + t \cdot s (\mb (R)) + t^{2} \cdot h_{t} (\mb (R))\} \Big], \\
\eqref{single Verma path 2} & = \bbE \Big[ \frac{g (Y, Z)}{p (R \mid \mb (R))} \Psi_{t} (s, r_{t}) (Y, Z \mid \sfM) \Psi_{t} (s, h_{t}) (\mb (R)) \\
& \qquad \times \frac{\bbE \left[ \frac{f (X, Z)}{p (R \mid \mb (R)} \Psi_{t} (s, r_{t}) (X, Z \mid \sfM) \Psi_{t} (s, h_{t}) (\mb (R)) \mid Z \right]}{\bbE \left[ \frac{1}{p (R \mid \mb (R)} \Psi_{t} (s, r_{t}) (Z \mid \sfM) \Psi_{t} (s, h_{t}) (\mb (R)) \mid Z \right]} \Big] \\
& = \bbE \Big[ \frac{g (Y, Z)}{p (R \mid \mb (R))} \{1 + t \cdot s (Y, Z \mid \sfM) + t^{2} \cdot r_{t} (Y, Z \mid \sfM)\} \{1 + t \cdot s (\mb (R)) + t^{2} \cdot h_{t} (\mb (R))\} \\
& \qquad \times \frac{\bbE \left[ \frac{f (X, Z)}{p (R \mid \mb (R)} \Psi_{t} (s, r_{t}) (X, Z \mid \sfM) \Psi_{t} (s, h_{t}) (\mb (R)) \mid Z \right]}{\bbE \left[ \frac{1}{p (R \mid \mb (R)} \Psi_{t} (s, r_{t}) (Z \mid \sfM) \Psi_{t} (s, h_{t}) (\mb (R)) \mid Z \right]} \Big]
\end{align*}

We then simplify $\eqref{single Verma path 1} - \eqref{single Verma path 2} = 0$ as follows.
\begin{align*}
0 & = \bbE \Big[ \frac{g (Y, Z)}{p (R \mid \mb (R))} \{1 + t \cdot s (Y, Z \mid \sfM) + t^{2} \cdot r_{t} (Y, Z \mid \sfM)\} \{1 + t \cdot s (\mb (R)) + t^{2} \cdot h_{t} (\mb (R))\} \\
& \quad \quad \times \Big\{ f (X, Z) - \frac{\bbE \left[ \frac{f (X, Z)}{p (R \mid \mb (R)} \Psi_{t} (s, r_{t}) (X, Z \mid \sfM) \Psi_{t} (s, h_{t}) (\mb (R)) \mid Z \right]}{\bbE \left[ \frac{1}{p (R \mid \mb (R)} \Psi_{t} (s, r_{t}) (Z \mid \sfM) \Psi_{t} (s, h_{t}) (\mb (R)) \mid Z \right]} \Big\} \Big] \\
& \quad \quad + \bbE \Big[ \frac{f (X, Z) g (Y, Z)}{p (R \mid \mb (R))} \{t \cdot s (X \mid \sfM, Y, Z) + t^{2} \cdot r_{t} (X \mid \sfM, Y, Z)\} \{1 + t \cdot s (\mb (R)) + t^{2} \cdot h_{t} (\mb (R))\} \\
& \quad \quad - \frac{g (Y, Z)}{p (R \mid \mb (R))} \{1 + t \cdot s (Y, Z \mid \sfM) + t^{2} \cdot r_{t} (Y, Z \mid \sfM)\} \{1 + t \cdot s (\mb (R)) + t^{2} \cdot h_{t} (\mb (R))\} \\
& \quad \quad \times \frac{\bbE \left[ \frac{f (X, Z)}{p (R \mid \mb (R)} \Psi_{t} (s, r_{t}) (X, Z \mid \sfM) \Psi_{t} (s, h_{t}) (\mb (R)) \mid Z \right]}{\bbE \left[ \frac{1}{p (R \mid \mb (R)} \Psi_{t} (s, r_{t}) (Z \mid \sfM) \Psi_{t} (s, h_{t}) (\mb (R)) \mid Z \right]} \Big].
\end{align*}

But $\eqref{single Verma path 1} - \eqref{single Verma path 2} = 0$ is still a rational equation. Nonetheless, for $\eqref{single Verma path 1} - \eqref{single Verma path 2} = 0$ to hold, the equality needs to hold even after replacing the marginal expectations by the conditional expectations given $Z$, which yields:
\begin{align}
0 & = \bbE \Bigg[ \frac{g (Y, Z)}{p (R \mid \mb (R))} \{1 + t \cdot s (Y, Z \mid \sfM) + t^{2} \cdot r_{t} (Y, Z \mid \sfM)\} \{1 + t \cdot s (\mb (R)) + t^{2} \cdot h_{t} (\mb (R))\} \nonumber \\
& \quad \quad \times \Big\{ f (X, Z) \bbE \Big[ \frac{1}{p (R \mid \mb (R)} \{1 + t \cdot s (Z \mid \sfM) + t^{2} \cdot r_{t} (Z \mid \sfM)\} \{1 + t \cdot s (\mb (R)) + t^{2} \cdot h_{t} (\mb (R))\} \mid Z \Big] \nonumber \\
& \quad \quad \quad - \bbE \Big[ \frac{f (X, Z)}{p (R \mid \mb (R)} \{1 + t \cdot s (X, Z \mid \sfM) + t^{2} \cdot r_{t} (X, Z \mid \sfM)\} \{1 + t \cdot s (\mb (R)) + t^{2} \cdot h_{t} (\mb (R))\} \mid Z \Big] \Big\} \mid Z \Bigg] \nonumber \\
& + \bbE \Bigg[ \frac{g (Y, Z)}{p (R \mid \mb (R))} \{t \cdot s (X \mid \sfM, Y, Z) + t^{2} \cdot r_{t} (X \mid \sfM, Y, Z)\} \{1 + t \cdot s (\mb (R)) + t^{2} \cdot h_{t} (\mb (R))\} \nonumber \\
& \quad \quad \times f (X, Z) \bbE \Big[ \frac{1}{p (R \mid \mb (R)} \{1 + t \cdot s (Z \mid \sfM) + t^{2} \cdot r_{t} (Z \mid \sfM)\} \{1 + t \cdot s (\mb (R)) + t^{2} \cdot h_{t} (\mb (R))\} \mid Z \Big] \nonumber \\
& \quad \quad - \frac{g (Y, Z)}{p (R \mid \mb (R))} \{1 + t \cdot s (Y, Z \mid \sfM) + t^{2} \cdot r_{t} (Y, Z \mid \sfM)\} \{1 + t \cdot s (\mb (R)) + t^{2} \cdot h_{t} (\mb (R))\} \nonumber \\
& \quad \quad \times \bbE \Big[ \frac{f (X, Z)}{p (R \mid \mb (R)} \{1 + t \cdot s (X, Z \mid \sfM) + t^{2} \cdot r_{t} (X, Z \mid \sfM)\} \{1 + t \cdot s (\mb (R)) + t^{2} \cdot h_{t} (\mb (R))\} \mid Z \Big] \mid Z \Bigg] \nonumber \\
& =: \sum_{k = 0}^{8} \alpha_{k} t^{k}. \label{polynomial}
\end{align}
We then collect all the coefficients $\alpha_{k}$ in front of $t^{k}$ for $k = 0, 1, \cdots, 8$, considering an ansatz $h_{t} (\mb (R)) = 0$:
\begin{itemize}
\item $\alpha_{0}$:
\begin{align*}
\alpha_{0} = \bbE \Big[ \frac{f (X, Z) g (Y, Z)}{p (R \mid \mb (R))} \mid Z \Big] \bbE \Big[ \frac{1}{p (R \mid \mb (R))} \mid Z \Big] - \bbE \Big[ \frac{f (X, Z)}{p (R \mid \mb (R))} \mid Z \Big] \bbE \Big[ \frac{g (Y, Z)}{p (R \mid \mb (R))} \mid Z \Big] = 0,
\end{align*}
which is a direct consequence of the Verma constraint $X \indep Y \mid Z \quad \phi_{R} (\calG)$.

\item $\alpha_{1}$:
\begin{align*}
\alpha_{1} & = \bbE \Big[ \frac{f (X, Z) g (Y, Z)}{p (R \mid \mb (R))} \mid Z \Big] \bbE \Big[ \frac{1}{p (R \mid \mb (R))} \{ s (Z \mid \sfM) + s (\mb (R)) \} \mid Z \Big] \\ 
& \quad - \bbE \Big[ \frac{f (X, Z)}{p (R \mid \mb (R))} \mid Z \Big] \bbE \Big[ \frac{g (Y, Z)}{p (R \mid \mb (R))} \{ s (Y, Z \mid \sfM) + s (\mb (R)) \} \mid Z \Big] \\ 
& \quad +  \bbE \Big[ \frac{f (X, Z) g (Y, Z)}{p (R \mid \mb (R))} \{ s (X, Y, Z \mid \sfM) + s (\mb (R)) \} \mid Z \Big] \bbE \Big[ \frac{1}{p (R \mid \mb (R))} \mid Z \Big] \nonumber \\ & \quad - \bbE \Big[ \frac{f (X, Z)}{p (R \mid \mb (R))} \{ s (X, Z \mid \sfM) + s (\mb (R)) \} \mid Z \Big] \bbE \Big[ \frac{g (Y, Z)}{p (R \mid \mb (R))} \mid Z \Big] = 0,
\end{align*}
which is a consequence of $s (\sfM, X, Y, Z) \in \Lambda_{\bbP}$; see Lemma~\ref{lem:score} for a precise statement.

\item $\alpha_{2}$: We have applied the ansatz $h_{t} (\mb (R)) = 0$ to obtain the second equality in the formula below.
\begin{align*}
& \ \alpha_{2} \\
= & \ \bbE \Big[ \frac{f (X, Z) g (Y, Z)}{p (R \mid \mb (R))} \mid Z \Big] \bbE \Big[ \frac{1}{p (R \mid \mb (R))} \{ r_{t} (Z \mid \sfM) + h_{t} (\mb (R)) + s (Z \mid \sfM) s (\mb (R)) \} \mid Z \Big] \\ 
& - \bbE \Big[ \frac{f (X, Z)}{p (R \mid \mb (R))} \mid Z \Big] \bbE \Big[ \frac{g (Y, Z)}{p (R \mid \mb (R))} \{ r_{t} (Y, Z \mid \sfM) + h_{t} (\mb (R)) + s (Y, Z \mid \sfM) s (\mb (R)) \} \mid Z \Big] \\ 
& + \bbE \Big[ \frac{f (X, Z) g (Y, Z)}{p (R \mid \mb (R))} \{ s (X, Y, Z \mid \sfM) + s (\mb (R)) \} \mid Z \Big] \bbE \Big[ \frac{1}{p (R \mid \mb (R))} \{ s (Z \mid \sfM) + s (\mb (R)) \} \mid Z \Big] \\ 
& - \bbE \Big[ \frac{f (X, Z)}{p (R \mid \mb (R))} \{ s (X, Z \mid \sfM) + s (\mb (R)) \} \mid Z \Big] \bbE \Big[ \frac{g (Y, Z)}{p (R \mid \mb (R))} \{ s (Y, Z \mid \sfM) + s (\mb (R)) \} \mid Z \Big] \\ 
& + \bbE \Big[ \frac{f (X, Z) g (Y, Z)}{p (R \mid \mb (R))} \{ r_{t} (X, Y, Z \mid \sfM) + h_{t} (\mb (R)) + s (X, Y, Z \mid \sfM) s (\mb (R)) \} \mid Z \Big] \bbE \Big[ \frac{1}{p (R \mid \mb (R))} \mid Z \Big] \\ 
& - \bbE \Big[ \frac{f (X, Z)}{p (R \mid \mb (R))} \{ r_{t} (X, Z \mid \sfM) + h_{t} (\mb (R)) + s (X, Z \mid \sfM) s (\mb (R)) \} \mid Z \Big] \bbE \Big[ \frac{g (Y, Z)}{p (R \mid \mb (R))} \mid Z \Big] \\
= & \ \bbE \Big[ \frac{f (X, Z) g (Y, Z)}{p (R \mid \mb (R))} \mid Z \Big] \bbE \Big[ \frac{1}{p (R \mid \mb (R))} \{ r_{t} (Z \mid \sfM) + s (Z \mid \sfM) s (\mb (R)) \} \mid Z \Big] \\ 
& - \bbE \Big[ \frac{f (X, Z)}{p (R \mid \mb (R))} \mid Z \Big] \bbE \Big[ \frac{g (Y, Z)}{p (R \mid \mb (R))} \{ r_{t} (Y, Z \mid \sfM) + s (Y, Z \mid \sfM) s (\mb (R)) \} \mid Z \Big] \\
& + \bbE \Big[ \frac{f (X, Z) g (Y, Z)}{p (R \mid \mb (R))} \{ s (X, Y, Z \mid \sfM) + s (\mb (R)) \} \mid Z \Big] \bbE \Big[ \frac{1}{p (R \mid \mb (R))} \{ s (Z \mid \sfM) + s (\mb (R)) \} \mid Z \Big] \\ 
& - \bbE \Big[ \frac{f (X, Z)}{p (R \mid \mb (R))} \{ s (X, Z \mid \sfM) + s (\mb (R)) \} \mid Z \Big] \bbE \Big[ \frac{g (Y, Z)}{p (R \mid \mb (R))} \{ s (Y, Z \mid \sfM) + s (\mb (R)) \} \mid Z \Big] \\ 
& + \bbE \Big[ \frac{f (X, Z) g (Y, Z)}{p (R \mid \mb (R))} \{ r_{t} (X, Y, Z \mid \sfM) + s (X, Y, Z \mid \sfM) s (\mb (R)) \} \mid Z \Big] \bbE \Big[ \frac{1}{p (R \mid \mb (R))} \mid Z \Big] \\ 
& - \bbE \Big[ \frac{f (X, Z)}{p (R \mid \mb (R))} \{ r_{t} (X, Z \mid \sfM) + s (X, Z \mid \sfM) s (\mb (R)) \} \mid Z \Big] \bbE \Big[ \frac{g (Y, Z)}{p (R \mid \mb (R))} \mid Z \Big].
\end{align*}

\item $\alpha_{3}$: We have applied the ansatz $h_{t} (\mb (R)) = 0$ to obtain the first equality in the formula below.
\begin{align*}
\alpha_{3} & = \bbE \Big[ \frac{f (X, Z) g (Y, Z)}{p (R \mid \mb (R))} \mid Z \Big] \bbE \Big[ \frac{1}{p (R \mid \mb (R))} r_{t} (Z \mid \sfM) s (\mb (R)) \mid Z \Big] \\ 
& \quad - \bbE \Big[ \frac{f (X, Z)}{p (R \mid \mb (R))} \mid Z \Big] \bbE \Big[ \frac{g (Y, Z)}{p (R \mid \mb (R))} r_{t} (Y, Z \mid \sfM) s (\mb (R)) \mid Z \Big] \\ 
& \quad + \bbE \Big[ \frac{f (X, Z) g (Y, Z)}{p (R \mid \mb (R))} \{ s (X, Y, Z \mid \sfM) + s (\mb (R)) \} \mid Z \Big] \\
& \quad \times \bbE \Big[ \frac{1}{p (R \mid \mb (R))} \{ r_{t} (Z \mid \sfM) + s (Z \mid \sfM) s (\mb (R)) \} \mid Z \Big] \\ 
& \quad - \bbE \Big[ \frac{f (X, Z)}{p (R \mid \mb (R))} \{ s (X, Z \mid \sfM) + s (\mb (R)) \} \mid Z \Big] \\
& \quad \times \bbE \Big[ \frac{g (Y, Z)}{p (R \mid \mb (R))} \{ r_{t} (Y, Z \mid \sfM) + s (Y, Z \mid \sfM) s (\mb (R)) \} \mid Z \Big] \\ 
& \quad + \bbE \Big[ \frac{f (X, Z) g (Y, Z)}{p (R \mid \mb (R))} \{ r_{t} (X, Y, Z \mid \sfM) + s (X, Y, Z \mid \sfM) s (\mb (R)) \} \mid Z \Big] \\
& \quad \times \bbE \Big[ \frac{1}{p (R \mid \mb (R))} \{ s (Z \mid \sfM) + s (\mb (R)) \} \mid Z \Big] \\ 
& \quad - \bbE \Big[ \frac{f (X, Z)}{p (R \mid \mb (R))} \{ r_{t} (X, Z \mid \sfM) + s (X, Z \mid \sfM) s (\mb (R)) \} \mid Z \Big] \\
& \quad \times \bbE \Big[ \frac{g (Y, Z)}{p (R \mid \mb (R))} \{ s (Y, Z \mid \sfM) + s (\mb (R)) \} \mid Z \Big] \\ 
& \quad + \bbE \Big[ \frac{f (X, Z) g (Y, Z)}{p (R \mid \mb (R))} r_{t} (X, Y, Z \mid \sfM) s (\mb (R)) \mid Z \Big] \bbE \Big[ \frac{1}{p (R \mid \mb (R))} \mid Z \Big] \\ 
& \quad - \bbE \Big[ \frac{f (X, Z)}{p (R \mid \mb (R))} r_{t} (X, Z \mid \sfM) s (\mb (R)) \mid Z \Big] \bbE \Big[ \frac{g (Y, Z)}{p (R \mid \mb (R))} \mid Z \Big].
\end{align*}

\item $\alpha_{4}$: We have applied the ansatz $h_{t} (\mb (R)) = 0$ to obtain the first equality in the formula below.
\begin{align*}
\alpha_{4} & = \bbE \Big[ \frac{f (X, Z) g (Y, Z)}{p (R \mid \mb (R))} \{ s (X, Y, Z \mid \sfM) + s (\mb (R)) \} \mid Z \Big] \bbE \Big[ \frac{1}{p (R \mid \mb (R))} r_{t} (Z \mid \sfM) s (\mb (R)) \mid Z \Big] \\ 
& \quad - \bbE \Big[ \frac{f (X, Z)}{p (R \mid \mb (R))} \{ s (X, Z \mid \sfM) + s (\mb (R)) \} \mid Z \Big] \bbE \Big[ \frac{g (Y, Z)}{p (R \mid \mb (R))} r_{t} (Y, Z \mid \sfM) s (\mb (R)) \mid Z \Big] \\ 
& \quad + \bbE \Big[ \frac{f (X, Z) g (Y, Z)}{p (R \mid \mb (R))} \{ r_{t} (X, Y, Z \mid \sfM) + s (X, Y, Z \mid \sfM) s (\mb (R)) \} \mid Z \Big] \\
& \quad \times \bbE \Big[ \frac{1}{p (R \mid \mb (R))} \{ r_{t} (Z \mid \sfM) + s (Z \mid \sfM) s (\mb (R)) \} \mid Z \Big] \\ 
& \quad - \bbE \Big[ \frac{f (X, Z)}{p (R \mid \mb (R))} \{ r_{t} (X, Z \mid \sfM) + s (X, Z \mid \sfM) s (\mb (R)) \} \mid Z \Big] \\ 
& \quad \times \bbE \Big[ \frac{g (Y, Z)}{p (R \mid \mb (R))} \{ r_{t} (Y, Z \mid \sfM) + s (Y, Z \mid \sfM) s (\mb (R)) \} \mid Z \Big] \\ 
& \quad + \bbE \Big[ \frac{f (X, Z) g (Y, Z)}{p (R \mid \mb (R))} r_{t} (X, Y, Z \mid \sfM) s (\mb (R)) \mid Z \Big] \bbE \Big[ \frac{1}{p (R \mid \mb (R))} \{ s (Z \mid \sfM) + s (\mb (R)) \} \mid Z \Big] \\ 
& \quad - \bbE \Big[ \frac{f (X, Z)}{p (R \mid \mb (R))} r_{t} (X, Z \mid \sfM) s (\mb (R)) \mid Z \Big] \bbE \Big[ \frac{g (Y, Z)}{p (R \mid \mb (R))} \{ s (Y, Z \mid \sfM) + s (\mb (R)) \} \mid Z \Big].
\end{align*}

\item $\alpha_{5}$: We have applied the ansatz $h_{t} (\mb (R)) = 0$ to obtain the first equality in the formula below.
\begin{align*}
& \ \alpha_{5} \\ 
= & \ \bbE \Big[ \frac{f (X, Z) g (Y, Z)}{p (R \mid \mb (R))} \{ r_{t} (X, Y, Z \mid \sfM) + s (X, Y, Z \mid \sfM) s (\mb (R)) \} \mid Z \Big] \\
& \quad \times \bbE \Big[ \frac{1}{p (R \mid \mb (R))} r_{t} (Z \mid \sfM) s (\mb (R)) \mid Z \Big] \\ 
& - \bbE \Big[ \frac{f (X, Z)}{p (R \mid \mb (R))} \{ r_{t} (X, Z \mid \sfM) + s (X, Z \mid \sfM) s (\mb (R)) \} \mid Z \Big] \\
& \times \bbE \Big[ \frac{g (Y, Z)}{p (R \mid \mb (R))} r_{t} (Y, Z \mid \sfM) s (\mb (R)) \mid Z \Big] \\ 
& + \bbE \Big[ \frac{f (X, Z) g (Y, Z)}{p (R \mid \mb (R))} r_{t} (X, Y, Z \mid \sfM) s (\mb (R)) \mid Z \Big] \\
& \times \bbE \Big[ \frac{1}{p (R \mid \mb (R))} \{ r_{t} (Z \mid \sfM) + s (Z \mid \sfM) s (\mb (R)) \} \mid Z \Big] \\ 
& - \bbE \Big[ \frac{f (X, Z)}{p (R \mid \mb (R))} r_{t} (X, Z \mid \sfM) s (\mb (R)) \mid Z \Big] \\
& \times \bbE \Big[ \frac{g (Y, Z)}{p (R \mid \mb (R))} \{ r_{t} (Y, Z \mid \sfM) + s (Y, Z \mid \sfM) s (\mb (R)) \} \mid Z \Big].
\end{align*}

\item $\alpha_{6}$: We have applied the ansatz $h_{t} (\mb (R)) = 0$ to obtain the first equality in the formula below.
\begin{align*}
\alpha_{6} & = \bbE \Big[ \frac{f (X, Z) g (Y, Z)}{p (R \mid \mb (R))} r_{t} (X, Y, Z \mid \sfM) s (\mb (R)) \mid Z \Big] \bbE \Big[ \frac{1}{p (R \mid \mb (R))} r_{t} (Z \mid \sfM) s (\mb (R)) \mid Z \Big] 
\\ & \quad - \bbE \Big[ \frac{f (X, Z)}{p (R \mid \mb (R))} r_{t} (X, Z \mid \sfM) s (\mb (R)) \mid Z \Big] \bbE \Big[ \frac{g (Y, Z)}{p (R \mid \mb (R))} r_{t} (Y, Z \mid \sfM) s (\mb (R)) \mid Z \Big].
\end{align*}

\item $\alpha_{7}$:
\begin{align*}
\alpha_{7} & = \bbE \Big[ \frac{f (X, Z) g (Y, Z)}{p (R \mid \mb (R))} \{ s (X, Y, Z \mid \sfM) h_{t} (\mb (R)) + r_{t} (X, Y, Z \mid \sfM) s (\mb (R)) \} \mid Z \Big] \\
& \quad \times \bbE \Big[ \frac{1}{p (R \mid \mb (R))} \{ h_{t} (\mb (R)) r_{t} (Z \mid \sfM) \} \mid Z \Big] \\ 
& \quad - \bbE \Big[ \frac{f (X, Z)}{p (R \mid \mb (R))} \{ s (X, Z \mid \sfM) h_{t} (\mb (R)) + r_{t} (X, Z \mid \sfM) s (\mb (R)) \} \mid Z \Big] \\
& \quad \times \bbE \Big[ \frac{g (Y, Z)}{p (R \mid \mb (R))} \{ h_{t} (\mb (R)) r_{t} (Y, Z \mid \sfM) \} \mid Z \Big] \\ 
& \quad + \bbE \Big[ \frac{f (X, Z) g (Y, Z)}{p (R \mid \mb (R))} \{ h_{t} (\mb (R)) r_{t} (X, Y, Z \mid \sfM) \} \mid Z \Big] \\
& \quad \times \bbE \Big[ \frac{1}{p (R \mid \mb (R))} \{ s (Z \mid \sfM) h_{t} (\mb (R)) + r_{t} (Z \mid \sfM) s (\mb (R)) \} \mid Z \Big] \\ 
& \quad - \bbE \Big[ \frac{f (X, Z)}{p (R \mid \mb (R))} \{ h_{t} (\mb (R)) r_{t} (X, Z \mid \sfM) \} \mid Z \Big] \\
& \quad \times \bbE \Big[ \frac{g (Y, Z)}{p (R \mid \mb (R))} \{ s (Y, Z \mid \sfM) h_{t} (\mb (R)) + r_{t} (Y, Z \mid \sfM) s (\mb (R)) \} \mid Z \Big] = 0,
\end{align*}
by the ansatz $h_{t} (\mb (R)) = 0$.

\item $\alpha_{8}$:
\begin{align*}
\alpha_{8} & = \bbE \Big[ \frac{f (X, Z) g (Y, Z)}{p (R \mid \mb (R))} h_{t} (\mb (R)) r_{t} (X, Y, Z \mid \sfM) \mid Z \Big] \bbE \Big[ \frac{1}{p (R \mid \mb (R))} h_{t} (\mb (R)) r_{t} (Z \mid \sfM) \mid Z \Big] \\ 
& \quad - \bbE \Big[ \frac{f (X, Z)}{p (R \mid \mb (R))} h_{t} (\mb (R)) r_{t} (X, Z \mid \sfM) \mid Z \Big] \bbE \Big[ \frac{g (Y, Z)}{p (R \mid \mb (R))} h_{t} (\mb (R)) r_{t} (Y, Z \mid \sfM) \mid Z \Big] \\
& = 0,
\end{align*}
again by the ansatz $h_{t} (\mb (R)) = 0$.
\end{itemize}
Thus, we need to restrict $r_{t} (X, Y, Z \mid \sfM)$ so that
\begin{align*}
\alpha_{2} = \alpha_{3} = \alpha_{4} = \alpha_{5} = \alpha_{6} = 0.
\end{align*}
Let $\iota (\sfM, X, Y, Z) := s (X, Y, Z \mid \sfM) + s (\mb (R))$. Specifically, we consider the following ansatz:
\begin{align}
\label{ansatz 1}
r_{t} (X, Y, Z \mid \sfM) = \kappa (\mb (R), X, Y, Z) - \bbE [\kappa (\mb (R), X, Y, Z) \mid \sfM],
\end{align}
where $\kappa (\mb (R), X, Y, Z) = \beta_{0} (X, Y, Z) + \beta_{1} (X, Y, Z) s (\mb (R))$ for two functions $\beta_{0}$ and $\beta_{1}$ of $(X, Y, Z)$, satisfying:
\begin{equation}
\begin{split}
\label{extra constraint 1}
\frac{\bbE [\frac{r_{t} (X, Y, Z \mid \sfM) s (\mb (R))}{p (R \mid \mb (R))} \mid X, Y, Z]}{\bbE [\frac{1}{p (R \mid \mb (R))} \mid X, Y, Z]} = 0
\end{split}
\end{equation}
and
\begin{equation}
\label{extra constraint 2}
\begin{split}
\frac{\bbE [\frac{r_{t} (X, Y, Z \mid \sfM) + s (X, Y, Z \mid \sfM) s (\mb (R))}{p (R \mid \mb (R))} \mid X, Y, Z]}{\bbE [\frac{1}{p (R \mid \mb (R))} \mid X, Y, Z]} = \iota (X, Z) \{\iota (Y, Z) - \iota (Z)\},
\end{split}
\end{equation}
where $\iota (X, Z) := \bbE_{\phi_{R}} [\iota (\sfM, X, Y, Z) \mid X, Z]$, $\iota (Y, Z) := \bbE_{\phi_{R}} [\iota (\sfM, X, Y, Z) \mid Y, Z]$, and $\iota (Z) := \bbE_{\phi_{R}} [\iota (\sfM, X, Y, Z) \mid Z]$.

Under this particular choice of $r_{t}$, we can show the desired result. To see this, first, by statements (2) and (3) of Lemma~\ref{lem:alpha}, we have $\alpha_{6} = 0$. Next, by statements (1)--(3) of Lemma~\ref{lem:alpha}, it is easy to also see that $\alpha_{5} = 0$. By the same argument, $\alpha_{4}$ first reduces to
\begin{align*}
& \alpha_{4} = \bbE \Big[ \frac{f (X, Z) g (Y, Z)}{p (R \mid \mb (R))} r_{t} (X, Y, Z \mid \sfM) s (\mb (R)) \mid Z \Big] \bbE \Big[ \frac{1}{p (R \mid \mb (R))} \{ s (Z \mid \sfM) + s (\mb (R)) \} \mid Z \Big] \\
& \qquad - \bbE \Big[ \frac{f (X, Z)}{p (R \mid \mb (R))} \{ r_{t} (X, Z \mid \sfM) + s (X, Z \mid \sfM) s (\mb (R)) \} \mid Z \Big] \\ 
& \qquad \times \bbE \Big[ \frac{g (Y, Z)}{p (R \mid \mb (R))} \{ r_{t} (Y, Z \mid \sfM) + s (Y, Z \mid \sfM) s (\mb (R)) \} \mid Z \Big].
\end{align*}
Now, invoking \eqref{extra constraint 1} and \eqref{extra constraint 2}, we can conclude that $\alpha_{4} = 0$.

$\alpha_{3}$ can be simplified using a similar argument:
\begin{align*}
& \alpha_{3} = - \, \bbE \Big[ \frac{f (X, Z)}{p (R \mid \mb (R))} \{ s (X, Z \mid \sfM) + s (\mb (R)) \} \mid Z \Big] \\
& \qquad \times \bbE \Big[ \frac{g (Y, Z)}{p (R \mid \mb (R))} \{ r_{t} (Y, Z \mid \sfM) + s (Y, Z \mid \sfM) s (\mb (R)) \} \mid Z \Big] \\ 
& \qquad + \bbE \Big[ \frac{f (X, Z) g (Y, Z)}{p (R \mid \mb (R))} \{ r_{t} (X, Y, Z \mid \sfM) + s (X, Y, Z \mid \sfM) s (\mb (R)) \} \mid Z \Big] \\
& \qquad \times \bbE \Big[ \frac{1}{p (R \mid \mb (R))} \{ s (Z \mid \sfM) + s (\mb (R)) \} \mid Z \Big] \\ 
& \qquad - \bbE \Big[ \frac{f (X, Z)}{p (R \mid \mb (R))} \{ r_{t} (X, Z \mid \sfM) + s (X, Z \mid \sfM) s (\mb (R)) \} \mid Z \Big] \\
& \qquad \times \bbE \Big[ \frac{g (Y, Z)}{p (R \mid \mb (R))} \{ s (Y, Z \mid \sfM) + s (\mb (R)) \} \mid Z \Big] \\ 
& \qquad + \bbE \Big[ \frac{f (X, Z) g (Y, Z)}{p (R \mid \mb (R))} r_{t} (X, Y, Z \mid \sfM) s (\mb (R)) \mid Z \Big] \bbE \Big[ \frac{1}{p (R \mid \mb (R))} \mid Z \Big],
\end{align*}
which again equals $0$ by invoking \eqref{extra constraint 1} and \eqref{extra constraint 2}. Finally, as for $\alpha_{2}$, we again first use Lemma~\ref{lem:alpha} to simplify it as
\begin{align*}
& \alpha_{2} = - \bbE \Big[ \frac{f (X, Z)}{p (R \mid \mb (R))} \mid Z \Big] \bbE \Big[ \frac{g (Y, Z)}{p (R \mid \mb (R))} \{ r_{t} (Y, Z \mid \sfM) + s (Y, Z \mid \sfM) s (\mb (R)) \} \mid Z \Big] \\
& + \bbE \Big[ \frac{f (X, Z) g (Y, Z)}{p (R \mid \mb (R))} \{ s (X, Y, Z \mid \sfM) + s (\mb (R)) \} \mid Z \Big] \bbE \Big[ \frac{1}{p (R \mid \mb (R))} \{ s (Z \mid \sfM) + s (\mb (R)) \} \mid Z \Big] \\ 
& - \bbE \Big[ \frac{f (X, Z)}{p (R \mid \mb (R))} \{ s (X, Z \mid \sfM) + s (\mb (R)) \} \mid Z \Big] \bbE \Big[ \frac{g (Y, Z)}{p (R \mid \mb (R))} \{ s (Y, Z \mid \sfM) + s (\mb (R)) \} \mid Z \Big] \\ 
& + \bbE \Big[ \frac{f (X, Z) g (Y, Z)}{p (R \mid \mb (R))} \{ r_{t} (X, Y, Z \mid \sfM) + s (X, Y, Z \mid \sfM) s (\mb (R)) \} \mid Z \Big] \bbE \Big[ \frac{1}{p (R \mid \mb (R))} \mid Z \Big] \\ 
& - \bbE \Big[ \frac{f (X, Z)}{p (R \mid \mb (R))} \{ r_{t} (X, Z \mid \sfM) + s (X, Z \mid \sfM) s (\mb (R)) \} \mid Z \Big] \bbE \Big[ \frac{g (Y, Z)}{p (R \mid \mb (R))} \mid Z \Big],
\end{align*}
which can be shown to be $0$ by using \eqref{extra constraint 1} and \eqref{extra constraint 2}.

Lastly, we consider the scenario where $\mb (R)$ intersects with $\{X, Y\}$, which occurs in the example shown in Section~\ref{sec:napkin_graph}. Here, instead of decomposing the density into $p (W \mid \sfM) p (\sfM)$ with $W = \{X, Y, Z\} \setminus \sfM$ and a similar construction follows.
\end{proof}

We next state the few lemmas used in the above proof of Theorem~\ref{thm:orthocomplement_tangent_space}. The first lemma is a direct consequence of $s \in \Lambda_{\bbP}$.

\begin{lemma}
\label{lem:score}
Any valid score functions $s (\sfM, X, Y, Z)$ in $\Lambda_{\bbP}$ as defined in Theorem~\ref{thm:orthocomplement_tangent_space} must satisfy the following:
\begin{align*}
\alpha_{1} & = \bbE \Big[ \frac{f (X, Z) g (Y, Z)}{p (R \mid \mb (R))} \mid Z \Big] \bbE \Big[ \frac{1}{p (R \mid \mb (R))} \{ s (Z \mid \sfM) + s (\mb (R)) \} \mid Z \Big] \\ 
& \quad - \bbE \Big[ \frac{f (X, Z)}{p (R \mid \mb (R))} \mid Z \Big] \bbE \Big[ \frac{g (Y, Z)}{p (R \mid \mb (R))} \{ s (Y, Z \mid \sfM) + s (\mb (R)) \} \mid Z \Big] \\ 
& \quad +  \bbE \Big[ \frac{f (X, Z) g (Y, Z)}{p (R \mid \mb (R))} \{ s (X, Y, Z \mid \sfM) + s (\mb (R)) \} \mid Z \Big] \bbE \Big[ \frac{1}{p (R \mid \mb (R))} \mid Z \Big] \nonumber \\ & \quad - \bbE \Big[ \frac{f (X, Z)}{p (R \mid \mb (R))} \{ s (X, Z \mid \sfM) + s (\mb (R)) \} \mid Z \Big] \bbE \Big[ \frac{g (Y, Z)}{p (R \mid \mb (R))} \mid Z \Big] = 0,
\end{align*}
\end{lemma}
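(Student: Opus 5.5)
Lemma~\ref{lem:score} says the first-order coefficient $\alpha_1$ in the expansion \eqref{polynomial} vanishes for every score in $\Lambda_{\bbP}$. The plan is to identify $\alpha_1$ as the conditional-on-$Z$ version of the pathwise derivative already computed in the proof of Theorem~\ref{thm:orthocomplement_tangent_space}, and then invoke that orthogonality. Work under the factorization $p_t(\sfM,X,Y,Z)=p_t(\sfM)\,p_t(X,Y,Z\mid\sfM)$ with $\mb(R)$ disjoint from $\{X,Y\}$.

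Along any path in $\calP_{\calG}$, the Verma constraint says that for each fixed $Z$, and after clearing the denominator,
\begin{align*}
\bbE_t\!\left[\omega_{R,t} f g\mid Z\right]\bbE_t\!\left[\omega_{R,t}\mid Z\right]-\bbE_t\!\left[\omega_{R,t} f\mid Z\right]\bbE_t\!\left[\omega_{R,t} g\mid Z\right]=0 .
\end{align*}
This holds conditionally on $Z$, not merely in expectation: the unconditional identity holds for all $f$, hence for $f(X,Z)q(Z)$ with $q$ arbitrary.

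Next, express each conditional expectation as $\bbE$ under $\bbP$ of the integrand times the density ratio. With $p(Z)$ cancelling between the two products, the ratio of $p_t(\cdot\mid Z)\,p_t(R\mid\mb(R))^{-1}$ to its value at $t=0$ is
\begin{align*}
\Psi_t(s,r_t)(\,\cdot\mid\sfM)\,\Psi_t(s,h_t)(\mb(R)),
\end{align*}
marginalized to the relevant variables. This is exactly how \eqref{polynomial} was formed.

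Differentiate once at $t=0$. The product rule gives four terms, each carrying a factor $\{s(\,\cdot\mid\sfM)+s(\mb(R))\}$. These are precisely the four summands of $\alpha_1$, so $\alpha_1$ equals $\tfrac{\diff}{\diff t}$ of the displayed identity at $t=0$.

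It then remains to see that this derivative is zero for every $s\in\Lambda_{\bbP}$. There are two ways to argue this.

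\emph{(a) Along paths.} When $s$ is the score of a path in $\calP_{\calG}$, the identity holds for all $t$, so its derivative is zero. The statement extends to the closed linear span $\Lambda_{\bbP}$ because $\alpha_1$ is linear and $L^2$-continuous in $s$; continuity uses the positivity and integrability assumptions of Remark~\ref{rem:reference_density}.

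\emph{(b) Via the orthocomplement.} The derivative can be rewritten, as in \eqref{eq:app_innerE_diff}--\eqref{eq:app_raw_orthogonality}, as
\begin{align*}
\bbE[\omega_R(\M)\mid Z]\;\bbE\!\left[\widetilde\chi_{f,g}(O)\{s(O)-s(R\mid\mb(R))\}\mid Z\right].
\end{align*}
Replacing $f$ by $fq(Z)$ shows that this conditional version vanishes whenever $s\perp\chi_{fq,g}$ for all $q$. That orthogonality is the first inclusion of Theorem~\ref{thm:orthocomplement_tangent_space}, which did not use the lemma.

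The main obstacle is the bookkeeping in this identification. Scores of the form $s(\cdot\mid\sfM)$ marginalized to $(X,Z)$, $(Y,Z)$ and $Z$ must be matched with $s(O)-s(R\mid\mb(R))$ after conditioning. Concretely, one needs $s(Z\mid\sfM)+s(\mb(R))$ and the corresponding full-score expression to agree after the $s(Z)$ terms cancel, which uses the centering argument displayed after \eqref{eq:app_innerE_diff}. I would verify this term by term.
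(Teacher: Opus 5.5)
Your proposal is correct. Route (b) is essentially the paper's own proof. The paper writes the conditional-on-$Z$ orthogonality $\bbE\big[\iota\,\widetilde\chi_{f,g}(O)\,\bbE[\omega_R\mid Z]\mid Z\big]=0$, where $\iota=s(X,Y,Z\mid\sfM)+s(\mb(R))=s(O)-s(R\mid\mb(R))$. It then expands the product and invokes $\alpha_0=0$ to replace $\bbE[\omega_R f\mid Z]\,\bbE[\omega_R g\mid Z]\,\bbE[\omega_R\iota\mid Z]/\bbE[\omega_R\mid Z]$ by $\bbE[\omega_R fg\mid Z]\,\bbE[\omega_R\iota\mid Z]$.

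Your rewriting of the derivative in (b) needs this same use of $\alpha_0=0$, so state it explicitly. The paper justifies the conditioning on $Z$ only by ``the definition of $\Lambda_{\bbP}$''. Your device $f\mapsto fq(Z)$, which uses $\widetilde\chi_{fq,g}=q(Z)\,\widetilde\chi_{f,g}$, supplies that missing step. Your identification of $\alpha_1$ with the first $t$-derivative of the cross-multiplied conditional identity also makes explicit what the paper leaves implicit when forming \eqref{polynomial}.

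Route (a) is a genuinely different, more direct argument, and it proves the lemma as literally stated. However, it only covers $s$ already known to lie in the true tangent space. The lemma is actually used in the completeness half of Theorem~\ref{thm:orthocomplement_tangent_space}. There one starts from an $s$ that is merely orthogonal to every $\chi_{f,g}$ and must construct a path in $\calP_{\calG}$ with score $s$. In that setting route (a) would be circular. Route (b) applies verbatim, because its only hypothesis is $s\perp\chi_{fq,g}$ for all $q$. So (b) is the argument to keep.
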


\begin{proof}
By the definition of $\Lambda_{\bbP}$, the following must hold after scaling by $\bbE [\frac{1}{p (R \mid \mb (R))} \mid Z]$:
\begin{align*}
0 & = \bbE \Big[ \iota (\sfM, X, Y, Z) \Big\{ \frac{f (X, Z) g (Y, Z)}{p (R \mid \mb (R))} \bbE [\frac{1}{p (R \mid \mb (R))} \mid Z] - \bbE [\frac{f (X, Z)}{p (R \mid \mb (R))} \mid Z] \frac{g (Y, Z)}{p (R \mid \mb (R))} \\
& \qquad - \frac{f (X, Z)}{p (R \mid \mb (R))} \bbE [\frac{g (Y, Z)}{p (R \mid \mb (R))} \mid Z] + \frac{\bbE [\frac{f (X, Z)}{p (R \mid \mb (R))} \mid Z]}{p (R \mid \mb (R))} \frac{\bbE [\frac{g (Y, Z)}{p (R \mid \mb (R))} \mid Z]}{\bbE [\frac{1}{p (R \mid \mb (R))} \mid Z]} \Big\} \mid Z \Big].
\end{align*}
Together with the constraint $\alpha_{0} = 0$, we have
\begin{align*}
& \ \bbE \Big[ \frac{f (X, Z) g (Y, Z)}{p (R \mid \mb (R))} \iota (\sfM, X, Y, Z) \mid Z \Big] \bbE \Big[ \frac{1}{p (R \mid \mb (R))} \mid Z \Big] + \bbE \Big[ \frac{f (X, Z) g (Y, Z)}{p (R \mid \mb (R))} \mid Z \Big] \bbE \Big[ \frac{\iota (\sfM, Z)}{p (R \mid \mb (R))} \mid Z \Big] \\
& - \bbE \Big[ \frac{f (X, Z)}{p (R \mid \mb (R))} \mid Z \Big] \bbE \Big[ \frac{g (Y, Z) \iota (\sfM, Y, Z)}{p (R \mid \mb (R))} \mid Z \Big] - \bbE \Big[ \frac{f (X, Z) \iota (\sfM, X, Z)}{p (R \mid \mb (R))} \mid Z \Big] \bbE \Big[ \frac{g (Y, Z)}{p (R \mid \mb (R))} \mid Z \Big] = 0.
\end{align*}
The proof is complete.
\end{proof}

The next result records a few immediate consequences of the particular construction of $r_{t} (X, Y, Z \mid \sfM)$ given in \eqref{ansatz 1}--\eqref{extra constraint 2}.

\begin{lemma}
\label{lem:alpha}
The following statements hold with the specific choice of $r_{t} (X, Y, Z \mid \sfM)$ that satisfies \eqref{ansatz 1}--\eqref{extra constraint 2}.
\begin{enumerate}[label = (\arabic*)]
\item $\bbE \Big[ \dfrac{r_{t} (Z \mid \sfM) + s (Z \mid \sfM) s (\mb (R))}{p (R \mid \mb (R))} \mid Z \Big] = 0$

\item $\bbE \Big[ \dfrac{f (X, Z) r_{t} (X, Z \mid \sfM) s (\mb (R))}{p (R \mid \mb (R))} \mid Z \Big] = \bbE \Big[ \dfrac{g (Y, Z) r_{t} (Y, Z \mid \sfM) s (\mb (R))}{p (R \mid \mb (R))} \mid Z \Big] = 0$.

\item $\bbE \Big[ \dfrac{r_{t} (Z \mid \sfM) s (\mb (R))}{p (R \mid \mb (R))} \mid Z \Big] = 0$.
\end{enumerate}
\end{lemma}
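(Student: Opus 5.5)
The plan is to prove all three statements the same way. Each left-hand side is a $Z$-conditional expectation weighted by $1/p(R\mid \mb(R))$. I would rewrite it, by the tower property, as an expectation conditional on $(X,Y,Z)$, so that the defining identities \eqref{extra constraint 1}--\eqref{extra constraint 2} of the ansatz \eqref{ansatz 1} can be applied.

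I read the marginal notation as conditional expectations of the full perturbation, consistent with how the score components are used in the proof of Theorem~\ref{thm:orthocomplement_tangent_space}:
\[
r_t(Z\mid \sfM)=\bbE[r_t(X,Y,Z\mid\sfM)\mid Z,\sfM], \qquad r_t(X,Z\mid \sfM)=\bbE[r_t(X,Y,Z\mid\sfM)\mid X,Z,\sfM],
\]
and similarly for $s(\cdot\mid\sfM)$. Two facts then drive everything.
\begin{itemize}
\item Both $s(\mb(R))$ and $1/p(R\mid\mb(R))$ are $\sfM$-measurable, so the inner conditional expectation can be removed. For example,
\[
\bbE\Big[\tfrac{r_t(Z\mid\sfM)s(\mb(R))}{p(R\mid\mb(R))}\,\Big|\,Z\Big]=\bbE\Big[\tfrac{r_t(X,Y,Z\mid\sfM)s(\mb(R))}{p(R\mid\mb(R))}\,\Big|\,Z\Big].
\]
When $f(X,Z)$ or $g(Y,Z)$ is present, it is measurable with respect to the inner conditioning $\sigma$-field, so it can be carried along.
\item For any $A=A(X,Y,Z)$, Proposition~\ref{prop:weighted_moment_representation} gives
\[
\bbE[\omega A\mid Z]=\bbE\big[\bbE[\omega\mid X,Y,Z]\,A\mid Z\big].
\]
\end{itemize}

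For statements (3) and (2), after removing the inner conditional expectation I would condition further on $(X,Y,Z)$. In (2), $f(X,Z)$ or $g(Y,Z)$ factors out of this inner expectation. What remains is $\bbE\big[r_t(X,Y,Z\mid\sfM)s(\mb(R))/p(R\mid\mb(R))\mid X,Y,Z\big]$. This is the numerator of \eqref{extra constraint 1}, so it vanishes almost surely, because the denominator $\bbE[1/p(R\mid\mb(R))\mid X,Y,Z]$ is strictly positive under the positivity assumption of Remark~\ref{rem:reference_density}. Hence both (2) and (3) equal zero.

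For statement (1), the same reduction turns the left side into $\bbE\big[\{r_t(X,Y,Z\mid\sfM)+s(X,Y,Z\mid\sfM)s(\mb(R))\}/p(R\mid\mb(R))\mid Z\big]$. Conditioning on $(X,Y,Z)$ and invoking \eqref{extra constraint 2} gives
\[
\bbE\big[\,\bbE[1/p(R\mid\mb(R))\mid X,Y,Z]\;\iota(X,Z)\{\iota(Y,Z)-\iota(Z)\}\,\big|\,Z\big].
\]
By the reweighting identity in \eqref{eq:kernel_expectations}, this equals
\[
\bbE[1/p(R\mid\mb(R))\mid Z]\cdot\bbE_{\phi_R}\big[\iota(X,Z)\{\iota(Y,Z)-\iota(Z)\}\mid Z\big].
\]
Under $\bbP_{\phi_R}$ we have $X\indep Y\mid Z$, and the towers $\bbE_{\phi_R}[\iota(X,Z)\mid Z]=\bbE_{\phi_R}[\iota(Y,Z)\mid Z]=\iota(Z)$ hold. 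So the post-fixing factor equals $\iota(Z)\iota(Z)-\iota(Z)\iota(Z)=0$. This is where the Verma constraint enters, and it is the only nontrivial step of the lemma.

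The main obstacle is not the algebra but the bookkeeping. First, the marginal-notation convention must be exactly the conditional expectation described above; otherwise the tower step fails. Second, the same reference density $\tilde p$ should be used, or equivalently the unnormalized weight $1/p(R\mid\mb(R))$ throughout, so that the proportionality constants in \eqref{eq:kernel_expectations} cancel. The lemma takes as given that functions $\beta_0,\beta_1$ realizing \eqref{extra constraint 1}--\eqref{extra constraint 2} exist. I would flag this as a separate solvability question: it is a $2\times 2$ linear system in $(\beta_0,\beta_1)$ pointwise in $(X,Y,Z)$, solvable when the corresponding weighted moment matrix of $(1,s(\mb(R)))$ is nonsingular.
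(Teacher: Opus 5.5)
Your proof is correct and follows the paper's route. The paper's one-line argument writes each left-hand side as $\bbE[1/p(R\mid\mb(R))\mid Z]$ times a post-fixing conditional mean given $Z$ and then invokes \eqref{extra constraint 1}--\eqref{extra constraint 2}; you make explicit the tower steps and the fact that statement (1) uses $X\indep Y\mid Z$ under $\phi_R(p)$ to show $\bbE_{\phi_R}[\iota(X,Z)\{\iota(Y,Z)-\iota(Z)\}\mid Z]=0$.

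One correction to your closing aside, which the lemma does not need: the centering term $\bbE[\kappa\mid\sfM]$ in \eqref{ansatz 1} makes the conditions on $(\beta_0,\beta_1)$ coupled integral equations, not a pointwise $2\times 2$ system, so pointwise nonsingularity of the weighted moment matrix of $(1,s(\mb(R)))$ does not by itself guarantee solvability.
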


\begin{proof}
All three statements can be directly derived from \eqref{ansatz 1}--\eqref{extra constraint 2} by realizing that the left hand sides of (1)--(3) can be equivalently represented as the conditional means given $Z$ in the post-fixing distribution $\phi_{R} (p)$, once being renormalized by $\bbE [\frac{1}{p (R \mid \mb (R))} \mid Z]$.
\end{proof}

\subsection{On Basis Approximations}
\label{app:basis_approximation}

The orthocomplement characterization in
Theorem~\ref{thm:orthocomplement_tangent_space} is infinite-dimensional, since it is indexed by arbitrary square-integrable functions $f$ and $g$. In practice, one may approximate the corresponding projection by restricting $f$ and $g$ to finite-dimensional basis expansions. For example, let $\{b_j(X)\}_{j=1}^J$, $\{c_k(Z)\}_{k=1}^K$, and $\{d_\ell(Y)\}_{\ell=1}^L$ be collections of basis functions, and consider
\begin{align*}
f(X,Z)
&=
\sum_{j=1}^J
\sum_{k=1}^K
\beta_{jk}
\,b_j(X)c_k(Z),
\qquad 
g(Y,Z)
=
\sum_{\ell=1}^L
\sum_{k=1}^K
\gamma_{\ell k}
\,d_\ell(Y)c_k(Z).
\end{align*}
The resulting collection of weighted residual directions $\chi_{f,g}$ spans a finite-dimensional subspace of $\Lambda_{\bbP}^{\perp}$. Projecting the nonparametric influence function onto this subspace yields a computable approximation to the efficient influence function. Increasingly rich basis collections recover the full projection in the limit under standard sieve approximation conditions. 

\subsection{Proof of Theorem~\ref{theorem:locally_eff}}
\label{app:theorem:locally_eff}

\begin{proof}
By Theorem~\ref{thm:orthocomplement_tangent_space}, each $\chi_{jk}(O)\equiv \chi_{f_j,g_k}(O)$ belongs to $\Lambda_{\bbP}^{\perp}$. Hence the finite-dimensional subspace
\begin{align*}
\Lambda_{\bbP}^{\perp,\sub}
=
\,\{\chi_{jk}(O):j=1,\ldots,J,\;k=1,\ldots,K\}
\end{align*}
is contained in $\Lambda_{\bbP}^{\perp}$. Index the $JK$ functions $\chi_{jk}$ into the vector
\begin{align*}
\chi(O)
=
\big(\chi_{11}(O),\ldots,\chi_{JK}(O)\big)^\top .
\end{align*}
Then every element of $\Lambda_{\bbP}^{\perp,\sub}$ can be written as $\alpha^\top\chi(O)$ for some $\alpha\in\bbR^{JK}$, so
\begin{align*}
\Lambda_{\bbP}^{\perp,\sub}
=
\left\{
\alpha^\top\chi(O):\alpha\in\bbR^{JK}
\right\}.
\end{align*}

The projection of $\varphi^{\mathrm{np}}(O)$ onto $\Lambda_{\bbP}^{\perp,\sub}$ is therefore the element of this subspace that minimizes squared distance:
\begin{align*}
\Pi\big(
\varphi^{\mathrm{np}}(O)
\mid
\Lambda_{\bbP}^{\perp,\sub}
\big)
=
{\left(\alpha^{\mathrm{opt}}\right)}^{\top}\chi(O),
\end{align*}
where
\begin{align*}
\alpha^{\mathrm{opt}}
=
\arg\min_{\alpha\in\bbR^{JK}}
\bbE
\left[
\left\{
\varphi^{\mathrm{np}}(O)-\alpha^\top\chi(O)
\right\}^2
\right].
\end{align*}
Define
\begin{align*}
L(\alpha)
=
\bbE
\left[
\left\{
\varphi^{\mathrm{np}}(O)-\alpha^\top\chi(O)
\right\}^2
\right].
\end{align*}
Expanding the quadratic form gives
\begin{align*}
L(\alpha)
&=
\bbE\left[\{\varphi^{\mathrm{np}}(O)\}^2\right]
-
2\alpha^\top
\bbE\left[\chi(O)\varphi^{\mathrm{np}}(O)\right]
+
\alpha^\top
\bbE\left[\chi(O)\chi(O)^\top\right]
\alpha.
\end{align*}
Let
\begin{align*}
\Sigma
=
\bbE\left[\chi(O)\chi(O)^\top\right],
\qquad
b
=
\bbE\left[\chi(O)\varphi^{\mathrm{np}}(O)\right].
\end{align*}
Then
\begin{align*}
L(\alpha)
=
\bbE\left[\{\varphi^{\mathrm{np}}(O)\}^2\right]
-
2\alpha^\top b
+
\alpha^\top\Sigma\alpha.
\end{align*}
Differentiating with respect to $\alpha$ yields
\begin{align*}
\frac{\diff}{\diff\alpha}L(\alpha)
=
-2b+2\Sigma\alpha.
\end{align*}
Setting this derivative equal to zero gives the normal equation
\begin{align*}
\Sigma\alpha^{\mathrm{opt}}
=
b.
\end{align*}
Thus, when $\Sigma$ is nonsingular,
\begin{align*}
\alpha^{\mathrm{opt}}
=
\Sigma^{-1}b.
\end{align*}

The approximated efficient influence function obtained by projecting onto the complement of the basis approximation is
\begin{align*}
\varphi^{\mathrm{eff,basis}}(O)
=
\varphi^{\mathrm{np}}(O)
-
{\left(\alpha^{\mathrm{opt}}\right)}^{\top}\chi(O).
\end{align*}
Since both $\varphi^{\mathrm{np}}(O)$ and $\chi(O)$ are mean zero, the variance is
\begin{align*}
\Var[\varphi^{\mathrm{eff,basis}}(O)]
&=
\Var[\varphi^{\mathrm{np}}(O)]
+
{\left(\alpha^{\mathrm{opt}}\right)}^{\top}
\bbE[\chi(O)\chi(O)^\top]
\alpha^{\mathrm{opt}}
-
2{\left(\alpha^{\mathrm{opt}}\right)}^{\top}
\bbE[\chi(O)\varphi^{\mathrm{np}}(O)]
\\
&=
\Var[\varphi^{\mathrm{np}}(O)]
+
{\left(\alpha^{\mathrm{opt}}\right)}^{\top}\Sigma\alpha^{\mathrm{opt}}
-
2{\left(\alpha^{\mathrm{opt}}\right)}^{\top}b
\\
&=
\Var[\varphi^{\mathrm{np}}(O)]
-
b^\top\Sigma^{-1}b,
\end{align*}
where the last equality uses $\alpha^{\mathrm{opt}}=\Sigma^{-1}b$. Therefore,
\begin{align*}
\Var[\varphi^{\mathrm{np}}(O)]
-
\Var[\varphi^{\mathrm{eff,basis}}(O)]
=
b^\top\Sigma^{-1}b.
\end{align*}
\end{proof}

\subsection{On the Mean-Scale Model \texorpdfstring{$\calP_{\mathrm{mean},h}$}{}}
\label{app:eq:mean_scale_verma}

Equation~\eqref{eq:mean_scale_verma} holds if and only if 
\begin{align} 
\bbE_{\phi_R} \left[ \left\{ f(X,Z) - \bbE_{\phi_R}[f(X,Z)\mid Z] \right\} h(Y) \right] = 0, \qquad \text{for every } f\in L^2(\bbP_{X,Z}). 
\label{eq:mean_scale_moment} 
\end{align} 
Using Proposition~\ref{prop:weighted_moment_representation}, these restrictions may be written under the observed law as 
\begin{align} 
\bbE \left[ \omega_R(\M) \left\{ f(X,Z) - \bbE_{\phi_R}[f(X,Z)\mid Z] \right\} h(Y) \right] = 0, \qquad f\in L^2(\bbP_{X,Z}). 
\label{eq:mean_scale_weighted_moment} 
\end{align}

Thus, the mean-scale model $\calP_{\mathrm{mean},h}$ is obtained by restricting the $Y$-side function in the general moment representation to $h$. Under the same local regularity condition used in Theorem~\ref{thm:orthocomplement_tangent_space}, its tangent-space orthocomplement is given by \eqref{eq:mean_scale_orthocomplement}. 

\begin{proof}
Define
\begin{align*}
m_h(X,Z)
\coloneqq
\bbE_{\phi_R}[h(Y)\mid X,Z].
\end{align*}
For every $f\in L^2(\bbP_{X,Z})$, iterated expectation gives
\begin{align}
&\bbE_{\phi_R}
\left[
\left\{
f(X,Z)-\bbE_{\phi_R}[f(X,Z)\mid Z]
\right\}
h(Y)
\right]
\nonumber\\
&\qquad =
\bbE_{\phi_R}
\left[
\left\{
f(X,Z)-\bbE_{\phi_R}[f(X,Z)\mid Z]
\right\}
m_h(X,Z)
\right].
\label{eq:mean_scale_moment_under_fixing}
\end{align}

Suppose first that \eqref{eq:mean_scale_verma} holds. Then
\begin{align*}
m_h(X,Z)=\bbE_{\phi_R}[h(Y)\mid Z]
\end{align*}
almost surely. Substituting this equality into \eqref{eq:mean_scale_moment_under_fixing} and conditioning on $Z$ gives
\begin{align*}
\bbE_{\phi_R}
\left[
\left\{
f(X,Z)-\bbE_{\phi_R}[f(X,Z)\mid Z]
\right\}
h(Y)
\right]
=0.
\end{align*}
Hence \eqref{eq:mean_scale_moment} holds for every $f\in L^2(\bbP_{X,Z})$.

Conversely, suppose that \eqref{eq:mean_scale_moment} holds for every $f \in L^2 (\bbP_{X,Z})$, and define
\begin{align*}
\Delta_h(X,Z)
\coloneqq
m_h(X,Z)-\bbE_{\phi_R}[m_h(X,Z)\mid Z].
\end{align*}
By iterated expectation,
\begin{align*}
\Delta_h(X,Z)
=
\bbE_{\phi_R}[h(Y)\mid X,Z]
-
\bbE_{\phi_R}[h(Y)\mid Z],
\qquad
\bbE_{\phi_R}[\Delta_h(X,Z)\mid Z]=0.
\end{align*}
Moreover, $\Delta_h\in L^2(\bbP_{X,Z})$ under the stated integrability conditions. Taking $f=\Delta_h$ in
\eqref{eq:mean_scale_moment_under_fixing} yields
\begin{align*}
0
&=
\bbE_{\phi_R}[\Delta_h(X,Z)m_h(X,Z)]
\\
&=
\bbE_{\phi_R}
\left[
\Delta_h(X,Z)
\left\{
\Delta_h(X,Z)
+
\bbE_{\phi_R}[m_h(X,Z)\mid Z]
\right\}
\right]
\\
&=
\bbE_{\phi_R}[\Delta_h(X,Z)^2].
\end{align*}
Thus $\Delta_h(X,Z)=0$ almost surely, proving \eqref{eq:mean_scale_verma}. This establishes the equivalence between \eqref{eq:mean_scale_verma} and \eqref{eq:mean_scale_moment}. Proposition~\ref{prop:weighted_moment_representation} then gives the observed-law representation \eqref{eq:mean_scale_weighted_moment}.

It remains to characterize the orthocomplement. Let
$\{\bbP_t:t\in(-\epsilon,\epsilon)\}\subseteq
\calP_{\mathrm{mean},h}$ be a regular parametric submodel through
$\bbP$ with score $s$. Differentiating
\eqref{eq:mean_scale_weighted_moment} at $t=0$ and applying the
calculation in the proof of
Theorem~\ref{thm:orthocomplement_tangent_space} with
$g(Y,Z)=h(Y)$ gives
\begin{align*}
\bbE[\chi_{f,h}(O)s(O)]=0,
\qquad
f\in L^2(\bbP_{X,Z}).
\end{align*}
Therefore,
\begin{align*}
\overline{\sp}
\left\{
\chi_{f,h}:
f\in L^2(\bbP_{X,Z})
\right\}
\subseteq
\Lambda_{\bbP,\mathrm{mean},h}^{\perp}. 
\end{align*}

Conversely, the same completeness argument used in the proof of Theorem~\ref{thm:orthocomplement_tangent_space} shows that these differentiated moment restrictions exhaust the first-order restrictions defining $\calP_{\mathrm{mean},h}$. Hence a mean-zero square-integrable function is orthogonal to every score in $\Lambda_{\bbP,\mathrm{mean},h}$ if and only if it belongs to the closed linear span of the residualized moment gradients $\chi_{f,h}$. Therefore,
\begin{align*}
\Lambda_{\bbP,\mathrm{mean},h}^{\perp}
=
\overline{\sp}
\left\{
\chi_{f,h}
:
f\in L^2(\bbP_{X,Z})
\right\},
\end{align*}
which proves \eqref{eq:mean_scale_orthocomplement}. Note that, since each $\chi_{f,h}$ is obtained from the full distributional family by taking $g(Y,Z)=h(Y)$, $\Lambda_{\bbP,\mathrm{mean},h}^{\perp} \subseteq \Lambda_{\bbP}^{\perp}$. 
\end{proof}

\subsection{Proof of Theorem~\ref{thm:indexed_invariance_orthocomplement}}
\label{app:thm:indexed_invariance_orthocomplement}

\begin{proof}
For each $x\in\cal X$, define
\begin{align*}
\Gamma_x(\bbP)
\coloneqq
\psi(\bbP;x)-\psi(\bbP;x_0).
\end{align*}
By definition of $\calP_\psi$,
\begin{align*}
\Gamma_x(\bbP)=0,
\qquad
x\in\cal X,
\end{align*}
for every $\bbP\in\calP_\psi$.

Let $\{\bbP_t:t\in(-\epsilon,\epsilon)\}\subseteq\calP_\psi$ be a regular parametric submodel through $\bbP$ with score $s\in\Lambda_{\bbP,\psi}$. Since each indexed functional is pathwise differentiable in the nonparametric model,
\begin{align*}
0 = \left.
\frac{\diff}{\diff t}
\Gamma_x (\bbP_t) \right|_{t=0} = \bbE \left[
\left\{
\varphi^{\mathrm{np}}(\bbP;x)
-
\varphi^{\mathrm{np}}(\bbP;x_0)
\right\}
s(O) \right].
\end{align*}
Thus,
\begin{align*}
\varphi^{\mathrm{np}}(\bbP;x)
-
\varphi^{\mathrm{np}}(\bbP;x_0)
\in
\Lambda_{\bbP,\psi}^{\perp}
\end{align*}
for every $x\in\cal X$, and therefore
\begin{align}
\overline{\sp}
\left\{
\varphi^{\mathrm{np}}(\bbP;x)
-
\varphi^{\mathrm{np}}(\bbP;x_0)
:
x\in\cal X
\right\}
\subseteq
\Lambda_{\bbP,\psi}^{\perp}.
\label{eq:indexed_invariance_forward_inclusion}
\end{align}

Conversely, by assumption, the indexed invariance restrictions exhaust the first-order restrictions defining $\calP_\psi$ at $\bbP$. Their gradients are precisely
\begin{align*}
\varphi^{\mathrm{np}}(\bbP;x)
-
\varphi^{\mathrm{np}}(\bbP;x_0),
\qquad
x\in\cal X.
\end{align*}
Hence every element of $\Lambda_{\bbP,\psi}^{\perp}$ belongs to the closed linear span of these gradients, proving
\begin{align*}
\Lambda_{\bbP,\psi}^{\perp}
=
\overline{\sp}
\left\{
\varphi^{\mathrm{np}}(\bbP;x)
-
\varphi^{\mathrm{np}}(\bbP;x_0)
:
x\in\cal X
\right\}.
\end{align*}

To obtain the equivalent integral representation, observe that every finite linear combination of the indexed differences can be written as
\begin{align*}
\sum_{k=1}^K
a_k
\left\{
\varphi^{\mathrm{np}}(\bbP;x_k)
-
\varphi^{\mathrm{np}}(\bbP;x_0)
\right\}
=
\int
\varphi^{\mathrm{np}}(\bbP;x)c(x)\diff x,
\end{align*}
for a signed weighting function $c$ satisfying
\begin{align*}
\int c(x)\diff x=0.
\end{align*}
Conversely, every such integral belongs to the closed linear span of
the indexed differences. Taking the $L^2(\bbP)$ closure proves
\eqref{eq:indexed_invariance_integral_orthocomplement}.

It remains to characterize the influence-function class. Since
\begin{align*}
\psi(\bbP;x_0)=\psi(\bbP)
\end{align*}
throughout $\calP_\psi$,
$\varphi^{\mathrm{np}}(\bbP;x_0)$ is an influence function for
$\psi(\bbP)$ under $\calP_\psi$. Hence,
\begin{align*}
\calI \{\psi(\bbP)\}
&=
\varphi^{\mathrm{np}}(\bbP;x_0)
+
\Lambda_{\bbP,\psi}^{\perp}.
\end{align*}
Using \eqref{eq:indexed_invariance_integral_orthocomplement}, an arbitrary
element of this affine space can be written as
\begin{align*}
\varphi^{\mathrm{np}}(\bbP;x_0)
+
\int
\varphi^{\mathrm{np}}(\bbP;x)c(x)\diff x
=
\int
\varphi^{\mathrm{np}}(\bbP;x)
\widetilde p(x)\diff x,
\end{align*}
where $\int c(x)\diff x=0$ and
$\widetilde p$ satisfies
\begin{align*}
\int\widetilde p(x)\diff x=1.
\end{align*}
Conversely, for any $\widetilde p$ integrating to one,
\begin{align*}
\int
\varphi^{\mathrm{np}}(\bbP;x)
\widetilde p(x)\diff x
-
\varphi^{\mathrm{np}}(\bbP;x_0)
\end{align*}
is a zero-total-weight combination of the indexed influence functions
and therefore belongs to $\Lambda_{\bbP,\psi}^{\perp}$. This proves
\eqref{eq:indexed_invariance_IF_class}.

Finally, the efficient influence function is the minimum-variance
element of the influence-function class, so
\begin{align*}
\varphi^{\mathrm{eff}}
=
\argmin_{
\varphi\in\calI \{\psi(\bbP)\}
}
\bbE[\varphi(O)^2],
\end{align*}
which proves \eqref{eq:indexed_invariance_variational}.
\end{proof}

\subsection{Proof of Lemma~\ref{lem:optimal_affine_IF_combination}}
\label{app:lem:optimal_affine_IF_combination}

\begin{proof}
Because each $G_k$ is an influence function for $\psi(\bbP)$ under $\calP_{\mathrm{mean},h}$, every affine combination
\begin{align*}
G_\alpha(O)
=
\alpha^\top \boldsymbol G(O),
\qquad
\boldsymbol 1^\top \alpha=1,
\end{align*}
is also an influence function for $\psi(\bbP)$. Indeed, for every score $s\in\Lambda_{\bbP,\mathrm{mean},h}$,
\begin{align*}
\bbE[G_\alpha(O)s(O)] =
\sum_{k=1}^K
\alpha_k
\bbE[G_k(O)s(O)] =
\left(
\sum_{k=1}^K\alpha_k
\right)
\dot\psi_{\bbP}(s) =
\dot\psi_{\bbP}(s),
\end{align*}
where the final equality follows from $\boldsymbol 1^\top\alpha=1$.

Since influence functions are mean zero, the variance of $G_\alpha$ is
\begin{align*}
\Var_{\bbP}\{G_\alpha(O)\} = \bbE
\left[
\left\{
\alpha^\top\boldsymbol G(O)
\right\}^2
\right] =
\alpha^\top
\Sigma
\alpha.
\end{align*}
Thus, the minimum-variance element of the affine class solves
\begin{align*}
\min_{\alpha\in\bbR^K}
\alpha^\top\Sigma\alpha
\qquad
\text{subject to}
\qquad
\boldsymbol 1^\top\alpha=1.
\end{align*}

Suppose $\Sigma$ is nonsingular. Since $\Sigma$ is a covariance matrix, nonsingularity implies that it is positive definite, so the constrained optimization problem has a unique solution. Its Lagrangian is
\begin{align*}
\mathcal L(\alpha,\lambda)
=
\alpha^\top\Sigma\alpha
-
2\lambda
\left(
\boldsymbol 1^\top\alpha-1
\right).
\end{align*}
The first-order condition with respect to $\alpha$ is
\begin{align*}
2\Sigma\alpha
-
2\lambda\boldsymbol 1
=
0,
\end{align*}
and hence
\begin{align*}
\alpha
=
\lambda\Sigma^{-1}\boldsymbol 1.
\end{align*}
Imposing the constraint $\boldsymbol 1^\top\alpha=1$ gives
$1 = \lambda \boldsymbol 1^\top \Sigma^{-1} \boldsymbol 1$, 
so that
\begin{align*}
\lambda
=
\frac{1}{
\boldsymbol 1^\top
\Sigma^{-1}
\boldsymbol 1
}.
\end{align*}
Therefore,
\begin{align*}
\alpha^{\mathrm{opt}}
=
\frac{
\Sigma^{-1}\boldsymbol 1
}{
\boldsymbol 1^\top
\Sigma^{-1}\boldsymbol 1
}.
\end{align*}
This yields the unique minimum-variance influence function in the specified affine class: $G_{\alpha^{\mathrm{opt}}} = \left(\alpha^{\mathrm{opt}} \right)^{\top}\boldsymbol G(O)$.
\end{proof}

\subsection{Proof of Proposition~\ref{thm:multiple_constraints}}
\label{app:thm:multiple_constraints}

\begin{proof}
Proposition~\ref{thm:multiple_constraints} follows immediately from Theorem~\ref{thm:orthocomplement_tangent_space} and Lemma~1.7 of \citet{van2003unified}.
\end{proof}

As mentioned in Remark~\ref{rem:conjecture}, the reverse direction of Proposition~\ref{thm:multiple_constraints} remains an open problem. To close the gap, one at least needs to show that no further \emph{hidden constraints} are implied by those encoded in $\calP_{j}$ for $j = 1, 2, \cdots, J$. We expect that if $\calP$ is exactly described by some ADMG, then the reverse direction should hold given the strong evidence in \citet{evans2018margins} for vertices with finite state spaces. But to our knowledge, extending such results to infinite and uncountable state spaces remains open.

\subsection{Proof of Remark~\ref{rem:DAG}}
\label{app:DAG}

\begin{proof}
The proof is straightforward once we realize that, for a DAG model, there are only ordinary conditional independence constraints, and these constraints must follow the same ordering as in the Bayesian network decomposition. $\Lambda_{\bbP}^{\perp}$ is in fact an orthogonal summation, and each summand must be orthogonal to the corresponding conditional score function given in \eqref{DAG tangent spaces}.
\end{proof}

\clearpage

\section{Proofs for the Illustrative Examples} \label{app:proofs_applications}

\subsection{The Extended Front-Door Model}

\subsubsection{Proof of Corollary~\ref{cor:tangent_space_frontdoor}}
\label{app:cor:tangent_space_frontdoor}

\begin{proof}
The nested conditional independence induced by fixing $M$ is
\begin{align*}
Z \indep Y \mid X,M \qquad [\phi_M(p;\calG)].
\end{align*}
Thus, the result follows by specializing Proposition~\ref{prop:weighted_moment_representation} and Theorem~\ref{thm:orthocomplement_tangent_space} with
\begin{align*}
X_{\mathrm{gen}}=Z,
\qquad
Y_{\mathrm{gen}}=Y,
\qquad
Z_{\mathrm{gen}}=(X,M),
\qquad
R=M.
\end{align*}

By Proposition~\ref{prop:weighted_moment_representation},  
\begin{equation}\label{eq:app_frontdoor_full_moment}
\begin{split}
\calP(\calG)
& =
\Bigg\{
\bbP:
\bbE_{\phi_{M} (p; \calG)}
\left[ \left\{
f(X,Z, M)
-
\bbE_{\phi_{M} (p; \calG)} [f(X, Z, M) \mid X, M]
\right\} g (X,M,Y)
\right]
=
0
\Bigg\}\\
&= 
\Bigg\{
\bbP:
\bbE
\left[ \omega_M(\M)\, \left\{
f(X,Z, M)
-
\bbE_{\phi_{M} (p; \calG)} [f(X, Z, M) \mid X, M]
\right\}  \, g(X,M,Y)
\right]
=
0
\Bigg\}, 
\end{split}
\end{equation}
for every pair of square-integrable functions $f \in L^2 (\bbP_{X, Z, M})$ and $g \in L^2 (\bbP_{X, M, Y})$, where $\M = (M, X, Z, A)$ and 
\begin{equation*}
\bbE_{\phi_M} [f(X, Z, M) \mid X, M]
=
\frac{
\bbE [\, \omega_M(\M) \, f(X, Z, M) \mid X, M]
}{
\bbE [\, \omega_M(\M) \mid X, M]
},
\qquad
\omega_M(\M)
\coloneqq 
\frac{\tilde p(M)}{
p (M \mid X, Z, A)
}. 
\end{equation*}

Since $\tilde p(M)$ is known and strictly positive on the relevant support, multiplication by $\tilde p(M)$ may be absorbed into the arbitrary function $g(X,M,Y)$. Furthermore, by definition, $\tilde p(M)$ cancels our from the numerator and denominator definition of $\bbE_{\phi_M} [ f(.) \mid X, M ]$. Thus, we can ignore $\tilde p(M)$ and write: 
\begin{align*}
    \omega_M(\M) \coloneqq  \frac{1}{p (M \mid X, Z, A)}. 
\end{align*}

We next show that the indexing function $f$ in \eqref{eq:app_frontdoor_full_moment} can be reduced from $f(X,Z,M)$ to $f(X,Z)$. First consider a simple tensor-product function of the form
\begin{align*}
f(X,Z,M)
=
a(M)h(X,Z).
\end{align*}
It follows that
\begin{align*}
&\left\{
a(M)h(X,Z)
-
\bbE_{\phi_M}[a(M)h(X,Z)\mid X,M]
\right\}
g(X,M,Y)
\\
&\qquad =
\left\{
h(X,Z)
-
\bbE_{\phi_M}[h(X,Z)\mid X]
\right\}
\{a(M)g(X,M,Y)\}.
\end{align*}
where the equality holds since $\bbE_{\phi_M}[h(X,Z)\mid X,M] = \bbE_{\phi_M}[h(X,Z)\mid X]$. Since $g(X,M,Y)$ is arbitrary, the factor $a(M)$ can be absorbed into $g$.

By linearity, the same conclusion holds for finite sums
\begin{align*}
f(X,Z,M)
=
\sum_{\ell=1}^L a_\ell(M)h_\ell(X,Z).
\end{align*}
Such finite sums are dense in $L^2(\bbP_{X,Z,M})$ under the usual product-measurability and integrability conditions. Hence the closed linear span of moment functions generated by arbitrary $f(X,Z,M)$ is the same as the closed linear span generated by functions of $(X,Z)$ alone, after allowing arbitrary $g(X,M,Y)$. Therefore the model may be written as
\begin{align*}
\calP(\calG)
=
\Bigg\{
\bbP:
\bbE
\left[
\omega_M(\M)
\left\{
f(X,Z)
-
\bbE_{\phi_M}[f(X,Z)\mid X]
\right\}
g(X,M,Y)
\right]
=
0
\Bigg\},
\end{align*}
for every $f\in L^2(\bbP_{X,Z})$ and $g\in L^2(\bbP_{X,M,Y})$, where
\begin{align*}
\bbE_{\phi_M}[f(X,Z)\mid X]
=
\frac{
\bbE[\omega_M(\M)f(X,Z)\mid X]
}{
\bbE[\omega_M(\M)\mid X]
}, \qquad \omega_M(\M) \coloneqq  \frac{1}{p (M \mid X, Z, A)}.
\end{align*}

Likewise, Theorem~\ref{thm:orthocomplement_tangent_space} immediately gives
\begin{align*}
\Lambda_{\bbP}^{\perp}
=
\overline{\sp}
\left\{
\chi_{f,g}(O):
f\in L^2(\bbP_{X,Z}),
\;
g\in L^2(\bbP_{X,M,Y})
\right\},
\end{align*}
where
\begin{align*}
\widetilde\chi_{f,g}(O)
&=
\omega_M(\M)
\Big\{
f(X,Z)
-
\bbE_{\phi_M}[f(X,Z)\mid X]
\Big\}
\Big\{
g(X,M,Y)
-
\bbE_{\phi_M}[g(X,M,Y)\mid X, M]
\Big\},
\\
\chi_{f,g}(O)
&=
\widetilde\chi_{f,g}(O)
-
\bbE
[
\widetilde\chi_{f,g}(O)
\mid
X,Z,A,M
]
+
\bbE
[
\widetilde\chi_{f,g}(O)
\mid
X,Z,A
].
\end{align*}
The final expression follows because $\mb_{\calG}(M)=(X,Z,A)$.
\end{proof}

\subsubsection{Identification and NP-EIF} 
\label{app:id_npEIF_frontdoor} 

Let $V=(X,Z,A,M,Y)$, and assume treatment $A$ is binary. Fixing $M$ produces the CADMG $\calG(V\setminus M; M)$ and the corresponding post-fixing kernel
\begin{align*}
q_{V\setminus M\mid M}(X,Z,A,Y\mid M)
&= \frac{p(V)}{p(M\mid X,Z,A)} \\
&= p(Y\mid X,Z,A,M)\, p(A\mid X,Z)\, p(Z,X).
\end{align*}

The extended front-door graph implies the nested conditional independence
\begin{align*}
   Z \indep Y \mid X,M \qquad [\phi_M(p;\calG)]. 
\end{align*}
Equivalently, under the post-fixing kernel,
\begin{align*}
q_{V\setminus M\mid M}(Y\mid X,Z,M)
=
q_{V\setminus M\mid M}(Y\mid X,M),
\end{align*}
or, written in terms of the observed-data distribution,
\begin{align*}
q_{V\setminus M\mid M}(y\mid x,z,m) = \sum_{a=0}^1 p(y\mid x,z,A=a,m) \, p(A=a\mid x,z)
\end{align*}
is invariant to $z$ for every $(x,m,y)$. 

A mean-scale Verma restriction is obtained by choosing $h(Y)=Y$ in Section~\ref{subsec:mean_scale_verma}, yielding 
\begin{align*}
\bbE_{\phi_M}
\left[
Y
\mid
X,Z,M
\right]
=
\bbE_{\phi_M}
\left[
Y
\mid
X,M
\right].
\end{align*}
Equivalently,
\begin{align*}
\eta(\bbP;x,m,z)
\coloneqq
\sum_{a=0}^1
\bbE[Y\mid X=x,Z=z,A=a,M=m]\,
p(A=a\mid X=x,Z=z)
\end{align*}
does not depend on $z$.

Following the notation of Section~\ref{subsec:mean_scale_verma}, define the $\bbP$-dependent operator
\begin{align*}
\mathcal L_{\bbP}\{\eta\}
=
\iiint
\eta(\bbP; x,m,z^*)
\,f_M(m\mid A=a_0,Z=z,X=x)
\,p(z,x)
\,\diff m\,\diff z\,\diff x.
\end{align*}
Applying $\mathcal L_{\bbP}$ to the post-fixing conditional mean produces the identification functional
\begin{align*}
\psi_{a_0}(\bbP;z^*)
=
\mathcal L_{\bbP}
\left\{
\eta(\bbP;\cdot,\cdot,z^*)
\right\},
\end{align*}
which can be written explicitly as
\begin{align}
\psi_{a_0}(\bbP;z^*)
=
\iiint
\sum_{a=0}^1
\mu(m,a,z^*,x)
\,
\pi(a\mid z^*,x)
\,
p(m\mid A=a_0,z,x)
\,
p(z,x)
\,
\diff m\,\diff z\,\diff x,
\label{eq:id_ATE_zstar}
\end{align}
where $\mu(m,a,z,x) = \bbE[Y\mid M=m,A=a,Z=z,X=x]$, $\pi(a\mid z,x) = p(A=a\mid Z=z,X=x)$, and $f_M(m \mid a, z, x) = p(M=m \mid A=a, Z=z, X=x)$. 

The nonparametric influence function of \eqref{eq:id_ATE_zstar}, denoted by $\varphi^\mathrm{np}_{a_0}(\bbP; z^*)$, is 
\begin{align}
\varphi^\mathrm{np}_{a_0}(\bbP; z^*)(O_i) 
   &=  \frac{\I(Z_i=z^*)}{f_Z(z^*\mid X_i) }\sum_z f^r_{M, z^*}(M_i, A_i, z, X_i) \ f_Z(z\mid X_i) \ \big(Y_i-\mu(M_i,A_i,z^*,X_i) \big) \label{eq:verma_if}  \\
   &\hspace{0.5cm}+\frac{\I(Z=z^*)}{f_Z(z^*\mid X_i)}(A_i-\pi(1\mid z^*,X_i))\ \sum_z\big(\kappa_{1, z^*}(z,X_i)-\kappa_{0, z^*}(z,X_i)\big)f_Z(z\mid X_i) 
   \notag \\ 
    &\hspace{0.5cm}+ \frac{\I(A_i=a_0)}{\pi(a_0\mid Z_i,X_i)}\big(\xi_{z^*}(M_i,X_i) - \gamma_{z^*}(Z_i,X_i)\big) \\
   &\hspace{0.5cm}+ \gamma_{z^*}(Z_i,X_i) - \psi_{a_0}(\bbP; z^*),  \notag 
\end{align}
where 
$f_Z(z\mid x) = p(Z=z \mid X=x)$, 
$f_{M, z^*}^r(m, a, z, x) = {f_M(m \mid a_0, z, x)}/{f_M(m \mid a, z^*, x)}$, 
$\xi_{z^*}(m,x) = \sum_a \mu(m, a, z^*, x) \, \pi(a \mid z^*, x)$,
$\gamma_{z^*}(z,x) = \bbE[\xi_{z^*}(M,X) \mid  a_0,z,x]$, and 
$\kappa_{a, z^*}(z, x) = \bbE[\mu(M, a, z^*, X] \mid a_0, z, x)$.

\subsubsection{Proof of Lemma~\ref{lem:frontdoor_representation}}
\label{app:lem:frontdoor_representation} 

We show that the target-specific orthocomplement
$\Lambda^\perp_{\bbP,\psi}$, generated by the indexed invariance condition, is obtained as a specialization of the full orthocomplement to the tangent space of the extended front-door model characterized in Corollary~\ref{cor:tangent_space_frontdoor}. As discussed in the main text, the semiparametric efficient influence function within the corresponding affine class associated with $\Lambda^\perp_{\bbP,\psi}$ coincides with the optimal weighted influence function of \citet{guo2023flexible}.

\begin{proof}
Write
\begin{align*}
\mu(m,a,z,x)
&=
\bbE[Y\mid M=m,A=a,Z=z,X=x],
\\
\pi(a\mid z,x)
&=
p(A=a\mid Z=z,X=x),
\\
f_M(m\mid a,z,x)
&=
p(M=m\mid A=a,Z=z,X=x),
\\
f_Z(z\mid x)
&=
p(Z=z\mid X=x),
\end{align*}
and define
\begin{align*}
\xi_{z^*}(m,x)
&=
\sum_a
\mu(m,a,z^*,x)\pi(a\mid z^*,x),
\\
\gamma_{z^*}(z,x)
&=
\int
\xi_{z^*}(m,x)
f_M(m\mid a_0,z,x)
\diff m,
\\
\kappa_{a,z^*}(z,x)
&=
\int
\mu(m,a,z^*,x)
f_M(m\mid a_0,z,x)
\diff m.
\end{align*}

Under the post-fixing kernel, the conditional distribution of $Z$
given $X$ is $f_Z(\cdot\mid X)$. Therefore,
\begin{align*}
\bbE_{\phi_M}
\left[
f_\alpha(X,Z)\mid X
\right]
&=
\int
\frac{\alpha(z)}{f_Z(z\mid X)}
f_Z(z\mid X)
\diff z
\\
&=
\int \alpha(z)\diff z
\\
&=
0.
\end{align*}

The nested conditional independence
\begin{align*}
Z\indep Y\mid X,M
\qquad
[\phi_M(\bbP;\calG)]
\end{align*}
implies that
\begin{align*}
\xi_{z^*}(m,x)
=
\xi(m,x)
\end{align*}
does not depend on $z^*$. Hence
\begin{align*}
\bbE_{\phi_M}
\left[
g_\rho(X,M,Y)\mid X,M
\right]
&=
\rho(M,X)
\bbE_{\phi_M}[Y\mid X,M]
\\
&=
\rho(M,X)\xi(M,X).
\end{align*}

It follows that the raw residual product from
Corollary~\ref{cor:tangent_space_frontdoor} is
\begin{align}
\widetilde\chi_{f_\alpha,g_\rho}(O)
&=
\frac{\alpha(Z)\rho(M,X)}
{f_Z(Z\mid X)f_M(M\mid A,Z,X)}
\left\{
Y-\xi(M,X)
\right\}.
\label{eq:frontdoor_tilde_chi}
\end{align}

By Corollary~\ref{cor:tangent_space_frontdoor},
\begin{align}
\chi_{f_\alpha,g_\rho}(O)
&=
\widetilde\chi_{f_\alpha,g_\rho}(O)
-
\bbE[
\widetilde\chi_{f_\alpha,g_\rho}(O)
\mid X,Z,A,M
]
\nonumber\\
&\qquad+
\bbE[
\widetilde\chi_{f_\alpha,g_\rho}(O)
\mid X,Z,A
].
\label{eq:frontdoor_projection}
\end{align}

For the first two terms in \eqref{eq:frontdoor_projection},
\begin{align*}
&\widetilde\chi_{f_\alpha,g_\rho}(O)
-
\bbE[
\widetilde\chi_{f_\alpha,g_\rho}(O)
\mid X,Z,A,M
]
\\
&=
\frac{\alpha(Z)\rho(M,X)}
{f_Z(Z\mid X)f_M(M\mid A,Z,X)}
\left\{
Y-\mu(M,A,Z,X)
\right\}
\\
&=
\frac{\alpha(Z)}{f_Z(Z\mid X)}
\left\{
\int
\frac{
f_M(M\mid a_0,u,X)
}{
f_M(M\mid A,Z,X)
}
f_Z(u\mid X)\diff u
\right\}
\left\{
Y-\mu(M,A,Z,X)
\right\}.
\end{align*}
On the other hand, integrating the outcome-regression component of
$\varphi_{a_0}^{\mathrm{np}}(\bbP;z^*)$ against $\alpha(z^*)$ gives
\begin{align*}
&\int
\frac{\I(Z=z^*)}{f_Z(z^*\mid X)}
\left\{
\int
\frac{
f_M(M\mid a_0,u,X)
}{
f_M(M\mid A,z^*,X)
}
f_Z(u\mid X)\diff u
\right\}
\\
&\hspace{4cm}\times
\left\{
Y-\mu(M,A,z^*,X)
\right\}
\alpha(z^*)\diff z^*
\\
&=
\frac{\alpha(Z)}{f_Z(Z\mid X)}
\left\{
\int
\frac{
f_M(M\mid a_0,u,X)
}{
f_M(M\mid A,Z,X)
}
f_Z(u\mid X)\diff u
\right\}
\\
&\hspace{4cm}\times
\left\{
Y-\mu(M,A,Z,X)
\right\}.
\end{align*}
Thus, the first two terms in \eqref{eq:frontdoor_projection} equal the
integrated outcome-regression component.

Next, define
\begin{align*}
T_{a,z^*}(x)
=
\int
\rho(m,x)\mu(m,a,z^*,x)\diff m.
\end{align*}
Conditioning \eqref{eq:frontdoor_tilde_chi} on $(X,Z,A)$ gives
\begin{align}
\bbE[
\widetilde\chi_{f_\alpha,g_\rho}(O)
\mid X,Z,A
]
&=
\frac{\alpha(Z)}{f_Z(Z\mid X)}
\int
\rho(m,X)
\left\{
\mu(m,A,Z,X)-\xi(m,X)
\right\}
\diff m.
\notag \\
&=
\frac{\alpha(Z)}{f_Z(Z\mid X)}
\left\{
T_{A,Z}(X)
-
\sum_a
\pi(a\mid Z,X)T_{a,Z}(X)
\right\}.
\notag \\
&=
\frac{\alpha(Z)}{f_Z(Z\mid X)}
\left\{
A-\pi(1\mid Z,X)
\right\}
\left\{
T_{1,Z}(X)-T_{0,Z}(X)
\right\}
\notag \\
&=
\frac{\alpha(Z)}{f_Z(Z\mid X)}
\left\{
A-\pi(1\mid Z,X)
\right\}
\left\{
\int
\left\{
\kappa_{1,Z}(u,X)
-
\kappa_{0,Z}(u,X)
\right\}
f_Z(u\mid X)\diff u.
\right\}
\label{eq:frontdoor_treatment_projection}
\end{align} 
The second equality holds because  $\xi(m,X) = \sum_a \mu(m,a,Z,X)\pi(a\mid Z,X)$, we obtain
\begin{align*}
\int
\rho(m,X)\xi(m,X)\diff m
=
\sum_a
\pi(a\mid Z,X)T_{a,Z}(X)
\end{align*}
The third equality holds since $A$ is binary. 
The last equality holds because, by the definition of $\rho$ and
Fubini's theorem,
\begin{align*}
T_{a,z^*}(x)
&=
\int
\left\{
\int
f_M(m\mid a_0,u,x)f_Z(u\mid x)\diff u
\right\}
\mu(m,a,z^*,x)
\diff m
\\
&=
\int
\left\{
\int
\mu(m,a,z^*,x)
f_M(m\mid a_0,u,x)
\diff m
\right\}
f_Z(u\mid x)\diff u
\\
&=
\int
\kappa_{a,z^*}(u,x)
f_Z(u\mid x)\diff u.
\end{align*}
Consequently,
\begin{align*}
T_{1,z^*}(x)-T_{0,z^*}(x)
=
\int
\left\{
\kappa_{1,z^*}(u,x)
-
\kappa_{0,z^*}(u,x)
\right\}
f_Z(u\mid x)\diff u.
\end{align*}

Thus, the term \eqref{eq:frontdoor_treatment_projection} is exactly the integrated treatment-mechanism component of $\varphi_{a_0}^\mathrm{np}(\bbP; z^*)$ against $\alpha(z^*)$ 
\begin{align*}
&\int
\frac{\I(Z=z^*)}{f_Z(z^*\mid X)}
\left\{
A-\pi(1\mid z^*,X)
\right\}
 \times
\left\{
\int
\left\{
\kappa_{1,z^*}(u,X)
-
\kappa_{0,z^*}(u,X)
\right\}
f_Z(u\mid X)\diff u
\right\}
\alpha(z^*)\diff z^*
\\
&= 
\frac{\alpha(Z)}{f_Z(Z\mid X)}
\left\{
A-\pi(1\mid Z,X)
\right\}
 \times
\left\{
\int
\left\{
\kappa_{1,z^*}(u,X)
-
\kappa_{0,z^*}(u,X)
\right\}
f_Z(u\mid X)\diff u
\right\}
\end{align*}

Finally, consider the mediator and baseline components. Because the
full nested Markov restriction implies
\begin{align*}
\xi_{z^*}(m,x)=\xi(m,x)
\end{align*}
for every $z^*$, we also have
\begin{align*}
\gamma_{z^*}(z,x)
=
\gamma(z,x)
\end{align*}
for every $z^*$. Moreover,
\begin{align*}
\psi_{a_0}(\bbP;z^*)
=
\psi_{a_0}(\bbP)
\end{align*}
for every $z^*$. Hence,
\begin{align*}
&\int
\frac{\I(A=a_0)}{\pi(a_0\mid Z,X)}
\left\{
\xi_{z^*}(M,X)-\gamma_{z^*}(Z,X)
\right\}
\alpha(z^*)\diff z^*
\\
&\qquad+
\int
\left\{
\gamma_{z^*}(Z,X)
-
\psi_{a_0}(\bbP;z^*)
\right\}
\alpha(z^*)\diff z^*
\\
&=
\left[
\frac{\I(A=a_0)}{\pi(a_0\mid Z,X)}
\left\{
\xi(M,X)-\gamma(Z,X)
\right\}
+
\gamma(Z,X)
-
\psi_{a_0}(\bbP)
\right]
\int\alpha(z^*)\diff z^*
\\
&=
0.
\end{align*}

Combining the outcome-regression and treatment-mechanism components
therefore yields
\begin{align*}
\chi_{f_\alpha,g_\rho}(O)
=
\int
\varphi_{a_0}^{\mathrm{np}}(\bbP;z^*)(O)
\alpha(z^*)\diff z^*.
\end{align*}
The inclusion
\begin{align*}
\chi_{f_\alpha,g_\rho}
\in
\Lambda_{\bbP}^{\perp}
\end{align*}
follows directly from
Corollary~\ref{cor:tangent_space_frontdoor}.

For completeness, we note that for a square-integrable function $\alpha$ satisfying $\int \alpha(z) \,\diff z =0$: 
\begin{align*}
&\hspace{-1cm} \int
\varphi_{a_0}^{\mathrm{np}}(\bbP;z^*)(O_i)
\alpha(z^*)
\,\diff z^*
\\
&=
\frac{\alpha(Z_i)}{f_Z(Z_i\mid X_i)}
\sum_z
f^r_{M,Z_i}(M_i,A_i,z,X_i)\,
f_Z(z\mid X_i)\,
\Big\{
Y_i-\mu(M_i,A_i,Z_i,X_i)
\Big\}
\\
&\quad+
\frac{\alpha(Z_i)}{f_Z(Z_i\mid X_i)}
\Big\{
A_i-\pi(1\mid Z_i,X_i)
\Big\}
\sum_z
\Big\{
\kappa_{1,Z_i}(z,X_i)
-
\kappa_{0,Z_i}(z,X_i)
\Big\}
f_Z(z\mid X_i)
\\
&\quad+
\frac{\I(A_i=a_0)}{\pi(a_0\mid Z_i,X_i)}
\int
\Big\{
\xi_{z^*}(M_i,X_i)
-
\gamma_{z^*}(Z_i,X_i)
\Big\}
\alpha(z^*)
\,\diff z^*
\\
&\quad+
\int
\Big\{
\gamma_{z^*}(Z_i,X_i)
-
\psi_{a_0}(\bbP;z^*)
\Big\}
\alpha(z^*)
\,\diff z^*.
\end{align*}
Under the Verma restriction, $\xi_{z^*}(m,x)=\xi(m,x)$, $\gamma_{z^*}(z,x) = \gamma(z,x)$, and $\psi_{a_0}(\bbP;z^*) = \psi_{a_0}(\bbP)$, for every $z^*$. Together with $\int \alpha(z) \diff z$, the above simplifies to 
\begin{align*}
&\hspace{-1cm}\int
\varphi_{a_0}^{\mathrm{np}}(\bbP;z^*)(O_i)
\alpha(z^*)
\,\diff z^*
\\
&=
\frac{\alpha(Z_i)}{f_Z(Z_i\mid X_i)}
\sum_z
f^r_{M,Z_i}(M_i,A_i,z,X_i)\,
f_Z(z\mid X_i)\,
\Big\{
Y_i-\mu(M_i,A_i,Z_i,X_i)
\Big\}
\\
&\quad+
\frac{\alpha(Z_i)}{f_Z(Z_i\mid X_i)}
\Big\{
A_i-\pi(1\mid Z_i,X_i)
\Big\}
\sum_z
\Big\{
\kappa_{1,Z_i}(z,X_i)
-
\kappa_{0,Z_i}(z,X_i)
\Big\}
f_Z(z\mid X_i). 
\end{align*}

\end{proof}

\clearpage
\subsection{The Napkin Graph Model}
\label{app:examples_napkin}

\subsubsection{Proof of Corollary~\ref{cor:napkin_orthocomplement}}
\label{app:cor:napkin_orthocomplement}

\begin{proof}
The result follows by specializing Proposition~\ref{prop:weighted_moment_representation}
and Theorem~\ref{thm:orthocomplement_tangent_space} to the Napkin graph.

After fixing $Z$, the graph encodes the nested conditional independence
\begin{align*}
Y \indep Z \mid A
\qquad
[\phi_Z(p;\calG)].
\end{align*}
Thus, in the notation of Theorem~\ref{thm:orthocomplement_tangent_space}, we take
$
X_{\mathrm{gen}}=Z,
\,
Y_{\mathrm{gen}}=Y,
\,
Z_{\mathrm{gen}}=A,
\,
R=Z.
$
For this graph, the Markov blanket of $Z$ is $W$, and hence
\begin{align*}
\omega_Z(\M)
=
\frac{\tilde p(Z)}{p(Z\mid W)}, \quad \text{with} \quad \M = (Z, W). 
\end{align*}

By Proposition~\ref{prop:weighted_moment_representation}, the post-fixing conditional
expectation satisfies
\begin{align*}
\bbE_{\phi_Z}[f(Z,A)\mid A]
&=
\frac{
\bbE
\left[
\omega_Z(\M)f(Z,A)\mid A
\right]
}{
\bbE
\left[
\omega_Z(\M)\mid A
\right]
}.
\end{align*}
Therefore, the nested conditional independence
$Y\indep Z\mid A \ [\phi_Z(p;\calG)]$ is equivalent to the collection of weighted
moment restrictions
\begin{align*}
\bbE
\left[
\omega_Z(\M)
\left\{
f(Z,A)
-
\bbE_{\phi_Z}[f(Z,A)\mid A]
\right\}
g(Y,A)
\right]
=
0,
\end{align*}
for every
$f\in L^2(\bbP_{Z,A})$
and
$g\in L^2(\bbP_{Y,A})$.
This proves the model representation in \eqref{eq:napkin_model}.

It remains to specialize the orthocomplement formula. By
Theorem~\ref{thm:orthocomplement_tangent_space}, the raw weighted residual product is
\begin{align*}
\widetilde\chi_{f,g}(O)
&=
\omega_Z(\M)
\Big\{
f(Z,A)-\bbE_{\phi_Z}[f(Z,A)\mid A]
\Big\}
\Big\{
g(Y,A)-\bbE_{\phi_Z}[g(Y,A)\mid A]
\Big\}.
\end{align*}
Since $R=Z$ and $\mb_{\calG}(Z)=W$, the residualization in
Theorem~\ref{thm:orthocomplement_tangent_space} becomes
\begin{align*}
\chi_{f,g}(O)
&=
\widetilde\chi_{f,g}(O)
-
\bbE
\left[
\widetilde\chi_{f,g}(O)\mid W,Z
\right]
+
\bbE
\left[
\widetilde\chi_{f,g}(O)\mid W
\right].
\end{align*}
Thus
\begin{align*}
\Lambda_{\bbP}^{\perp}
=
\overline{\sp}
\left\{
\chi_{f,g}(O):
f\in L^2(\bbP_{Z,A}),
\;
g\in L^2(\bbP_{Y,A})
\right\},
\end{align*}
which is precisely \eqref{eq:napkin_orthocomplement}.
\end{proof}

\subsubsection{Identification and NP-EIF} 
\label{app:id_npEIF_napkin}


The ``Napkin'' graph is named after Pearl's reported napkin sketch \citep{pearl2018book}. The structural no-direct-effect statement encoded by the Napkin graph in \eqref{eq:napkin_nested_constraint} implies the absence of a direct effect of $Z$ on the mean of $Y$ after fixing $A$, that is 
\begin{align}
\bbE[Y(z,a_0)]
=
\bbE[Y(a_0)],
\qquad
\forall z \in \cal Z,
\label{eq:napkin_mean_scale}
\end{align}
which was studied by \citet{guo2025causal} to improve efficiency for the marginal causal mean $\bbE[Y(a_0)]$. 

Let $O=(W,Z,A,Y)$, and assume treatment $A$ is binary. Write
\begin{align*}
\pi(a\mid z,w)
&\coloneqq
\bbP(A=a\mid Z=z,W=w),
\\
\mu(a,z,w)
&\coloneqq
\bbE[Y\mid A=a,Z=z,W=w],
\end{align*}

Fixing $Z$ produces the CADMG $\calG(O\setminus Z; Z)$ and the corresponding post-fixing kernel
\begin{align*}
q_{O\setminus Z\mid Z}(W,A,Y\mid Z)
&= \frac{p(O)}{p(Z\mid W)} \\
&= p(Y\mid W,Z,A)\, p(A\mid W,Z)\, p(W).
\end{align*}

The Napkin graph implies the nested conditional independence $Z \indep Y \mid A \ [\phi_Z(p)]$. Equivalently, the post-fixing kernel $q_{O\setminus Z\mid Z}(Y\mid A,Z)$, written in terms of the observed-data distribution via
\begin{align*}
q_{O\setminus Z\mid Z}(y\mid a,z) = \frac{ \int p(y\mid x,z,a,m) \, p(a\mid x,z) \, p(w) \, \diff w}{\int p(a\mid x,z) \, p(w) \, \diff w},
\end{align*}
is invariant to $z$ for every $(a, y)$. 

A mean-scale Verma restriction is obtained by choosing $h(Y)=Y$ in Section~\ref{subsec:mean_scale_verma}, yielding 
\begin{align*}
\bbE_{\phi_Z}
\left[
Y
\mid
Z, A
\right]
=
\bbE_{\phi_Z}
\left[
Y
\mid
A
\right].
\end{align*}
Equivalently,
\begin{align*}
\eta(\bbP;z,a)
\coloneqq
\frac{ \int \mu(a,z,w) \, \pi(a\mid z,w) \, p(w) \, \diff w}{\int \pi(a\mid z, w) \, p(w) \, \diff w}
\end{align*}
does not depend on $z$.

Following the notation of Section~\ref{subsec:mean_scale_verma}, define the $\bbP$-dependent operator
\begin{align*}
\mathcal L_{\bbP}\{\eta(\bbP; z, a)\}
=
\eta(\bbP; z, a_0). 
\end{align*}
Applying $\mathcal L_{\bbP}$ to the post-fixing conditional mean produces the following identification functional for $\bbE[Y(a_0)]$
\begin{align*}
\psi_{a_0}(\bbP;z)
=
\mathcal L_{\bbP}
\left\{
\eta(\bbP;z,a)
\right\},
\end{align*}
which can be written explicitly as
\begin{align}
\psi_{a_0}(\bbP;z)
=
\frac{ \int \mu(a_0,z,w] \, \pi(a_0\mid z,w) \, p(w) \, \diff w}{\int \pi(a_0\mid z,w) \, p(w) \, \diff w}. 
\label{eq:id_ATE_zstar_napkin}
\end{align}

More generally, for any weighting function
$\widetilde p$ on $\cal Z$ satisfying $\int\widetilde p(z)\diff z=1$, define
\begin{align}
\psi_{a_0}(\bbP;\widetilde p)
&=
\int
\psi_{a_0}(\bbP;z)
\,\widetilde p(z)
\,\diff z
=
\mathcal L_{\bbP}
\left\{
\int
\eta(\bbP;z,a)
\,
\widetilde p(z)
\,\diff z
\right\},  
\label{eq:ID_napkin}
\end{align}
where the second equality follows from the linearity of $\mathcal L_{\bbP}$. 

The nonparametric IF for the functional in \eqref{eq:ID_napkin}, denoted as $\varphi_{a_0}^\mathrm{np}(\bbP;\tilde{p}_z)$, is given by 

{\small 
\begin{equation}\label{eq:IF_tilde_pz}
    \begin{aligned}
    \varphi_{a_0}^\mathrm{np}(\bbP;\tilde{p}_z)(O_i)
    &=\underbrace{\frac{\I(A_i=a_0)}{\kappa_{a_0}(\bbP;\tilde{p}_z)}  \frac{\tilde{p}(Z_i)}{f_Z(Z_i \mid W_i)}  \Big\{  Y_i -  \mu(a_0, Z_i, W_i) \Big\}}_{\Phi_{Y,a_0}(\bbP; \, \tilde{p}_z)(O_i)}
    \\[0.35cm] 
    &\hspace{-2.2cm} + \underbrace{\frac{1}{\kappa_{a_0}(\bbP;\tilde{p}_z)} \frac{\tilde{p}(Z_i)}{f_Z(Z_i \mid W_i)} \Big\{ \mu(a_0, Z_i, W_i) - \psi_{a_0}(\bbP;\tilde p_z)   \Big\} \Big\{ \I(A_i=a_0) -  \pi(a_0 \mid Z_i, W_i) \Big\}}_{\Phi_{A,a_0}(\bbP; \, \tilde{p}_z)(O_i)}
     \\[0.35cm]
    &\hspace{-2.2cm} + \frac{1}{\kappa_{a_0}(\bbP;\tilde{p}_z)} \underbrace{\int  \pi(a_0 \mid z, W_i) \Big\{  \mu(a_0, z, W_i) -  \psi_{a_0}(\bbP; \tilde p_z) \Big\} \, \tilde{p}(z) dz}_{\Phi_{W,a_0}(\bbP; \, \tilde{p}_z)(O_i)} \ ,
\end{aligned}
\end{equation}
}
where $\kappa_{a_0}(\bbP;z) = \int \pi(a_0\, | \, z, w) \, p(w) \, dw$. 

Under discrete $Z$, $\tilde{p}(Z)$ can be replaced by $\I(Z = z^*)$ in $\Phi_{Y,a_0}$ and $\Phi_{A,a_0}$, while $\Phi_{W,a_0}$ simplifies to $\Phi_{W,a_0}(\bbP; z^*)(O_i) = \displaystyle \frac{\pi(a_0 \mid z^*, W_i)}{\kappa_{a_0}(\bbP; z^*)} \{ \mu(a_0, z^*, W_i) - \psi_{a_0}(\bbP; z^*) \}$. The resulting np-EIF is denoted by $\varphi_{a_0}^\mathrm{np}(\bbP;z^*)$: 

{\small 
\begin{equation}\label{eq:IF_tilde_z}
    \begin{aligned}
    \varphi_{a_0}^\mathrm{np}(\bbP;z^*)(O_i)
    &=\frac{\I(A_i=a_0)}{\kappa_{a_0}(\bbP;z^*)}  \frac{\I(Z_i = z^*)}{f_Z(z^* \mid W_i)}  \Big\{  Y_i -  \mu(a_0, z^*, W_i) \Big\}
    \\[0.35cm] 
    &\hspace{-2.2cm} + \frac{1}{\kappa_{a_0}(\bbP;z^*)} \frac{\I(Z_i = z^*)}{f_Z(z^* \mid W_i)} \Big\{ \mu(a_0, z^*, W_i) - \psi_{a_0}(\bbP;z^*)   \Big\} \Big\{ \I(A_i=a_0) -  \pi(a_0 \mid z^*, W_i) \Big\}
     \\[0.35cm]
    &\hspace{-2.2cm} +   \frac{\pi(a_0 \mid z^*, W_i)}{\kappa_{a_0}(\bbP;z^*)}  \Big\{  \mu(a_0, z^*, W_i) -  \psi_{a_0}(\bbP; z^*) \Big\}  \ . 
\end{aligned}
\end{equation}
}

We also note that for a square-integrable function $\alpha$ satisfying $\int \alpha(z) \,\diff z =0$: 
\begin{align*}
&\hspace{-1cm}\int
\varphi_{a_0}^{\mathrm{np}}(\bbP;z^*)(O_i)
\alpha(z^*)
\,\diff z^*
\\
&=
\frac{\I(A_i=a_0)}
{\kappa_{a_0}(\bbP;Z_i)}
\frac{\alpha(Z_i)}
{f_Z(Z_i\mid W_i)}
\Big\{
Y_i-\mu(a_0,Z_i,W_i)
\Big\}
\\
&\quad
+
\frac{\alpha(Z_i)}
{\kappa_{a_0}(\bbP;Z_i)}
\frac{1}
{f_Z(Z_i\mid W_i)}
\Big\{
\mu(a_0,Z_i,W_i)
-
\psi_{a_0}(\bbP;Z_i)
\Big\}
\times
\Big\{
\I(A_i=a_0)
-
\pi(a_0\mid Z_i,W_i)
\Big\}
\\
&\quad
+
\int
\frac{
\pi(a_0\mid z^*,W_i)
}{
\kappa_{a_0}(\bbP;z^*)
}
\Big\{
\mu(a_0,z^*,W_i)
-
\psi_{a_0}(\bbP;z^*)
\Big\}
\alpha(z^*)
\,\diff z^*.
\end{align*}

\subsubsection{Proof of Lemma~\ref{lem:napkin_indexed}}
\label{app:lem:napkin_indexed}

We show that the target-specific orthocomplement
$\Lambda^\perp_{\bbP,\psi}$, generated by the indexed invariance condition, is obtained as a specialization of the full orthocomplement to the tangent space of the Napkin model characterized in Corollary~\ref{cor:napkin_orthocomplement}. As discussed in the main text, the semiparametric efficient influence function within the corresponding affine class associated with $\Lambda^\perp_{\bbP,\psi}$ coincides with the optimal weighted influence function of \citet{guo2025causal}.

\begin{proof}

Fix a treatment level $a_0$. Write
\begin{align*}
\pi(a\mid z,w)
&\coloneqq
\bbP(A=a\mid Z=z,W=w),
\\
\mu(a,z,w)
&\coloneqq
\bbE[Y\mid A=a,Z=z,W=w],
\end{align*}
and define
\begin{align*}
\kappa_{a_0}(\bbP;z)
&\coloneqq
\int
\pi(a_0\mid z,w)
p(w)
\diff w,
\\
\psi_{a_0}(\bbP;z)
&\coloneqq
\frac{
\int
\mu(a_0,z,w)
\pi(a_0\mid z,w)
p(w)
\diff w
}{
\kappa_{a_0}(\bbP;z)
}.
\end{align*}

Let
\begin{align*}
f_\alpha(Z,A)
&=
\frac{
\I(A=a_0)
}{
\kappa_{a_0}(\bbP;Z)
}\frac{\alpha(Z)}{\tilde p(Z)},
\qquad 
g(Y,A)
=
Y,
\end{align*}
where $\int\alpha(z)\diff z=0$. 

Under the post-fixing law,
\begin{align*}
\bbE_{\phi_Z}
\left[
f_\alpha(Z,A)
\mid A
\right]
&=
\frac{ \bbE[\omega_Z(Z, W) f_\alpha(Z,A) \mid A] }{\bbE[\omega_Z(Z, W) \mid A]} \\
&= \frac{1}{\bbE[\omega_Z(Z, W) \mid A]} \int \omega_Z(z, w) f_\alpha(z,a) p(w, z \mid A) dw dz \\
&= \frac{1}{\bbE[\omega_Z(Z, W) \mid A]} \int \frac{\tilde{p}(z)}{p(z \mid w)} \frac{\I(A=a_0)}{\kappa_{a_0}(\bbP; z)} \frac{\alpha(z)}{\tilde p(z)} p(w, z \mid A) dw dz \\
&= \frac{\I(A=a_0)}{\bbE[\omega_Z(Z, W) \mid A]} \int \frac{1}{p(z \mid w)} \frac{\alpha(z)}{\kappa_{a_0}(\bbP; z)}  p(w, z \mid A=a_0) dw dz \\
&= 
\frac{\I(A=a_0)}{
\bbE[\omega_Z(Z,W)\mid A]
}
\int
\frac{1}{p(z\mid w)}
\frac{\alpha(z)}{\kappa_{a_0}(\bbP;z)}
\frac{
\pi(a_0\mid z,w)p(z\mid w)p(w)
}{
p(A=a_0)
}
\,\diff w\,\diff z
\\
&=
\frac{\I(A=a_0)}{
p(A=a_0)\,
\bbE[\omega_Z(Z,W)\mid A]
}
\int
\frac{\alpha(z)}
{\kappa_{a_0}(\bbP;z)}
\pi(a_0\mid z,w)p(w)
\,\diff w\,\diff z
\\
&=
\frac{\I(A=a_0)}{
p(A=a_0)\,
\bbE[\omega_Z(Z,W)\mid A]
}
\int
\frac{\alpha(z)}
{\kappa_{a_0}(\bbP;z)}
\left\{
\int
\pi(a_0\mid z,w)p(w)
\,\diff w
\right\}
\,\diff z
\\
&=
\frac{\I(A=a_0)}{
p(A=a_0)\,
\bbE[\omega_Z(Z,W)\mid A]
}
\int
\alpha(z)
\,\diff z\\
&= 0, 
\end{align*}
and 
\begin{align*}
\bbE_{\phi_Z}
\left[
g(Y,A)
\mid A
\right]
&=
\frac{
\bbE\!\left[
\omega_Z(Z,W)Y
\mid A
\right]}
{
\bbE\!\left[
\omega_Z(Z,W)
\mid A
\right]
}
\\
&= 
\frac{
\int
\widetilde p(z)
\left\{
\int
\mu(A,z,w)
\pi(A\mid z,w)
p(w)
\,\diff w
\right\}
\,\diff z
}{\int
\widetilde p(z)
\left\{
\int
\pi(A\mid z,w)
p(w)
\,\diff w
\right\}
\,\diff z}
\\
&:=
\psi_{A}(\bbP)
\end{align*}

The raw weighted residual product in
Corollary~\ref{cor:napkin_orthocomplement} therefore reduces to
\begin{align}
\widetilde\chi_{\alpha}(O) 
\coloneqq
\widetilde\chi_{f_\alpha,g}(O)
&= 
\frac{1}{p(Z\mid W)}
\frac{\I(A=a_0)
}{
\int
\pi(a_0\mid Z,w)
p(w)
\,\diff w
}
\alpha(Z)
\left\{
Y-\psi_{A}(\bbP)
\right\}.
\notag \\
&=
\frac{1}{p(Z\mid W)}
\frac{\I(A=a_0)
}{
\kappa_{a_0}(\bbP;Z)
}
\alpha(Z)
\left\{
Y-\psi_{a_0}(\bbP)
\right\}.
\label{eq:napkin_raw_indexed_direction}
\end{align}
The corresponding orthocomplement element is
\begin{align}
\chi_{\alpha}(O) 
\coloneqq
\chi_{f,g}(O)
&=
\widetilde\chi_{\alpha}(O)
-
\bbE[
\widetilde\chi_{\alpha}(O)
\mid W,Z
]
+
\bbE[
\widetilde\chi_{\alpha}(O)
\mid W
].
\label{eq:napkin_residualized_indexed_direction}
\end{align}

The first conditional expectation is
\begin{align*}
\bbE[
\widetilde\chi_{\alpha}(O)
\mid W,Z
]
&=
\frac{
\alpha(Z)
}{
\kappa_{a_0}(\bbP;Z)
}
\frac{
\pi(a_0\mid Z,W)
}{
p(Z\mid W)
}
\left\{
\mu(a_0,Z,W)
-
\psi_{a_0}(\bbP)
\right\},
\end{align*}
and the second conditional expectation is
\begin{align*}
\bbE[
\widetilde\chi_{\alpha}(O)
\mid W
]
&=
\int
\frac{
\alpha(z)
}{
\kappa_{a_0}(\bbP;z)
}
\pi(a_0\mid z,W)
\left\{
\mu(a_0,z,W)
-
\psi_{a_0}(\bbP)
\right\}
\diff z.
\end{align*}
Substituting these expressions into
\eqref{eq:napkin_residualized_indexed_direction} and using
\begin{align*}
Y-\psi_{a_0}(\bbP) 
= Y-\psi_{a_0}(\bbP;Z)
=
\left\{
Y-\mu(a_0,Z,W)
\right\}
+
\left\{
\mu(a_0,Z,W)
-
\psi_{a_0}(\bbP;Z)
\right\},
\end{align*}
gives
\begin{align}
\chi_{\alpha
}(O)
&=
\frac{
\I(A=a_0)
}{
\kappa_{a_0}(\bbP;Z)
}
\frac{
\alpha(Z)
}{
p(Z\mid W)
}
\left\{
Y-\mu(a_0,Z,W)
\right\}
\notag\\
&\quad
+
\frac{
\alpha(Z)
}{
\kappa_{a_0}(\bbP;Z)
}
\frac{
\mu(a_0,Z,W)-\psi_{a_0}(\bbP;Z)
}{
p(Z\mid W)
}
\left\{
\I(A=a_0)-\pi(a_0\mid Z,W)
\right\}
\notag\\
&\quad
+
\int
\frac{
\pi(a_0\mid z,W)
}{
\kappa_{a_0}(\bbP;z)
}
\left\{
\mu(a_0,z,W)
-
\psi_{a_0}(\bbP;z)
\right\}
\alpha(z)
\diff z,
\label{eq:napkin_indexed_direction_expanded}
\end{align}
which proves:
\begin{align*}
\chi_{\alpha}(O)
\coloneqq
\chi_{f_\alpha, g}(O)
=
\int
\varphi_{a_0}^{\mathrm{np}}(\bbP; z^*)(O)
\alpha(z^*)
\diff z^*.
\end{align*}

Since every element of
$\Lambda_{\bbP,\psi}^{\perp}$ is in the closed linear span of such
zero-integral combinations,
\begin{align*}
\Lambda_{\bbP,\psi}^{\perp}
\subseteq
\Lambda_{\bbP}^{\perp}.
\end{align*}
\end{proof}

\clearpage
\section{Additional Illustrative Examples}

\subsection{Verma and Ordinary Independence Constraints}
\label{app:ex:verma_ordinary}

We return to the extended front-door model after deleting the arrow $Z\rightarrow M$, as shown in Figure~\ref{fig:one_verma_one_ordinary}(a).  The edge deletion has two distinct consequences. First, the observed-data law satisfies the ordinary conditional independence
\begin{align}
Z \indep M \mid A.
\label{eq:one_verma_ordinary_constraint}
\end{align}
Second, the no-direct-effect structure continues to imply the nested
conditional independence
\begin{align}
Z \indep Y \mid M
\qquad
[\phi_M(p)],
\label{eq:one_verma_nested_constraint}
\end{align}
which becomes an ordinary conditional independence in the post-fixing CADMG in Figure~\ref{fig:one_verma_one_ordinary}(b). Thus, this example illustrates how an ordinary and a nested restriction contribute separate, and generally nonorthogonal, directions to the tangent-space orthocomplement.

\begin{figure}[t]
\centering
\begin{tikzpicture}[>=stealth, node distance = 1.5cm]
    \tikzstyle{format} = [thick, circle, minimum size = 1.0mm, inner sep = 2pt]
    \tikzstyle{square} = [draw, thick, minimum size = 4.5mm, inner sep = 2pt]

    \begin{scope}[xshift = 0cm, yshift = 0cm]
        \path[->, very thick]
        
        node[shape = ellipse, draw = black] (z) {$Z$}
        node[right of = z, shape = ellipse, draw = black] (a) {$A$}
        node[right of = a, shape = ellipse, draw = black] (m) {$M$}
        node[right of = m, shape = ellipse, draw = black] (y) {$Y$}
        
        (z) edge[blue] (a)
        (a) edge[blue] (m)
        (m) edge[blue] (y)
        (a) edge[red, <->, bend right] (y)

        node[below right of = a] (cap) {(a)}
        ;
    \end{scope}

    \begin{scope}[xshift = 7cm, yshift = 0cm]
        \path[->, very thick]
        
        node[shape = ellipse, draw = black] (z) {$Z$}
        node[right of = z, shape = ellipse, draw = black] (a) {$A$}
        node[right of = a, shape = rectangle, draw = black] (m) {$M$}
        node[right of = m, shape = ellipse, draw = black] (y) {$Y$}
         
        (z) edge[blue] (a)
        (m) edge[blue] (y)
        (a) edge[red, <->, bend right] (y)

         node[below right of = a] (cap) {(b)}
        ;
    \end{scope}
\end{tikzpicture}
\caption{An extended front-door model with one ordinary and one nested conditional independence. (a) Original ADMG, obtained from Figure~\ref{fig:front-door}(a) by deleting $Z\rightarrow M$; the deletion implies $Z\indep M\mid A$ under the observed-data law. (b) CADMG obtained after fixing $M$, under which $Z\indep Y\mid M \ [\phi_M(p)]$ becomes an ordinary conditional independence.}
\label{fig:one_verma_one_ordinary}
\end{figure}

For the Verma restriction, $\mb_{\calG} (M)= \{A\}$ and hence
\begin{align*}
\omega_M(\M)
&=
\frac{1}{p(M\mid A)},
\qquad
\M=(M, A),
\\
\bbE_{\phi_M}[h(O)\mid C]
&=
\frac{
\bbE[\omega_M(\M)h(O)\mid C]
}{
\bbE[\omega_M(\M) \mid C]
},
\end{align*}
where $C=\emptyset$ or $C=M$, as appropriate. Define the two component models
\begin{align*}
\calP_{\mathrm V}
&=
\Bigg\{
\bbP:
\bbE\left[
\omega_M(\M)
\left\{
f(Z)-\bbE_{\phi_M}[f(Z)]
\right\}
g(M,Y)
\right]
=0,
\quad
\forall f,g
\Bigg\},
\\
\calP_{\mathrm O}
&=
\Bigg\{
\bbP:
\bbE\left[
\left\{
u(A,Z)-\bbE[u(A,Z)\mid A]
\right\}
v(A,M)
\right]
=0,
\quad
\forall u,v
\Bigg\},
\end{align*}
where $f\in L^2(\bbP_{Z})$, $g\in L^2(\bbP_{M,Y})$, $u\in L^2(\bbP_{A,Z})$, and $v\in L^2(\bbP_{A,M})$. The nested Markov model associated with Figure~\ref{fig:one_verma_one_ordinary}(a) is therefore $\calP(\calG)=\calP_{\mathrm V}\cap\calP_{\mathrm O}$. 

The corresponding orthocomplement directions take especially transparent forms. For the Verma restriction, let
\begin{align}
\widetilde\chi^{\mathrm V}_{f,g}(O)
&=
\omega_M(\M)
\Big\{
f(Z)-\bbE_{\phi_M}[f(Z)]
\Big\}
\times
\Big\{
g(M,Y)-\bbE_{\phi_M}[g(M,Y)\mid M]
\Big\},
\nonumber\\
\chi^{\mathrm V}_{f,g}(O)
&=
\widetilde\chi^{\mathrm V}_{f,g}(O)
-
\bbE[\widetilde\chi^{\mathrm V}_{f,g}(O)\mid A,M]
+
\bbE[\widetilde\chi^{\mathrm V}_{f,g}(O)\mid A].
\label{eq:one_verma_verma_direction}
\end{align}
For the ordinary conditional independence, Corollary~\ref{cor:ordinary_ci_orthocomplement} gives
\begin{align}
\chi^{\mathrm O}_{u,v}(O)
=
\Big\{
u(A,Z)-\bbE[u(A,Z)\mid A]
\Big\}
\Big\{
v(A,M)-\bbE[v(A,M)\mid A]
\Big\}.
\label{eq:one_verma_ordinary_direction}
\end{align}
Consequently, Proposition~\ref{thm:multiple_constraints} yields the combined orthocomplement subspace
\begin{align}
\mathcal H_{\mathrm{V+O}}
&\coloneqq
\overline{\sp}
\left\{
\chi^{\mathrm V}_{f,g}(O)+\chi^{\mathrm O}_{u,v}(O):
f,g,u,v\ \text{as above}
\right\}
\subseteq
\Lambda_{\bbP}^{\perp}.
\label{eq:one_verma_combined_orthocomplement}
\end{align}
If the tangent space of the intersection model equals the intersection of the two constraint-specific tangent spaces, then the inclusion in \eqref{eq:one_verma_combined_orthocomplement} is an equality. In either case, every element of $\mathcal H_{\mathrm{V+O}}$ is a valid orthocomplement direction.

Next, we prove the other direction of \eqref{eq:one_verma_combined_orthocomplement} by exhibiting an explicit one-dimensional parametric submodel $\{\bbP_{t}\} \subseteq \calP (\calG)$. We first notice the following decomposition of $p$:
\begin{equation}
\label{one_verma_combined_decomposition}
p (z, a, m, y) = p (y \mid z, a, m) p (m \mid z, a) p (z, a) = p (y \mid z, a, m) p (m \mid a) p (z, a).
\end{equation}
To this end, we consider the following form of perturbation of $p$, which enforces the constraint $Z \indep M \mid A$ under $\bbP_{t}$:
\begin{align*}
p_{t} (z, a, m, y) = p_{t} (y \mid z, a, m) p_{t} (m \mid a) p_{t} (z, a),
\end{align*}
where
\begin{align*}
& p_{t} (y \mid z, a, m) = p (y \mid z, a, m) \{1 + t \cdot s (y \mid z, a, m) + t^{2} \cdot h (y \mid z, a, m)\}, \\
& p_{t} (m \mid a) = p (m \mid a) \{1 + t \cdot s (m \mid a)\}, \\
& p_{t} (z, a) = p (z, a) \{1 + t \cdot s (z, a)\}.
\end{align*}
For simplicity, we take $p_{t}^{\ast} (z, a, m, y) = p_{t} (y \mid z, a, m) p^{\ast} (m) p_{t} (z, a)$ given any known marginal density $p^{\ast} (m)$ of $M$, as a candidate parametric submodel. We need to show that $p_{t}^{\ast} (y \mid z, m) = p_{t}^{\ast} (y \mid m)$ and we know that $p^{\ast} (y \mid z, m) = p^{\ast} (y \mid m)$. By the latter identity, we have
\begin{align*}
p^{\ast} (y \mid z, m) = \int p (y \mid z, a, m) p (a \mid z) \diff a = \int p (y \mid z, a, m) p (z, a) \diff a \diff z,
\end{align*}
so $\int p (y \mid z, a, m) p (z, a) \diff a = p (z) \int p (y \mid z, a, m) p (z, a) \diff a \diff z$. Furthermore, by $s \in \Lambda_{\bbP}$, we must conclude that
\begin{align*}
& \int p (y \mid z, a, m) p (a \mid z) \{s (y \mid z, a, m) + s (a \mid z)\} \diff a \\
& = \int p (y \mid z, a, m) p (z, a) \{s (y \mid z, a, m) + s (z, a)\} \diff a \diff z.
\end{align*}

Following elementary calculations, we have
\begin{align*}
& \ p_{t}^{\ast} (y \mid z, m) = \frac{p_{t}^{\ast} (z, m, y)}{p_{t}^{\ast} (z, m)} = \frac{\int p_{t}^{\ast} (z, a, m, y) \diff a}{\int p_{t}^{\ast} (z, a) \diff a} \\
= & \ \frac{\int p (y \mid z, a, m) p (z, a) \{1 + t \cdot s (y \mid z, a, m)\} \{1 + t \cdot s (z, a)\} \diff a}{\int p (z, a) \{1 + t \cdot s (z, a)\} \diff a} \\
& + t^{2} \cdot \frac{\int p (y \mid z, a, m) p (z, a) h (y \mid z, a, m) \{1 + t \cdot s (z, a)\} \diff a}{\int p (z, a) \{1 + t \cdot s (z, a)\} \diff a} \\
= & \ \frac{\int p (y \mid z, a, m) p (z, a) \{1 + t \cdot s (z, a)\} \diff a}{p (z) \{1 + t \cdot s (z)\}} + t \cdot \frac{\int p (y \mid z, a, m) p (z, a) s (y \mid z, a, m) \{1 + t \cdot s (z, a)\} \diff a}{p (z) \{1 + t \cdot s (z)\}} \\
& + t^{2} \cdot \frac{\int p (y \mid z, a, m) p (z, a) h (y \mid z, a, m) \{1 + t \cdot s (z, a)\} \diff a}{p (z) \{1 + t \cdot s (z)\}} \\
= & \ \int p (y \mid z, a, m) p (z, a) \diff a \diff z + t \cdot \int p (y \mid z, a, m) p (a \mid z) s (a \mid z) \diff a \\
& - t^{2} \cdot \frac{\int p (y \mid z, a, m) p (a \mid z) s (a \mid z) s (z) \diff a}{1 + t \cdot s (z)} + t^{2} \cdot \int p (y \mid z, a, m) p (a \mid z) h (y \mid z, a, m) \diff a \\
& + t^{3} \cdot \frac{\int p (y \mid z, a, m) p (a \mid z) h (y \mid z, a, m) \cdot s (a \mid z) \diff a}{1 + t \cdot s (z)} \\
& + t \cdot \frac{\int p (y \mid z, a, m) p (a \mid z) s (y \mid z, a, m) \{1 + t \cdot s (z, a)\} \diff a}{1 + t \cdot s (z)} \\
= & \ \int p (y \mid z, a, m) p (z, a) \diff a \diff z + t \cdot \int p (y \mid z, a, m) p (a \mid z) \{s (a \mid z) + s (y \mid z, a, m)\} \diff a \\
& - t^{2} \cdot \frac{\int p (y \mid z, a, m) p (a \mid z) s (a \mid z) \{s (z) - s (y \mid z, a, m)\} \diff a}{1 + t \cdot s (z)} + t^{2} \cdot \int p (y \mid z, a, m) p (a \mid z) h (y \mid z, a, m) \diff a \\
& + t^{3} \cdot \frac{\int p (y \mid z, a, m) p (a \mid z) h (y \mid z, a, m) \cdot s (a \mid z) \diff a}{1 + t \cdot s (z)},
\end{align*}
and
\begin{align*}
& \ p_{t}^{\ast} (y \mid m) = \frac{p_{t}^{\ast} (m, y)}{p_{t}^{\ast} (m)} = \frac{\int p_{t}^{\ast} (z, a, m, y) \diff a \diff z}{\int p_{t}^{\ast} (z, a) \diff a \diff z} \\
= & \ \int p (y \mid z, a, m) p (z, a) \{1 + t \cdot s (y \mid z, a, m)\} \{1 + t \cdot s (z, a)\} \diff a \diff z \\
& + t^{2} \cdot \int p (y \mid z, a, m) p (z, a) h (y \mid z, a, m) \{1 + t \cdot s (z, a)\} \diff a \diff z \\
= & \ \int p (y \mid z, a, m) p (z, a) \diff a \diff z + t \cdot \int p (y \mid z, a, m) p (z, a) \{s (z, a) + s (y \mid z, a, m)\} \diff a \diff z \\
& + t^{2} \cdot \int p (y \mid z, a, m) p (z, a) s (z, a) s (y \mid z, a, m) \diff a \diff z \\
& + t^{2} \cdot \int p (y \mid z, a, m) p (z, a) h (y \mid z, a, m) \diff a \diff z \\
& + t^{3} \cdot \int p (y \mid z, a, m) p (z, a) s (z, a) h (y \mid z, a, m) \diff a \diff z.
\end{align*}
Then $h$ can be identified as the solution to the following integral equation subject to $\bbE [h (Y \mid Z, A, M) \mid Z, A, M] = 0$ almost surely:
\begin{equation}
\label{one Verma Fredholm}
\begin{split}
& \{1 + t \cdot s (z)\} \int p (y \mid z, a, m) p (a \mid z) h (y \mid z, a, m) \diff a \\
& - \int p (y \mid z, a, m) p (a \mid z) s (a \mid z) \{s (z) - s (y \mid z, a, m)\} \diff a \\
& + t \cdot \int p (y \mid z, a, m) p (a \mid z) h (y \mid z, a, m) s (a \mid z) \diff a \\
& = \{1 + t \cdot s (z)\} \int p (y \mid z, a, m) p (z, a) s (z, a) s (y \mid z, a, m) \diff a \diff z \\
& + \{1 + t \cdot s (z)\} \int p (y \mid z, a, m) p (z, a) h (y \mid z, a, m) \diff a \diff z \\
& + t \cdot \{1 + t \cdot s (z)\} \int p (y \mid z, a, m) p (z, a) s (z, a) h (y \mid z, a, m) \diff a \diff z.
\end{split}
\end{equation}
Solving \eqref{one Verma Fredholm} yields a solution $h$ that can take the following form:
\begin{align*}
& \ p (y \mid z, a, m) h (y \mid z, a, m) \\
= & \ \frac{s (z) \int p (y \mid z, a, m) p (z, a) \{s (z, a) + s (y \mid z, a, m)\} \diff a \diff z - \int p (y \mid z, a, m) p (a \mid z) s (z, a) s (y \mid z, a, m) \diff a}{1 + t \cdot s (z, a)}.
\end{align*}
This completes the proof of the reverse direction, so $\calH_{V + O} = \Lambda_{\bbP}^{\perp}$.

For a pathwise differentiable parameter with nonparametric influence function $\varphi^{\mathrm{np}}$, projecting onto $\mathcal H_{\mathrm{V+O}}$ therefore gives the variance-improved influence function
$
\varphi^{\mathrm{np}}
-
\Pi\big(
\varphi^{\mathrm{np}}
\mid
\mathcal H_{\mathrm{V+O}}
\big).
$
A finite-dimensional implementation makes the need for a joint projection explicit. Let $\boldsymbol{\chi}_{\mathrm V}(O)$ and $\boldsymbol{\chi}_{\mathrm O}(O)$ denote finite collections of basis directions of the forms \eqref{eq:one_verma_verma_direction} and \eqref{eq:one_verma_ordinary_direction}, respectively, and stack them as
\begin{align*}
\boldsymbol{\chi}(O)
=
\begin{pmatrix}
\boldsymbol{\chi}_{\mathrm V}(O)\\
\boldsymbol{\chi}_{\mathrm O}(O)
\end{pmatrix}.
\end{align*}
The covariance matrix entering the projection in Theorem~\ref{theorem:locally_eff} then has the block form
\begin{align*}
\boldsymbol{\Sigma}
=
\bbE\left[
\boldsymbol{\chi}(O)\boldsymbol{\chi}(O)^\top
\right]
=
\begin{pmatrix}
\boldsymbol{\Sigma}_{\mathrm{VV}}
&
\boldsymbol{\Sigma}_{\mathrm{VO}}
\\
\boldsymbol{\Sigma}_{\mathrm{OV}}
&
\boldsymbol{\Sigma}_{\mathrm{OO}}
\end{pmatrix},
\end{align*}
where, in particular,
$\boldsymbol{\Sigma}_{\mathrm{VO}}
=
\bbE[
\boldsymbol{\chi}_{\mathrm V}(O)
\boldsymbol{\chi}_{\mathrm O}(O)^\top]$.
Writing
\begin{align*}
\boldsymbol b_{\mathrm V}
=
\bbE[\boldsymbol{\chi}_{\mathrm V}(O)
\varphi^{\mathrm{np}}(O)]
\qquad 
\text{and} 
\qquad 
\boldsymbol b_{\mathrm O}
=
\bbE[\boldsymbol{\chi}_{\mathrm O}(O)
\varphi^{\mathrm{np}}(O)], 
\end{align*}
the coefficients of the joint projection solve
\begin{align*}
\begin{pmatrix}
\boldsymbol{\Sigma}_{\mathrm{VV}}
&
\boldsymbol{\Sigma}_{\mathrm{VO}}
\\
\boldsymbol{\Sigma}_{\mathrm{OV}}
&
\boldsymbol{\Sigma}_{\mathrm{OO}}
\end{pmatrix}
\begin{pmatrix}
\boldsymbol{\alpha}_{\mathrm V}\\
\boldsymbol{\alpha}_{\mathrm O}
\end{pmatrix}
=
\begin{pmatrix}
\boldsymbol b_{\mathrm V}\\
\boldsymbol b_{\mathrm O}
\end{pmatrix}.
\end{align*}

By contrast, projecting separately onto the two component spaces would use $\boldsymbol{\Sigma}_{\mathrm{VV}}^{-1}\boldsymbol b_{\mathrm V}$
and $\boldsymbol{\Sigma}_{\mathrm{OO}}^{-1}\boldsymbol b_{\mathrm O}$,
thereby ignoring the off-diagonal covariance blocks. These separate coefficients generally do not solve the joint projection equations unless $\boldsymbol{\Sigma}_{\mathrm{VO}}=\boldsymbol 0$, or unless a special cancellation occurs. The conditional-centering operations in
\eqref{eq:one_verma_verma_direction} and \eqref{eq:one_verma_ordinary_direction} ensure that the resulting functions are orthogonal to the relevant tangent spaces; they do not imply that the ordinary and Verma direction families are orthogonal to one another. Consequently, the basis directions from both restrictions should be stacked and projected jointly. 

\subsection{Sequential Verma Constraints}
\label{app:ex:sequential}

Our final example illustrates a model with two Verma constraints induced by the same fixing operation. The ADMG in Figure~\ref{fig:sequential_constraints} contains two sequential variables, $Z_1$ and $Z_2$. After fixing $M$, the model implies
\begin{align}
Z_1 \indep Y \mid M
&\qquad [\phi_M(p)],
\nonumber\\
Z_2 \indep Y \mid Z_1,L_1,M
&\qquad [\phi_M(p)].
\label{eq:sequential_verma_constraints}
\end{align}
The two constraints share the post-fixing law but induce distinct families of weighted residual directions. 

\begin{figure}[!t]
\centering
\scalebox{0.9}{
\begin{tikzpicture}[>=stealth, node distance = 1.5cm]
    \tikzstyle{format} = [thick, circle, minimum size = 1.0mm, inner sep = 2pt]
    \tikzstyle{square} = [draw, thick, minimum size = 4.5mm, inner sep = 2pt]
    
    \begin{scope}[xshift = 0cm, yshift = 0cm]
        \path[->, very thick]
        
        node[shape = ellipse, draw = black] (z1) {$Z_{1}$}
        node[right of = z1, shape = ellipse, draw = black] (a1) {$L_{1}$}
        node[right of = a1, shape = ellipse, draw = black] (z2) {$Z_{2}$}
        node[right of = z2, shape = ellipse, draw = black] (a2) {$L_{2}$}
        node[right of = a2, shape = ellipse, draw = black] (m) {$M$}
        node[right of = m, shape = ellipse, draw = black] (y) {$Y$}
         
        (z1) edge[blue] (a1)
        (z1) edge[blue, bend right] (z2)
        (z1) edge[blue, bend right] (a2)
        (z1) edge[blue, bend right] (m)
        (a1) edge[blue] (z2)
        (a1) edge[blue, bend right] (a2)
        (a1) edge[blue, bend right] (m)
        (z2) edge[blue] (a2)
        (z2) edge[blue, bend right] (m)
        (a2) edge[blue] (m)
        (m) edge[blue] (y)

        (a1) edge[red, <->, bend left] (y)
        (a2) edge[red, <->, bend left] (y)

         node[below right of = z2, yshift=-0.5cm, xshift=-0.2cm] (cap) {(a)}
        ;
    \end{scope}
    \begin{scope}[xshift = 10cm, yshift = 0cm]
        \path[->, very thick]
        
        node[shape = ellipse, draw = black] (z1) {$Z_{1}$}
        node[right of = z1, shape = ellipse, draw = black] (a1) {$L_{1}$}
        node[right of = a1, shape = ellipse, draw = black] (z2) {$Z_{2}$}
        node[right of = z2, shape = ellipse, draw = black] (a2) {$L_{2}$}
        node[right of = a2, shape = rectangle, draw = black] (m) {$M$}
        node[right of = m, shape = ellipse, draw = black] (y) {$Y$}
         
        (z1) edge[blue] (a1)
        (z1) edge[blue, bend right] (z2)
        (z1) edge[blue, bend right] (a2)
        (a1) edge[blue] (z2)
        (a1) edge[blue, bend right] (a2)
        (z2) edge[blue] (a2)
        (m) edge[blue] (y)

        (a1) edge[red, <->, bend left] (y)
        (a2) edge[red, <->, bend left] (y)

        node[below right of = z2, yshift=-0.5cm, xshift=-0.2cm] (cap) {(b)}
        ;
    \end{scope}
\end{tikzpicture}
}
\caption{An ADMG with two sequential Verma constraints. (a) Original ADMG. (b) CADMG obtained after fixing $M$, under which $Z_1\indep Y\mid X,M\ [\phi_M(p)]$ and $Z_2\indep Y\mid X,Z_1,L_1,M\ [\phi_M(p)]$ become ordinary conditional independences.
}
\label{fig:sequential_constraints}
\end{figure}

Let $H_1=\emptyset$, $H_2=(Z_1,L_1)$, $B_M=(Z_1,L_1,Z_2,L_2)$, $\M=(M,B_M),$ and $\omega_M(\M) = \frac{1}{p(M\mid B_M)}$. For $j\in\{1,2\}$, write
\begin{align*}
r^{(j)}_{f_j}(O)
=
f_j(H_j,Z_j)
-
\bbE_{\phi_M}[f_j(H_j,Z_j)\mid H_j],
\end{align*}
and define
\begin{align}
\calP_j
=
\Big\{
\bbP:
\bbE\left[
\omega_M(\M)
r^{(j)}_{f_j}(O)
g_j(H_j,M,Y)
\right]
=0,
\quad
\forall f_j,g_j
\Big\},
\label{eq:sequential_component_models}
\end{align}
where
$f_j\in L^2(\bbP_{H_j,Z_j})$ and
$g_j\in L^2(\bbP_{H_j,M,Y})$.  Here and below,
\begin{align*}
\bbE_{\phi_M}[h(O)\mid C]
=
\frac{
\bbE[\omega_M(\M)h(O)\mid C]
}{
\bbE[\omega_M(\M)\mid C]
},
\end{align*}
for the conditioning sets $C$ appearing in the relevant expression.  The
full model is the intersection
\begin{align}
\calP(\calG)=\calP_1\cap\calP_2.
\label{eq:sequential_intersection_model}
\end{align}

To obtain the constraint-specific orthocomplement directions, set
\begin{align}
\widetilde\chi^{(j)}_{f_j,g_j}(O)
&=
\omega_M(\M)
r^{(j)}_{f_j}(O)
\Big\{
g_j(H_j,M,Y)-\bbE_{\phi_M}[g_j(H_j,M,Y)\mid H_j,M]
\Big\},
\nonumber\\
\chi^{(j)}_{f_j,g_j}(O)
&=
\widetilde\chi^{(j)}_{f_j,g_j}(O)
-
\bbE[\widetilde\chi^{(j)}_{f_j,g_j}(O)\mid B_M,M]
+
\bbE[\widetilde\chi^{(j)}_{f_j,g_j}(O)\mid B_M],
\qquad j=1,2.
\label{eq:sequential_verma_directions}
\end{align}
Proposition~\ref{thm:multiple_constraints} then gives
\begin{align}
\mathcal H_{1+2}
&\coloneqq
\overline{\sp}
\left\{
\chi^{(1)}_{f_1,g_1}(O)
+
\chi^{(2)}_{f_2,g_2}(O):
f_j,g_j\ \text{as above},\ j=1,2
\right\}
\subseteq
\Lambda_{\bbP}^{\perp}.
\label{eq:sequential_combined_orthocomplement}
\end{align}
As in the preceding example, equality holds whenever the tangent space of
the intersection model equals the intersection of the two
constraint-specific tangent spaces.

Next, we prove the other direction of \eqref{eq:sequential_combined_orthocomplement} by exhibiting an explicit one-dimensional parametric submodel $\{\bbP_{t}\} \subseteq \calP (\calG)$. In fact, this example is a special case of the problem considered in \citet{liu2021efficient}. There, the authors neither explicitly constructed nor proved the existence of parametric submodels, which we fill the gap in the representative example described by Figure~\ref{fig:sequential_constraints}. We decompose $p$ as
\begin{align*}
p (H_{2}, B_{M}, M, Y) = p (Y \mid H_{2}, B_{M}, M) p (M \mid H_{2}, B_{M}) p (H_{2}, B_{M}),
\end{align*}
and thus the post-fixing law is
\begin{align*}
p^{\ast} (H_{2}, B_{M}, M, Y) = p (Y \mid H_{2}, B_{M}, M) p^{\ast} (M) p (H_{2}, B_{M}),
\end{align*}
where $p^{\ast} (M)$ is an arbitrary known probability density function of $M$. Similar to the previous example, we consider the following perturbation.
\begin{align*}
& p_{t} (y \mid z_{1}, l_{1}, z_{2}, l_{2}, m) = p (y \mid z_{1}, l_{1}, z_{2}, l_{2}, m) \{1 + t \cdot s (y \mid z_{1}, l_{1}, z_{2}, l_{2}, m) + t^{2} \cdot h (y \mid z_{1}, l_{1}, z_{2}, l_{2}, m)\}, \\
& p_{t} (z_{1}, l_{1}, z_{2}, l_{2}) = p (z_{1}, l_{1}, z_{2}, l_{2}) \{1 + t \cdot s (z_{1}, l_{1}, z_{2}, l_{2})\}.
\end{align*}
Based on the two constraints on $\bbP$, we have
\begin{align*}
& \int p (y \mid z_{1}, l_{1}, z_{2}, l_{2}, m) p (z_{1}, l_{1}, z_{2}, l_{2}) \diff z_{1} \diff l_{1} \diff z_{2} \diff l_{2} = \int p (y \mid z_{1}, l_{1}, z_{2}, l_{2}, m) p (l_{1}, z_{2}, l_{2} \mid z_{1}) \diff l_{1} \diff z_{2} \diff l_{2}, \\
& \int p (y \mid z_{1}, l_{1}, z_{2}, l_{2}, m) p (z_{2}, l_{2} \mid z_{1}, l_{1}) \diff z_{2} \diff l_{2} = \int p (y \mid z_{1}, l_{1}, z_{2}, l_{2}, m) p (l_{2} \mid z_{1}, l_{1}, z_{2}) \diff l_{2}.
\end{align*}
Since $s \in \Lambda_{\bbP}$, we must also conclude that
\begin{align*}
& \int p (y \mid z_{1}, l_{1}, z_{2}, l_{2}, m) p (z_{1}, l_{1}, z_{2}, l_{2}) \{s (y \mid z_{1}, l_{1}, z_{2}, l_{2}, m) + s (z_{1}, l_{1}, z_{2}, l_{2})\} \diff z_{1} \diff l_{1} \diff z_{2} \diff l_{2} \\
& = \int p (y \mid z_{1}, l_{1}, z_{2}, l_{2}, m) p (l_{1}, z_{2}, l_{2} \mid z_{1}) \{s (y \mid z_{1}, l_{1}, z_{2}, l_{2}, m) + s (l_{1}, z_{2}, l_{2} \mid z_{1})\} \diff l_{1} \diff z_{2} \diff l_{2}, \\
& \int p (y \mid z_{1}, l_{1}, z_{2}, l_{2}, m) p (z_{2}, l_{2} \mid z_{1}, l_{1}) \{s (y \mid z_{1}, l_{1}, z_{2}, l_{2}, m) + s (z_{2}, l_{2} \mid z_{1}, l_{1})\} \diff z_{2} \diff l_{2} \\
& = \int p (y \mid z_{1}, l_{1}, z_{2}, l_{2}, m) p (l_{2} \mid z_{1}, l_{1}, z_{2}) \{s (y \mid z_{1}, l_{1}, z_{2}, l_{2}, m) + s (l_{2} \mid z_{1}, l_{1}, z_{2})\} \diff l_{2}.
\end{align*}
Let $X := \{Z_{1}, L_{1}, Z_{2}, L_{2}\}$. Under $\bbP_{t}$, following the same calculations, the corresponding constraints on $h$ are
\begin{equation}
\label{sequential Verma Fredholm}
\begin{split}
& \int p_{y} \{ s_{x} s_{y} + (1+ t \cdot s_{x}) h\} p (l_{1}, z_{2}, l_{2} \mid z_{1}) \diff l_{1} \diff z_{2} \diff l_{2} \\
& \qquad = \{1+ t \cdot s (z_{1})\} \int p_{y} \{s_{x} s_{y} + (1+ t \cdot s_{x}) h\} p (z_{1}, l_{1}, z_{2}, l_{2}) \diff z_{1} \diff l_{1} \diff z_{2} \diff l_{2} \\
& \qquad \quad + s (z_{1}) \int p_{y} (s_{x} + s_{y}) p (z_{1}, l_{1}, z_{2}, l_{2}) \diff z_{1} \diff l_{1} \diff z_{2} \diff l_{2}, \\
& \{1+t \cdot s(z_{1}, l_{1})\} \int p_{y} \{s_{x} s_{y} + (1 + t \cdot s_{x}) h\} p (l_{2} \mid z_{1}, l_{1}, z_{2}) \diff l_{2} \\
& \qquad + s (z_{1}, l_{1}) \int p_{y} (s_{x} + s_{y}) p (l_{2} \mid z_{1}, l_{1}, z_{2}) \diff l_{2} \\
& \qquad = \{1 + t \cdot s (z_{1}, l_{1}, z_{2})\} \int p_{y} \{s_{x} s_{y} + (1+ t \cdot s_{x}) h\} p (z_{2}, l_{2} \mid z_{1}, l_{1}) \diff z_{2} \diff l_{2} \\
& \qquad \quad + s (z_{1}, l_{1}, z_{2}) \int p_{y} (s_{x} + s_{y}) p(z_{2}, l_{2} \mid z_{1}, l_{1}) \diff z_{2} \diff l_{2},
\end{split}
\end{equation}
where we introduce the following short-hand notation:
\begin{align*}
p_{y} := p (y \mid z_{1}, l_{1}, z_{2}, l_{2}, m), \quad s_{y} := s (y \mid z_{1} , l_{1}, z_{2}, l_{2}, m) \\
s_{x} := s (z_{1}, l_{1}, z_{2}, l_{2}), \quad h := h (y \mid z_{1}, l_{1}, z_{2}, l_{2}, m).
\end{align*}
We exhibit a solution $h$ to equation~\eqref{sequential Verma Fredholm} to complete the proof of the tangent space. Let
\begin{align*}
& f_{1} (m, y) := \int p_{y} (s_{y} + s_{x}) p (z_{1}, l_{1}, z_{2}, l_{2}) \diff z_{1} \diff l_{1} \diff z_{2} \diff l_{2}, \\
& f_{2} (z_{1}, l_{1}, m, y) = \int p_{y} (s_{y} + s (z_{2}, l_{2} \mid z_{1}, l_{1})) p (z_{2},l_{2} \mid z_{1},l_{1}) \diff z_{2} \diff l_{2},
\end{align*}
Define
\begin{align*}
\omega (z_{1}, l_{1}, z_{2}, m, y) := f_{2} s (z_{1}, l_{1}, z_{2}) + \{s (z_{1}) f_{1} - f_{2} s (z_{1}, l_{1})\} \frac{1+t \cdot s (z_{1}, l_{1}, z_{2})}{1+ t \cdot s (z_{1}, l_{1})}.
\end{align*}
Solving \eqref{sequential Verma Fredholm} therefore yields a solution $h$ that can take the following form:
\begin{align*}
p (y \mid x, m) h (y \mid x, m) = \frac{\omega (z_{1}, l_{1}, z_{2}, m, y) - p_{y} s_{x} s_{y}}{1 + t \cdot s (z_{1},l_{1},z_{2},l_{2})}.
\end{align*}
Because $\int f_{1} (m, y) \diff y = \int f_{2} (z_{1}, l_{1}, m, y) \diff y = 0$ and $\int p_{y} s_{y} \diff y = 0$, $h$ is a valid second-order perturbation. This completes the proof of the reverse direction.

For any pathwise differentiable parameter with nonparametric influence function $\varphi^{\mathrm{np}}$, the residual
$
\varphi^{\mathrm{np}}
-
\Pi\left(
\varphi^{\mathrm{np}}
\mid
\mathcal H_{1+2}
\right)
$
is a valid influence function under the two restrictions and has variance no greater than that of $\varphi^{\mathrm{np}}$; it is efficient when $\mathcal H_{1+2}=\Lambda_{\bbP}^{\perp}$. A finite-dimensional approximation is obtained by letting $\boldsymbol{\chi}_j(O)$ collect selected basis directions from the $j$th family in \eqref{eq:sequential_verma_directions}, stacking them as
\begin{align*}
\boldsymbol{\chi}(O)
=
\left\{
\boldsymbol{\chi}_1(O)^\top,
\boldsymbol{\chi}_2(O)^\top
\right\}^\top,
\end{align*}
and applying the joint least-squares projection in Theorem~\ref{theorem:locally_eff}. This projection uses the full block covariance matrix
\begin{align*}
\boldsymbol{\Sigma}
&=
\bbE\left[
\boldsymbol{\chi}(O)\boldsymbol{\chi}(O)^\top
\right]
=
\begin{pmatrix}
\boldsymbol{\Sigma}_{11} & \boldsymbol{\Sigma}_{12}\\
\boldsymbol{\Sigma}_{21} & \boldsymbol{\Sigma}_{22}
\end{pmatrix},
\quad \text{where } \
\boldsymbol{\Sigma}_{jk}
=
\bbE\left[
\boldsymbol{\chi}_j(O)\boldsymbol{\chi}_k(O)^\top
\right].
\end{align*}
Although the two direction families arise from the same fixing operation and use the same fixing weight, this does not imply that $\boldsymbol{\Sigma}_{12}$ vanishes. The construction ensures that each family lies in the orthocomplement associated with its corresponding restriction, but it does not imply that the two families are mutually orthogonal. Indeed, both families involve the same mediator and outcome variables and are indexed by overlapping histories, with $H_1\subset H_2$, so their cross-covariances may be nonzero. The optimal projection must therefore retain the off-diagonal blocks $\boldsymbol{\Sigma}_{12}$ and $\boldsymbol{\Sigma}_{21}$. Only when these blocks vanish does the joint projection reduce to the sum of the two separate projections.

\clearpage
\section{Additional Simulation Details and Results}
\label{app:sims}

This appendix provides the complete simulation details and additional results corresponding to Section~\ref{sec:sims}. The main text presents representative results for estimation of $\psi_0(\bbP)=\bbE[Y(0)]$ under the extended front-door model. Here we provide the complete simulation study for both the extended front-door and Napkin models, including results for the treatment-specific means $\psi_0(\bbP)$ and $\psi_1(\bbP)$ and the average treatment effect (ATE) $\tau(\bbP) = \psi_1(\bbP) - \psi_0(\bbP)$.

All simulations follow the same general implementation strategy. For each data-generating mechanism (DGP), we first generated a large synthetic population from the known DGP to approximate the population quantities appearing in the asymptotic efficiency theory. Oracle nuisance functions, projection coefficients, and theoretical asymptotic variances were then computed from this synthetic population. Finite-sample performance was evaluated using independent Monte Carlo replications, within which nuisance functions were estimated from the observed data before constructing each estimator. Throughout the simulation study, empirical performance is summarized by the empirical $n$-scaled variance, and compared with the corresponding theoretical asymptotic variance obtained from the appropriate influence function. Figures displaying finite-sample performance report Monte Carlo confidence intervals for the empirical $n$-scaled variances together with the corresponding theoretical asymptotic variances.

For both graphical models we consider three simulation experiments.
Experiment~1 evaluates finite-sample performance under a fixed DGP by comparing empirical $n$-scaled variances with their corresponding theoretical asymptotic variances over increasing sample sizes.
Experiment~2 investigates how the relative asymptotic variances of the five competing estimators change across a broad collection of DGPs generated by varying interpretable features of the treatment and outcome mechanisms.
Experiment~3 studies the finite-dimensional approximation to the Verma orthocomplement by progressively enriching the outcome basis used for the projection estimator.

The two graphical models share a common simulation philosophy but differ in the underlying structural equations. Sections~\ref{app:sims:frontdoor} and~\ref{app:sims:napkin} therefore describe only the graph-specific components of each simulation design.

\subsection{Simulation studies for the extended front-door model}
\label{app:sims:frontdoor}

\subsubsection{Data-generating mechanism}
\label{app:sim:frontdoor:dgp}

This section provides the complete specification of the extended front-door simulation design described in Section~\ref{sec:sims}. 

The observed data consist of $O=(X,Z,A,M,Y)$, where $X$, $Z$, $A$, and $M$ are binary and $Y$ is continuous, and are generated according to the latent-variable extended front-door graph shown in Figure~\ref{fig:front-door}(a), with latent variable $U$. Let
\begin{align*}
U 
&\sim \operatorname{Uniform}(0,1), 
\\  
X 
&\sim
\operatorname{Bernoulli}\!\big(
\operatorname{expit}
\left\{
\beta_{X0}
\right\}
\big),
\\
Z \mid X
&\sim
\operatorname{Bernoulli}
\left(
\operatorname{expit}
\{
\beta_{Z0}
+
\beta_{ZX}X
\}
\right),
\\
A \mid U, X, Z
&\sim
\operatorname{Bernoulli}
\left(
\operatorname{expit}
\{
\beta_{A0}
+
\beta_{AZ}Z
+
\beta_{AX}X
+
\beta_{AU}(U-\tfrac12)
+
\eta_{\mathrm{int}}ZX
\}
\right),
\\
M \mid X, Z, A
&\sim
\operatorname{Bernoulli}
\left(
\operatorname{expit}
\{
\beta_{M0}
+
\beta_{MA}A
+
\beta_{MZ}Z
+
\beta_{MX}X
+
\beta_{MAZ}AZ
\}
\right),
\end{align*}

The outcome model was constructed so that perturbations of the conditional mean, conditional variance, and conditional skewness could be varied independently through arm-specific tuning parameters. Specifically, six nonnegative tuning parameters, $\eta_{\mu,T}, \eta_{\mu,C}, \eta_{\sigma,T}, \eta_{\sigma,C}, \eta_{\kappa,T}$, and $\eta_{\kappa,C}$, govern treatment-specific perturbations of the conditional mean, conditional variance, and conditional skewness, respectively. The subscripts $T$ and $C$ denote the target treatment level $a_0$ and its contrast $1-a_0$. Setting a tuning parameter equal to zero removes the corresponding treatment-specific perturbation while leaving the remaining components of the outcome model unchanged. This construction permits a systematic investigation of how progressively richer Verma basis functions exploit increasingly rich features of the conditional outcome distribution. The continuous outcome is then generated according to
\begin{align}
Y
=
\mu(U,M,X)
+
\sigma(U,M,X) \, 
\varepsilon(U,M,X),
\label{eq:frontdoor_outcome}
\end{align}
where the three components are specified below.

\vspace{0.25cm} \noindent 
\underline{\bf Conditional mean.} For each treatment level $a\in\{0,1\}$, we define treatment-specific mediator profile 
\begin{align}
q_a(m,x)
&=
\sum_{z=0}^1
\Pr(M=m\mid A=a,Z=z,X=x)
\Pr(Z=z\mid X=x).
\label{eq:frontdoor_profile}
\end{align}
For the estimand of interest, define the target treatment level by $a_T\in\{0,1\}$,  and let $a_C=1-a_T$ denote the contrast treatment level. We write $q_T(m,x)=q_{a_T}(m,x)$ and $q_C(m,x)=q_{a_C}(m,x)$, for the corresponding mediator profiles under the target and contrast interventions. The conditional mean of the outcome is generated as
\begin{align}
\mu(U,M,X)
&=
\beta_{Y0}
+
\beta_{YM}M
+
\beta_{YX}X
+
\beta_{YU}U
+
\beta_{YU^2}
\left(
U^2-\frac13
\right)
\nonumber\\
&
\quad
+ 
\Big\{
\eta_{\mu_T}
q_T(M,X)
+
\eta_{\mu_C}
q_C(M,X) \Big\} \, 
g_\mu(U),
\label{eq:frontdoor_mean}
\end{align}
where
$g_\mu(U) = \gamma_{\mu0} + \gamma_{\mu1} \left(U-\frac12\right) + \gamma_{\mu2} \left(U^2-\frac13\right)$, and the parameters $\eta_{\mu_T}$ and $\eta_{\mu_C}$ control the magnitude of treatment-specific perturbations to the conditional mean. 

\vspace{0.25cm} \noindent 
\underline{\bf Conditional variance.} The conditional standard deviation is generated through a log-linear model,
\begin{align}
\log\sigma(U,M,X)
&=
\beta_{\sigma0}
+
\beta_{\sigma U}
\left(U-\frac12\right)
\nonumber\\
&
\quad
+
\Big\{
\eta_{\sigma,T}
q_T(M,X)
+
\eta_{\sigma,C}
q_C(M,X)
\Big\} \, 
g_\sigma(U),
\label{eq:frontdoor_sd}
\end{align}
where $g_\sigma(U) = \gamma_{\sigma0} + \gamma_{\sigma1} \left(U-\frac12\right) + \gamma_{\sigma2} \left(U^2-\frac13\right)$. To avoid numerically extreme variances, the resulting log-standard deviation is truncated to the interval $[-3, 3]$ before exponentiation.

\vspace{0.25cm} \noindent 
\underline{\bf Skewness loading.} Finally, skewness is introduced through the loading
\begin{align}
\lambda(U,M,X)
=
\Big\{ \eta_{\kappa,T}
q_T(M,X)
+
\eta_{\kappa,C}
q_C(M,X)
\Big\} \, 
g_\lambda(U),
\label{eq:frontdoor_lambda}
\end{align}
where
$g_\lambda(U) = \gamma_{\lambda0} + \gamma_{\lambda1} \left(U-\frac12\right) + \gamma_{\lambda2} \left(U^2-\frac13\right)$. The parameters $\eta_{\kappa,T}$ and $\eta_{\kappa,C}$
control the strength of treatment-specific departures from conditional Gaussianity.

Let $\varepsilon_N \sim \operatorname{Normal}(0,1)$, and define the standardized chi-square random variable $\varepsilon_\chi = \frac{\chi_3^2-3}{\sqrt6}$. The outcome error is constructed as
\begin{align}
\varepsilon(U, M, X)
=
\frac{
\varepsilon_N
+
\lambda(U,M,X)\varepsilon_\chi
}
{
\sqrt{1+\lambda(U,M,X)^2}
},
\label{eq:frontdoor_error}
\end{align}
which has conditional mean zero and conditional variance one while allowing the conditional third moment to vary with $\lambda(U,M,X)$.

The coefficients defining the baseline DGP remain fixed throughout all three experiments and are listed in Table~\ref{tab:frontdoor_baseline_parameters}. The tuning parameters used to generate the simulation grids are summarized in Table~\ref{tab:frontdoor_tuning_parameters}. Experiment~1 fixes the tuning parameters at the baseline configuration, 
\begin{align*}
\eta_{\mathrm{int}}=-0.5,\quad
\eta_{\mu_T}=\eta_{\mu_C}=1.0,\quad
\eta_{\sigma,T}=\eta_{\sigma,C}=1.5,\quad
\eta_{\kappa,T}=\eta_{\kappa,C}=2.0.
\end{align*}
whereas Experiments~2 and~3 vary them over the grids reported in the table. The Cartesian product of the tuning parameter values yields $\mathbf{2,187}$ \textbf{candidate DGPs} in Experiment~2, and  $\mathbf{1,296}$ \textbf{candidate DGPs} in Exepriment~3.

\begin{table}[!t]
\caption{Fixed coefficients defining the baseline data-generating mechanism for the extended front-door simulations. These coefficients remained unchanged throughout all simulation studies.}
\label{tab:frontdoor_baseline_parameters}
\begin{tabular}{lll}
\toprule
Component & Parameter(s) & Value \\
\midrule
$X$
&
$(\beta_{X0})$
&
$(-0.15)$
\\[0.6ex]

$Z \mid X$
&
$(\beta_{Z0},\beta_{ZX})$
&
$(-0.55, 1.00)$
\\[0.6ex]

$A \mid U, X, Z$
&
$(\beta_{A0},\beta_{AZ},\beta_{AX},\beta_{AU})$
&
$(-0.20,\;1.10,\;0.45,\;1.00)$
\\[0.6ex]

$M \mid X, Z, A$
&
$(\beta_{M0},\beta_{MA},\beta_{MZ},\beta_{MX},\beta_{MAZ})$
&
$(-0.35,\;1.20,\;0.55,\;0.50,\;0.30)$
\\[0.6ex]

$Y \mid U, X, M$: $\mu(U, M, X)$
&
$(\beta_{Y0},\beta_{YM},\beta_{YX},\beta_{YU},\beta_{YU^2})$
&
$(3.00,\;1.00,\;0.40,\;0.70,\;0.25)$
\\[0.6ex]

&
$(\gamma_{\mu0},\gamma_{\mu1},\gamma_{\mu2})$
&
$(0,\;0.40,\;0.40)$
\\[0.6ex]

$Y \mid U, X, M$: $\sigma(U, M, X)$
&
$(\beta_{\sigma0},\beta_{\sigma U})$
&
$(-0.10,\;0.15)$
\\[0.6ex]

&
$(\gamma_{\sigma0},\gamma_{\sigma1},\gamma_{\sigma2})$
&
$(0,\;0.90,\;0.70)$
\\[0.6ex] 

$Y \mid U, X, M$: $\lambda(U, M, X) $
& 
$(\gamma_{\lambda0},\gamma_{\lambda1},\gamma_{\lambda2})$
&
$(0,\;0.750,\;0.90)$
\\
\bottomrule
\end{tabular}
\end{table}

\begin{table}[!t]
\centering
\caption{Simulation tuning parameters varied across the grid-search experiments. The Cartesian product of the parameter values defines the candidate data-generating mechanisms considered in Experiments~2 and~3.}
\label{tab:frontdoor_tuning_parameters}
\begin{tabular}{llll}
\toprule
Distributional feature & Parameter & Experiment~2 & Experiment~3\\
\midrule
Treatment mechanism: interaction
&
$\eta_{\mathrm{int}}$
&
$-0.5,\;0.0,\;0.5$
&
--- \\
Outcome mechanism: target mean
&
$\eta_{\mu_T}$
&
$0.0,\;0.5,\;1.0$
&
$0.0,\;0.5,\;1.0$
\\
\textcolor{white}{Outcome mechanism:} contrast mean
&
$\eta_{\mu_C}$
&
$0.0,\;0.5,\;1.0$
&
$-1.0,\;0.0,\;1.0$
\\
\textcolor{white}{Outcome mechanism} target variance
&
$\eta_{\sigma,T}$
&
$0.0,\;1.0,\;1.5$
&
$1.0,\;2.0,\;3.0,\;4.0$
\\
\textcolor{white}{Outcome mechanism} contrast variance
&
$\eta_{\sigma,C}$
&
$0.0,\;1.0,\;1.5$
&
$-1.0,\;0.0,\;1.0$
\\
\textcolor{white}{Outcome mechanism} target skewness
&
$\eta_{\kappa,T}$
&
$0.0,\;1.0,\;2.0$
&
$1.0,\;2.0,\;3.0,\;4.0$
\\
\textcolor{white}{Outcome mechanism} contrast skewness
&
$\eta_{\kappa,C}$
&
$0.0,\;1.0,\;2.0$
&
$-1.0,\;0.0,\;1.0$
\\
\bottomrule
\end{tabular}
\end{table}

\subsubsection{Oracle calculations and simulation implementation}
\label{app:frontdoor:implementation}

Throughout the simulations we considered estimation of the treatment-specific means and the average treatment effect using the five one-step estimators described in Section~\ref{sec:front-door}: the two fixed-$Z$ indexed estimators, their equally weighted and optimally weighted affine combinations, and the Verma projection estimator.

The true treatment-specific mean was computed from the identifying front-door functional derived in Appendix~\ref{app:id_npEIF_frontdoor},
\begin{align}
\psi_a(\bbP)
=
\sum_{x,m,z}
p(x)\,
p(z\mid x)\,
p(m\mid A=a,z,x)
\left\{
\sum_{a'}
p(a'\mid x,z')
\bbE[Y\mid m,a',z',x]
\right\},
\label{eq:true_frontdoor_mean}
\end{align}
where
$
\bbE(Y\mid m,a,z,x)
=
\int
\mu(u,m,x)\,
p(u\mid a,z,x)\,du,
$
and
$
p(a\mid x,z)
=
\int
p(a\mid x,z,u)\,
p(u)\,du.
$
The posterior density $p(u\mid a,z,x)$ was obtained from Bayes' rule under the known data-generating mechanism.

Oracle quantities were computed directly from the known latent-variable DGP using deterministic Gauss--Legendre quadrature over $U\in[0,1]$. This numerical integration was used to evaluate the true causal parameters, all observed-data nuisance functions, and the higher conditional moments required for constructing the Verma basis functions. The mediator and covariate distributions were available analytically and therefore required no numerical approximation.

For each candidate DGP, oracle calculations were based on an independent synthetic population of size $250{,}000$, which was used to approximate population nuisance functions, projection coefficients, and theoretical asymptotic variances. Finite-sample performance was evaluated separately using $1{,}250$ independent Monte Carlo replications over an increasing sequence of sample sizes ranging from $250$ to $16,000$ observations. Unless otherwise stated, nuisance functions in each Monte Carlo sample were estimated using correctly specified saturated regression models.

\subsubsection{Experiment 1: Finite-sample performance}

Experiment~1 evaluates finite-sample performance under the baseline DGP. The tuning parameters were fixed at the configuration reported above, and independent Monte Carlo samples were generated over an increasing sequence of sample sizes. For each sample, nuisance functions were estimated using correctly specified saturated regression models, and the empirical $n$-scaled variance of each estimator was compared with its theoretical asymptotic variance.

Table~\ref{tab:frontdoor_var_all_estimands} reports the theoretical asymptotic variances of the five estimators for the three estimands. Across all estimands, combining the indexed estimators substantially improves efficiency relative to either fixed-$Z$ estimator, while the Verma projection estimator achieves the smallest theoretical asymptotic variance. The magnitude of the additional improvement depends on the estimand. Relative to the equally weighted indexed estimator, the Verma projection estimator yields roughly 9\% greater efficiency for the ATE and $\bbE[Y(1)]$ and roughly 15\% for $\bbE[Y(0)]$.

Figures~\ref{fig:frontdoor_var_ate_n_scaled_variance} and \ref{fig:frontdoor_var_mean_1_n_scaled_variance} present the corresponding empirical $n$-scaled variances. Across all estimands, the empirical variances converge toward the theoretical asymptotic variances while preserving the same ordering of the competing estimators.

\subsubsection{Experiment 2: Efficiency across data-generating mechanisms}

Experiment~2 examines the relative efficiency of the competing estimators across the 2,187 candidate DGPs generated from the tuning grid in Table~\ref{tab:frontdoor_tuning_parameters}. For each DGP in the tuning grid, we compute the theoretical asymptotic variances of the five estimators and the corresponding efficiency gains. Among DGPs with finite, rankable variance values, configurations are ranked primarily by the efficiency gain of the Verma projection estimator over
the optimally weighted indexed estimator, $V_{\rm indexed,opt}/V_{\rm Verma}-1$. When these gains are nearly tied, we break ties using the efficiency gain over the equally weighted indexed estimator, $V_{\rm indexed,equal}/V_{\rm Verma}-1$. 

Table~\ref{tab:frontdoor_var_grid_all_estimands} summarizes the twenty highest-ranked data-generating mechanisms for the average treatment effect, $\bbE[Y(0)]$, and $\bbE[Y(1)]$, together with the corresponding tuning parameters. Figures~\ref{fig:frontdoor_var_grid_ate_top_20_method_variances_split_x} and \ref{fig:frontdoor_var_grid_mean_1_top_20_method_variances_split_x} display the theoretical asymptotic variances of the five competing estimators for these same configurations for the ATE and $\bbE[Y(1)]$, respectively. 

Across all three estimands, the ordering of the competing estimators is remarkably stable. The Verma projection estimator consistently attains the smallest theoretical asymptotic variance, followed by the optimally weighted indexed estimator, the equally weighted indexed estimator, and the two fixed-$Z$ estimators.

The tuning parameters associated with the highest-ranked DGPs differ across estimands. For the ATE, the largest gains typically occur under strong treatment-specific perturbations of the conditional variance and skewness. For the treatment-specific means, the most influential tuning parameters depend on the target treatment arm. Thus, although the Verma projection consistently improves upon the indexed estimators, the magnitude of this improvement depends on both the estimand and the underlying outcome mechanism.
\subsubsection{Experiment 3: Effect of basis richness}

Experiment~3 investigates how enriching the finite-dimensional basis used to approximate the Verma orthocomplement affects the theoretical asymptotic variance of the resulting Verma projection estimator. The comparison is performed over the 1,296 candidate DGPs generated from the tuning grid in Table~\ref{tab:frontdoor_tuning_parameters}. For each DGP in the tuning grid, we compute the theoretical asymptotic variances for the $Y$-only basis, the $(Y,Y^2)$ basis, and the full $(Y,Y^2,Y^3)$ basis. Among DGPs with finite, rankable variance values, configurations are ranked by the efficiency gain of the full Verma basis over the $Y$-only basis, $V_Y/V_{(Y,Y^2,Y^3)}-1$. 

Table~\ref{tab:frontdoor_basis_grid_all_estimands} reports the twenty highest-ranked configurations together with the theoretical variances under each basis specification and the corresponding relative efficiency gains. Figures~\ref{fig:frontdoor_basis_grid_ate_top_20_method_variances_split_x} and \ref{fig:frontdoor_basis_grid_mean_1_top_20_method_variances_split_x} display the theoretical variances for these configurations separately for the average treatment effect and $\bbE[Y(1)]$.

Across all three estimands, enlarging the basis produces the expected monotone reduction in theoretical asymptotic variance. The improvement depends on both the estimand and the underlying DGP. The largest gains occur when the target-arm conditional variance perturbation is strongest, with nearly all of the highest-ranked configurations satisfying $\eta_{\sigma,T}=4$.

Table~\ref{tab:frontdoor_basis_grid_all_estimands} further shows that both quadratic and cubic outcome terms contribute to the additional efficiency, with the cubic enrichment providing an improvement comparable to that obtained from adding the quadratic basis. The overall gain is largest for $\bbE[Y(1)]$ (approximately $1.14\%$ relative to the $\{Y\}$ basis), followed by the ATE (approximately $1.10\%$) and $\bbE[Y(0)]$ (approximately $0.56\%$). These results indicate that richer basis expansions are most beneficial in data-generating mechanisms exhibiting pronounced treatment-specific heterogeneity in the conditional outcome distribution.

\clearpage

\begin{table}[htbp]
\centering
\scriptsize
\setlength{\tabcolsep}{2.1pt}
\renewcommand{\arraystretch}{0.92}
\caption{Theoretical variances and relative efficiency gains for the five estimators across the three parameters under the extended front-door model. Relative efficiency gains are computed with respect to the reference method indicated by each column. }
\label{tab:frontdoor_var_all_estimands}
\resizebox{\textwidth}{!}{%
\begin{tabular}{llccccc}
\toprule
 & Method & \shortstack{Theoretical\\variance} & \shortstack{Gain vs.\\fixed $Z=0$} & \shortstack{Gain vs.\\fixed $Z=1$} & \shortstack{Gain vs.\\equal indexed} & \shortstack{Gain vs.\\optimal indexed} \\
\midrule
$\mathrm{ATE}$ & Fixed $Z=0$ & 2.680 & 0.000\% & 28.812\% & -25.831\% & -27.121\% \\
 & Fixed $Z=1$ & 3.453 & -22.368\% & 0.000\% & -42.421\% & -43.422\% \\
 & Indexed (equal) & 1.988 & 34.828\% & 73.675\% & 0.000\% & -1.739\% \\
 & Indexed (optimal) & 1.953 & 37.213\% & 76.748\% & 1.769\% & 0.000\% \\
 & Verma projection & 1.832 & 46.277\% & 88.423\% & 8.492\% & 6.606\% \\
\addlinespace[6pt]
$\mathbb{E}[Y(1)]$ & Fixed $Z=0$ & 11.681 & 0.000\% & -21.185\% & -53.778\% & -54.428\% \\
 & Fixed $Z=1$ & 9.206 & 26.880\% & 0.000\% & -41.354\% & -42.178\% \\
 & Indexed (equal) & 5.399 & 116.348\% & 70.514\% & 0.000\% & -1.405\% \\
 & Indexed (optimal) & 5.323 & 119.432\% & 72.944\% & 1.425\% & 0.000\% \\
 & Verma projection & 4.992 & 133.990\% & 84.418\% & 8.154\% & 6.634\% \\
\addlinespace[6pt]
$\mathbb{E}[Y(0)]$ & Fixed $Z=0$ & 7.861 & 0.000\% & 20.177\% & -40.453\% & -40.957\% \\
 & Fixed $Z=1$ & 9.447 & -16.789\% & 0.000\% & -50.451\% & -50.870\% \\
 & Indexed (equal) & 4.681 & 67.936\% & 101.820\% & 0.000\% & -0.845\% \\
 & Indexed (optimal) & 4.641 & 69.368\% & 103.541\% & 0.853\% & 0.000\% \\
 & Verma projection & 4.082 & 92.566\% & 131.420\% & 14.666\% & 13.697\% \\
\bottomrule
\end{tabular}%
}
\begin{flushleft}
\footnotesize
\textit{Note:} Each gain column reports $100\{V_{\mathrm{reference}}/V_{\mathrm{method}}-1\}\%$, with the reference estimator named in the column header. Positive values indicate smaller theoretical variance than the reference.
\end{flushleft}
\end{table}

\begin{figure}[t]
\centering
\includegraphics[scale=0.7]{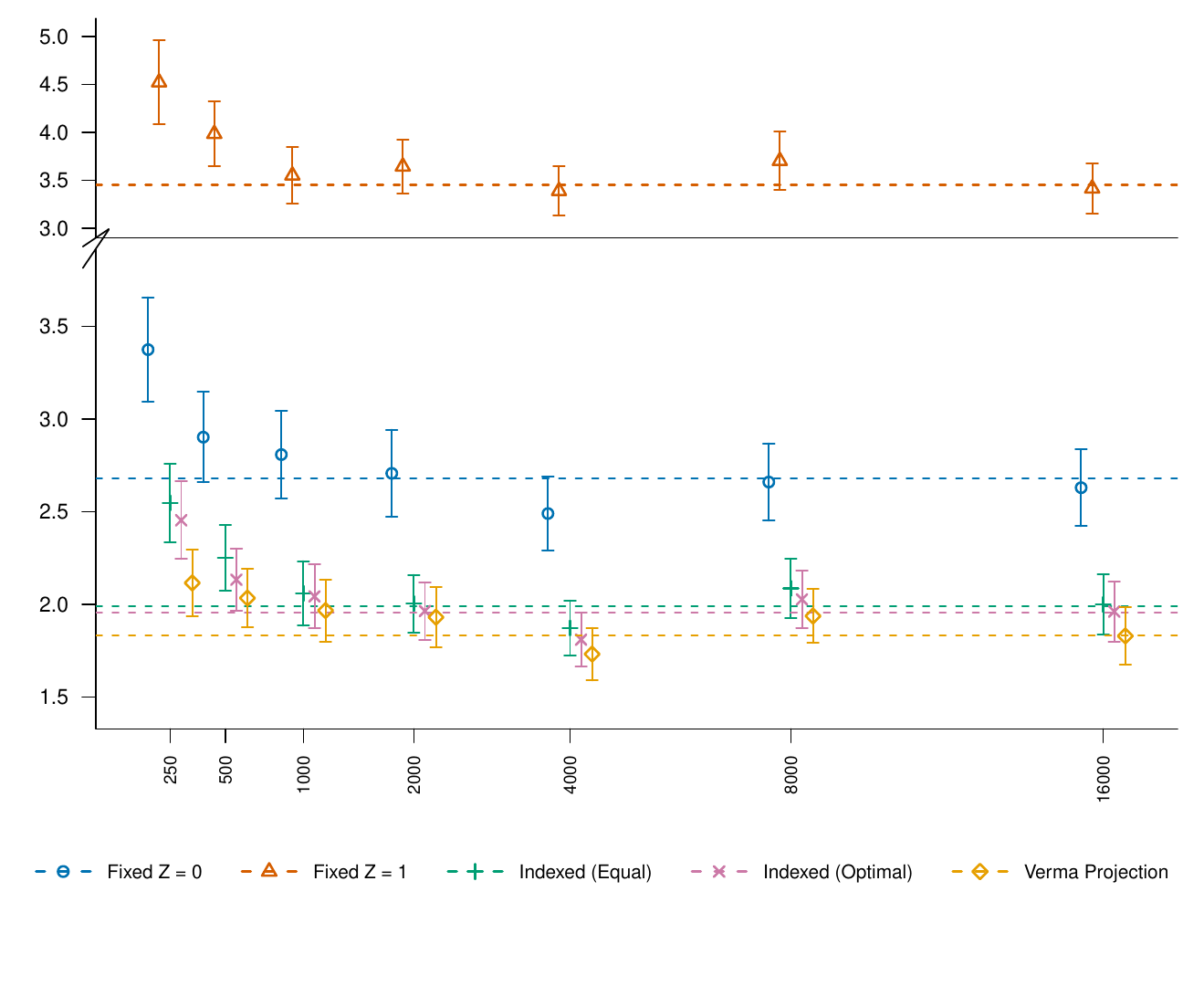}
\vspace{-1cm}
\caption{Finite-sample performance for $\text{ATE} = \psi_1(\bbP) - \psi_0(\bbP)$ under the extended front-door model. Points and vertical intervals show the empirical $n$-scaled variance and its Monte Carlo confidence interval, respectively; horizontal dashed lines show the corresponding theoretical asymptotic variances. Results are shown for the two fixed-$Z$ indexed estimators, the equally weighted indexed estimator, the optimally weighted indexed estimator, and the Verma projection estimator.}
\label{fig:frontdoor_var_ate_n_scaled_variance}
\end{figure}

\begin{figure}[t]
\centering
\includegraphics[scale=0.7]{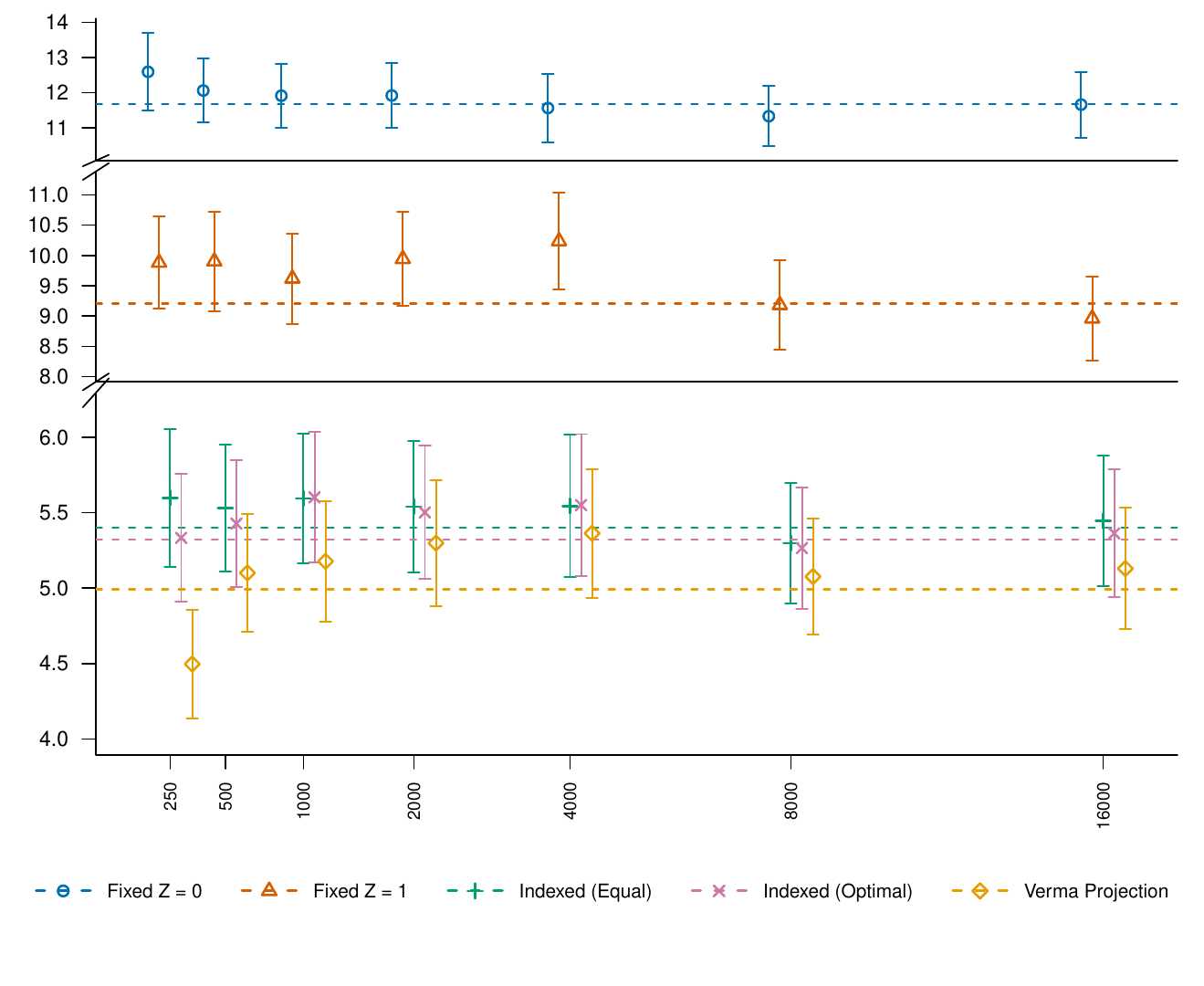}
\vspace{-1cm}
\caption{Finite-sample performance for $\psi_1(\bbP) = \bbE[Y(1)]$ under the extended front-door model. Points and vertical intervals show the empirical $n$-scaled variance and its Monte Carlo confidence interval, respectively; horizontal dashed lines show the corresponding theoretical asymptotic variances. Results are shown for the two fixed-$Z$ indexed estimators, the equally weighted indexed estimator, the optimally weighted indexed estimator, and the Verma projection estimator.}
\label{fig:frontdoor_var_mean_1_n_scaled_variance}
\end{figure}

\begin{table}[!t]
\centering
\setlength{\abovecaptionskip}{2pt}
\setlength{\belowcaptionskip}{1pt}
\scriptsize
\setlength{\tabcolsep}{2.1pt}
\renewcommand{\arraystretch}{0.80}
\caption{Top DGP grid settings under the extended front-door model: theoretical variances of the five estimators and Verma projection efficiency gains across the three parameters. Settings are ranked primarily by the Verma projection gain over the optimally weighted indexed estimator, with the gain over the equally weighted indexed estimator used to order near ties.}
\label{tab:frontdoor_var_grid_all_estimands}
\resizebox{0.9\textwidth}{!}{%
\begin{tabular}{crcccccccccccccc}
\toprule
 & ID & $\eta_{\mathrm{int}}$ & $\eta_{\mu,T}$ & $\eta_{\mu,C}$ & $\eta_{\sigma,T}$ & $\eta_{\sigma,C}$ & $\eta_{\kappa,T}$ & $\eta_{\kappa,C}$ & \shortstack{Fixed\\$Z=0$} & \shortstack{Fixed\\$Z=1$} & \shortstack{Indexed\\equal} & \shortstack{Indexed\\optimal} & \shortstack{Verma\\projection} & \shortstack{Gain\\(\%)} \\
\midrule
\multirow{20}{*}{\rotatebox[origin=c]{90}{$\mathrm{ATE}$}} & 727 & -0.500 & 1.000 & 1.000 & 1.500 & 1.500 & 2.000 & 0.000 & 2.676 & 3.483 & 1.994 & 1.957 & 1.834 & 6.733\% \\
 & 484 & -0.500 & 1.000 & 1.000 & 1.500 & 1.500 & 1.000 & 0.000 & 2.679 & 3.508 & 2.001 & 1.962 & 1.839 & 6.678\% \\
 & 724 & -0.500 & 0.500 & 1.000 & 1.500 & 1.500 & 2.000 & 0.000 & 2.646 & 3.443 & 1.977 & 1.939 & 1.818 & 6.684\% \\
 & 241 & -0.500 & 1.000 & 1.000 & 1.500 & 1.500 & 0.000 & 0.000 & 2.689 & 3.537 & 2.010 & 1.969 & 1.846 & 6.665\% \\
 & 718 & -0.500 & 1.000 & 0.500 & 1.500 & 1.500 & 2.000 & 0.000 & 2.644 & 3.426 & 1.972 & 1.936 & 1.815 & 6.666\% \\
 & 1456 & -0.500 & 1.000 & 1.000 & 1.500 & 1.500 & 2.000 & 1.000 & 2.678 & 3.462 & 1.990 & 1.954 & 1.832 & 6.665\% \\
 & 721 & -0.500 & 0.000 & 1.000 & 1.500 & 1.500 & 2.000 & 0.000 & 2.619 & 3.408 & 1.961 & 1.924 & 1.805 & 6.641\% \\
 & 481 & -0.500 & 0.500 & 1.000 & 1.500 & 1.500 & 1.000 & 0.000 & 2.648 & 3.467 & 1.983 & 1.944 & 1.823 & 6.630\% \\
 & 1213 & -0.500 & 1.000 & 1.000 & 1.500 & 1.500 & 1.000 & 1.000 & 2.678 & 3.477 & 1.993 & 1.956 & 1.835 & 6.626\% \\
 & 715 & -0.500 & 0.500 & 0.500 & 1.500 & 1.500 & 2.000 & 0.000 & 2.617 & 3.391 & 1.956 & 1.921 & 1.801 & 6.623\% \\
 & 238 & -0.500 & 0.500 & 1.000 & 1.500 & 1.500 & 0.000 & 0.000 & 2.658 & 3.496 & 1.992 & 1.952 & 1.831 & 6.617\% \\
 & 475 & -0.500 & 1.000 & 0.500 & 1.500 & 1.500 & 1.000 & 0.000 & 2.647 & 3.450 & 1.979 & 1.941 & 1.820 & 6.611\% \\
 & 1453 & -0.500 & 0.500 & 1.000 & 1.500 & 1.500 & 2.000 & 1.000 & 2.647 & 3.422 & 1.972 & 1.937 & 1.816 & 6.616\% \\
 & 232 & -0.500 & 1.000 & 0.500 & 1.500 & 1.500 & 0.000 & 0.000 & 2.656 & 3.479 & 1.988 & 1.948 & 1.828 & 6.599\% \\
 & 709 & -0.500 & 1.000 & 0.000 & 1.500 & 1.500 & 2.000 & 0.000 & 2.616 & 3.376 & 1.953 & 1.918 & 1.799 & 6.607\% \\
 & 2185 & -0.500 & 1.000 & 1.000 & 1.500 & 1.500 & 2.000 & 2.000 & 2.680 & 3.453 & 1.988 & 1.953 & 1.832 & 6.606\% \\
 & 478 & -0.500 & 0.000 & 1.000 & 1.500 & 1.500 & 1.000 & 0.000 & 2.622 & 3.432 & 1.968 & 1.929 & 1.810 & 6.587\% \\
 & 712 & -0.500 & 0.000 & 0.500 & 1.500 & 1.500 & 2.000 & 0.000 & 2.595 & 3.360 & 1.943 & 1.908 & 1.790 & 6.587\% \\
 & 1447 & -0.500 & 1.000 & 0.500 & 1.500 & 1.500 & 2.000 & 1.000 & 2.645 & 3.405 & 1.967 & 1.933 & 1.813 & 6.597\% \\
 & 235 & -0.500 & 0.000 & 1.000 & 1.500 & 1.500 & 0.000 & 0.000 & 2.631 & 3.460 & 1.977 & 1.936 & 1.817 & 6.576\% \\
\addlinespace[5pt]
\multirow{20}{*}{\rotatebox[origin=c]{90}{$\mathbb{E}[Y(1)]$}} & 2127 & 0.500 & 0.000 & 1.000 & 0.000 & 1.500 & 2.000 & 2.000 & 3.617 & 3.395 & 1.927 & 1.925 & 1.772 & 8.633\% \\
 & 1398 & 0.500 & 0.000 & 1.000 & 0.000 & 1.500 & 2.000 & 1.000 & 3.619 & 3.401 & 1.929 & 1.927 & 1.774 & 8.630\% \\
 & 669 & 0.500 & 0.000 & 1.000 & 0.000 & 1.500 & 2.000 & 0.000 & 3.622 & 3.412 & 1.933 & 1.931 & 1.778 & 8.631\% \\
 & 1884 & 0.500 & 0.000 & 1.000 & 0.000 & 1.500 & 1.000 & 2.000 & 3.619 & 3.403 & 1.930 & 1.928 & 1.775 & 8.590\% \\
 & 2118 & 0.500 & 0.000 & 0.500 & 0.000 & 1.500 & 2.000 & 2.000 & 3.502 & 3.292 & 1.872 & 1.870 & 1.723 & 8.583\% \\
 & 1389 & 0.500 & 0.000 & 0.500 & 0.000 & 1.500 & 2.000 & 1.000 & 3.504 & 3.298 & 1.874 & 1.873 & 1.725 & 8.580\% \\
 & 660 & 0.500 & 0.000 & 0.500 & 0.000 & 1.500 & 2.000 & 0.000 & 3.507 & 3.309 & 1.878 & 1.876 & 1.728 & 8.582\% \\
 & 1155 & 0.500 & 0.000 & 1.000 & 0.000 & 1.500 & 1.000 & 1.000 & 3.623 & 3.414 & 1.934 & 1.932 & 1.779 & 8.573\% \\
 & 426 & 0.500 & 0.000 & 1.000 & 0.000 & 1.500 & 1.000 & 0.000 & 3.630 & 3.434 & 1.940 & 1.939 & 1.786 & 8.575\% \\
 & 2130 & 0.500 & 0.500 & 1.000 & 0.000 & 1.500 & 2.000 & 2.000 & 3.813 & 3.551 & 2.015 & 2.013 & 1.854 & 8.565\% \\
 & 1401 & 0.500 & 0.500 & 1.000 & 0.000 & 1.500 & 2.000 & 1.000 & 3.815 & 3.557 & 2.017 & 2.015 & 1.856 & 8.563\% \\
 & 672 & 0.500 & 0.500 & 1.000 & 0.000 & 1.500 & 2.000 & 0.000 & 3.817 & 3.568 & 2.021 & 2.018 & 1.859 & 8.564\% \\
 & 2109 & 0.500 & 0.000 & 0.000 & 0.000 & 1.500 & 2.000 & 2.000 & 3.407 & 3.206 & 1.827 & 1.825 & 1.682 & 8.542\% \\
 & 651 & 0.500 & 0.000 & 0.000 & 0.000 & 1.500 & 2.000 & 0.000 & 3.412 & 3.223 & 1.833 & 1.831 & 1.687 & 8.542\% \\
 & 183 & 0.500 & 0.000 & 1.000 & 0.000 & 1.500 & 0.000 & 0.000 & 3.644 & 3.464 & 1.951 & 1.950 & 1.796 & 8.551\% \\
 & 1875 & 0.500 & 0.000 & 0.500 & 0.000 & 1.500 & 1.000 & 2.000 & 3.504 & 3.300 & 1.875 & 1.873 & 1.726 & 8.539\% \\
 & 1380 & 0.500 & 0.000 & 0.000 & 0.000 & 1.500 & 2.000 & 1.000 & 3.409 & 3.212 & 1.829 & 1.827 & 1.684 & 8.540\% \\
 & 1887 & 0.500 & 0.500 & 1.000 & 0.000 & 1.500 & 1.000 & 2.000 & 3.815 & 3.559 & 2.018 & 2.015 & 1.857 & 8.524\% \\
 & 2121 & 0.500 & 0.500 & 0.500 & 0.000 & 1.500 & 2.000 & 2.000 & 3.669 & 3.426 & 1.948 & 1.945 & 1.793 & 8.520\% \\
 & 1392 & 0.500 & 0.500 & 0.500 & 0.000 & 1.500 & 2.000 & 1.000 & 3.671 & 3.432 & 1.950 & 1.948 & 1.795 & 8.517\% \\
\addlinespace[5pt]
\multirow{20}{*}{\rotatebox[origin=c]{90}{$\mathbb{E}[Y(0)]$}} & 1699 & -0.500 & 1.000 & 1.000 & 1.500 & 1.500 & 0.000 & 2.000 & 7.868 & 9.549 & 4.709 & 4.665 & 4.096 & 13.897\% \\
 & 1700 & 0.000 & 1.000 & 1.000 & 1.500 & 1.500 & 0.000 & 2.000 & 7.960 & 9.676 & 4.809 & 4.763 & 4.183 & 13.887\% \\
 & 1690 & -0.500 & 1.000 & 0.500 & 1.500 & 1.500 & 0.000 & 2.000 & 7.739 & 9.407 & 4.642 & 4.597 & 4.037 & 13.880\% \\
 & 1691 & 0.000 & 1.000 & 0.500 & 1.500 & 1.500 & 0.000 & 2.000 & 7.830 & 9.534 & 4.742 & 4.696 & 4.124 & 13.864\% \\
 & 1681 & -0.500 & 1.000 & 0.000 & 1.500 & 1.500 & 0.000 & 2.000 & 7.630 & 9.284 & 4.583 & 4.539 & 3.987 & 13.863\% \\
 & 1682 & 0.000 & 1.000 & 0.000 & 1.500 & 1.500 & 0.000 & 2.000 & 7.721 & 9.410 & 4.684 & 4.638 & 4.074 & 13.840\% \\
 & 1696 & -0.500 & 0.500 & 1.000 & 1.500 & 1.500 & 0.000 & 2.000 & 7.743 & 9.355 & 4.629 & 4.588 & 4.031 & 13.831\% \\
 & 970 & -0.500 & 1.000 & 1.000 & 1.500 & 1.500 & 0.000 & 1.000 & 7.898 & 9.632 & 4.738 & 4.691 & 4.122 & 13.822\% \\
 & 1697 & 0.000 & 0.500 & 1.000 & 1.500 & 1.500 & 0.000 & 2.000 & 7.834 & 9.480 & 4.729 & 4.686 & 4.117 & 13.817\% \\
 & 1687 & -0.500 & 0.500 & 0.500 & 1.500 & 1.500 & 0.000 & 2.000 & 7.630 & 9.232 & 4.570 & 4.529 & 3.979 & 13.818\% \\
 & 1943 & 0.000 & 1.000 & 1.000 & 1.500 & 1.500 & 1.000 & 2.000 & 7.953 & 9.619 & 4.793 & 4.750 & 4.173 & 13.813\% \\
 & 961 & -0.500 & 1.000 & 0.500 & 1.500 & 1.500 & 0.000 & 1.000 & 7.768 & 9.490 & 4.670 & 4.623 & 4.063 & 13.805\% \\
 & 1688 & 0.000 & 0.500 & 0.500 & 1.500 & 1.500 & 0.000 & 2.000 & 7.721 & 9.357 & 4.670 & 4.627 & 4.066 & 13.798\% \\
 & 1678 & -0.500 & 0.500 & 0.000 & 1.500 & 1.500 & 0.000 & 2.000 & 7.537 & 9.126 & 4.521 & 4.479 & 3.936 & 13.804\% \\
 & 241 & -0.500 & 1.000 & 1.000 & 1.500 & 1.500 & 0.000 & 0.000 & 7.956 & 9.739 & 4.781 & 4.732 & 4.159 & 13.787\% \\
 & 952 & -0.500 & 1.000 & 0.000 & 1.500 & 1.500 & 0.000 & 1.000 & 7.659 & 9.366 & 4.612 & 4.565 & 4.012 & 13.786\% \\
 & 971 & 0.000 & 1.000 & 1.000 & 1.500 & 1.500 & 0.000 & 1.000 & 7.989 & 9.757 & 4.838 & 4.789 & 4.209 & 13.794\% \\
 & 1934 & 0.000 & 1.000 & 0.500 & 1.500 & 1.500 & 1.000 & 2.000 & 7.823 & 9.477 & 4.725 & 4.682 & 4.114 & 13.788\% \\
 & 1942 & -0.500 & 1.000 & 1.000 & 1.500 & 1.500 & 1.000 & 2.000 & 7.862 & 9.481 & 4.690 & 4.649 & 4.086 & 13.786\% \\
 & 1679 & 0.000 & 0.500 & 0.000 & 1.500 & 1.500 & 0.000 & 2.000 & 7.628 & 9.251 & 4.621 & 4.578 & 4.023 & 13.780\% \\
\bottomrule
\end{tabular}%
}
\vspace{-0.50em}
\begin{flushleft}
\tiny
\textit{Note:} Variance columns are theoretical. Gain is $100\{V_{\mathrm{opt}}/V_{\mathrm{Verma}}-1\}\%$, where $V_{\mathrm{opt}}$ is for the optimally weighted indexed estimator. Rows are ranked by this gain; near ties are ordered by $100\{V_{\mathrm{equal}}/V_{\mathrm{Verma}}-1\}\%$.
\end{flushleft}
\end{table}

\begin{figure}[t]
\centering
\includegraphics[scale=0.54]{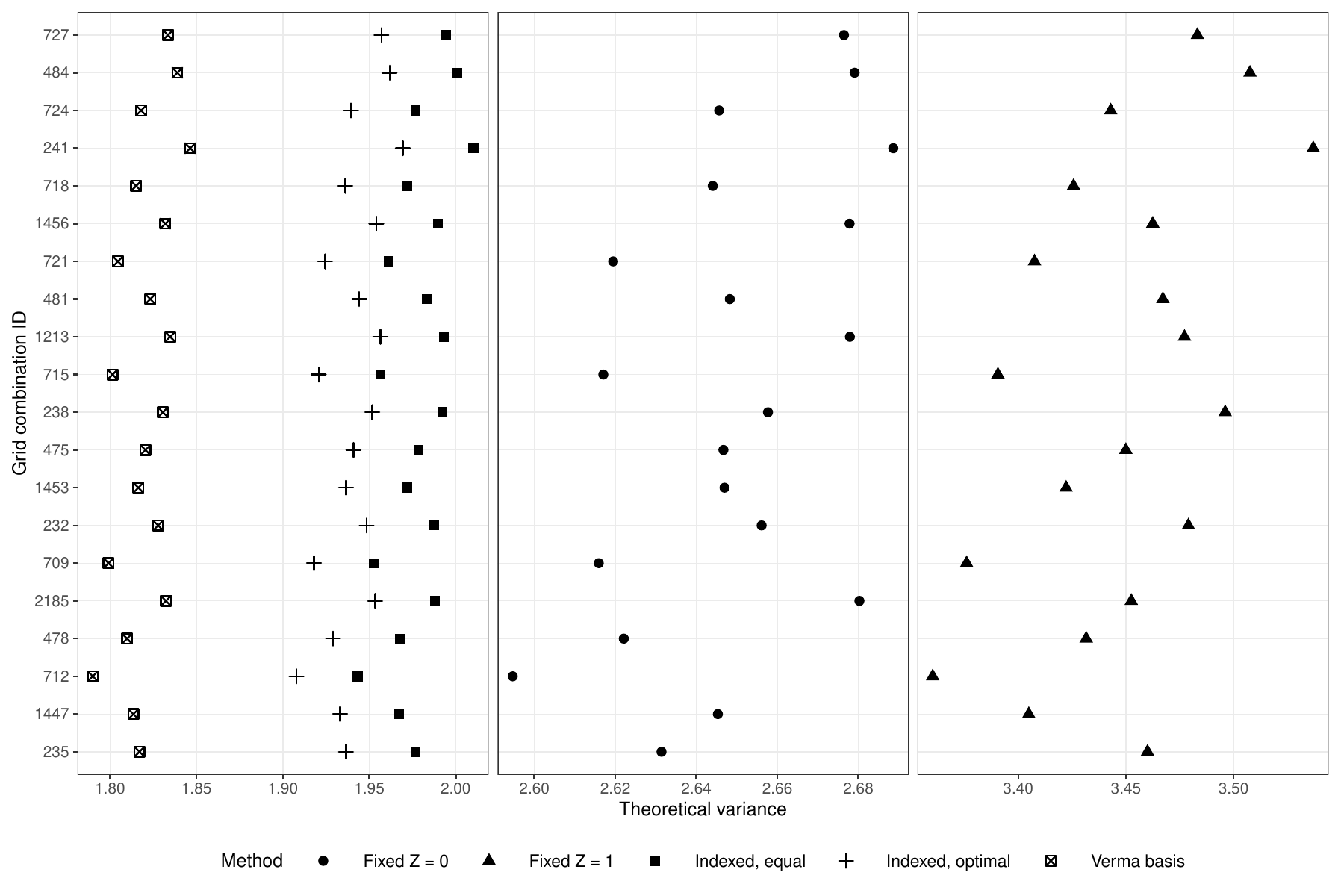}
\caption{
Theoretical asymptotic variances for  $\text{ATE} = \psi_1(\bbP) - \psi_0(\bbP)$ under the extended front-door model across the 20 highest-ranked DGPs in Experiment~2. Each row corresponds to one DGP, and symbols denote the five competing estimators. The horizontal axis is partitioned into three noncontiguous ranges to facilitate visual comparison across methods. DGPs are ranked by the relative efficiency gain of the Verma projection estimator over the optimally weighted indexed estimator. 
}
\label{fig:frontdoor_var_grid_ate_top_20_method_variances_split_x}
\end{figure}

\begin{figure}[t]
\centering
\includegraphics[scale=0.54]{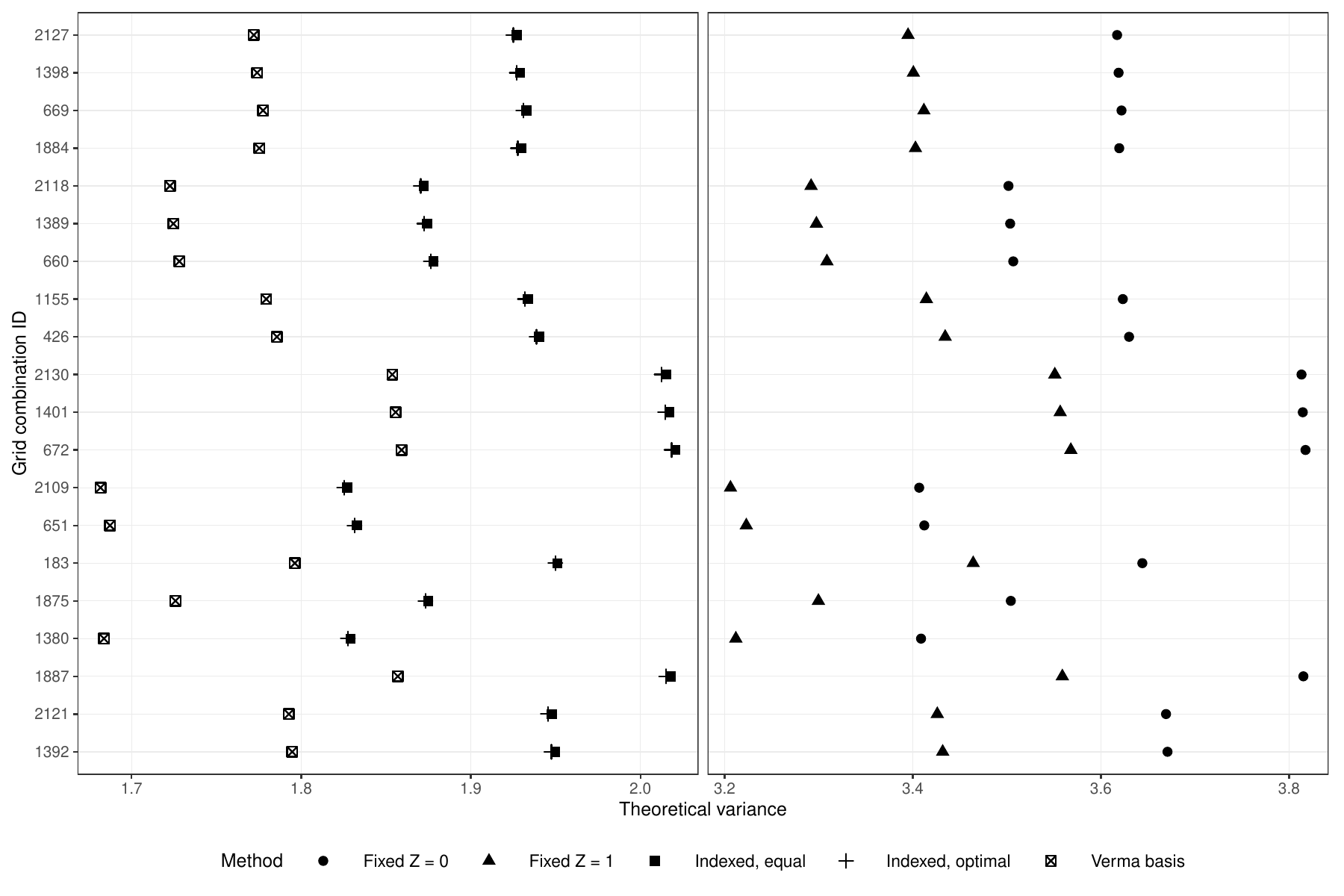}
\caption{
Theoretical asymptotic variances for  $\psi_1(\bbP) = \bbE[Y(1)]$ under the extended front-door model across the 20 highest-ranked DGPs in Experiment~2. Each row corresponds to one DGP, and symbols denote the five competing estimators. The horizontal axis is partitioned into three noncontiguous ranges to facilitate visual comparison across methods. DGPs are ranked by the relative efficiency gain of the Verma projection estimator over the optimally weighted indexed estimator. 
}
\label{fig:frontdoor_var_grid_mean_1_top_20_method_variances_split_x}
\end{figure}

\begin{table}[htbp]
\centering
\setlength{\abovecaptionskip}{2pt}
\setlength{\belowcaptionskip}{1pt}
\scriptsize
\setlength{\tabcolsep}{2.2pt}
\renewcommand{\arraystretch}{0.70}
\caption{Top DGP grid settings under the extended front-door model: theoretical variances for basis enrichments of the Verma projection across the three parameters, with efficiency gains relative to the $Y$-only basis. Settings are ranked by the efficiency gain of the full $(Y,Y^2,Y^3)$ basis over the $Y$-only basis.}
\label{tab:frontdoor_basis_grid_all_estimands}
\resizebox{0.85\textwidth}{!}{%
\begin{tabular}{crrrrrrrrrr@{\hspace{10pt}}rr}
\toprule
 & ID & $\eta_{\mu,T}$ & $\eta_{\mu,C}$ & $\eta_{\sigma,T}$ & $\eta_{\sigma,C}$ & $\eta_{\kappa,T}$ & $\eta_{\kappa,C}$ & $Y$ & $(Y,Y^2)$ & $(Y,Y^2,Y^3)$ & \shortstack{$(Y,Y^2)$\\vs. $Y$} & \shortstack{$(Y,Y^2,Y^3)$\\vs. $Y$} \\
\midrule
\multirow{20}{*}{\rotatebox[origin=c]{90}{$\mathrm{ATE}$}} & 1290 & 1.000 & -1.000 & 4.000 & 1.000 & 4.000 & 1.000 & 5.863 & 5.829 & 5.800 & 0.590\% & 1.099\% \\
 & 1293 & 1.000 & 0.000 & 4.000 & 1.000 & 4.000 & 1.000 & 5.888 & 5.853 & 5.824 & 0.591\% & 1.099\% \\
 & 1292 & 0.500 & 0.000 & 4.000 & 1.000 & 4.000 & 1.000 & 5.876 & 5.842 & 5.813 & 0.589\% & 1.098\% \\
 & 1289 & 0.500 & -1.000 & 4.000 & 1.000 & 4.000 & 1.000 & 5.856 & 5.822 & 5.792 & 0.588\% & 1.097\% \\
 & 1291 & 0.000 & 0.000 & 4.000 & 1.000 & 4.000 & 1.000 & 5.867 & 5.833 & 5.804 & 0.587\% & 1.097\% \\
 & 1296 & 1.000 & 1.000 & 4.000 & 1.000 & 4.000 & 1.000 & 5.922 & 5.887 & 5.858 & 0.591\% & 1.097\% \\
 & 1295 & 0.500 & 1.000 & 4.000 & 1.000 & 4.000 & 1.000 & 5.907 & 5.872 & 5.843 & 0.590\% & 1.097\% \\
 & 1294 & 0.000 & 1.000 & 4.000 & 1.000 & 4.000 & 1.000 & 5.894 & 5.859 & 5.830 & 0.588\% & 1.096\% \\
 & 1288 & 0.000 & -1.000 & 4.000 & 1.000 & 4.000 & 1.000 & 5.851 & 5.817 & 5.787 & 0.586\% & 1.096\% \\
 & 858 & 1.000 & -1.000 & 4.000 & 1.000 & 4.000 & 0.000 & 5.863 & 5.830 & 5.803 & 0.561\% & 1.038\% \\
 & 861 & 1.000 & 0.000 & 4.000 & 1.000 & 4.000 & 0.000 & 5.887 & 5.854 & 5.827 & 0.562\% & 1.038\% \\
 & 860 & 0.500 & 0.000 & 4.000 & 1.000 & 4.000 & 0.000 & 5.876 & 5.843 & 5.816 & 0.561\% & 1.037\% \\
 & 857 & 0.500 & -1.000 & 4.000 & 1.000 & 4.000 & 0.000 & 5.856 & 5.823 & 5.796 & 0.559\% & 1.037\% \\
 & 859 & 0.000 & 0.000 & 4.000 & 1.000 & 4.000 & 0.000 & 5.867 & 5.834 & 5.807 & 0.559\% & 1.036\% \\
 & 864 & 1.000 & 1.000 & 4.000 & 1.000 & 4.000 & 0.000 & 5.921 & 5.888 & 5.860 & 0.562\% & 1.036\% \\
 & 863 & 0.500 & 1.000 & 4.000 & 1.000 & 4.000 & 0.000 & 5.906 & 5.873 & 5.845 & 0.561\% & 1.036\% \\
 & 862 & 0.000 & 1.000 & 4.000 & 1.000 & 4.000 & 0.000 & 5.893 & 5.860 & 5.833 & 0.559\% & 1.036\% \\
 & 856 & 0.000 & -1.000 & 4.000 & 1.000 & 4.000 & 0.000 & 5.850 & 5.818 & 5.790 & 0.557\% & 1.035\% \\
 & 1182 & 1.000 & -1.000 & 4.000 & 1.000 & 3.000 & 1.000 & 5.865 & 5.834 & 5.807 & 0.541\% & 0.998\% \\
 & 1185 & 1.000 & 0.000 & 4.000 & 1.000 & 3.000 & 1.000 & 5.889 & 5.858 & 5.831 & 0.542\% & 0.997\% \\
\addlinespace[5pt]
\multirow{20}{*}{\rotatebox[origin=c]{90}{$\mathbb{E}[Y(1)]$}} & 1293 & 1.000 & 0.000 & 4.000 & 1.000 & 4.000 & 1.000 & 40.298 & 40.058 & 39.843 & 0.598\% & 1.140\% \\
 & 1296 & 1.000 & 1.000 & 4.000 & 1.000 & 4.000 & 1.000 & 40.428 & 40.187 & 39.973 & 0.599\% & 1.139\% \\
 & 1295 & 0.500 & 1.000 & 4.000 & 1.000 & 4.000 & 1.000 & 40.334 & 40.095 & 39.880 & 0.598\% & 1.139\% \\
 & 1290 & 1.000 & -1.000 & 4.000 & 1.000 & 4.000 & 1.000 & 40.201 & 39.963 & 39.748 & 0.595\% & 1.139\% \\
 & 1292 & 0.500 & 0.000 & 4.000 & 1.000 & 4.000 & 1.000 & 40.227 & 39.989 & 39.774 & 0.596\% & 1.139\% \\
 & 1294 & 0.000 & 1.000 & 4.000 & 1.000 & 4.000 & 1.000 & 40.257 & 40.019 & 39.804 & 0.596\% & 1.139\% \\
 & 1291 & 0.000 & 0.000 & 4.000 & 1.000 & 4.000 & 1.000 & 40.173 & 39.936 & 39.721 & 0.594\% & 1.138\% \\
 & 1289 & 0.500 & -1.000 & 4.000 & 1.000 & 4.000 & 1.000 & 40.154 & 39.917 & 39.702 & 0.593\% & 1.138\% \\
 & 1288 & 0.000 & -1.000 & 4.000 & 1.000 & 4.000 & 1.000 & 40.123 & 39.887 & 39.672 & 0.591\% & 1.136\% \\
 & 861 & 1.000 & 0.000 & 4.000 & 1.000 & 4.000 & 0.000 & 40.318 & 40.091 & 39.888 & 0.566\% & 1.080\% \\
 & 864 & 1.000 & 1.000 & 4.000 & 1.000 & 4.000 & 0.000 & 40.449 & 40.221 & 40.017 & 0.568\% & 1.079\% \\
 & 863 & 0.500 & 1.000 & 4.000 & 1.000 & 4.000 & 0.000 & 40.355 & 40.128 & 39.924 & 0.566\% & 1.079\% \\
 & 858 & 1.000 & -1.000 & 4.000 & 1.000 & 4.000 & 0.000 & 40.222 & 39.996 & 39.792 & 0.564\% & 1.079\% \\
 & 860 & 0.500 & 0.000 & 4.000 & 1.000 & 4.000 & 0.000 & 40.247 & 40.022 & 39.818 & 0.564\% & 1.079\% \\
 & 862 & 0.000 & 1.000 & 4.000 & 1.000 & 4.000 & 0.000 & 40.278 & 40.052 & 39.848 & 0.564\% & 1.079\% \\
 & 859 & 0.000 & 0.000 & 4.000 & 1.000 & 4.000 & 0.000 & 40.194 & 39.969 & 39.765 & 0.562\% & 1.078\% \\
 & 857 & 0.500 & -1.000 & 4.000 & 1.000 & 4.000 & 0.000 & 40.174 & 39.950 & 39.746 & 0.562\% & 1.078\% \\
 & 856 & 0.000 & -1.000 & 4.000 & 1.000 & 4.000 & 0.000 & 40.144 & 39.920 & 39.716 & 0.560\% & 1.076\% \\
 & 1185 & 1.000 & 0.000 & 4.000 & 1.000 & 3.000 & 1.000 & 40.331 & 40.114 & 39.918 & 0.542\% & 1.035\% \\
 & 1188 & 1.000 & 1.000 & 4.000 & 1.000 & 3.000 & 1.000 & 40.462 & 40.243 & 40.048 & 0.544\% & 1.035\% \\
\addlinespace[5pt]
\multirow{20}{*}{\rotatebox[origin=c]{90}{$\mathbb{E}[Y(0)]$}} & 1294 & 0.000 & 1.000 & 4.000 & 1.000 & 4.000 & 1.000 & 13.666 & 13.626 & 13.591 & 0.298\% & 0.558\% \\
 & 1291 & 0.000 & 0.000 & 4.000 & 1.000 & 4.000 & 1.000 & 13.590 & 13.550 & 13.514 & 0.296\% & 0.558\% \\
 & 1292 & 0.500 & 0.000 & 4.000 & 1.000 & 4.000 & 1.000 & 13.630 & 13.590 & 13.555 & 0.296\% & 0.557\% \\
 & 1295 & 0.500 & 1.000 & 4.000 & 1.000 & 4.000 & 1.000 & 13.722 & 13.681 & 13.646 & 0.298\% & 0.557\% \\
 & 1293 & 1.000 & 0.000 & 4.000 & 1.000 & 4.000 & 1.000 & 13.679 & 13.639 & 13.603 & 0.296\% & 0.557\% \\
 & 1290 & 1.000 & -1.000 & 4.000 & 1.000 & 4.000 & 1.000 & 13.607 & 13.567 & 13.532 & 0.294\% & 0.557\% \\
 & 1289 & 0.500 & -1.000 & 4.000 & 1.000 & 4.000 & 1.000 & 13.573 & 13.533 & 13.498 & 0.294\% & 0.557\% \\
 & 1296 & 1.000 & 1.000 & 4.000 & 1.000 & 4.000 & 1.000 & 13.786 & 13.745 & 13.709 & 0.297\% & 0.556\% \\
 & 1288 & 0.000 & -1.000 & 4.000 & 1.000 & 4.000 & 1.000 & 13.547 & 13.508 & 13.472 & 0.294\% & 0.556\% \\
 & 1186 & 0.000 & 1.000 & 4.000 & 1.000 & 3.000 & 1.000 & 13.668 & 13.630 & 13.593 & 0.279\% & 0.547\% \\
 & 1183 & 0.000 & 0.000 & 4.000 & 1.000 & 3.000 & 1.000 & 13.591 & 13.554 & 13.517 & 0.277\% & 0.547\% \\
 & 1184 & 0.500 & 0.000 & 4.000 & 1.000 & 3.000 & 1.000 & 13.631 & 13.593 & 13.557 & 0.277\% & 0.547\% \\
 & 1187 & 0.500 & 1.000 & 4.000 & 1.000 & 3.000 & 1.000 & 13.722 & 13.684 & 13.648 & 0.278\% & 0.547\% \\
 & 1185 & 1.000 & 0.000 & 4.000 & 1.000 & 3.000 & 1.000 & 13.680 & 13.642 & 13.605 & 0.277\% & 0.547\% \\
 & 1182 & 1.000 & -1.000 & 4.000 & 1.000 & 3.000 & 1.000 & 13.608 & 13.571 & 13.534 & 0.276\% & 0.546\% \\
 & 1181 & 0.500 & -1.000 & 4.000 & 1.000 & 3.000 & 1.000 & 13.574 & 13.537 & 13.500 & 0.275\% & 0.546\% \\
 & 1180 & 0.000 & -1.000 & 4.000 & 1.000 & 3.000 & 1.000 & 13.549 & 13.512 & 13.476 & 0.275\% & 0.546\% \\
 & 1188 & 1.000 & 1.000 & 4.000 & 1.000 & 3.000 & 1.000 & 13.786 & 13.747 & 13.711 & 0.278\% & 0.546\% \\
 & 1147 & 0.000 & 0.000 & 4.000 & 0.000 & 3.000 & 1.000 & 6.786 & 6.767 & 6.749 & 0.280\% & 0.545\% \\
 & 1145 & 0.500 & -1.000 & 4.000 & 0.000 & 3.000 & 1.000 & 6.768 & 6.749 & 6.731 & 0.279\% & 0.544\% \\
\bottomrule
\end{tabular}%
}
\vspace{-0.50em}
\begin{flushleft}
\tiny
\textit{Note:} Variance columns correspond to the displayed bases for the Verma projection. Gains are $100\{V_Y/V_{\mathrm{rich}}-1\}\%$, relative to the $Y$-only basis. Rows are ranked by the gain for the full $(Y,Y^2,Y^3)$ basis.
\end{flushleft}
\end{table}

\begin{figure}[!t]
\centering
\includegraphics[scale=0.55]
{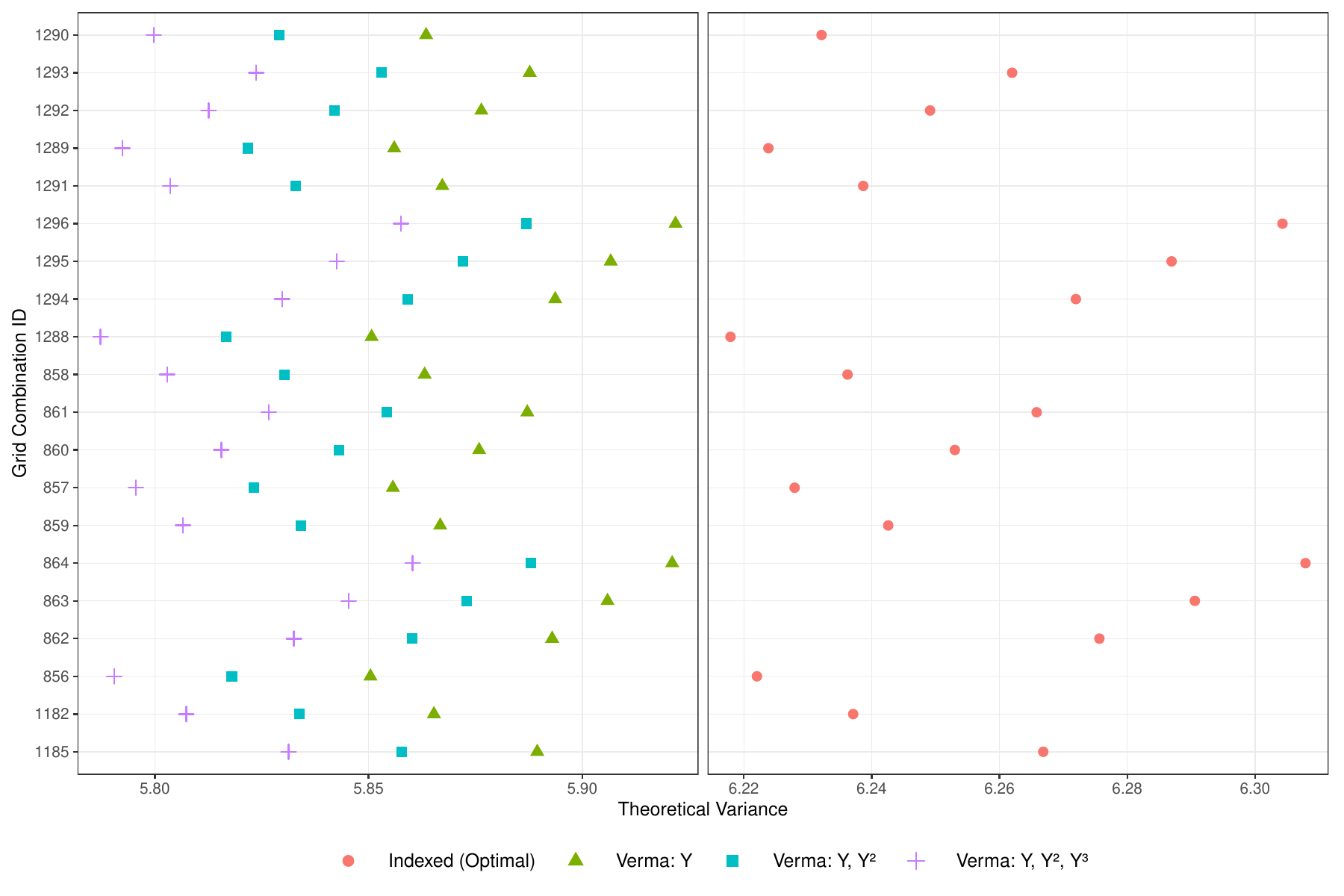}
\caption{
Effect of Verma-basis richness on the theoretical asymptotic variance for $\text{ATE} = \psi_1(\bbP) - \psi_0(\bbP)$ under the extended front-door model. Results are shown for the 20 highest-ranked DGPs in Experiment~3. The Verma projection estimators use the nested outcome bases $\{Y\}$, $\{Y,Y^2\}$, and $\{Y,Y^2,Y^3\}$; the optimally weighted indexed estimator is included as a reference. DGPs are ranked by the relative efficiency gain of the full Verma basis $\{Y,Y^2,Y^3\}$ over the $Y$-only basis. 
}
\label{fig:frontdoor_basis_grid_ate_top_20_method_variances_split_x}
\end{figure}

\begin{figure}[!t]
\centering
\includegraphics[scale=0.55]
{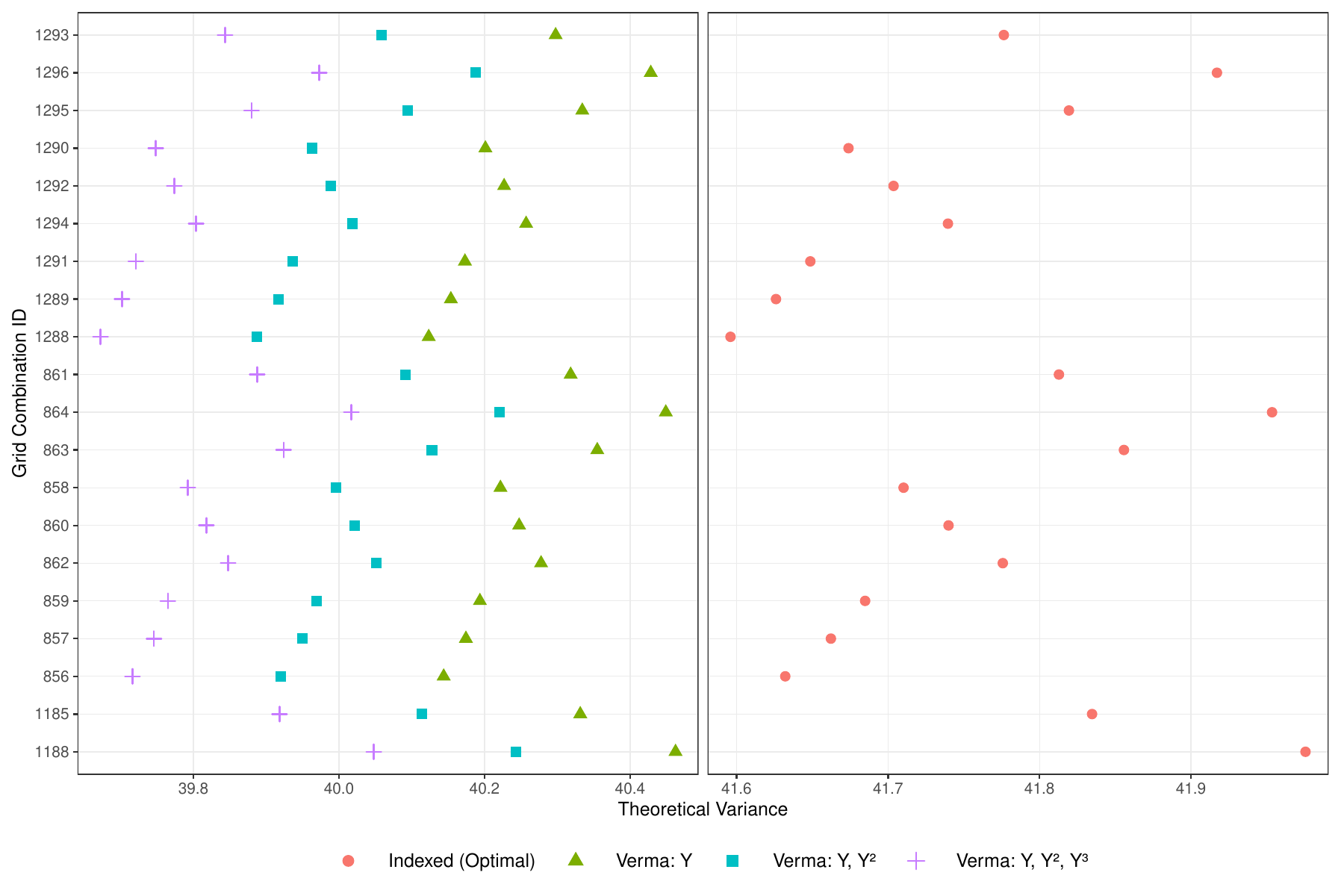}
\caption{
Effect of Verma-basis richness on the theoretical asymptotic variance for $\psi_1(\bbP)=\bbE[Y(1)]$ under the extended front-door model. Results are shown for the 20 highest-ranked DGPs in Experiment~3. The Verma projection estimators use the nested outcome bases $\{Y\}$, $\{Y,Y^2\}$, and $\{Y,Y^2,Y^3\}$; the optimally weighted indexed estimator is included as a reference. DGPs are ranked by the relative efficiency gain of the full Verma basis $\{Y,Y^2,Y^3\}$ over the $Y$-only basis. 
}
\label{fig:frontdoor_basis_grid_mean_1_top_20_method_variances_split_x}
\end{figure}


\clearpage
\subsection{Simulation studies for the Napkin model}
\label{app:sims:napkin}

This section provides the complete simulation specification for the Napkin model considered in Section~\ref{sec:napkin_graph}. The simulation design mirrors that of the extended front-door model, differing only in the underlying graphical model and the associated structural equations. Unless otherwise stated, oracle calculations, Monte Carlo implementation, sample sizes, nuisance estimation, and performance measures are identical to those described in Section~\ref{app:frontdoor:implementation}.

\subsubsection{Data-generating mechanism} 

The observed data consist of $O=(W,Z,A,Y)$, where $W$, $Z$, and $A$ are binary and $Y$ is continuous, and are generated according to the latent-variable Napkin graph shown in Figure~\ref{fig:napkin graph}(a), with latent variables $U_1$ and $U_2$. Specifically,
\begin{align}
U_1
&\sim
\operatorname{Uniform}(0,1),
\qquad
U_2
\sim
\operatorname{Normal}(0,1),
\label{eq:sim_latent}
\\
W\mid U_1,U_2
&\sim
\operatorname{Bernoulli}
\left[
\operatorname{expit}
\left\{
\beta_{W,0}
+
\beta_{W,1}U_1
+
\beta_{W,2}U_2
\right\}
\right],
\label{eq:sim_w}
\\
Z\mid W
&\sim
\operatorname{Bernoulli}
\left[
\operatorname{expit}
\left\{
\beta_{Z,0}
+
\beta_{Z,1}W
\right\}
\right],
\label{eq:sim_z}
\\
A\mid Z,U_2
&\sim
\operatorname{Bernoulli}
\left[
\operatorname{expit}
\left\{
\beta_{A,0}
+
\beta_{A,1}Z
+
\beta_{A,2}U_2
+
\eta_{\mathrm{int}}ZU_2
\right\}
\right].
\label{eq:sim_a}
\end{align}
The interaction parameter $\eta_{\mathrm{int}}$ controls the extent to which the association induced by the latent variable $U_2$ varies across levels of $Z$. Varying $\eta_{\mathrm{int}}$ changes the relative precision and covariance structure of the two fixed-$Z$ indexed influence functions. It can therefore move the variance-minimizing indexed combination away from equal weighting, without changing the equality of the indexed identifying functionals implied by the Napkin restriction.

The outcome is generated according to
\begin{align*}
Y = \mu(A,U_1) + \sigma(A,U_1) \, \varepsilon(A, U_1), 
\end{align*}
where the three components are specified below.

\vspace{0.25cm} \noindent 
\underline{\bf Conditional mean.} 
Let $\widetilde U_1 = U_1-\frac12$ and $\widetilde U_1^{\,2} = U_1^2-\frac13$, which satisfy $\bbE[\widetilde U_1] = \bbE[\widetilde U_1^{\,2}]=0$. Define $m(U_1) = \gamma_0 + \gamma_1\widetilde U_1 + \gamma_2\widetilde U_1^{\,2}$. The conditional mean is 
\begin{align}
\mu(A,U_1)
&=
\beta_0
+
\beta_1A
+
\beta_2U_1
+
\beta_3\widetilde U_1^{\,2}
\nonumber\\
&
\quad
+
\big\{
\eta_{\mu,T}
I(A=a_0)
+
\eta_{\mu,C}
I(A\neq a_0)
\big\} \, 
m(U_1). 
\label{eq:mean_model}
\end{align}

Centering $\widetilde U_1$ and $\widetilde U_1^{\,2}$ ensures that the perturbation parameters modify treatment-specific heterogeneity without inducing unintended shifts in the marginal causal means beyond those introduced through the intercept component of $m(U_1)$.

\vspace{0.25cm} \noindent 
\underline{\bf Conditional variance.} Define $v(U_1) = \rho_0 + \rho_1\widetilde U_1 + \rho_2\widetilde U_1^{\,2}.$ The conditional log-standard deviation is 
\begin{align}
\log\sigma(A,U_1)
&=
\lambda_0
+
\lambda_1\widetilde U_1
\nonumber\\
&
\quad
+
\big\{ 
\eta_{\sigma,T}
I(A=a_0)
+
\eta_{\sigma,C}
I(A\neq a_0)
\big\} \, v(U_1). 
\label{eq:variance_model}
\end{align}

\vspace{0.25cm} \noindent 
\underline{\bf Skewness loading.} Let $s(U_1) = \alpha_0 + \alpha_1 \widetilde U_1 + \alpha_2 \widetilde U_1^{\,2}$, and define 
\begin{align}
\lambda(A,U_1)
=
\big\{ 
\eta_{\kappa,T}
I(A=a_0)
+
\eta_{\kappa,C}
I(A\neq a_0)
\big\} \, 
s(U_1). 
\label{eq:shape_loading}
\end{align}

Let $\varepsilon_N\sim \operatorname{Normal}(0,1)$, and let $\varepsilon_\chi = \frac{\chi^2_3-3}{\sqrt6}$, where $\chi^2_3$ denotes a chi-square random variable with three degrees of freedom, centered and rescaled to have mean zero and variance one. The standardized error is constructed as
\begin{align}
\varepsilon(A, U_1)
=
\frac{
\varepsilon_N
+
\lambda(A,U_1)\varepsilon_\chi
}
{
\sqrt{1+\lambda(A,U_1)^2}
},
\label{eq:error}
\end{align}
which has conditional mean zero and conditional variance one while allowing the conditional third moment to vary with $\lambda(A,U_1)$. 

The tuning parameters play the same roles as in the extended front-door simulations described in Section~\ref{app:sim:frontdoor:dgp}. The baseline DGP is defined by the fixed coefficients in Table~\ref{tab:napkin_baseline_parameters}, while the tuning parameter grids used in the Experiments~2 and ~3 are summarized in
Table~\ref{tab:napkin_tuning_parameters}. Experiment~1 fixes the tuning parameters at the baseline
configuration,
\begin{align*}
\eta_{\mathrm{int}} = 0,\quad
\eta_{\mu,T} = 0.5,\quad
\eta_{\mu,C} = 0.5,\quad
\eta_{\sigma,T} = 1.5,\quad
\eta_{\sigma,C} = 0,\quad
\eta_{\kappa,T} = 2,\quad
\eta_{\kappa,C} = 2,
\end{align*}
to evaluate finite-sample convergence to the theoretical asymptotic variances. Experiment~2 varies the tuning parameters over the grid in Table~\ref{tab:napkin_tuning_parameters}, yielding $\mathbf{2,187}$ \textbf{candidate DGPs} used to compare the theoretical asymptotic variances of the five estimators. Experiment~3 fixes the treatment mechanism and varies the outcome tuning parameters over an expanded grid, yielding $\mathbf{1,296}$ \textbf{candidate DGPs}. 

\begin{table}[!t]
\centering
\caption{Fixed coefficients defining the baseline data-generating mechanism for the Napkin simulations. These coefficients remained unchanged throughout all simulation studies.}
\label{tab:napkin_baseline_parameters}
\begin{tabular}{lll}
\toprule
Component & Parameter & Value\\
\midrule
$W \mid U_1, U_2$ &
$(\beta_{W,0},\beta_{W,1},\beta_{W,2})$
&
$(0.5,1.0,0.5)$
\\
$Z \mid W$ &
$(\beta_{Z_0},\beta_{Z_1})$
&
$(-0.7,1.0)$
\\
$A \mid Z, U_2$ &
$(\beta_{A_0},\beta_{A_1},\beta_{A_2})$
&
$(0.5,1.2,0.4)$
\\
$Y \mid A, U_1$: $\mu(A, U_1)$ &
$(\beta_0,\beta_1,\beta_2,\beta_3)$
&
$(3.0,1.0,0.7,0.25)$
\\
&
$(\gamma_0,\gamma_1,\gamma_2)$
&
$(0,0.4,0.4)$
\\
$Y \mid A, U_1$: $\sigma(A, U_1)$ &
$(\lambda_0,\lambda_1)$
&
$(-0.1,0.15)$
\\
 &
$(\rho_0,\rho_1,\rho_2)$
&
$(0,0.9,0.7)$
\\
$Y \mid A, U_1$: $\lambda(A, U_1)$ & 
$(\alpha_0, \alpha_1, \alpha_2)$
& 
$(0, 0.75, 0.9)$
\\
\bottomrule
\end{tabular}
\end{table}

\begin{table}[t]
\centering
\caption{Simulation tuning parameters varied across the grid-search experiments. The Cartesian product of the parameter values defines the candidate data-generating mechanisms considered in Experiments~2 and~3.}
\label{tab:napkin_tuning_parameters}
\begin{tabular}{llll}
\toprule
Distributional feature & Parameter & Experiment~2 & Experiment~3\\
\midrule
$A \mid Z, U_2$: \ interaction
&
$\eta_{\mathrm{int}}$
&
$-0.5,\;0.0,\;0.5$
&
--- \\
$Y \mid A, U_1$: target mean
&
$\eta_{\mu, T}$
&
$0.0,\;0.5,\;1.0$
&
$0.0,\;0.5,\;1.0$
\\
\textcolor{white}{$Y \mid A, U_1$:} contrast mean
&
$\eta_{\mu, C}$
&
$0.0,\;0.5,\;1.0$
&
$-1.0,\;0.0,\;1.0$
\\
\textcolor{white}{$Y \mid A, U_1$:} target variance
&
$\eta_{\sigma, T}$
&
$0.0,\;1.0,\;1.5$
&
$1.0,\;2.0,\;3.0,\;4.0$
\\
\textcolor{white}{$Y \mid A, U_1$:} contrast variance
&
$\eta_{\sigma, C}$
&
$0.0,\;1.0,\;1.5$
&
$-1.0,\;0.0,\;1.0$
\\
\textcolor{white}{$Y \mid A, U_1$:} target skewness
&
$\eta_{\kappa, T}$
&
$0.0,\;1.0,\;2.0$
&
$1.0,\;2.0,\;3.0,\;4.0$
\\
\textcolor{white}{$Y \mid A, U_1$:} contrast skewness
&
$\eta_{\kappa, C}$
&
$0.0,\;1.0,\;2.0$
&
$-1.0,\;0.0,\;1.0$
\\
\bottomrule
\end{tabular}
\end{table}

\subsubsection{Oracle calculations and simulation implementation} 

As described above, oracle calculations, Monte Carlo implementation, sample sizes, nuisance estimation, and performance measures are identical to those described in Section~\ref{app:frontdoor:implementation}.

Under the Napkin data-generating mechanism described above, intervention on the treatment variable gives the treatment-specific counterfactual mean
\begin{align}
\psi_{a_0}(\bbP)
&=
\beta_{0}
+
\beta_{1}a_0
+
\frac{1}{2}\beta_{2}
+ 
\eta_{\mu, T} \gamma_0
\nonumber \\
\psi_{1-a_0}(\bbP)
&=
\beta_{0}
+
\beta_{1}(1-a_0)
+
\frac{1}{2}\beta_{2}
+ 
\eta_{\mu, C} \gamma_0
,
\label{eq:sim_truth}
\end{align}
and consequently the average treatment effect equals 
\begin{align}
  \tau(\bbP) = \psi_1(\bbP) - \psi_0(\bbP) = \beta_1 + (2a_0 - 1)(\eta_{\mu, T} - \eta_{\mu, C}) \gamma_0. 
  \label{eq:sim_ate_truth}
\end{align}
For the parameter values used in our simulations, $\gamma_0=0$, so the mean-perturbation parameters alter treatment-specific outcome heterogeneity without changing the marginal causal means, and $\tau(\bbP)=\beta_1$.

Both estimands admit indexed identifying representations under the assumptions implied by the Napkin graph, with corresponding indexed influence-function classes developed in Appendix~\ref{app:id_npEIF_napkin}. Moreover, Section~\ref{sec:geometry} establishes that the indexed influence-function class is contained within the affine class generated by the Verma orthocomplement. Consequently, the Verma-basis estimator cannot have larger asymptotic variance than the corresponding indexed estimator. The simulation studies below investigate the practical magnitude of this improvement for both treatment-specific counterfactual means and the average treatment effect.

For each simulated data set we computed five estimators: the two fixed indexed estimators, the equal-weight indexed estimator, the optimally weighted indexed estimator, and the Verma-basis projection estimator. We briefly summarize their implementation below for pedological purposes.

\vspace{0.25cm} \noindent 
\underline{\bf Indexed identification functionals.} 
For $z^\ast\in\{0,1\}$, let $\psi_{a_0}(\bbP;z^\ast)$ denote the indexed identifying functional. Under the target-specific invariance restriction,
\begin{align*}
\psi_{a_0}(\bbP;0)
=
\psi_{a_0}(\bbP;1)
=
\psi_{a_0}(\bbP).
\end{align*}
 
Writing
\begin{align*}
\mu(a,z,w)
&=
\bbE(Y\mid A=a,Z=z,W=w),
\\
\pi(a\mid z,w)
&=
p(A=a\mid Z=z,W=w),
\\
\lambda(z\mid w)
&=
p(Z=z\mid W=w),
\\
\kappa_{a_0}(z^\ast)
&=
\bbE[\pi(a_0\mid z^\ast,W)],
\end{align*}
the indexed functional used in the simulations was
\begin{align}
\psi_{a_0}(\bbP; z^\ast)
=
\frac{
\bbE
\left[
\mu(a_0,z^\ast,W)
\pi(a_0\mid z^\ast,W)
\right]
}{
\kappa_{a_0}(z^\ast)
}.
\label{eq:sim_indexed_functional}
\end{align}

Let $\varphi_{a_0}^{\mathrm{np}}(\bbP;z^\ast)(O)$ denote the corresponding nonparametric influence function: 
\begin{align}
\varphi_{a_0}^{\mathrm{np}}(\bbP;z^\ast)(O)
&=
\frac{
\I(A=a_0)\I(Z=z^\ast)
}{
\kappa_{a_0}(z^\ast)
\lambda(z^\ast\mid W)
}
\left\{
Y-\mu(a_0,z^\ast,W)
\right\}
\nonumber\\
&\quad+
\frac{
\I(Z=z^\ast)
}{
\kappa_{a_0}(z^\ast)
\lambda(z^\ast\mid W)
}
\left\{
\mu(a_0,z^\ast,W)-\psi_{a_0}(z^\ast)
\right\}
\left\{
\I(A=a_0)-\pi(a_0\mid z^\ast,W)
\right\}
\nonumber\\
&\quad+
\frac{
\pi(a_0\mid z^\ast,W)
}{
\kappa_{a_0}(z^\ast)
}
\left\{
\mu(a_0,z^\ast,W)-\psi_{a_0}(z^\ast)
\right\}.
\label{eq:sim_indexed_if}
\end{align}

For each value of $z^\ast$, the corresponding fixed indexed estimator was constructed as the one-step estimator 
\begin{align}
\widehat\psi_{a_0}(z^\ast)
=
\psi_{a_0}(\widehat\eta; z^\ast)
+
\bbP_n
\varphi_{a_0}^{\mathrm{np}}
(\widehat\eta;z^\ast),
\label{eq:sim_fixed_estimator}
\end{align}
where $\eta$ denotes the nuisance functions.

\vspace{0.25cm} \noindent 
\underline{\bf Combinations of indexed influence functions.} The equal-weight and optimally weighted indexed estimators are obtained by taking affine combinations of the two fixed indexed estimators. For $\alpha\in\mathbb R$, define 
\begin{align}
\widehat\psi_{a_0,\alpha}
=
(1-\alpha)\widehat\psi_{a_0}(0)
+
\alpha\widehat\psi_{a_0}(1).
\label{eq:sim_affine_estimator}
\end{align}
The equal-weight estimator sets $\alpha=1/2$. For the average treatment effect, the indexed class is constructed analogously by combining the indexed representations of the treatment effect,
\begin{align*}
\widehat\psi_{\mathrm{ATE},\alpha}
=
(1-\alpha)
\left\{
\widehat\psi_{1}(0)
-
\widehat\psi_{0}(0)
\right\}
+
\alpha
\left\{
\widehat\psi_{1}(1)
-
\widehat\psi_{0}(1)
\right\},
\end{align*}
whose corresponding influence function is
\begin{align*}
\widehat\varphi_{\mathrm{ATE},\alpha}
=
(1-\alpha)
\left\{
\widehat\varphi_{1}^{\mathrm{np}}(0)
-
\widehat\varphi_{0}^{\mathrm{np}}(0)
\right\}
+
\alpha
\left\{
\widehat\varphi_{1}^{\mathrm{np}}(1)
-
\widehat\varphi_{0}^{\mathrm{np}}(1)
\right\}.
\end{align*}

The optimal indexed estimator chooses the coefficient that minimizes the empirical variance of the corresponding estimated influence function. Since this objective is quadratic in $\alpha$, the minimizer has the closed-form expression
\begin{align}
\widehat\alpha_{\mathrm{opt}}
=
\frac{
\bbP_n
\left[
\widehat\varphi_{a_0}^{\mathrm{np}}(0)
\left\{
\widehat\varphi_{a_0}^{\mathrm{np}}(0)
-
\widehat\varphi_{a_0}^{\mathrm{np}}(1)
\right\}
\right]
}{
\bbP_n
\left[
\left\{
\widehat\varphi_{a_0}^{\mathrm{np}}(0)
-
\widehat\varphi_{a_0}^{\mathrm{np}}(1)
\right\}^2
\right]
},
\label{eq:sim_alpha_opt}
\end{align}
where $\widehat\varphi_{a_0}^{\mathrm{np}}(z) = \varphi_{a_0}^{\mathrm{np}}(\widehat\eta;z)$ denotes the estimated influence function corresponding to the indexed estimand under consideration (either a treatment-specific mean or the average treatment effect). The resulting estimator exploits the invariance of the indexed identifying functionals while remaining within the indexed affine class. It therefore represents the variance-minimizing estimator obtainable using only indexed influence functions.  

Because the influence functions have mean zero at the population level, centering does not affect the population-optimal coefficient. In finite samples, however, the estimated influence functions are empirically centered before optimization so that $\widehat\alpha_{\mathrm{opt}}$ minimizes the empirical variance rather than the empirical second moment.

\vspace{0.25cm} \noindent 
\underline{\bf Finite-dimensional approximation to the Verma orthocomplement.} 
To approximate the additional directions contained in the Verma orthocomplement, we constructed the finite-dimensional basis
\begin{align}
f_a(Z,A)
&=
\I(A=a)Z,
\qquad
a\in\{0,1\},
\label{eq:sim_f_basis}
\\
g_r(Y,A)
&=
Y^r,
\qquad
r=1,\ldots,K.
\label{eq:sim_g_basis}
\end{align}
For each pair $(a,r)$, let $\chi_{a,r}(O)$ denote the corresponding orthocomplement direction obtained from the characterization in Corollary~\ref{cor:napkin_orthocomplement}. Throughout the simulations, the fixing distribution was taken to be $\widetilde p(Z=1)=\widetilde p(Z=0)=\frac{1}{2}$. The finite-dimensional basis was therefore
\begin{align}
\boldsymbol\chi_K(O)
=
\left\{
\chi_{0,1}(O),
\chi_{1,1}(O),
\ldots,
\chi_{0,K}(O),
\chi_{1,K}(O)
\right\}^{\top}.
\label{eq:sim_chi_basis}
\end{align}

To guarantee that the projection space contains the indexed affine direction, we additionally include
\begin{align}
D(O)
=
\varphi_{a_0}^{\mathrm{np}}(\bbP;0)(O)
-
\varphi_{a_0}^{\mathrm{np}}(\bbP;1)(O). 
\label{eq:sim_indexed_difference}
\end{align}
Thus, the projection is taken onto
$\operatorname{span}
\left\{
D,
\chi_{0,1},
\chi_{1,1},
\ldots,
\chi_{0,K},
\chi_{1,K}
\right\},
$ which is a finite-dimensional subspace of the full Verma orthocomplement. Define $\boldsymbol H_K(O) = \left\{D(O) \, , \, \boldsymbol\chi_K(O)^\top \right\}^{\top}$. The empirical estimator of the projection coefficient is
\begin{align}
\widehat{\boldsymbol\gamma}_K
=
\argmin_{\boldsymbol\gamma}
\bbP_n
\left[
\left\{
\widehat\varphi_{a_0,c}^{\mathrm{np}}(\bbP; 0)
-
\boldsymbol\gamma^\top
\widehat{\boldsymbol H}_{K, c}
\right\}^2
\right],
\label{eq:sim_empirical_projection}
\end{align}
where the subscript $c$ denotes empirical centering. 

Equation~\eqref{eq:sim_empirical_projection}  is the empirical least-squares projection of the indexed influence function onto the chosen Verma basis. Empirical centering again ensures that the projection minimizes variance rather than second moment.

The resulting influence-function estimate is
\begin{align}
\widehat\varphi_{a_0,K}^{\mathrm{eff, basis}}(O)
=
\widehat\varphi_{a_0}^{\mathrm{np}}(\bbP;0)(O)
-
\widehat{\boldsymbol\gamma}_K^\top
\widehat{\boldsymbol H}_K(O).
\label{eq:sim_projected_if}
\end{align}
The associated point estimator retains the empirical means of the uncentered directions:
\begin{align}
\widehat\psi_{a_0,K}^{\mathrm{eff, basis}}
&=
\left(
1-\widehat\gamma_{K,1}
\right)
\widehat\psi_{a_0}(\bbP; 0)
+
\widehat\gamma_{K,1}
\widehat\psi_{a_0}(\bbP; 1)
-
\sum_{j=2}^{2K+1}
\widehat\gamma_{K,j}
\bbP_n
\widehat H_{K,j}.
\label{eq:sim_projected_estimator}
\end{align}
To improve numerical stability, the empirical projection was computed using a rank-revealing QR decomposition. When the projection design was rank deficient, the coefficients were estimated using a maximal linearly independent subset of the design columns, with coefficients for dropped columns set to zero.  

We considered three nuisance estimation strategies. The \emph{saturated} approach estimates each observed-data nuisance function using a saturated model for the binary covariates. We fit a logistic regression for $p(Z=1\mid W)$, a logistic regression containing the $Z$-by-$W$ interaction for $p(A=1\mid Z,W)$, and fully interacted cell-mean regressions for $\bbE[Y^r\mid A,Z,W]$. Since $A$, $Z$, and $W$ are binary, these models permit a distinct fitted conditional mean in every observed covariate cell. The \emph{working} approach instead fits simpler additive models that omit interaction terms, thereby introducing deliberate model misspecification. Finally, the \emph{oracle} approach evaluates the true observed-data nuisance functions implied by the latent-variable DGP using deterministic Gaussian quadrature. Oracle nuisance functions are used solely to compute theoretical efficiency bounds and are never supplied to the finite-sample estimators.

For each estimator, the estimated standard error was
\begin{align}
\widehat{\operatorname{se}}
\left(
\widehat\psi
\right)
=
\left[
\frac{1}{n}
\bbP_n
\left\{
\widehat\varphi(O)
-
\bbP_n\widehat\varphi(O)
\right\}^2
\right]^{1/2},
\label{eq:sim_se}
\end{align}
where $\widehat\varphi$ is the estimator-specific estimated influence function. The corresponding $95\%$ Wald confidence interval was
$
\widehat\psi
\pm
1.96\,
\widehat{\operatorname{se}}
\left(
\widehat\psi
\right).
$

\subsubsection{Experiment~1: Finite-sample performance}

Experiment~1 follows the same finite-sample design as the extended front-door simulations. Oracle quantities were computed from a synthetic population of size $250{,}000$, and finite-sample performance was evaluated using $1{,}250$ independent Monte Carlo replications over an increasing sequence of sample sizes ranging from 250 to 16,000 observations.

Table~\ref{tab:napkin_var_all_estimands} reports the theoretical asymptotic variances of the five estimators together with their relative efficiency gains, using the equally weighted indexed estimator as the reference. Figures~\ref{fig:napkin_var_ate_n_scaled_variance}--\ref{fig:napkin_var_mean_0_n_scaled_variance} compare the corresponding empirical $n$-scaled variances with their theoretical asymptotic counterparts. 

Across all three estimands, the empirical $n$-scaled variances converge toward their theoretical asymptotic limits. The agreement between empirical and theoretical variances provides finite-sample support for the projection-based variance calculations developed in Sections~\ref{sec:geometry} and~\ref{sec:napkin_graph}. Furthermore, across all three estimands, the theoretical variances exhibit the expected ordering. Averaging the two fixed-$Z$ indexed estimators substantially improves efficiency relative to either indexed estimator, and optimizing the affine combination yields a further reduction. The additional improvement obtained from the Verma projection, however, depends strongly on the estimand. For the average treatment effect, the Verma projection estimator substantially improves upon the optimally weighted indexed estimator, reducing the theoretical asymptotic variance from $14.578 $ to $13.413$, corresponding to an additional relative efficiency gain of roughly $9\%$. In contrast, for the treatment-specific means,
the optimally weighted indexed and Verma projection estimators are nearly indistinguishable, indicating that optimization within the indexed class captures nearly all of the efficiency available from the Verma restriction under this representative DGP.

\subsubsection{Experiment 2: Efficiency across data-generating mechanisms}

Experiment~2 uses the tuning grid in Table~\ref{tab:napkin_tuning_parameters} together with the ranking
criterion introduced in Section~\ref{app:sims:frontdoor}. The twenty highest-ranked DGPs are reported in Table~\ref{tab:napkin_var_grid_all_estimands}, and the corresponding theoretical asymptotic variances of the five estimators are displayed in Figures~\ref{fig:napkin_var_grid_ate_top_20_method_variances_split_x}--\ref{fig:napkin_var_grid_mean_0_top_20_method_variances_split_x}.

Across the highest-ranked configurations, the competing estimators exhibit a remarkably stable efficiency ordering. Averaging the two fixed-$Z$ indexed estimators substantially reduces the theoretical variance relative to either indexed estimator, and optimizing the affine combination provides a further improvement. The Verma projection
estimator consistently attains the smallest theoretical asymptotic variance, demonstrating that the additional directions in the Verma orthocomplement provide information beyond that available from optimization within the indexed class.

The tuning parameters associated with the highest-ranked DGPs differ across the three estimands, although configurations exhibiting stronger treatment-specific perturbations of the conditional variance and higher
moments of the outcome appear prominently among the top-ranked settings. For the average treatment effect, the largest gains typically occur when the target-arm variance perturbation is strongest ($\eta_{\sigma,T}=1.5$), whereas the mean perturbation parameters vary more substantially across the highest-ranked configurations.
Accordingly, these results should be viewed as descriptive of the simulation grid rather than as a general characterization of when the Verma projection yields the greatest efficiency gains.

For the highest-ranked DGP under the ATE, the theoretical asymptotic variance decreases from $7.528$ for the optimally weighted indexed estimator to $6.807$ for the Verma projection estimator, corresponding to an additional relative efficiency gain of approximately $11\%$. Across the remaining top-ranked configurations, similar improvements are observed, indicating that the efficiency gains seen in Experiment~1 are not isolated to a single DGP but persist across a broad collection of observational regimes. 

\subsubsection{Experiment~3: Effect of basis richness}

Experiment~3 follows the same basis-enrichment strategy described for the extended front-door model. The Verma projection estimator is constructed using the nested outcome bases
$\{Y\}$, $\{Y,Y^2\}$, and $\{Y,Y^2,Y^3\}$.

Table~\ref{tab:napkin_basis_grid_all_estimands} reports the twenty highest-ranked DGPs for each estimand, together with the theoretical asymptotic variances under the three basis specifications and the corresponding relative efficiency gains over the $\{Y\}$ basis.
Figures~\ref{fig:napkin_basis_grid_ate_top_20_method_variances_split_x}--\ref{fig:napkin_basis_grid_mean_0_top_20_method_variances_split_x} display the corresponding theoretical asymptotic variances of the optimally weighted indexed estimator and the three Verma projection estimators.

Across all three estimands, enlarging the outcome basis produces the expected nonincreasing theoretical asymptotic variance. The incremental gains from basis enrichment are modest but consistent across the highest-ranked DGPs, reflecting the increasingly rich approximation to the Verma orthocomplement. In each case, the Verma projection estimator constructed using $\{Y,Y^2,Y^3\}$ attains the smallest theoretical variance among the three basis specifications. 

The magnitude of the improvement depends on the estimand. The largest relative efficiency gains are observed for $\bbE[Y(0)]$, followed by $\bbE[Y(1)]$ and the average treatment effect. Across the top-ranked configurations, both the quadratic and cubic outcome terms contribute to the additional efficiency, although the overall gains remain below approximately $1.7\%$ relative to the $\{Y\}$ basis. Thus, for the Napkin model, the one-dimensional basis already captures most of the efficiency available from the Verma restriction, while richer basis expansions recover a modest but consistent additional improvement.

\clearpage

\begin{table}[htbp]
\centering
\scriptsize
\setlength{\tabcolsep}{2.1pt}
\renewcommand{\arraystretch}{0.92}
\caption{Theoretical variances and relative efficiency gains for the five estimators across the three parameters under the Napkin model. Relative efficiency gains are computed with respect to the reference method indicated by each column. }
\label{tab:napkin_var_all_estimands}
\resizebox{\textwidth}{!}{%
\begin{tabular}{llccccc}
\toprule
 & Method & \shortstack{Theoretical\\variance} & \shortstack{Gain vs.\\fixed $Z=0$} & \shortstack{Gain vs.\\fixed $Z=1$} & \shortstack{Gain vs.\\equal indexed} & \shortstack{Gain vs.\\optimal indexed} \\
\midrule
$\mathrm{ATE}$ & Fixed $Z=0$ & 22.810 & 0.000\% & 77.097\% & -30.726\% & -36.088\% \\
 & Fixed $Z=1$ & 40.395 & -43.534\% & 0.000\% & -60.883\% & -63.912\% \\
 & Indexed (equal) & 15.801 & 44.354\% & 155.646\% & 0.000\% & -7.741\% \\
 & Indexed (optimal) & 14.578 & 56.466\% & 177.097\% & 8.391\% & 0.000\% \\
 & Verma projection & 13.413 & 70.059\% & 201.171\% & 17.808\% & 8.688\% \\
\addlinespace[6pt]
$\mathbb{E}[Y(1)]$ & Fixed $Z=0$ & 8.827 & 0.000\% & -32.031\% & -57.986\% & -59.514\% \\
 & Fixed $Z=1$ & 6.000 & 47.126\% & 0.000\% & -38.187\% & -40.434\% \\
 & Indexed (equal) & 3.709 & 138.016\% & 61.777\% & 0.000\% & -3.636\% \\
 & Indexed (optimal) & 3.574 & 146.998\% & 67.882\% & 3.774\% & 0.000\% \\
 & Verma projection & 3.570 & 147.277\% & 68.072\% & 3.891\% & 0.113\% \\
\addlinespace[6pt]
$\mathbb{E}[Y(0)]$ & Fixed $Z=0$ & 13.978 & 0.000\% & 145.963\% & -13.494\% & -28.892\% \\
 & Fixed $Z=1$ & 34.380 & -59.343\% & 0.000\% & -64.830\% & -71.090\% \\
 & Indexed (equal) & 12.092 & 15.599\% & 184.331\% & 0.000\% & -17.800\% \\
 & Indexed (optimal) & 9.939 & 40.632\% & 245.902\% & 21.655\% & 0.000\% \\
 & Verma projection & 9.843 & 42.006\% & 249.281\% & 22.843\% & 0.977\% \\
\bottomrule
\end{tabular}%
}
\begin{flushleft}
\footnotesize
\textit{Note:} Each gain column reports $100\{V_{\mathrm{reference}}/V_{\mathrm{method}}-1\}\%$, with the reference estimator named in the column header. Positive values indicate smaller theoretical variance than the reference.
\end{flushleft}
\end{table}

\begin{figure}[t]
\centering
\includegraphics[scale=0.7]{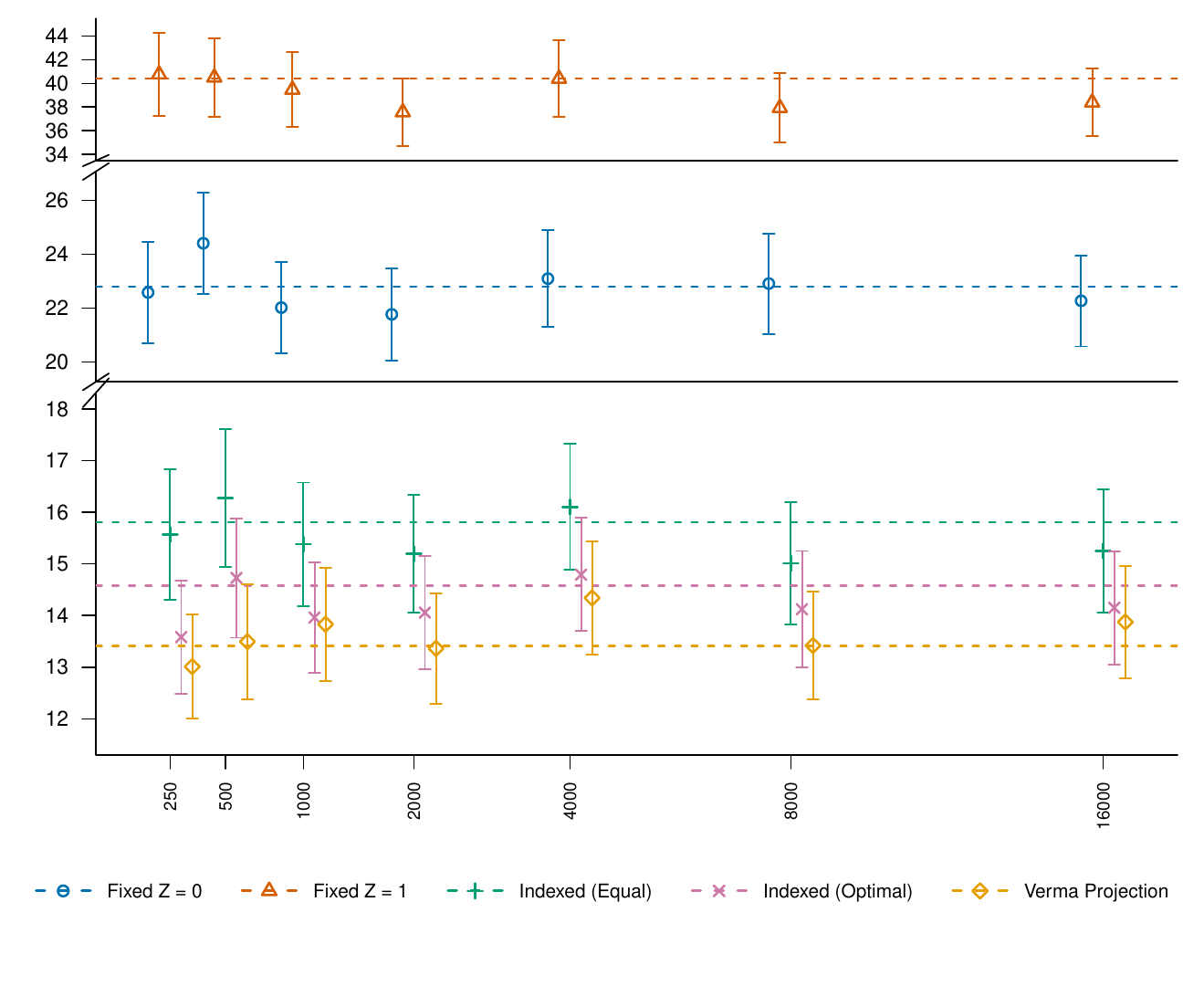}
\vspace{-1cm}
\caption{Finite-sample performance for $\text{ATE} = \psi_1(\bbP) - \psi_0(\bbP)$ under the Napkin model. Points and vertical intervals show the empirical $n$-scaled variance and its Monte Carlo confidence interval, respectively; horizontal dashed lines show the corresponding theoretical asymptotic variances. Results are shown for the two fixed-$Z$ indexed estimators, the equally weighted indexed estimator, the optimally weighted indexed estimator, and the Verma projection estimator.}
\label{fig:napkin_var_ate_n_scaled_variance}
\end{figure}

\begin{figure}[t]
\centering
\includegraphics[scale=0.7]{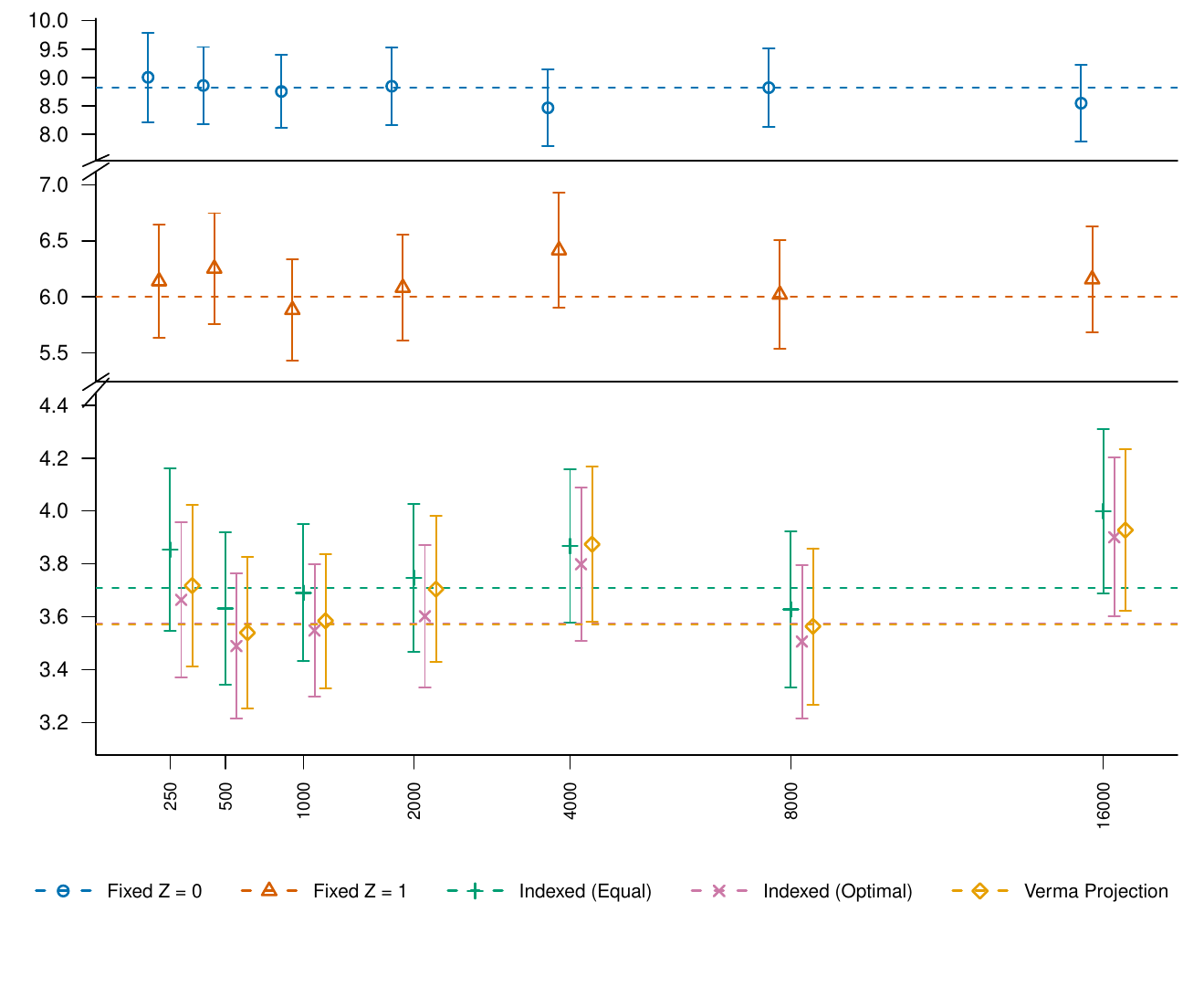}
\vspace{-1cm}
\caption{
Finite-sample performance for $\psi_1(\bbP)=\bbE[Y(1)]$ under the Napkin model. Points and vertical intervals show the empirical $n$-scaled variance and its Monte Carlo confidence interval, respectively; horizontal dashed lines show the corresponding theoretical asymptotic variances. Results are shown for the two fixed-$Z$ indexed estimators, the equally weighted indexed estimator, the optimally weighted indexed estimator, and the Verma projection estimator.  
}
\label{fig:napkin_var_mean_1_n_scaled_variance}
\end{figure}

\begin{figure}[t]
\centering
\includegraphics[scale=0.7]{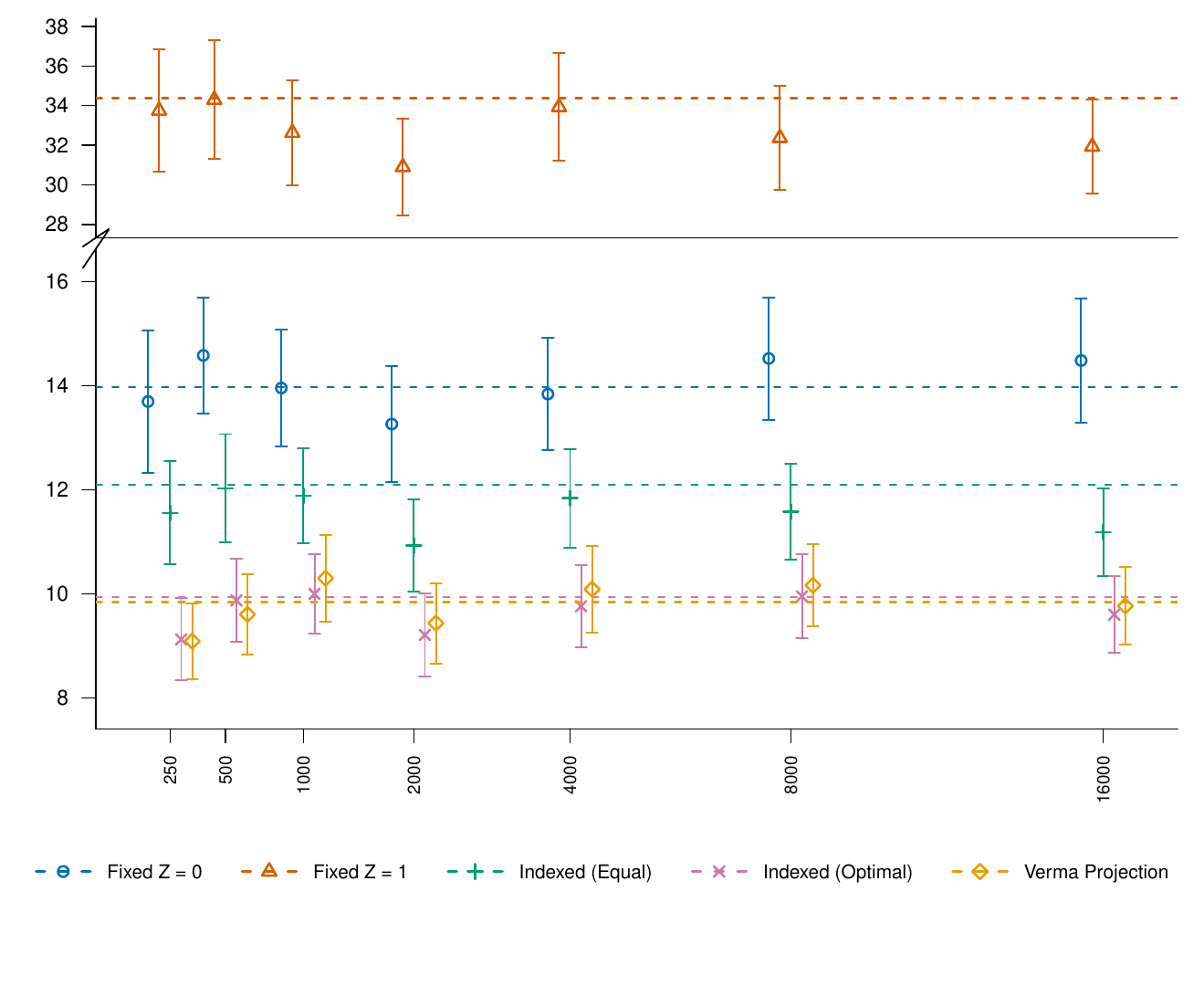}
\vspace{-1cm}
\caption{Finite-sample performance for $\psi_0(\bbP)=\bbE[Y(0)]$ under the Napkin model. Points and vertical intervals show the empirical $n$-scaled variance and its Monte Carlo confidence interval, respectively; horizontal dashed lines show the corresponding theoretical asymptotic variances. Results are shown for the two fixed-$Z$ indexed estimators, the equally weighted indexed estimator, the optimally weighted indexed estimator, and the Verma projection estimator.}
\label{fig:napkin_var_mean_0_n_scaled_variance}
\end{figure}

\begin{table}[!t]
\centering
\setlength{\abovecaptionskip}{2pt}
\setlength{\belowcaptionskip}{1pt}
\scriptsize
\setlength{\tabcolsep}{2.1pt}
\renewcommand{\arraystretch}{0.80}
\caption{Top DGP grid settings under the Napkin model: theoretical variances of the five estimators and Verma projection efficiency gains across the three parameters. Settings are ranked primarily by the Verma projection gain over the optimally weighted indexed estimator, with the gain over the equally weighted indexed estimator used to order near ties.}
\label{tab:napkin_var_grid_all_estimands}
\resizebox{0.9\textwidth}{!}{%
\begin{tabular}{crcccccccccccccc}
\toprule
 & ID & $\eta_{\mathrm{int}}$ & $\eta_{\mu,T}$ & $\eta_{\mu,C}$ & $\eta_{\sigma,T}$ & $\eta_{\sigma,C}$ & $\eta_{\kappa,T}$ & $\eta_{\kappa,C}$ & \shortstack{Fixed\\$Z=0$} & \shortstack{Fixed\\$Z=1$} & \shortstack{Indexed\\equal} & \shortstack{Indexed\\optimal} & \shortstack{Verma\\projection} & \shortstack{Gain\\(\%)} \\
\midrule
\multirow{20}{*}{\rotatebox[origin=c]{90}{$\mathrm{ATE}$}} & 298 & -0.500 & 0.000 & 0.000 & 1.500 & 0.000 & 1.000 & 0.000 & 13.092 & 17.713 & 7.701 & 7.528 & 6.807 & 10.586\% \\
 & 301 & -0.500 & 0.500 & 0.000 & 1.500 & 0.000 & 1.000 & 0.000 & 13.353 & 17.900 & 7.813 & 7.648 & 6.916 & 10.584\% \\
 & 304 & -0.500 & 1.000 & 0.000 & 1.500 & 0.000 & 1.000 & 0.000 & 13.708 & 18.152 & 7.966 & 7.811 & 7.064 & 10.575\% \\
 & 310 & -0.500 & 0.500 & 0.500 & 1.500 & 0.000 & 1.000 & 0.000 & 13.778 & 18.934 & 8.178 & 7.975 & 7.216 & 10.510\% \\
 & 313 & -0.500 & 1.000 & 0.500 & 1.500 & 0.000 & 1.000 & 0.000 & 14.133 & 19.188 & 8.330 & 8.139 & 7.364 & 10.519\% \\
 & 55 & -0.500 & 0.000 & 0.000 & 1.500 & 0.000 & 0.000 & 0.000 & 13.124 & 17.738 & 7.715 & 7.543 & 6.825 & 10.516\% \\
 & 307 & -0.500 & 0.000 & 0.500 & 1.500 & 0.000 & 1.000 & 0.000 & 13.517 & 18.747 & 8.066 & 7.854 & 7.108 & 10.497\% \\
 & 58 & -0.500 & 0.500 & 0.000 & 1.500 & 0.000 & 0.000 & 0.000 & 13.380 & 17.925 & 7.826 & 7.661 & 6.933 & 10.500\% \\
 & 61 & -0.500 & 1.000 & 0.000 & 1.500 & 0.000 & 0.000 & 0.000 & 13.729 & 18.179 & 7.977 & 7.822 & 7.080 & 10.477\% \\
 & 64 & -0.500 & 0.000 & 0.500 & 1.500 & 0.000 & 0.000 & 0.000 & 13.548 & 18.771 & 8.080 & 7.869 & 7.126 & 10.432\% \\
 & 67 & -0.500 & 0.500 & 0.500 & 1.500 & 0.000 & 0.000 & 0.000 & 13.804 & 18.959 & 8.191 & 7.988 & 7.234 & 10.430\% \\
 & 70 & -0.500 & 1.000 & 0.500 & 1.500 & 0.000 & 0.000 & 0.000 & 14.154 & 19.214 & 8.342 & 8.150 & 7.381 & 10.425\% \\
 & 322 & -0.500 & 1.000 & 1.000 & 1.500 & 0.000 & 1.000 & 0.000 & 14.706 & 20.586 & 8.823 & 8.578 & 7.770 & 10.405\% \\
 & 319 & -0.500 & 0.500 & 1.000 & 1.500 & 0.000 & 1.000 & 0.000 & 14.351 & 20.332 & 8.671 & 8.413 & 7.622 & 10.373\% \\
 & 316 & -0.500 & 0.000 & 1.000 & 1.500 & 0.000 & 1.000 & 0.000 & 14.089 & 20.144 & 8.559 & 8.291 & 7.514 & 10.344\% \\
 & 79 & -0.500 & 1.000 & 1.000 & 1.500 & 0.000 & 0.000 & 0.000 & 14.727 & 20.612 & 8.835 & 8.590 & 7.787 & 10.316\% \\
 & 76 & -0.500 & 0.500 & 1.000 & 1.500 & 0.000 & 0.000 & 0.000 & 14.377 & 20.357 & 8.684 & 8.426 & 7.640 & 10.298\% \\
 & 73 & -0.500 & 0.000 & 1.000 & 1.500 & 0.000 & 0.000 & 0.000 & 14.121 & 20.169 & 8.573 & 8.306 & 7.532 & 10.284\% \\
 & 299 & 0.000 & 0.000 & 0.000 & 1.500 & 0.000 & 1.000 & 0.000 & 13.092 & 17.417 & 7.627 & 7.474 & 6.785 & 10.151\% \\
 & 302 & 0.000 & 0.500 & 0.000 & 1.500 & 0.000 & 1.000 & 0.000 & 13.353 & 17.604 & 7.739 & 7.593 & 6.894 & 10.146\% \\
\addlinespace[5pt]
\multirow{20}{*}{\rotatebox[origin=c]{90}{$\mathbb{E}[Y(1)]$}} & 559 & -0.500 & 0.000 & 1.000 & 1.500 & 0.000 & 2.000 & 0.000 & 8.203 & 5.570 & 3.444 & 3.318 & 3.313 & 0.148\% \\
 & 640 & -0.500 & 0.000 & 1.000 & 1.500 & 1.000 & 2.000 & 0.000 & 8.203 & 5.570 & 3.444 & 3.318 & 3.313 & 0.148\% \\
 & 631 & -0.500 & 0.000 & 0.500 & 1.500 & 1.000 & 2.000 & 0.000 & 8.203 & 5.570 & 3.444 & 3.318 & 3.313 & 0.148\% \\
 & 622 & -0.500 & 0.000 & 0.000 & 1.500 & 1.000 & 2.000 & 0.000 & 8.203 & 5.570 & 3.444 & 3.318 & 3.313 & 0.148\% \\
 & 550 & -0.500 & 0.000 & 0.500 & 1.500 & 0.000 & 2.000 & 0.000 & 8.203 & 5.570 & 3.444 & 3.318 & 3.313 & 0.148\% \\
 & 541 & -0.500 & 0.000 & 0.000 & 1.500 & 0.000 & 2.000 & 0.000 & 8.203 & 5.570 & 3.444 & 3.318 & 3.313 & 0.148\% \\
 & 562 & -0.500 & 0.500 & 1.000 & 1.500 & 0.000 & 2.000 & 0.000 & 8.469 & 5.756 & 3.557 & 3.428 & 3.423 & 0.150\% \\
 & 643 & -0.500 & 0.500 & 1.000 & 1.500 & 1.000 & 2.000 & 0.000 & 8.469 & 5.756 & 3.557 & 3.428 & 3.423 & 0.150\% \\
 & 634 & -0.500 & 0.500 & 0.500 & 1.500 & 1.000 & 2.000 & 0.000 & 8.469 & 5.756 & 3.557 & 3.428 & 3.423 & 0.150\% \\
 & 553 & -0.500 & 0.500 & 0.500 & 1.500 & 0.000 & 2.000 & 0.000 & 8.469 & 5.756 & 3.557 & 3.428 & 3.423 & 0.150\% \\
 & 625 & -0.500 & 0.500 & 0.000 & 1.500 & 1.000 & 2.000 & 0.000 & 8.469 & 5.756 & 3.557 & 3.428 & 3.423 & 0.150\% \\
 & 544 & -0.500 & 0.500 & 0.000 & 1.500 & 0.000 & 2.000 & 0.000 & 8.469 & 5.756 & 3.557 & 3.428 & 3.423 & 0.149\% \\
 & 565 & -0.500 & 1.000 & 1.000 & 1.500 & 0.000 & 2.000 & 0.000 & 8.827 & 6.007 & 3.711 & 3.577 & 3.571 & 0.151\% \\
 & 646 & -0.500 & 1.000 & 1.000 & 1.500 & 1.000 & 2.000 & 0.000 & 8.827 & 6.007 & 3.711 & 3.577 & 3.571 & 0.151\% \\
 & 637 & -0.500 & 1.000 & 0.500 & 1.500 & 1.000 & 2.000 & 0.000 & 8.827 & 6.007 & 3.711 & 3.577 & 3.571 & 0.151\% \\
 & 556 & -0.500 & 1.000 & 0.500 & 1.500 & 0.000 & 2.000 & 0.000 & 8.827 & 6.007 & 3.711 & 3.577 & 3.571 & 0.151\% \\
 & 628 & -0.500 & 1.000 & 0.000 & 1.500 & 1.000 & 2.000 & 0.000 & 8.827 & 6.007 & 3.711 & 3.577 & 3.571 & 0.151\% \\
 & 547 & -0.500 & 1.000 & 0.000 & 1.500 & 0.000 & 2.000 & 0.000 & 8.827 & 6.007 & 3.711 & 3.577 & 3.571 & 0.151\% \\
 & 316 & -0.500 & 0.000 & 1.000 & 1.500 & 0.000 & 1.000 & 0.000 & 8.182 & 5.582 & 3.442 & 3.319 & 3.314 & 0.127\% \\
 & 397 & -0.500 & 0.000 & 1.000 & 1.500 & 1.000 & 1.000 & 0.000 & 8.182 & 5.582 & 3.442 & 3.319 & 3.314 & 0.127\% \\
\addlinespace[5pt]
\multirow{20}{*}{\rotatebox[origin=c]{90}{$\mathbb{E}[Y(0)]$}} & 398 & 0.000 & 0.000 & 1.000 & 1.500 & 1.000 & 1.000 & 0.000 & 13.061 & 31.169 & 11.058 & 9.205 & 9.139 & 0.721\% \\
 & 317 & 0.000 & 0.000 & 1.000 & 1.500 & 0.000 & 1.000 & 0.000 & 13.061 & 31.169 & 11.058 & 9.205 & 9.139 & 0.721\% \\
 & 389 & 0.000 & 0.000 & 0.500 & 1.500 & 1.000 & 1.000 & 0.000 & 13.061 & 31.169 & 11.058 & 9.205 & 9.139 & 0.721\% \\
 & 380 & 0.000 & 0.000 & 0.000 & 1.500 & 1.000 & 1.000 & 0.000 & 13.061 & 31.169 & 11.058 & 9.205 & 9.139 & 0.721\% \\
 & 308 & 0.000 & 0.000 & 0.500 & 1.500 & 0.000 & 1.000 & 0.000 & 13.061 & 31.169 & 11.058 & 9.205 & 9.139 & 0.721\% \\
 & 299 & 0.000 & 0.000 & 0.000 & 1.500 & 0.000 & 1.000 & 0.000 & 13.061 & 31.169 & 11.058 & 9.205 & 9.139 & 0.721\% \\
 & 383 & 0.000 & 0.500 & 0.000 & 1.500 & 1.000 & 1.000 & 0.000 & 13.478 & 32.237 & 11.430 & 9.506 & 9.439 & 0.703\% \\
 & 392 & 0.000 & 0.500 & 0.500 & 1.500 & 1.000 & 1.000 & 0.000 & 13.478 & 32.237 & 11.430 & 9.506 & 9.439 & 0.703\% \\
 & 401 & 0.000 & 0.500 & 1.000 & 1.500 & 1.000 & 1.000 & 0.000 & 13.478 & 32.237 & 11.430 & 9.506 & 9.439 & 0.703\% \\
 & 320 & 0.000 & 0.500 & 1.000 & 1.500 & 0.000 & 1.000 & 0.000 & 13.478 & 32.237 & 11.430 & 9.506 & 9.439 & 0.703\% \\
 & 311 & 0.000 & 0.500 & 0.500 & 1.500 & 0.000 & 1.000 & 0.000 & 13.478 & 32.237 & 11.430 & 9.506 & 9.439 & 0.703\% \\
 & 302 & 0.000 & 0.500 & 0.000 & 1.500 & 0.000 & 1.000 & 0.000 & 13.478 & 32.237 & 11.430 & 9.506 & 9.439 & 0.703\% \\
 & 386 & 0.000 & 1.000 & 0.000 & 1.500 & 1.000 & 1.000 & 0.000 & 14.043 & 33.667 & 11.930 & 9.911 & 9.844 & 0.680\% \\
 & 395 & 0.000 & 1.000 & 0.500 & 1.500 & 1.000 & 1.000 & 0.000 & 14.043 & 33.667 & 11.930 & 9.911 & 9.844 & 0.680\% \\
 & 404 & 0.000 & 1.000 & 1.000 & 1.500 & 1.000 & 1.000 & 0.000 & 14.043 & 33.667 & 11.930 & 9.911 & 9.844 & 0.680\% \\
 & 323 & 0.000 & 1.000 & 1.000 & 1.500 & 0.000 & 1.000 & 0.000 & 14.043 & 33.667 & 11.930 & 9.911 & 9.844 & 0.680\% \\
 & 314 & 0.000 & 1.000 & 0.500 & 1.500 & 0.000 & 1.000 & 0.000 & 14.043 & 33.667 & 11.930 & 9.911 & 9.844 & 0.680\% \\
 & 305 & 0.000 & 1.000 & 0.000 & 1.500 & 0.000 & 1.000 & 0.000 & 14.043 & 33.667 & 11.930 & 9.911 & 9.844 & 0.680\% \\
 & 399 & 0.500 & 0.000 & 1.000 & 1.500 & 1.000 & 1.000 & 0.000 & 13.061 & 26.168 & 9.808 & 8.713 & 8.660 & 0.610\% \\
 & 318 & 0.500 & 0.000 & 1.000 & 1.500 & 0.000 & 1.000 & 0.000 & 13.061 & 26.168 & 9.808 & 8.713 & 8.660 & 0.610\% \\
\bottomrule
\end{tabular}%
}
\vspace{-0.50em}
\begin{flushleft}
\tiny
\textit{Note:} Variance columns are theoretical. Gain is $100\{V_{\mathrm{opt}}/V_{\mathrm{Verma}}-1\}\%$, where $V_{\mathrm{opt}}$ is for the optimally weighted indexed estimator. Rows are ranked by this gain; near ties are ordered by $100\{V_{\mathrm{equal}}/V_{\mathrm{Verma}}-1\}\%$.
\end{flushleft}
\end{table}

\begin{figure}[t]
\centering
\includegraphics[scale=0.54]{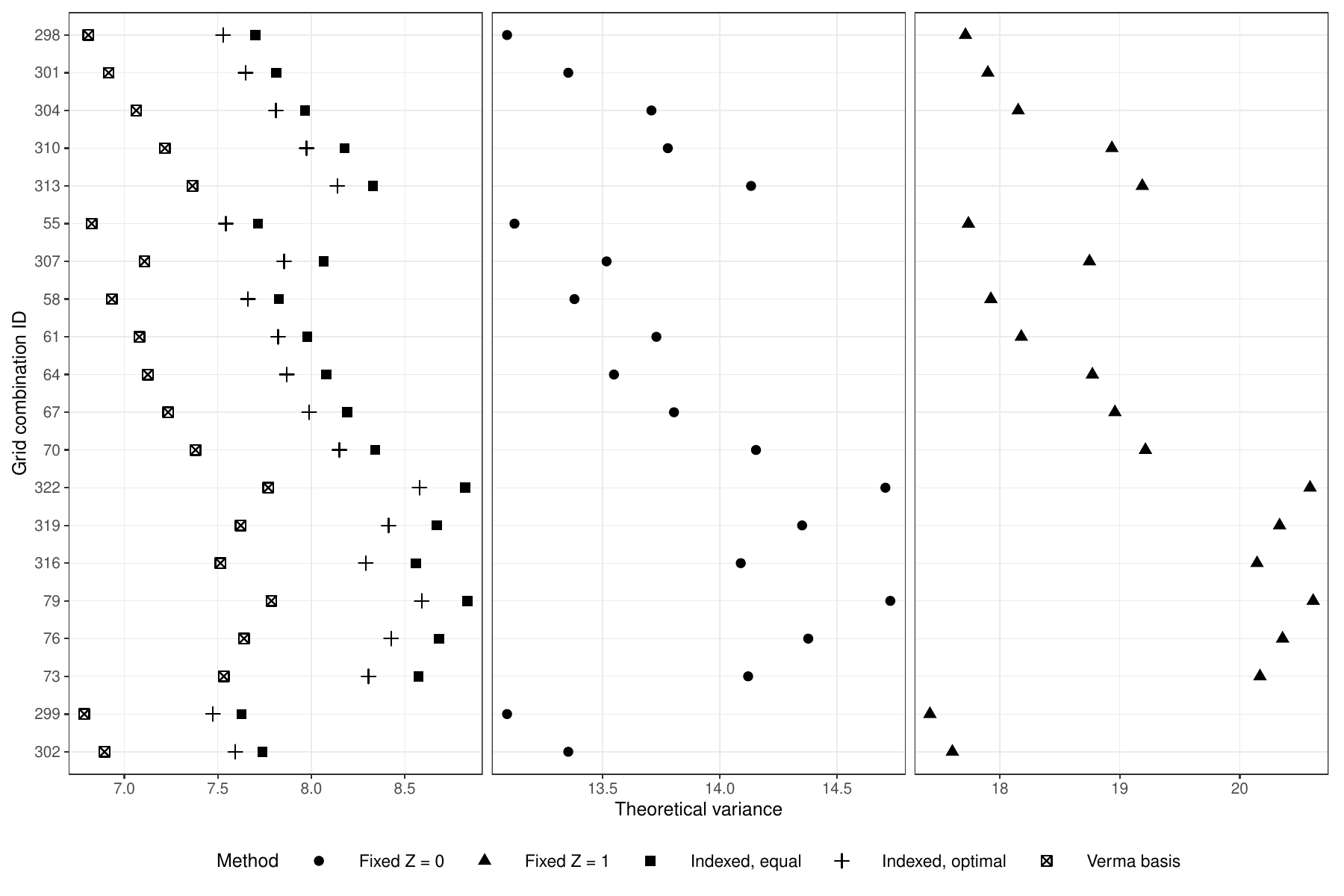}
\caption{
Theoretical asymptotic variances for  $\text{ATE} = \psi_1(\bbP) - \psi_0(\bbP)$ under the Napkin model across the 20 highest-ranked DGPs in Experiment~2. Each row corresponds to one DGP, and symbols denote the five competing estimators. The horizontal axis is partitioned into three noncontiguous ranges to facilitate visual comparison across methods. DGPs are ranked by the relative efficiency gain of the Verma projection estimator over the optimally weighted indexed estimator. 
}
\label{fig:napkin_var_grid_ate_top_20_method_variances_split_x}
\end{figure}

\begin{figure}[t]
\centering
\includegraphics[scale=0.54]{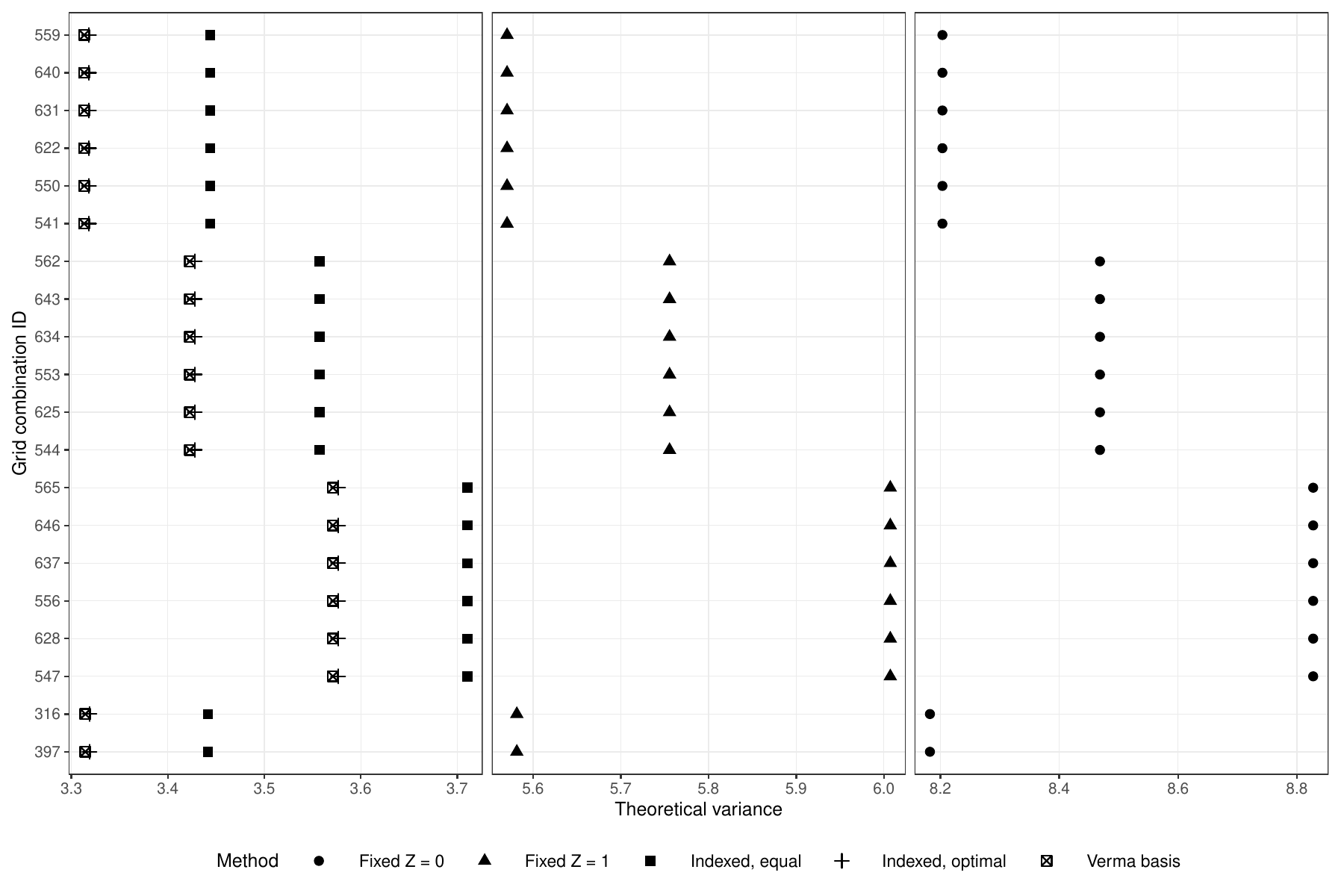}
\caption{
Theoretical asymptotic variances for  $\psi_1(\bbP) = \bbE[Y(1)]$ under the Napkin model across the 20 highest-ranked DGPs in Experiment~2. Each row corresponds to one DGP, and symbols denote the five competing estimators. The horizontal axis is partitioned into three noncontiguous ranges to facilitate visual comparison across methods. DGPs are ranked by the relative efficiency gain of the Verma projection estimator over the optimally weighted indexed estimator, using the gain over the equally weighted indexed estimator to break near ties.
}
\label{fig:napkin_var_grid_mean_1_top_20_method_variances_split_x}
\end{figure}

\begin{figure}[t]
\centering
\includegraphics[scale=0.54]{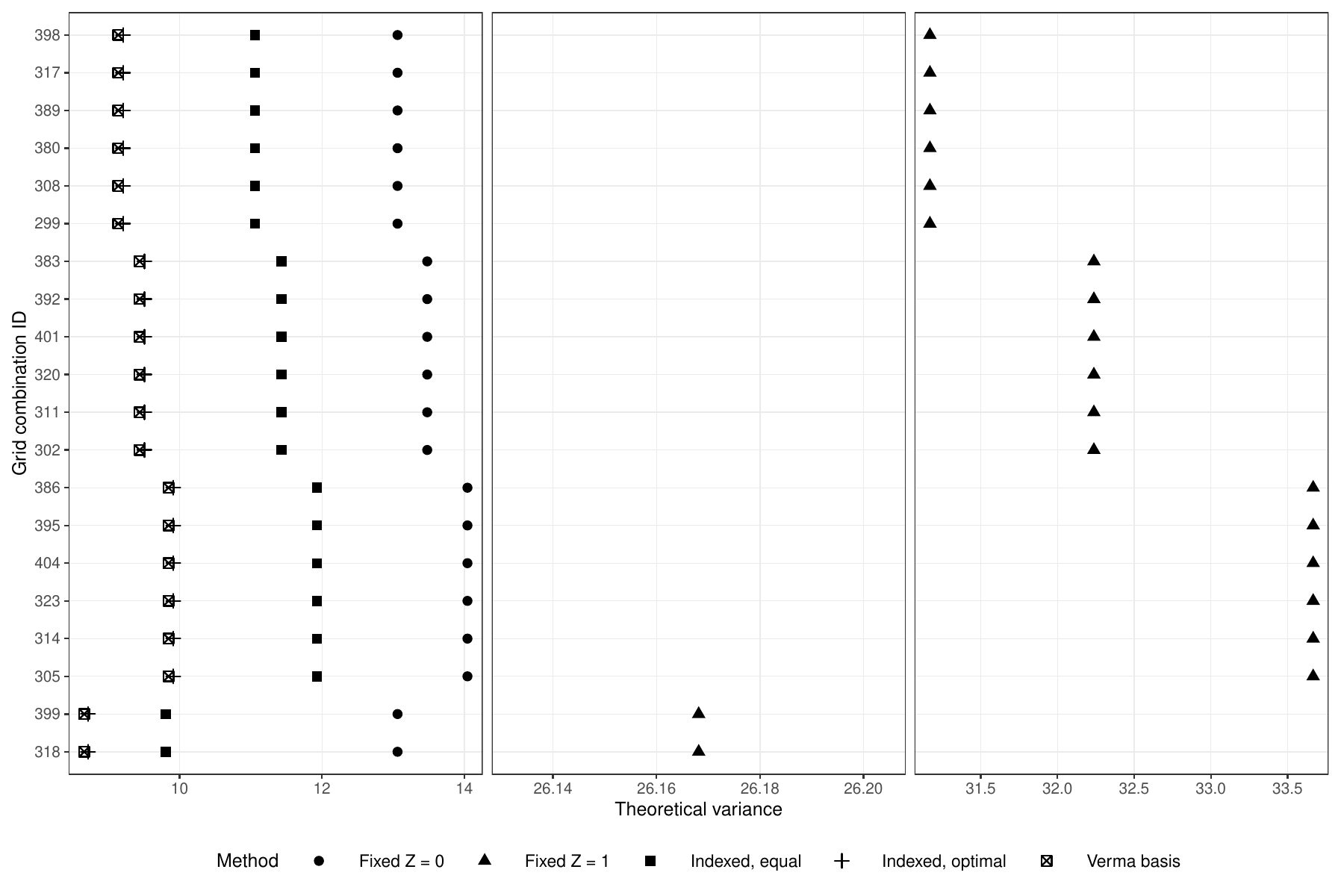}
\caption{
Theoretical asymptotic variances for  $\psi_0(\bbP) = \bbE[Y(0)]$ under the Napkin model across the 20 highest-ranked DGPs in Experiment~2. Each row corresponds to one DGP, and symbols denote the five competing estimators. The horizontal axis is partitioned into three noncontiguous ranges to facilitate visual comparison across methods. DGPs are ranked by the relative efficiency gain of the Verma projection estimator over the optimally weighted indexed estimator, using the gain over the equally weighted indexed estimator to break near ties.
}
\label{fig:napkin_var_grid_mean_0_top_20_method_variances_split_x}
\end{figure}

\begin{table}[htbp]
\centering
\setlength{\abovecaptionskip}{2pt}
\setlength{\belowcaptionskip}{1pt}
\scriptsize
\setlength{\tabcolsep}{2.2pt}
\renewcommand{\arraystretch}{0.70}
\caption{Top DGP grid settings under the Napkin model: theoretical variances for basis enrichments of the Verma projection across the three parameters, with efficiency gains relative to the $Y$-only basis. Settings are ranked by the efficiency gain of the full $(Y,Y^2,Y^3)$ basis over the $Y$-only basis.}
\label{tab:napkin_basis_grid_all_estimands}
\resizebox{0.85\textwidth}{!}{%
\begin{tabular}{crrrrrrrrrr@{\hspace{10pt}}rr}
\toprule
 & ID & $\eta_{\mu,T}$ & $\eta_{\mu,C}$ & $\eta_{\sigma,T}$ & $\eta_{\sigma,C}$ & $\eta_{\kappa,T}$ & $\eta_{\kappa,C}$ & $Y$ & $(Y,Y^2)$ & $(Y,Y^2,Y^3)$ & \shortstack{$(Y,Y^2)$\\vs. $Y$} & \shortstack{$(Y,Y^2,Y^3)$\\vs. $Y$} \\
\midrule
\multirow{20}{*}{\rotatebox[origin=c]{90}{$\mathrm{ATE}$}} & 55 & 0.000 & -1.000 & 3.000 & 0.000 & 1.000 & -1.000 & 27.543 & 27.529 & 27.430 & 0.050\% & 0.411\% \\
 & 56 & 0.500 & -1.000 & 3.000 & 0.000 & 1.000 & -1.000 & 27.644 & 27.630 & 27.531 & 0.051\% & 0.410\% \\
 & 58 & 0.000 & 0.000 & 3.000 & 0.000 & 1.000 & -1.000 & 27.824 & 27.809 & 27.710 & 0.052\% & 0.410\% \\
 & 59 & 0.500 & 0.000 & 3.000 & 0.000 & 1.000 & -1.000 & 27.926 & 27.911 & 27.812 & 0.053\% & 0.409\% \\
 & 57 & 1.000 & -1.000 & 3.000 & 0.000 & 1.000 & -1.000 & 27.785 & 27.771 & 27.672 & 0.051\% & 0.408\% \\
 & 60 & 1.000 & 0.000 & 3.000 & 0.000 & 1.000 & -1.000 & 28.066 & 28.052 & 27.953 & 0.053\% & 0.407\% \\
 & 61 & 0.000 & 1.000 & 3.000 & 0.000 & 1.000 & -1.000 & 28.525 & 28.510 & 28.411 & 0.055\% & 0.404\% \\
 & 62 & 0.500 & 1.000 & 3.000 & 0.000 & 1.000 & -1.000 & 28.627 & 28.611 & 28.512 & 0.056\% & 0.404\% \\
 & 63 & 1.000 & 1.000 & 3.000 & 0.000 & 1.000 & -1.000 & 28.768 & 28.752 & 28.653 & 0.056\% & 0.402\% \\
 & 19 & 0.000 & -1.000 & 3.000 & -1.000 & 1.000 & -1.000 & 28.602 & 28.589 & 28.489 & 0.049\% & 0.396\% \\
 & 91 & 0.000 & -1.000 & 3.000 & 1.000 & 1.000 & -1.000 & 29.728 & 29.711 & 29.611 & 0.058\% & 0.396\% \\
 & 20 & 0.500 & -1.000 & 3.000 & -1.000 & 1.000 & -1.000 & 28.704 & 28.690 & 28.591 & 0.049\% & 0.396\% \\
 & 92 & 0.500 & -1.000 & 3.000 & 1.000 & 1.000 & -1.000 & 29.830 & 29.813 & 29.713 & 0.058\% & 0.395\% \\
 & 21 & 1.000 & -1.000 & 3.000 & -1.000 & 1.000 & -1.000 & 28.845 & 28.831 & 28.732 & 0.049\% & 0.394\% \\
 & 93 & 1.000 & -1.000 & 3.000 & 1.000 & 1.000 & -1.000 & 29.971 & 29.954 & 29.854 & 0.059\% & 0.393\% \\
 & 94 & 0.000 & 0.000 & 3.000 & 1.000 & 1.000 & -1.000 & 29.999 & 29.981 & 29.881 & 0.058\% & 0.393\% \\
 & 95 & 0.500 & 0.000 & 3.000 & 1.000 & 1.000 & -1.000 & 30.100 & 30.083 & 29.983 & 0.059\% & 0.392\% \\
 & 22 & 0.000 & 0.000 & 3.000 & -1.000 & 1.000 & -1.000 & 28.890 & 28.877 & 28.778 & 0.047\% & 0.391\% \\
 & 23 & 0.500 & 0.000 & 3.000 & -1.000 & 1.000 & -1.000 & 28.992 & 28.978 & 28.879 & 0.048\% & 0.391\% \\
 & 96 & 1.000 & 0.000 & 3.000 & 1.000 & 1.000 & -1.000 & 30.241 & 30.223 & 30.124 & 0.059\% & 0.390\% \\
\addlinespace[5pt]
\multirow{20}{*}{\rotatebox[origin=c]{90}{$\mathbb{E}[Y(1)]$}} & 91 & 0.000 & -1.000 & 3.000 & 1.000 & 1.000 & -1.000 & 24.361 & 24.348 & 24.249 & 0.056\% & 0.464\% \\
 & 19 & 0.000 & -1.000 & 3.000 & -1.000 & 1.000 & -1.000 & 24.361 & 24.348 & 24.249 & 0.056\% & 0.464\% \\
 & 22 & 0.000 & 0.000 & 3.000 & -1.000 & 1.000 & -1.000 & 24.361 & 24.348 & 24.249 & 0.056\% & 0.464\% \\
 & 94 & 0.000 & 0.000 & 3.000 & 1.000 & 1.000 & -1.000 & 24.361 & 24.348 & 24.249 & 0.056\% & 0.464\% \\
 & 25 & 0.000 & 1.000 & 3.000 & -1.000 & 1.000 & -1.000 & 24.361 & 24.348 & 24.249 & 0.056\% & 0.464\% \\
 & 97 & 0.000 & 1.000 & 3.000 & 1.000 & 1.000 & -1.000 & 24.361 & 24.348 & 24.249 & 0.056\% & 0.464\% \\
 & 61 & 0.000 & 1.000 & 3.000 & 0.000 & 1.000 & -1.000 & 24.361 & 24.348 & 24.249 & 0.056\% & 0.464\% \\
 & 58 & 0.000 & 0.000 & 3.000 & 0.000 & 1.000 & -1.000 & 24.361 & 24.348 & 24.249 & 0.056\% & 0.464\% \\
 & 55 & 0.000 & -1.000 & 3.000 & 0.000 & 1.000 & -1.000 & 24.361 & 24.348 & 24.249 & 0.056\% & 0.464\% \\
 & 92 & 0.500 & -1.000 & 3.000 & 1.000 & 1.000 & -1.000 & 24.463 & 24.449 & 24.351 & 0.056\% & 0.462\% \\
 & 20 & 0.500 & -1.000 & 3.000 & -1.000 & 1.000 & -1.000 & 24.463 & 24.449 & 24.351 & 0.056\% & 0.462\% \\
 & 23 & 0.500 & 0.000 & 3.000 & -1.000 & 1.000 & -1.000 & 24.463 & 24.449 & 24.351 & 0.056\% & 0.462\% \\
 & 95 & 0.500 & 0.000 & 3.000 & 1.000 & 1.000 & -1.000 & 24.463 & 24.449 & 24.351 & 0.056\% & 0.462\% \\
 & 26 & 0.500 & 1.000 & 3.000 & -1.000 & 1.000 & -1.000 & 24.463 & 24.449 & 24.351 & 0.056\% & 0.462\% \\
 & 98 & 0.500 & 1.000 & 3.000 & 1.000 & 1.000 & -1.000 & 24.463 & 24.449 & 24.351 & 0.056\% & 0.462\% \\
 & 62 & 0.500 & 1.000 & 3.000 & 0.000 & 1.000 & -1.000 & 24.463 & 24.449 & 24.351 & 0.056\% & 0.462\% \\
 & 59 & 0.500 & 0.000 & 3.000 & 0.000 & 1.000 & -1.000 & 24.463 & 24.449 & 24.351 & 0.056\% & 0.462\% \\
 & 56 & 0.500 & -1.000 & 3.000 & 0.000 & 1.000 & -1.000 & 24.463 & 24.449 & 24.351 & 0.056\% & 0.462\% \\
 & 93 & 1.000 & -1.000 & 3.000 & 1.000 & 1.000 & -1.000 & 24.604 & 24.590 & 24.491 & 0.057\% & 0.460\% \\
 & 21 & 1.000 & -1.000 & 3.000 & -1.000 & 1.000 & -1.000 & 24.604 & 24.590 & 24.491 & 0.057\% & 0.460\% \\
\addlinespace[5pt]
\multirow{20}{*}{\rotatebox[origin=c]{90}{$\mathbb{E}[Y(0)]$}} & 127 & 0.000 & -1.000 & 3.000 & -1.000 & 2.000 & -1.000 & 68.700 & 68.143 & 67.542 & 0.817\% & 1.714\% \\
 & 199 & 0.000 & -1.000 & 3.000 & 1.000 & 2.000 & -1.000 & 68.700 & 68.143 & 67.542 & 0.817\% & 1.714\% \\
 & 130 & 0.000 & 0.000 & 3.000 & -1.000 & 2.000 & -1.000 & 68.700 & 68.143 & 67.542 & 0.817\% & 1.714\% \\
 & 133 & 0.000 & 1.000 & 3.000 & -1.000 & 2.000 & -1.000 & 68.700 & 68.143 & 67.542 & 0.817\% & 1.714\% \\
 & 163 & 0.000 & -1.000 & 3.000 & 0.000 & 2.000 & -1.000 & 68.700 & 68.143 & 67.542 & 0.817\% & 1.714\% \\
 & 166 & 0.000 & 0.000 & 3.000 & 0.000 & 2.000 & -1.000 & 68.700 & 68.143 & 67.542 & 0.817\% & 1.714\% \\
 & 169 & 0.000 & 1.000 & 3.000 & 0.000 & 2.000 & -1.000 & 68.700 & 68.143 & 67.542 & 0.817\% & 1.714\% \\
 & 128 & 0.500 & -1.000 & 3.000 & -1.000 & 2.000 & -1.000 & 69.011 & 68.451 & 67.851 & 0.818\% & 1.710\% \\
 & 131 & 0.500 & 0.000 & 3.000 & -1.000 & 2.000 & -1.000 & 69.011 & 68.451 & 67.851 & 0.818\% & 1.710\% \\
 & 200 & 0.500 & -1.000 & 3.000 & 1.000 & 2.000 & -1.000 & 69.011 & 68.451 & 67.851 & 0.818\% & 1.710\% \\
 & 134 & 0.500 & 1.000 & 3.000 & -1.000 & 2.000 & -1.000 & 69.011 & 68.451 & 67.851 & 0.818\% & 1.710\% \\
 & 170 & 0.500 & 1.000 & 3.000 & 0.000 & 2.000 & -1.000 & 69.011 & 68.451 & 67.851 & 0.818\% & 1.710\% \\
 & 167 & 0.500 & 0.000 & 3.000 & 0.000 & 2.000 & -1.000 & 69.011 & 68.451 & 67.851 & 0.818\% & 1.710\% \\
 & 164 & 0.500 & -1.000 & 3.000 & 0.000 & 2.000 & -1.000 & 69.011 & 68.451 & 67.851 & 0.818\% & 1.710\% \\
 & 129 & 1.000 & -1.000 & 3.000 & -1.000 & 2.000 & -1.000 & 69.427 & 68.864 & 68.265 & 0.818\% & 1.702\% \\
 & 135 & 1.000 & 1.000 & 3.000 & -1.000 & 2.000 & -1.000 & 69.427 & 68.864 & 68.265 & 0.818\% & 1.702\% \\
 & 132 & 1.000 & 0.000 & 3.000 & -1.000 & 2.000 & -1.000 & 69.427 & 68.864 & 68.265 & 0.818\% & 1.702\% \\
 & 171 & 1.000 & 1.000 & 3.000 & 0.000 & 2.000 & -1.000 & 69.427 & 68.864 & 68.265 & 0.818\% & 1.702\% \\
 & 168 & 1.000 & 0.000 & 3.000 & 0.000 & 2.000 & -1.000 & 69.427 & 68.864 & 68.265 & 0.818\% & 1.702\% \\
 & 165 & 1.000 & -1.000 & 3.000 & 0.000 & 2.000 & -1.000 & 69.427 & 68.864 & 68.265 & 0.818\% & 1.702\% \\
\bottomrule
\end{tabular}%
}
\vspace{-0.50em}
\begin{flushleft}
\tiny
\textit{Note:} Variance columns correspond to the displayed bases for the Verma projection. Gains are $100\{V_Y/V_{\mathrm{rich}}-1\}\%$, relative to the $Y$-only basis. Rows are ranked by the gain for the full $(Y,Y^2,Y^3)$ basis.
\end{flushleft}
\end{table}

\begin{figure}[!t]
\centering
\includegraphics[scale=0.55]
{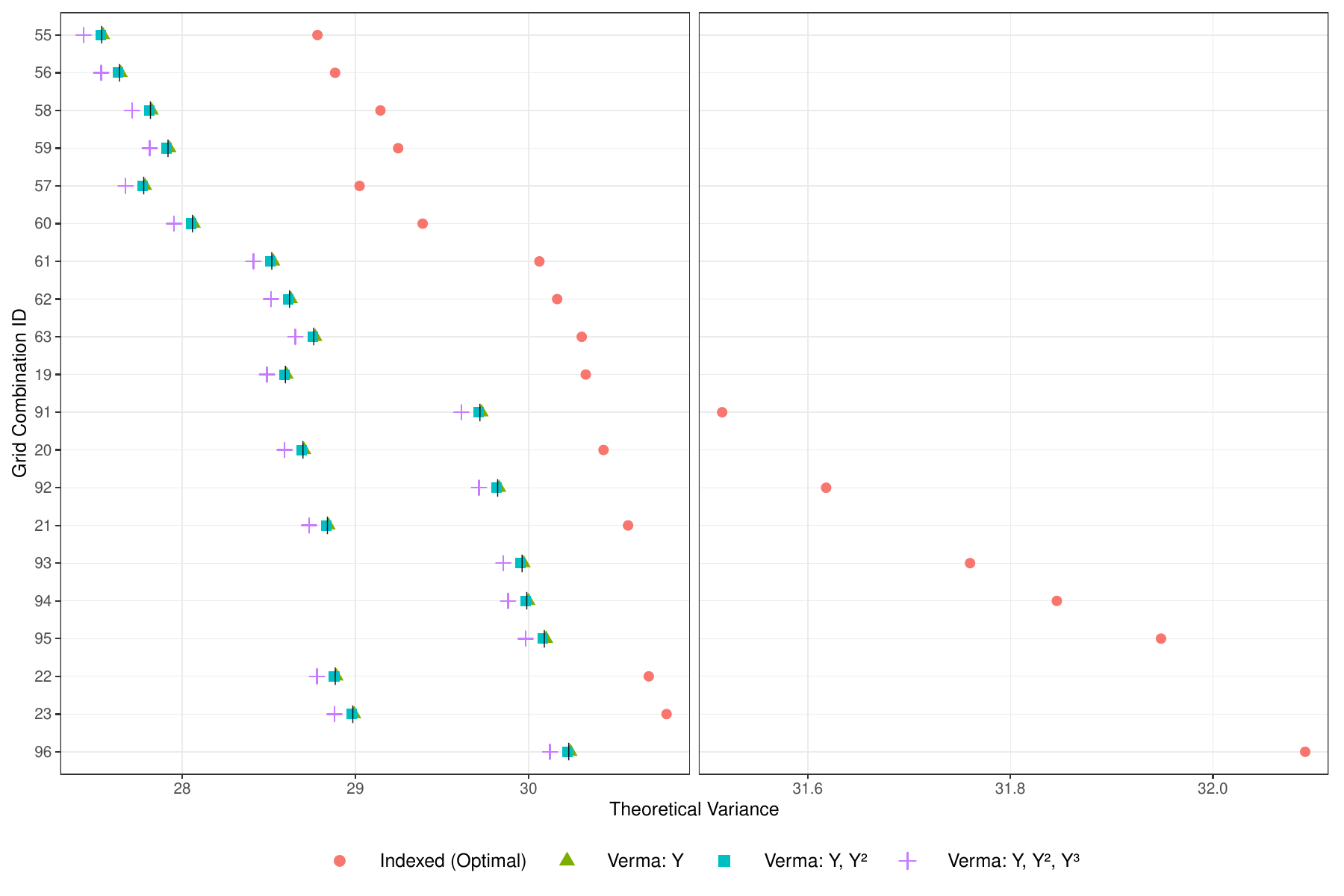}
\caption{
Effect of Verma-basis richness on the theoretical asymptotic variance for $\text{ATE} = \psi_1(\bbP) - \psi_0(\bbP)$ under the Napkin model. Results are shown for the 20 highest-ranked DGPs in Experiment~3. The Verma projection estimators use the nested outcome bases $\{Y\}$, $\{Y,Y^2\}$, and $\{Y,Y^2,Y^3\}$; the optimally weighted indexed estimator is included as a reference. DGPs are ranked by the relative efficiency gain of the full Verma basis $\{Y,Y^2,Y^3\}$ over the $Y$-only basis. Vertical ticks mark DGPs for which two or more method-specific theoretical variances are visually indistinguishable at the plotted scale. 
}
\label{fig:napkin_basis_grid_ate_top_20_method_variances_split_x}
\end{figure}

\begin{figure}[!t]
\centering
\includegraphics[scale=0.55]
{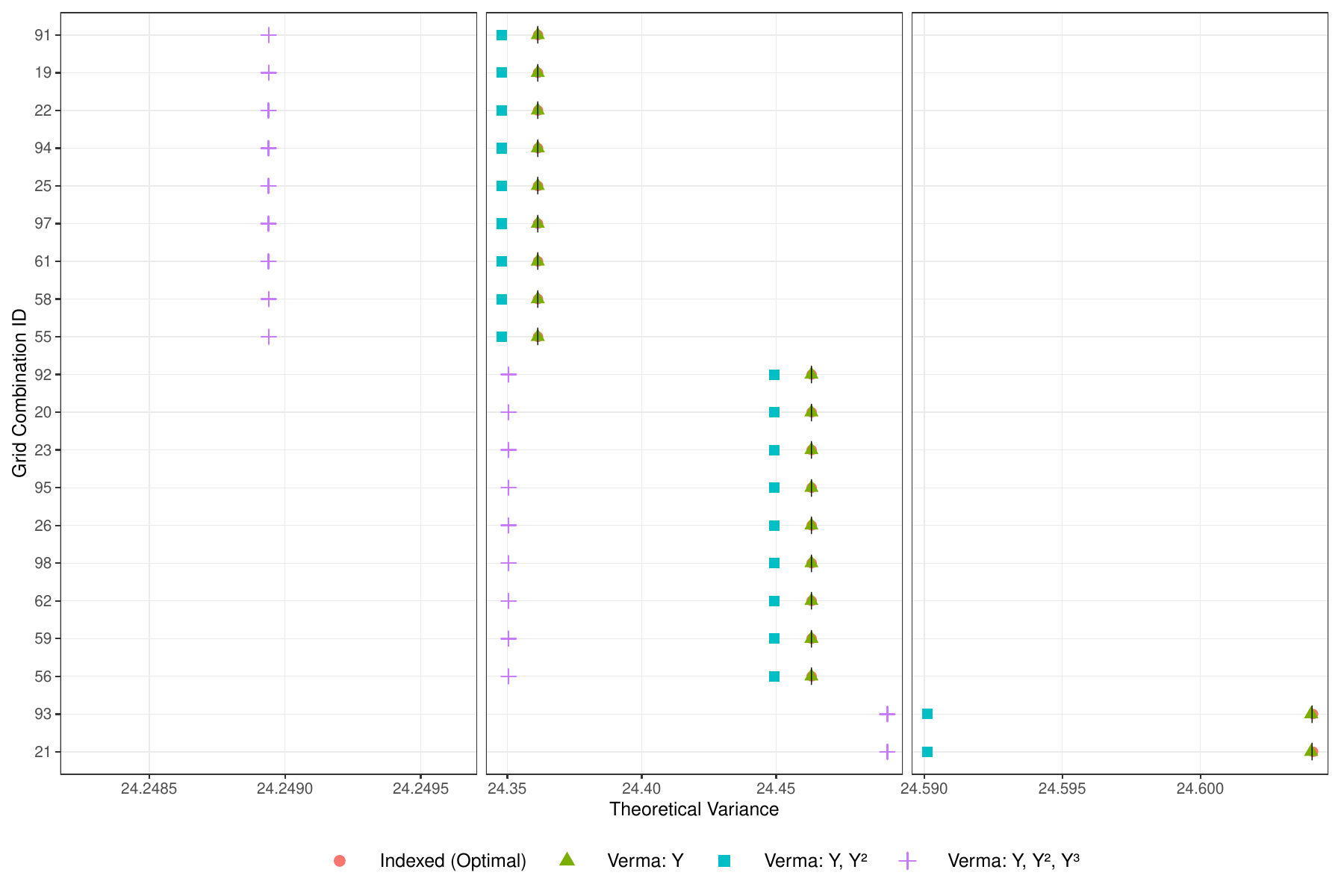}
\caption{
Effect of Verma-basis richness on the theoretical asymptotic variance for $\psi_1(\bbP) = \bbE[Y(1)]$ under the Napkin model. Results are shown for the 20 highest-ranked DGPs in Experiment~3. The Verma projection estimators use the nested outcome bases $\{Y\}$, $\{Y,Y^2\}$, and $\{Y,Y^2,Y^3\}$; the optimally weighted indexed estimator is included as a reference. DGPs are ranked by the relative efficiency gain of the full Verma basis $\{Y,Y^2,Y^3\}$ over the $Y$-only basis. Vertical ticks mark DGPs for which two or more method-specific theoretical variances are visually indistinguishable at the plotted scale. 
}
\label{fig:napkin_basis_grid_mean_1_top_20_method_variances_split_x}
\end{figure}

\begin{figure}[!t]
\centering
\includegraphics[scale=0.55]
{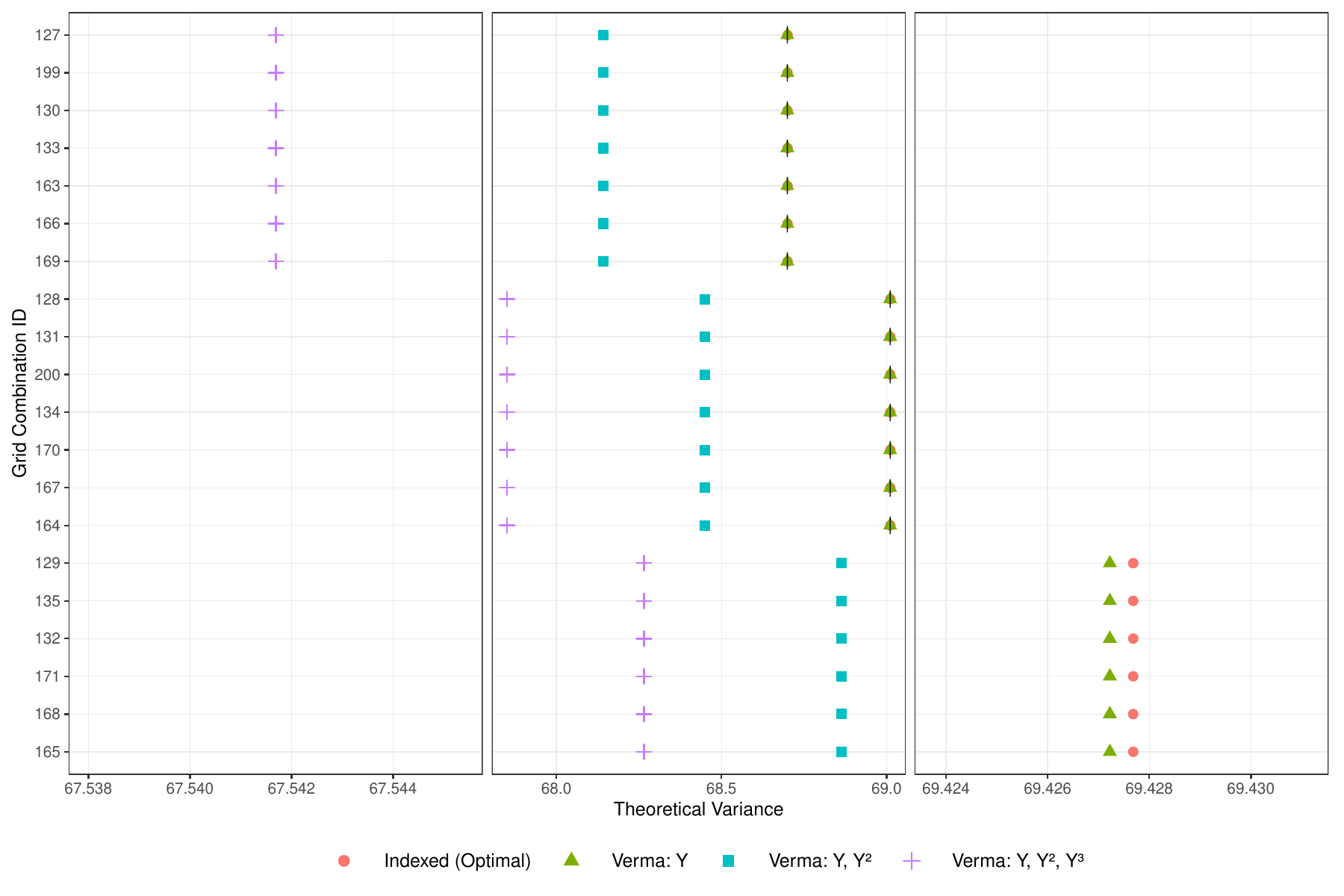}
\caption{
Effect of Verma-basis richness on the theoretical asymptotic variance for $\psi_0(\bbP) = \bbE[Y(0)]$ under the Napkin model. Results are shown for the 20 highest-ranked DGPs in Experiment~3. The Verma projection estimators use the nested outcome bases $\{Y\}$, $\{Y,Y^2\}$, and $\{Y,Y^2,Y^3\}$; the optimally weighted indexed estimator is included as a reference. DGPs are ranked by the relative efficiency gain of the full Verma basis $\{Y,Y^2,Y^3\}$ over the $Y$-only basis. Vertical ticks mark DGPs for which two or more method-specific theoretical variances are visually indistinguishable at the plotted scale. 
}
\label{fig:napkin_basis_grid_mean_0_top_20_method_variances_split_x}
\end{figure}

\clearpage
\section{Real-data Applications}
\label{app:realdata}

\subsection{Application of the Extended Front-Door Model}

We applied the front-door estimators to the Framingham data \citep{kannel1968framingham} analyzed by \citet{bhattacharya2022semiparametric}. The data were preprocessed following the structure of the original Framingham application. We retained individuals with at least one follow-up exam and constructed one row per subject. Variables measured at the first exam were used for the treatment, anchor, and baseline covariates, while variables measured at the second exam were used for the mediator and outcome. 

The treatment $A$ is current smoking at the first exam, defined by $\texttt{CURSMOKE}=1$. The mediator $M$ is hypertension at the second exam, defined by $\texttt{HYPERTEN}=1$. The anchor variable $Z$ is prior history of hypertension at the first exam, defined by $\texttt{PREVHYP}=1$. Baseline covariates $X$ include first-exam age, sex, BMI, and prior coronary heart disease, $\texttt{PREVCHD}$. To align the real-data analysis with the continuous-outcome simulation setting, we used a continuous time-to-event variable as the outcome. Specifically, $Y$ is the recorded time to coronary heart disease (CHD) or censoring, measured in years as $\texttt{TIMECHD}/365.25$. This outcome is analyzed as a continuous variable; the analysis is therefore not a censoring-aware survival analysis. The resulting complete-case analysis sample contains $3977$ subjects.

The target estimand is the pooled average causal effect of current smoking on the continuous time-to-CHD outcome,
$\psi = \bbE[Y(1)-Y(0)]$, where positive values correspond to a longer mean recorded time to CHD or censoring under current smoking, and negative values correspond to a shorter mean recorded time. For each estimator, nuisance functions were estimated using working parametric models for the anchor, treatment, mediator, and outcome regressions, with baseline covariates included in each nuisance model. Since the outcome is continuous, the Verma projection bases use raw outcome moments $Y$, $(Y,Y^2)$, and $(Y,Y^2,Y^3)$, matching the basis choices used in the simulation studies.

Table~\ref{tab:realdata_frontdoor_estimates} reports the resulting ATE estimates. Entries are point estimates with Wald-type $95\%$ confidence intervals, measured in years. The table includes the two fixed-anchor estimators, the equally weighted indexed estimator, the optimally weighted indexed estimator, and the three Verma projection estimators corresponding to the three outcome-moment bases. The estimated effects are small and negative across the estimators, suggesting a shorter mean recorded time to CHD or censoring under current smoking in this application.

Table~\ref{tab:realdata_frontdoor_CI} compares estimator precision using confidence interval lengths. For each estimator, the confidence interval length is divided by the confidence interval length of the equally weighted indexed estimator for the same pooled ACE estimand. Thus, values below one indicate shorter confidence intervals than the equally weighted indexed estimator, while values above one indicate longer intervals. In this application, the Verma projection estimators yield shorter intervals than the equally weighted indexed estimator, with relative interval lengths approximately $0.75$, $0.74$, and $0.73$ for the $Y$, $(Y,Y^2)$, and $(Y,Y^2,Y^3)$ bases, respectively.


\begin{table}[!htbp]
\centering
\caption{Front-door real-data estimates of the ACE of current smoking on time to coronary heart disease. Entries are mean differences in years with 95\% confidence intervals.}
\label{tab:realdata_frontdoor_estimates}
\begin{tabular}{lcc}
\toprule
Estimator & Estimate (95\% CI) & SE \\
\midrule
Fixed $Z=0$ & -0.002 (-0.012, 0.008) & 0.005 \\
Fixed $Z=1$ & -0.010 (-0.020, -0.000) & 0.005 \\
Equal indexed & -0.006 (-0.013, 0.001) & 0.004 \\
Optimal indexed & -0.006 (-0.013, 0.001) & 0.004 \\
Verma $Y$ & -0.017 (-0.022, -0.011) & 0.003 \\
Verma $Y,Y^2$ & -0.016 (-0.021, -0.010) & 0.003 \\
Verma $Y,Y^2,Y^3$ & -0.016 (-0.021, -0.011) & 0.003 \\
\bottomrule
\end{tabular}
\end{table}


\begin{table}[!htbp]
\centering
\caption{Relative confidence interval lengths for the front-door ATE estimates. The reference is the equally weighted indexed estimator.}
\label{tab:realdata_frontdoor_CI}
\begin{tabular}{lccc}
\toprule
Estimator & CI length & Relative length & Tighter than reference \\
\midrule
Fixed $Z=0$ & 0.019 & 1.36 & No \\
Fixed $Z=1$ & 0.019 & 1.36 & No \\
Equal indexed & 0.014 & 1.00 & -- \\
Optimal indexed & 0.014 & 1.00 & Yes \\
Verma $Y$ & 0.011 & 0.75 & Yes \\
Verma $Y,Y^2$ & 0.011 & 0.74 & Yes \\
Verma $Y,Y^2,Y^3$ & 0.010 & 0.73 & Yes \\
\bottomrule
\end{tabular}
\begin{minipage}{0.82\linewidth}
\footnotesize Values below 1 indicate shorter confidence intervals than the equally weighted indexed estimator for the same pooled ATE estimand.
\end{minipage}
\end{table}

\clearpage
\subsection{Application of the Napkin Model}

We applied the proposed estimators to the Life Course 1971--2002 study from the Finnish Social Science Data Archive\footnote{Available upon application at \href{https://services.fsd.tuni.fi/catalogue/FSD2007}{https://services.fsd.tuni.fi/catalogue/FSD2007}.}, following the educational-attainment and income application considered by \citet{guo2025causal}. The outcome $Y$ is annual income in 2000 euros. Educational attainment $X$ was coded as secondary or less, lower tertiary, and higher tertiary; for estimation, each target education level was represented by the binary indicator $I(X=x)$. We report two pairwise contrasts: lower tertiary versus secondary or less, and higher tertiary versus lower tertiary.

The analysis used $509$ complete cases out of $634$ observations. Parental socioeconomic status $W$ was binarized as low versus middle/high parental SES. Primary-school GPA $Z$ was binarized at the complete-case mean, $8.0226$. Sex and childhood verbal intelligence, measured by the ITPA score, were used only to define subgroups; ITPA was binarized at the complete-case mean, $36.1100$. Results are reported separately for four subgroups: female below-mean ITPA ($n=137$), female above/equal-mean ITPA ($n=133$), male below-mean ITPA ($n=126$), and male above/equal-mean ITPA ($n=113$).

Within each subgroup and contrast, we computed seven estimators of the ATE: the two fixed-$Z$ estimators, the equally weighted indexed estimator, the estimated optimally weighted indexed estimator, and three Verma projection estimators using basis choices $Y$, $(Y,Y^2)$, and $(Y,Y^2,Y^3)$. Nuisance functions were estimated using the working nuisance specification to improve stability in the subgroup analyses. Point estimates are reported in euros with Wald-type $95\%$ confidence intervals based on the estimated influence functions. 

Results are provided in Table~\ref{tab:realdata_napkin_estimates}. The table reports subgroup-specific ATE estimates, stratified by sex and by whether the ITPA score is below or above/equal to the complete-case mean. Each entry is a point estimate with a 95\% confidence interval, reported in euros. The first block estimates the effect of lower tertiary education relative to secondary education or less, and the second block estimates the effect of higher tertiary education relative to lower tertiary education. Positive values indicate higher expected annual income under the higher education level in the contrast, whereas negative values indicate lower expected annual income. Confidence intervals that include zero should be interpreted as providing limited evidence against no effect in that subgroup and contrast.

Table~\ref{tab:realdata_napkin_CI} summarizes the relative lengths of the confidence intervals in Table~\ref{tab:realdata_napkin_estimates}. For each estimator, the confidence interval length is divided by the corresponding length for the equally weighted indexed estimator within the same subgroup and contrast. Thus, values below one indicate shorter confidence intervals than the equally weighted indexed estimator. The Verma projection estimators yield shorter confidence intervals in all eight subgroup-by-contrast comparisons, with mean relative lengths of 0.65, 0.57, and 0.52 for the $Y$, $(Y,Y^2)$, and $(Y,Y^2,Y^3)$ bases, respectively. 

Table~\ref{tab:realdata_napkin_CI} compares estimator precision using confidence interval lengths. For each estimator, subgroup, and education contrast, we computed the ratio of its confidence interval length to the confidence interval length of the equally weighted indexed estimator for the same estimand. Specifically, if $\widehat{\mathrm{CI}}_{m,g,c}$ denotes the confidence interval for estimator $m$, subgroup $g$, and contrast $c$, the ratios are based on $|\widehat{\mathrm{CI}}_{m,g,c}| / |\widehat{\mathrm{CI}}_{\mathrm{eq},g,c}|$, where $|\widehat{\mathrm{CI}}|$ is the upper endpoint minus the lower endpoint, and $\mathrm{eq}$ denotes the equally weighted indexed estimator. These ratios are computed over the eight subgroup-by-contrast estimands, corresponding to four sex-by-ITPA subgroups and two education contrasts. The ``Mean rel. length'' column reports the average of these eight ratios, while the ``Median rel. length'' column reports their median. The ``Range'' column gives the minimum and maximum ratio observed across the eight estimands. Values below one indicate shorter confidence intervals than the equally weighted indexed estimator, and values above one indicate longer confidence intervals. The ``Tighter cells'' column gives the number of subgroup-by-contrast estimands, out of eight, for which the ratio was below one.

Overall, the real-data results illustrate the practical efficiency gains delivered by the Verma projection approach. Across all eight subgroup-by-contrast estimands, the Verma projection estimators produced shorter confidence intervals than the equally weighted indexed estimator, and the higher-dimensional projection bases yielded the shortest average intervals in this application. These findings suggest that, Verma projection estimators may be preferable in practice for improving precision while preserving the same identified causal estimand. At the same time, the choice of projection basis should be guided by finite-sample stability, since richer bases can be more sensitive to sparse cells and model extrapolation.

\clearpage

\begin{table}[!htbp]
\centering
\caption{Napkin real-data estimates of the ATE by sex and ITPA score. Entries are point estimates with 95\% confidence intervals in euros.}
\label{tab:realdata_napkin_estimates}
\resizebox{\textwidth}{!}{%
\begin{tabular}{lcccc}
\toprule
\textbf{Sex} & \multicolumn{2}{c}{\textbf{Female}} & \multicolumn{2}{c}{\textbf{Male}} \\
\textbf{ITPA score ($n$)} & Below avg (137) & Above avg (133) & Below avg (126) & Above avg (113) \\
\midrule
\multicolumn{5}{c}{\textbf{Lower tertiary vs secondary or less}} \\
Fixed $Z=0$ & 5,897 (1,639, 10,156) & -9,923 (-28,604, 8,758) & 949 (-5,701, 7,600) & -1,455 (-11,531, 8,621) \\
Fixed $Z=1$ & 258 (-3,703, 4,220) & 3,306 (-1,607, 8,219) & 3,234 (-6,478, 12,946) & 1,258 (-8,318, 10,833) \\
Equal indexed & 3,078 (145, 6,010) & -3,309 (-12,967, 6,350) & 2,092 (-3,794, 7,977) & -99 (-7,052, 6,854) \\
Optimal indexed & 3,266 (428, 6,105) & 3,474 (-1,053, 8,000) & 2,453 (-2,731, 7,637) & 1,541 (-5,167, 8,248) \\
Verma $Y$ & 4,288 (2,093, 6,483) & -5,152 (-8,099, -2,206) & 2,451 (-2,722, 7,625) & 1,562 (-5,148, 8,273) \\
Verma $Y,Y^2$ & 4,930 (2,915, 6,946) & -3,325 (-5,808, -841) & 2,469 (-2,626, 7,563) & -373 (-5,762, 5,017) \\
Verma $Y,Y^2,Y^3$ & 5,168 (3,178, 7,158) & -5,825 (-8,177, -3,474) & 4,760 (309, 9,211) & -156 (-5,527, 5,216) \\
\addlinespace
\multicolumn{5}{c}{\textbf{Higher vs lower tertiary}} \\
Fixed $Z=0$ & -2,918 (-16,046, 10,210) & 31,432 (4,044, 58,819) & 3,916 (-6,842, 14,674) & 11,129 (1,166, 21,091) \\
Fixed $Z=1$ & 9,510 (1,601, 17,418) & 7,208 (-24, 14,440) & -2,658 (-15,439, 10,123) & 7,492 (-4,770, 19,754) \\
Equal indexed & 3,296 (-4,389, 10,981) & 19,320 (5,149, 33,491) & 629 (-7,735, 8,993) & 9,311 (1,408, 17,213) \\
Optimal indexed & 5,968 (-785, 12,722) & 8,632 (1,646, 15,617) & 1,117 (-7,120, 9,353) & 7,213 (503, 13,922) \\
Verma $Y$ & -1,923 (-4,298, 452) & 21,947 (17,318, 26,576) & 1,365 (-6,223, 8,954) & 6,863 (749, 12,978) \\
Verma $Y,Y^2$ & -2,493 (-4,686, -299) & 18,193 (14,928, 21,459) & 5,193 (-2,007, 12,393) & 8,777 (3,983, 13,571) \\
Verma $Y,Y^2,Y^3$ & -2,127 (-4,222, -32) & 20,559 (17,597, 23,522) & 3,063 (-2,151, 8,276) & 8,424 (3,658, 13,189) \\
\bottomrule
\end{tabular}%
}
\end{table}

\begin{table}[!htbp]
\centering
\caption{Relative confidence interval lengths for the subgroup ATE estimates. The reference is the equally weighted indexed estimator within the same subgroup and contrast.}
\label{tab:realdata_napkin_CI}
\begin{tabular}{lcccc}
\toprule
Estimator & Mean rel. length & Median rel. length & Range & Tighter cells \\
\midrule
Fixed $Z=0$ & 1.52 & 1.45 & 1.13--1.93 & 0/8 \\
Fixed $Z=1$ & 1.19 & 1.36 & 0.51--1.65 & 2/8 \\
Equal indexed & 1.00 & 1.00 & 1.00--1.00 & -- \\
Optimal indexed & 0.81 & 0.88 & 0.47--0.98 & 8/8 \\
Verma $Y$ & 0.65 & 0.76 & 0.31--0.97 & 8/8 \\
Verma $Y,Y^2$ & 0.57 & 0.65 & 0.23--0.87 & 8/8 \\
Verma $Y,Y^2,Y^3$ & 0.52 & 0.61 & 0.21--0.77 & 8/8 \\
\bottomrule
\end{tabular}
\begin{minipage}{0.94\linewidth}
\footnotesize Values below 1 indicate shorter confidence intervals than the equally weighted indexed estimator. Ratios are computed over the eight subgroup-by-contrast estimates in Table~\ref{tab:realdata_napkin_estimates}.
\end{minipage}
\end{table}

\end{document}